\newcommand{\as}{\alpha}
\newcommand{\ri}{\RN{1}}
\newcommand{\rii}{\RN{2}}
\newcommand{\ii}[3]{\int_{#1}^{#2}\mathrm{d}#3\,}
\newcommand{\id}{\mathbbm{1}}
\newcommand{\cl}{\mathrm{cl}}
\newcommand{\cc}{\mathrm{c}}
\newcommand{\s}{\mathrm{s}}
\newcommand{\rr}{\mathrm{r}}
\newcommand{\oZ}{\accentset{\circ}{Z}}
\newcommand{\eq}[1]{Eq.\,(\ref{#1})}
\newcommand{\eqtwo}[2]{Eqs.\,(\ref{#1}) and (\ref{#2})}
\newcommand{\term}[1]{{\Large\textcircled{\small $#1$}}}
\newcommand{\terms}[2]{\raisebox{-0.18ex}{\Large\textcircled{\small $#1$}}\,\text{in \eq{#2}}
}
\newtheorem{theorem}{Theorem}
\newtheorem{lemma}{Lemma}
\newtheorem{proposition}{Proposition}
\newtheorem{definition}{Definition}
\newtheorem{corollary}{Corollary}
\newtheorem{remark}{Remark}
\newcommand{\RN}[1]{%
\textup{\uppercase\expandafter{\romannumeral#1}}%
}
\begin{document}
\title{Toward a Complexity Classification of High-Temperature Bosons: Computational Tractability and Power-Law Clustering}

\author{Xin-Hai Tong*}
\affiliation{Department of Physics, The University of Tokyo, 5-1-5 Kashiwanoha, Kashiwa-shi, Chiba 277-8574, Japan}
\email[]{xinhai@iis.u-tokyo.ac.jp}
\author{Tomotaka Kuwahara$^{\dagger}$}
\affiliation{Analytical Quantum Complexity RIKEN Hakubi Research Team,
RIKEN Center for Quantum Computing (RQC), 2-1 Hirosawa, Wako, Saitama 351-0198, Japan}
\affiliation{RIKEN Pioneering Research Institute (PRI), Wako, Saitama 351-0198, Japan}
\affiliation{PRESTO, Japan Science and Technology (JST), Kawaguchi, Saitama 332-0012, Japan}
\email[]{tomotaka.kuwahara@riken.jp}


\begin{abstract}
Determining when quantum many-body systems admit simple, efficiently simulable structure is a central problem. High-temperature thermal states are a natural candidate for such simplicity, yet for bosons, the unbounded local Hilbert space and energy invalidate the usual expectation that large $T$ guarantees tractability. Here we investigate the resulting complexity boundary for interacting lattice bosons and show that the repulsive Bose--Hubbard class lies on the ``simple'' side. 
For a family with long-range hopping decaying as $r^{-\alpha}$, we prove convergence of a controlled cluster expansion, which implies (above an explicit temperature threshold) an efficient classical algorithm to approximate the partition function and a rigorous power-law clustering bound for connected correlations. More broadly, our results provide a first step toward charting complexity boundaries for high-temperature bosons and suggest the repulsive Bose--Hubbard class as a natural candidate cusp.
\end{abstract}

\maketitle

\paragraph*{Introduction.---}
A central question in Hamiltonian complexity is whether one can efficiently simulate quantum many-body systems---for instance, whether expectation values of physical observables can be computed in time polynomial in the system size and the desired precision \cite{Kitaev2002,Gharibian2015,Osborne2012}. 
One prominent class of states believed to admit such efficient descriptions is given by thermal equilibrium states away from criticality. 
For spin and fermionic lattice systems, the ``noncritical'' regime is most transparently realized in one dimension, and it is also well understood in higher dimensions at sufficiently high temperatures above a threshold. 
Indeed, a large body of work has developed efficient methods in these settings, ranging from rapidly mixing sampling schemes \cite{Ding2025,Anshu2021,Kastoryano2016,Temme2011,Rouze2024} to deterministic algorithms based on convergent cluster expansions \cite{mann2024algorithmic,sanchez2025high,molnar2015approximating,kliesch2014locality,Haah2022}. 
These results have contributed to the widespread expectation that, at least for finite-dimensional local Hilbert spaces, sufficiently high temperature tends to place thermal states in a computationally tractable regime.

Most strikingly, recent work has identified explicit loopholes to ``high-$T$ simplicity" in lattice systems without a local energy bound: bosonic number fluctuations can stabilize entropically ordered phases even as $T\to\infty$ (i.e., $\beta \to 0$ with $\beta$ being the inverse temperature) \cite{Han2025} (see also \cite{yb7d-6tvc}). These developments motivate a re-examination of the conventional expectation that ``high temperature implies simplicity" beyond the models characterized by finite-dimensional local Hilbert spaces.

\begin{figure}[tt]
\centering
\includegraphics[width=\linewidth]{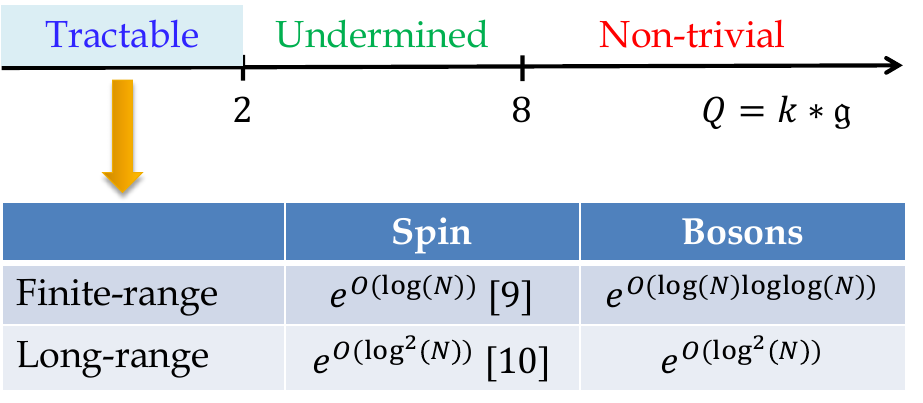}
\captionsetup{justification=raggedright,singlelinecheck=false}
\caption{Computational complexity of high-temperature partition-function approximation to error $1/\mathrm{poly}(N)$.
We introduce the structural parameter $Q:=k\mathfrak{g}$ (interaction locality $\times$ bosonic degree) governing the growth of cluster weights.
We prove controllability and efficient classical approximation for $Q\le2$, while $Q>2$ can already invalidate cluster-expansion--based control.
Known high-temperature nontrivial bosonic phases (entropic order~\cite{Han2025}) occur only for $Q\ge8$, leaving an open intermediate window.
Here $k$ is the maximal support size of interaction terms, and $\mathfrak{g}$ is the maximal on-site bosonic degree in off-site monomials (number of creation/annihilation operators acting on one site).
Theorems~\ref{theorem_boson_lr_complexity} and~\ref{theorem_boson_fr_complexity} state our long- and finite-range bosonic results; the table contrasts them with state-of-the-art spin-system bounds.
}
\label{tab_1}
\end{figure}

The goal of this work is to determine to what extent genuinely bosonic systems can still exhibit trivial behaviors at high temperature, and to establish a principled criterion that separates tractable regimes from pathological ones (cf.\,\cite{Han2025}). 
We focus on a broad class of Bose-Hubbard-type models,
which is of both fundamental theoretical importance \cite{bloch2008many,cazalilla2011one,kollath2007quench,fisher1989boson,freericks1996strong} and direct experimental relevance \cite{jaksch1998cold,greiner2002quantum,stoferle2004transition,bakr2009quantum,fallani2007ultracold,bakr2010probing}. The underlying local Hilbert space is infinite-dimensional, and the Hamiltonian involves unbounded operators. To capture the physics of modern experimental platforms, we explicitly incorporate long-range hopping whose strength decays as $r^{-\alpha}$ with respect to distance $r$.
Such form of couplings are intrinsic to systems like dipolar molecules~\cite{recati2023supersolidity,klaus2022observation,bigagli2024observation} and trapped ions~\cite{foss2024progress,defenu2023long,monroe2021programmable}, and lead to distinct thermodynamic behaviors absent in short-range models \cite{campa2014physics,campa2009statistical,defenu2023long,defenu2024out}, including ensemble inequivalence~\cite{kuwahara2020gaussian,barre2001inequivalence,kastner2010nonequivalence,russomanno2021quantum} and the emergence of quasi-stationary states~\cite{russomanno2021quantum,latora2002dynamical,campa2017quasilinear,chavanis2006quasi}.

For long-range bosonic settings, a key obstacle is whether one can establish a \emph{convergent} cluster expansion: when arbitrarily many bosons can occupy a single site, naive combinatorial bounds on cluster weights typically diverge, preventing the standard high-temperature expansion from being controlled. Our convergence criterion appears optimal for the repulsive Bose–Hubbard model, while several seemingly mild extensions [e.g., density-assisted tunneling, $a_i(n_i+n_j)a_j^\dagger+\text{h.c.}$] fall outside its scope and may allow genuinely nontrivial behavior even at high temperature. 

In this Letter, we establish two main results for the repulsive Bose-Hubbard model in the high-temperature regime. 
First, we construct an efficient simulation algorithm based on a \emph{convergent} cluster expansion. 
Second, we prove a clustering theorem that quantifies the decay of correlations. 
Our algorithm runs in quasi-polynomial time: for short-range hopping, the cost scales as $\exp(\widetilde{\mathcal{O}}(\log(N)))$, while for long-range hopping it scales as $\exp(\mathcal{O}(\log^2(N)))$, where $N$ is the system size, and the precision is set to be $1/{\rm poly}(N)$.
A key difference from spin systems~\cite{mann2024algorithmic,sanchez2025high,molnar2015approximating,kliesch2014locality,achutha2025provably} is that bosonic models have neither a bounded local Hilbert space nor an a priori energy cutoff, requiring a different analysis to control approximation errors (see Fig.~\ref{tab_1} for a comparison).
Nevertheless, in the long-range setting, our runtime scaling matches the best-known results for long-range spin systems~\cite{sanchez2025high,achutha2025provably}.


A central technical challenge is the appearance of high powers of bosonic operators in the expansion (e.g., $a_i^m$). 
Naively bounding such terms yields norms that grow like $(m\beta)^m$ at $m$th order, leading to a divergence of the cluster series as $m\to\infty$ even at small $\beta$. Our key contribution is to show that, in the repulsive Bose--Hubbard setting, this potentially fatal growth can be controlled by exploiting the structure of the on-site interaction, thereby restoring convergence (see Supplemental Material \cite{sm} for the detailed estimates). 
This criterion also clarifies why certain extensions---such as extended Hubbard interactions, density-assisted tunneling terms, or quartic inter-site couplings of the form $n_i^2 n_j^2$ as in Ref.~\cite{Han2025}---may fall outside the reach of cluster-expansion-based methods and can support nontrivial behavior at high temperatures.

Our results can be viewed as a first step toward a long-term goal: identifying the precise boundary between trivial and nontrivial behavior in high-temperature bosonic matter. 
By delineating when cluster-expansion techniques remain controllable---and when they fundamentally break down due to bosonic unboundedness---we provide a concrete framework for understanding which microscopic ingredients can obstruct ``high-temperature simplicity'' and thereby enable genuinely nontrivial phenomena. We expect this perspective to serve as a useful guide for exploring and classifying high-temperature phases and dynamical regimes in a broad range of interacting bosonic systems.

\paragraph*{Setup and notations.—}
We consider a system of bosons on a finite $D$-dimensional lattice, denoted by $V$. At each site $i \in V$, we equip it with an infinite-dimensional Hilbert space $\mathcal{H}_i$. Then the total Hilbert space is given by $\mathcal{H} = \bigotimes_{i \in V} \mathcal{H}_i$. Though our methodology is applicable to a much wider scope and cases, for readers’ convenience, let us focus on the long-range canonical Bose-Hubbard model,
\begin{equation}\label{lr_bose_hubbard}
H = -\sum_{i \neq j \in V} J_{i,j} (a_i^\dagger a_j + \text{h.c.}) + \sum_{i \in V} \left[ \frac{U_i}{2} n_i(n_i - 1) - \mu_i n_i \right].
\end{equation}
Here, $a_i$ and $a_i^\dagger$ are the bosonic annihilation and creation operators at site $i$, satisfying the canonical commutation relations $[a_i, a_j^\dagger] = \delta_{ij}$, and $n_i := a_i^\dagger a_i$ is the local number operator. The hopping amplitudes $J_{i,j}$ exhibit a power-law decay with distance $d_{i,j}$,
\begin{equation}\label{lr_condition}
|J_{i,j}| \leq \frac{g}{(1 + d_{i,j})^\alpha},
\end{equation}
where $g > 0$ is the hopping strength and $\alpha$ is the decay exponent. The distance $d_{i,j}$ between two sites $i,j \in V$ is defined as the length of the shortest path connecting them. We remark that it is not hard to see that one can obtain the finite-range Bose-Hubbard model by replacing the long-range condition with the following finite-range counterpart:
\begin{equation}\label{fr_condition}
\begin{aligned}[b]
|J_{i,j}|\leq g\Theta(d_{c}-d_{i,j})
\end{aligned}
\end{equation}
with $\Theta(\bullet)$ being the Heaviside step function and $d_{c}=\mathcal{O}(1)$ being the cutoff hopping range.

The well-definedness of the thermal state requires uniform bounds on the on-site interaction strength $U_i$ and the chemical potential $\mu_i$:
\begin{equation}\label{eq:param_bounds}
|\mu_i| \leq \mu < \infty, \quad 0 < U_{\min} \leq U_i \leq U_{\max} < \infty,
\end{equation}
for all sites $i \in V$. The on-site potential can be written as $W = \sum_{i \in V} W_i(n_{i})$, where $W_i(n_{i}) = U_i n_i^2 / 2 - \mu_{0,i} n_i$ with the effective chemical potential $\mu_{0,i} := \mu_i + U_i/2$. As a standard constraint, we also require $\alpha>D$ to ensure the thermodynamical stability \cite{kim2025thermal,sanchez2025high}.

Then we define the quantum Gibbs state $\rho_{\beta}$ at inverse temperature $\beta$ as $\rho_{\beta}\coloneq e^{-\beta H}/\mathcal{Z}_{\beta}$, with $\mathcal{Z}_{\beta}\coloneq \operatorname{Tr}e^{-\beta H}$ being the partition function. 

We investigate two main objectives.
First, we show that, in the high-temperature regime, the partition function $\mathcal{Z}_{\beta}$ can be simulated efficiently.
At the same time, we discuss which complexity class captures the complexity boundary for bosonic systems.
Second, we prove two-point clustering for long-range interacting systems: the correlation function
\begin{equation}\label{cor}
\begin{aligned}[b]
C_{\beta}(O_{X},O_{Y})\coloneq \operatorname{Tr}(\rho_{\beta}O_{X}O_{Y})
-\operatorname{Tr}(\rho_{\beta}O_{X})\cdot \operatorname{Tr}(\rho_{\beta}O_{Y})
\end{aligned}
\end{equation}
decays rapidly with the distance between the supports of any two local operators $O_X$ and $O_Y$.

\paragraph*{Computational complexity for long-range bosons.---}
We show that even long-range bosonic systems can be simulated efficiently on a classical computer. This finding, which stands in contrast to the expected difficulty of such problems, is rigorously established in our first main result.
\begin{theorem}\label{theorem_boson_lr_complexity}
For the long-range Bose-Hubbard model defined by Eqs.~\eqref{lr_bose_hubbard} and \eqref{lr_condition} with decay exponent $\alpha > D$ on a finite lattice $V$. Denote $N=|V|$, then there exists a classical algorithm that computes an approximation $f_{\beta}$ satisfying
$|\log \mathcal{Z}_{\beta}-f_{\beta}|\leq 1/\operatorname{poly}(N)$ 
at any fixed temperature above a threshold, $\beta\leq \beta_{c}(\as, D, g, \mu, U_{\min},U_{\max})=\mathcal{O}(1)$. The runtime of the algorithm is quasi-polynomial in the system size, given by $$\exp(\mathcal{O}(\log^{2} (N/\epsilon))).$$
\end{theorem}

The key to this algorithm is the insight that the partition function can be mapped to the partition function of a polymer model, whose convergence is controlled by an abstract cluster expansion~\cite{mann2024algorithmic,kotecky1986cluster}. Our main technical innovation is a method to satisfy the stringent Koteck\'{y}-Preiss convergence condition \cite{kotecky1986cluster} for this model. We achieve this by deploying the recently developed interaction-picture cluster expansion~\cite{tong2024boson}, combined with new techniques for bounding summations over desired clusters. A detailed proof sketch is provided at the end of this Letter, with the full derivation available in the Supplemental Material~\cite{sm}.

It is instructive to compare this result with recent progress on long-range spin systems~\cite{sanchez2025high}, where a similar quasi-polynomial complexity for computing the partition function was also established.
Despite this apparent equivalence in computational complexity, our findings highlight a sharp distinction between bosonic systems and their spin counterparts.
Specifically, Theorem~\ref{theorem_boson_lr_complexity} reveals that the total error in our approximation contains an additional, non-vanishing term that decays polynomially with the system size.
Consequently, in stark contrast to the spin case, the total error cannot be made arbitrarily small.
As we detail in the appendix and the Supplementary Material~\cite{sm}, this additional error term is a direct consequence of truncating the infinite-dimensional on-site bosonic Hilbert space to a finite dimension.
This truncation is a necessary step to render the problem computationally tractable, converting the trace of an operator on an infinite-dimensional space into a finite matrix computation. A rigorous low-density inequality guarantees the validity of our truncation (see discussion below).
Therefore, this error is an inevitable feature of the approximation scheme for such bosonic systems and
fundamentally reflects the absence of a Pauli exclusion principle. 

Simulating bosonic systems is challenging because their Hamiltonians are represented by infinite-dimensional matrices, rendering exact calculations intractable. A standard approach is to truncate the local Hilbert space by imposing a cutoff on the maximum occupation number. This is motivated by the physical intuition that high-energy states are negligibly populated at finite temperatures. However, a rigorous justification for this truncation, which guarantees accuracy, has been lacking. Our results provide this justification, establishing a rigorous framework for truncating bosonic systems that ensures both computational efficiency and controlled accuracy. 

Despite the aforementioned difference in the error bounds, it is remarkable that the time complexity with respect to the system size presented in Theorem~\ref{theorem_boson_lr_complexity} coincides with that of its spin counterpart.
This efficiency stems from a key physical principle: effective locality.

\paragraph*{Complexity boundary.---}
A fundamental source of complexity in lattice bosonic systems is the absence of the Pauli exclusion principle, which allows for arbitrary particle accumulation on a single site. This motivates us to examine the structure of local terms in the bosonic Hamiltonian, considering the general $k$-local form
\begin{equation}\label{eq:hamiltonian_form}
H_{0}=\sum_{Z\subset V: 1<|Z|\leq k}h_{Z}+\sum_{i\in V}W_{i}(n_{i}).
\end{equation}
Here, $h_{Z}$ represents the off-site term acting on the subspace $\bigotimes_{i\in Z}\mathcal{H}_{i}$, and $W_{i}(n_{i})$ denotes the on-site potential.
To quantitatively gauge the potential pathology of the system, we introduce a structural index $\mathfrak{g}$ based on the bosonic degree of the interactions. 
For each constituent monomial of the off-site term $h_Z$, we define its \emph{local bosonic degree} as the maximum number of bosonic operators (creation plus annihilation) acting on any single site within its support. 
For instance, for distinct sites $i, j$, the monomials $a_{i}^{\dagger}a_{j}$, $n_{i}n_{j}$, and $a_{i}n_{i}a_{j}^{\dagger}$ have local bosonic degrees of $1$, $2$, and $3$, respectively. We then define the \emph{global bosonic degree}, denoted by $\mathfrak{g}$, as the maximum local bosonic degree across all monomials in the off-site terms.

As detailed in the Supplemental Material~\cite{sm} (specifically Proposition 4, Lemma 16, and Remark 12), our analysis establishes that the convergence of the cluster expansion requires the condition:
\begin{equation}\label{condition_cc}
k\mathfrak{g}\leq 2.
\end{equation}
This rigorous condition reveals why the canonical Bose-Hubbard model represents a boundary case within the tractable regime. For the canonical Bose-Hubbard model, the condition is satisfied with $(k,\mathfrak{g})=(2,1)$. In contrast, for interactions involving higher powers of number operators, such as $n_i^{2} n_j^{2}$ with \cite{Han2025}, we have $k=2$ but $\mathfrak{g}=4$. This clearly violates the condition ($k\mathfrak{g} \ge 8$), consistent with the reported nontrivial phenomena even at high temperatures.

When condition \eqref{condition_cc} is violated (i.e., $k\mathfrak{g}>2$), the convergence breaks down. As explicitly demonstrated by a counter-example constructed in Remark 12 of the Supplemental Material~\cite{sm}, the combinatorial weight of clusters scales super-exponentially with cluster size. This divergence suggests that the Bose-Hubbard model represents a ``cusp'' of computational tractability, beyond which genuine complexity and nontrivial high-temperature phenomena may emerge.

\paragraph*{Short-range cases.---}
To understand the role of coupling range and its effect on locality, it is physically insightful to contrast the long-range results with their finite-range counterparts. This comparison reveals how the transition from strictly local to non-local couplings affects the system's computational complexity.

Our first point of comparison is the classical simulation of the partition function. For the finite-range Boson-Hubbard model, we establish the following result.

\begin{theorem}\label{theorem_boson_fr_complexity}
For the finite-range Bose-Hubbard model defined by Eqs.~\eqref{lr_bose_hubbard} and \eqref{fr_condition} on a finite lattice $V$. Denote $N=|V|$, then there exists a classical algorithm that computes an approximation $f_{\beta}$ satisfying
$|\log \mathcal{Z}_{\beta}-f_{\beta}|\leq 1/\operatorname{poly}(N)$ 
at any fixed temperature above a threshold, $\beta\leq \beta_{c}(d_{c}, D, g, \mu, U_{\min},U_{\max})=\mathcal{O}(1)$. The runtime of the algorithm is almost polynomial in the system size, given by $$\exp(\mathcal{O}(\log((N/\epsilon)\log\log (N/\epsilon)))).$$
\end{theorem}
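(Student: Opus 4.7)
The plan is to follow the architecture used for \The{theorem_boson_lr_complexity} but to exploit the strict locality encoded in Eq.\,\eqref{fr_condition} at every step, so that the quasi-polynomial enumeration of clusters collapses into an almost-polynomial one. First I would truncate each local Hilbert space $\mathcal{H}_i$ to its occupation sector $n_i \leq N_{\mathrm{cut}}$ with $N_{\mathrm{cut}} = \Theta\bigl(\log(N/\epsilon)/\log\log(N/\epsilon)\bigr)$. Invoking the low-boson-density inequality listed among the central technical tools of the Letter, this cutoff bounds the induced shift in $\log\mathcal{Z}_{\beta}$ by $N^{-\theta}$, absorbing the inevitable polynomial piece of the error budget. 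What remains is a lattice problem with finite on-site dimension $d = N_{\mathrm{cut}}+1$, and crucially $\log d = \Theta(\log\log(N/\epsilon))$.

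Second, I would map the truncated partition function to an abstract polymer model through the interaction-picture cluster expansion of \citeref{tong2024boson}: the on-site potentials $W_i(n_i)$ form the reference dynamics and the hopping terms $J_{i,j}(a_i^{\dagger} a_j + \text{h.c.})$ act as the perturbation. Because of Eq.\,\eqref{fr_condition}, every admissible polymer is a connected subgraph of a graph of bounded degree $\mathcal{O}(d_c^{D})$. The Koteck\'y--Preiss criterion \cite{kotecky1986cluster,mann2024algorithmic} then reduces to a purely combinatorial sum that converges uniformly for all $\beta \leq \beta_c = \mathcal{O}(1)$, avoiding the polynomially decaying long-range tails that had to be controlled in the proof of \The{theorem_boson_lr_complexity}. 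Truncating the cluster expansion at size $k^{\star} = \Theta(\log(N/\epsilon))$ then leaves an additive tail bounded by $\epsilon$.

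Third, I would account for the runtime. By the standard Penrose/self-avoiding-walk bound, the number of connected polymers of size $k$ rooted at a fixed site in a bounded-degree graph is $e^{\mathcal{O}(k)}$, so the full catalogue of relevant polymers has cardinality $N\, e^{\mathcal{O}(k^{\star})}$. Each polymer contributes a trace over a Hilbert space of dimension $d^{|\Gamma|}$, computable in time $\exp\bigl(\mathcal{O}(k^{\star} \log d)\bigr)$. The product of these two costs is $\exp\bigl(\mathcal{O}(\log(N/\epsilon)\,\log\log(N/\epsilon))\bigr)$, matching the claimed runtime.

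The main obstacle I anticipate is the second step: verifying the Koteck\'y--Preiss condition for the interaction-picture weights while keeping $\beta_c$ independent of both $N$ and $N_{\mathrm{cut}}$. The unbounded character of $a_i, a_i^{\dagger}$ produces factorial factors of the type $\sqrt{\mathsf{N}(\Gamma)!}$, analogous to $\Phi_{X,Y}(\beta)$ of \The{theorem_boson_lr_clustering}, whenever a single site is visited many times by the same polymer; taming these factors with the stabilizing on-site quadratic potential $W_i$ and showing that the resulting estimates survive the cutoff at $N_{\mathrm{cut}}$ is the technically delicate core of the proof. Once that control is established, the bounded-degree geometry of finite-range clusters renders all remaining combinatorial bookkeeping routine and immediately yields the improvement from quasi-polynomial to almost-polynomial complexity.
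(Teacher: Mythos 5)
Your architecture is the same as the paper's: truncate the local boson number, map the truncated partition function to an abstract polymer model via the interaction-picture cluster expansion, verify Koteck\'y--Preiss, and exploit the bounded degree of the finite-range interaction graph so that cluster enumeration costs $e^{\mathcal{O}(m)}\times N$ instead of $N^{\mathcal{O}(km)}$, with the residual $(\log N)^{\mathcal{O}(m)}$ cost of the weight evaluations producing the $\log\log(N/\epsilon)$ factor. However, there is one concrete error in your first step. Your cutoff $N_{\mathrm{cut}}=\Theta\bigl(\log(N/\epsilon)/\log\log(N/\epsilon)\bigr)$ is too small to deliver a truncation error of $N^{-\theta}$. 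The boson-number concentration bound (Corollary~\ref{coro_boson_concen}) gives $p_n^{(i)}\leq 2e^{-\sqrt{\beta}\,n/(2C)}$ with $C=\mathcal{O}(1)$, so the truncation error in $\log\mathcal{Z}_\beta$ scales as $N\,e^{-c\,N_{\mathrm{cut}}}$ for a constant $c$; with your choice this is $N\,(N/\epsilon)^{-c'/\log\log(N/\epsilon)}$, which does not decay as any fixed power of $N$ (it in fact diverges). You need $q=\Theta(\log N)$, exactly as in Corollary~\ref{corollary_truncation_error}. The good news is that this costs you nothing: the runtime only sees $\log d=\Theta(\log q)$, and $\log q=\Theta(\log\log(N/\epsilon))$ whether $q$ is $\log(N/\epsilon)$ or $\log(N/\epsilon)/\log\log(N/\epsilon)$, so the final complexity $\exp(\mathcal{O}(\log(N/\epsilon)\log\log(N/\epsilon)))$ is unaffected. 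There was no reason to shrink the cutoff in the first place.

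A second point you pass over: after truncation, the quantity being approximated is $\operatorname{Tr}(\Pi_{V,q}e^{-\beta H})$ with the \emph{untruncated} $H$, and since $[\Pi_{V,q},H]\neq 0$ this is not automatically a finite matrix computation. The paper resolves this (Lemma~\ref{lemma_weight}) by restricting to the canonical model ($\widetilde{J}_{i,j}=0$), whose sub-Hamiltonians conserve the total particle number on their support; one can then insert a commuting total-number projector and reduce each polymer weight to a trace over a space of dimension $\binom{(q+1)|L|}{|L|}\leq (2eq)^{|L|}$. If you instead mean ``replace $H$ by $\Pi H\Pi$,'' you would need a separate (and different) error estimate. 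Your identification of the delicate core --- controlling the factorial operator-counting factors through the stabilizing quadratic potential $W_i$ and verifying that these bounds survive the projection $\Pi_{V,q}$ --- is accurate; this is precisely what Lemma~\ref{lem_trace_f_w_q} and Propositions~\ref{pro_truncation} and~\ref{pro_cluster_sum} accomplish.
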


This runtime for the finite-range Bose-Hubbard model is a significant improvement over that of its long-range counterpart. Indeed, this constitutes a quasi-polynomial runtime. This is because the $\log\log(N/\epsilon)$ factor grows exceedingly slowly with the system size $N$, rendering the algorithm's complexity nearly polynomial for all practical purposes. This improvement originates from the different topological structures of the models' interaction graphs. To see this, we represent the system as a simple undirected graph $(V, \mathcal{E})$, where the vertices $V$ are the lattice sites and the edges $\mathcal{E}$ correspond to non-zero couplings. For the finite-range model, an edge connects sites $i$ and $j$ only if their distance $d_{i,j}$ is less than a constant cutoff $d_c$. In contrast, the long-range model is represented by a highly connected graph where most sites are coupled. The crucial difference lies in the graph's maximum degree, which is defined as the largest number of edges connected to any single site. In the finite-range case, this is an $\mathcal{O}(1)$ constant (e.g., equals $2D$ for a $D$-dimensional square lattice with nearest-neighbor hopping), whereas for the long-range case, it grows with the system size $N$. This topological property directly governs the complexity of enumerating polymers in the cluster expansion at the core of our algorithm, thus leading to the substantially faster runtime in the finite-range case.

It is instructive to contrast the almost polynomial runtime scaling in Theorem~\ref{theorem_boson_fr_complexity} with the rigorously polynomial scaling for finite-range spin systems, as established in Theorem 21 of Ref.\,\cite{mann2024algorithmic}. As will be detailed in Supplementary Material \cite{sm}, the extra factor $\log\log (N/\epsilon)$ in our result for bosons is a direct consequence of the infinite-dimensional local Hilbert space. To overcome this infinity, our algorithm must truncate the local boson number to a maximum $q \propto \log N$. This choice of $q$ is sufficient to bound the truncation error to be smaller than $\mathrm{poly}(N)^{-1}$. In contrast, for a quantum spin system, the local Hilbert space is finite-dimensional, corresponding to an effective cutoff $q=\mathcal{O}(1)$. This obviates the need for a system-size-dependent truncation, thereby eliminating the extra factor and ensuring a polynomial runtime.

\paragraph*{Clustering theorem.---}
Although the system interacts at long range, the condition $\alpha > D$ ensures that correlations still decay sufficiently rapidly.
In our second main result, we provide a rigorous quantification of this physical picture (see Figure~\ref{fig_1} for an intuitive illustration).

\begin{figure}[tt]
\centering
\includegraphics[width=0.9\linewidth]{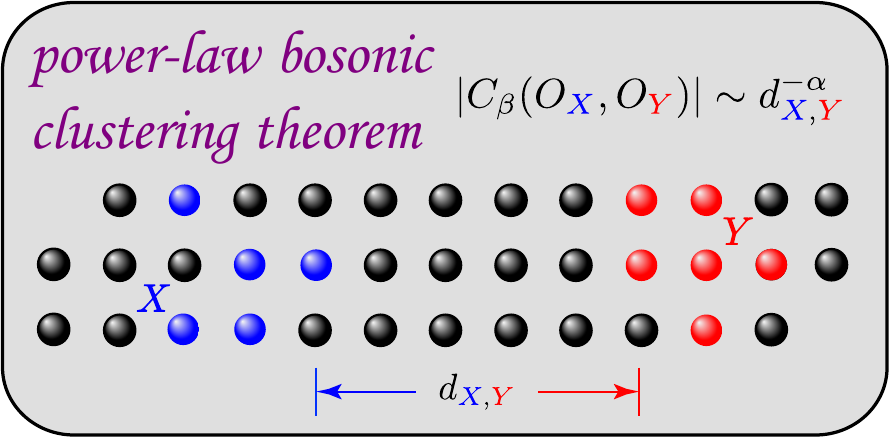}
\captionsetup{justification=raggedright,singlelinecheck=false}
\caption{Illustration of the clustering theorem for long-range bosonic systems. Summarized as Theorem \ref{theorem_boson_lr_clustering}, the magnitude of the correlation function $C_{\beta}(O_{X},O_{Y})$ exhibits a power-law decay with distance $d_{X,Y}$ at high temperatures.}
\label{fig_1}
\end{figure}

\begin{theorem}\label{theorem_boson_lr_clustering}
For the long-range Bose-Hubbard model defined by Eqs.~\eqref{lr_bose_hubbard} and \eqref{lr_condition} with $\alpha > D$ on a finite lattice $V$. Let $O_X$ and $O_Y$ be operators formed by finite products of creation and annihilation operators supported on disjoint finite subsets $X, Y \subset V$. Then, for any temperature above a threshold, $\beta \leq \beta_{c}(\as, D, g, \mu, U_{\min},U_{\max})= \mathcal{O}(1)$, the correlation function is bounded by:
\begin{equation}\label{re_cor_lr_boson}
|C_{\beta}(O_{X},O_{Y})| \leq C^{|X|+|Y|} \frac{\Phi_{X,Y}(\beta)}{(1+d_{X,Y})^{\alpha}},
\end{equation}
where $C$ is an $\mathcal{O}(1)$ constant, $d_{X,Y} = \min_{i \in X, j \in Y} d_{i,j}$, and $|L|$ is the number of sites in region $L\subset V$. The function $\Phi_{X,Y}(\beta)$ is defined as $\Phi_{X,Y}(\beta) := [\mathsf{N}(O_{X})!]^{1/2} [\mathsf{N}(O_{Y})!]^{1/2}\beta^{-[\mathsf{N}(O_{X})+\mathsf{N}(O_{Y})]/4}$, where $\mathsf{N}(O)$ is the total number of creation and annihilation operators constituting the operator $O$.
\end{theorem}

Notably, the temperature scaling of the bound in Eq.\,\eqref{re_cor_lr_boson} is optimal. This can be readily demonstrated in a simple special case: the on-site Bose-Hubbard model, which is recovered by setting all hopping terms $J_{i,j}=0$ in Eq.\,\eqref{lr_bose_hubbard}. In this regime, a straightforward estimation by integration shows that the local particle number fluctuations, quantified by the variance $C_{\beta}(n_{i},n_{i})=\langle n_{i}^{2} \rangle_{\beta W}-\langle n_{i}\rangle^{2}_{\beta W}$, scale as $\beta^{-1}$ at high temperatures. Noticing that $\mathsf{N}(n_{i}^{2})=2\mathsf{N}(n_{i})=4$, this result saturates the bound presented in Theorem~\ref{theorem_boson_lr_clustering} for the zero-distance case ($d_{i,i}=0$), thus confirming its optimality with respect to $\beta$ at high temperatures.
We emphasize that this theorem extends to operators $\widetilde{O}_X$ that are general polynomials in $\{a_i^\dagger, a_i\}_{i \in X}$. Such an operator can be decomposed into a sum of monomials, $\widetilde{O}_X = \sum_{\ell} O_{X,\ell}$, where each monomial satisfies the conditions of Theorem~\ref{theorem_boson_lr_clustering}. The correlation function is then bounded via the triangle inequality.

The established clustering of correlations has profound implications, serving as a cornerstone for proving fundamental properties of the system~\cite{kim2025thermal,brandao2013area,kuwahara2020eigenstate,molnar2015approximating}. 
First, it rigorously implies a \emph{thermal area law} for mutual information. 
Specifically, for any bipartition $V=A\sqcup B$, we establish that $\mathcal{I}(A:B)\leq C \beta^{1/2} |\partial_{AB}|$ (see Supplemental Material~\cite{sm} for the proof). 
This result extends the information-theoretic understanding of quantum correlations---previously initiated for spins~\cite{wolf2008area,kim2025thermal} and short-range bosons~\cite{lemm2023thermal,tong2024boson}---to the challenging long-range bosonic regime.

Second, the interaction-picture formalism developed for our proof yields a rigorous \emph{low-boson-density inequality}. 
We prove in Supplemental Material~\cite{sm} that the local moments satisfy the optimal scaling $\operatorname{Tr}(n_{i}^{\ell}\rho_{\beta})\leq C^{\ell}\ell! \beta^{-\ell/2}$. 
While generalizing the finite-range result of Ref.\,~\cite{tong2024boson}, this estimate provides a crucial technical ingredient for broader studies of bosonic systems, including Lieb-Robinson bounds~\cite{kuwahara2021lieb,kuwahara2024effective} and the existence of Gibbs states in the thermodynamic limit~\cite{deuchert2025dynamics}. The temperature scaling $\beta^{-\ell/2}$ here is optimal, as can be easily verified in the simple on-site Bose-Hubbard model by considering $ \langle n_{i}^{\ell} \rangle_{\beta W}$.

\paragraph*{Summary and outlook.---}
High-temperature thermal states are often viewed as the prototypical ``easy'' regime of many-body physics; for bosons, however, the lack of both a local Hilbert-space bound and an energy cutoff makes this intuition far from obvious.
Here we show that long-range Bose--Hubbard systems nonetheless exhibit a robust form of high-temperature simplicity: above an explicit threshold, correlations cluster with optimal scaling, and the partition function admits an efficient classical approximation.

Technically, this is enabled by an interaction-picture cluster expansion tailored to unbounded bosonic operators, which restores convergence precisely where conventional methods diverge.
The resulting framework yields a rigorous complexity analysis for thermal equilibrium of both finite- and long-range bosons and provides a quasi-polynomial (nearly polynomial for finite-range hopping) algorithm for approximating $\log \mathcal{Z}_\beta$ to inverse-polynomial accuracy.
It also reveals an intrinsic obstruction: once the structural parameter $Q=k\mathfrak{g}$ exceeds $2$, cluster weights grow combinatorially, suggesting the possibility of genuinely nontrivial high-temperature bosonic phases and placing the canonical long-range Bose--Hubbard model at the cusp between tractability and pathology.

More broadly, our results point to a \emph{complexity boundary} for thermal bosonic matter.
Pinpointing the optimal threshold $Q^\ast$ and classifying models in its vicinity constitute natural open problems of direct complexity-theoretic significance.

Looking forward, it will be important to connect our algorithmic guarantees to practical representations of thermal states, such as tensor-network approximations, and to extend the framework toward dynamical simulation and nonequilibrium physics.
This includes constructing matrix-product operator approximations of thermal states~\cite{molnar2015approximating,kliesch2014locality} and advancing methods for simulating the time-evolution of quantum many-body systems~\cite{wild2023classical,mizuta2025trotterization}.
On the experimental front, long-range bosonic platforms---most notably dipolar Bose--Einstein condensates~\cite{bigagli2024observation,klaus2022observation}---offer an exciting opportunity to directly test our predictions for correlation clustering and the thermal area law~\cite{recati2023supersolidity}.
More broadly, we hope that the perspective developed here provides a systematic route to identifying the true boundary between trivial and nontrivial behavior in high-temperature bosonic matter.

\paragraph*{Acknowledgements.---} 
X.-H.T. thanks Prof.\,Naomichi Hatano, Prof.\,Zongping Gong, and Ao Yuan for
fruitful discussions. X.-H.T. was supported by the FoPM,
WINGS Program, the University of Tokyo. T. K. acknowledges the Hakubi projects of RIKEN. T. K. was supported by JST PRESTO (Grant No.
JPMJPR2116), ERATO (Grant No. JPMJER2302), and JSPS Grants-in-Aid for Scientific Research (No. JP23H01099, JP24H00071), Japan.

\nocite{Bibtexkey}
\bibliographystyle{unsrt}
\bibliography{ref}

\clearpage
\begin{center}
\textbf{\large End Matter}
\end{center}
We demonstrate the sketch and idea of the proof for Theorems \ref{theorem_boson_lr_complexity} and \ref{theorem_boson_lr_clustering} for readers' convenience. For detailed discussion on the setup, notations and full proof, see Supplementary Material \cite{sm}. 

\paragraph*{Proof sketch of Theorem \ref{theorem_boson_lr_complexity}.---}
We present the construction of the efficient classical algorithm described in Theorem \ref{theorem_boson_lr_complexity}. The algorithm proceeds by recasting the partition function as that of an abstract polymer model.

We first briefly introduce some basic concepts, and the detailed version is covered in Sec.\,S.V. of Supplementary Material \cite{sm}. The abstract polymer model is defined as a collection of three key elements: a countable set $\mathfrak{C}$  whose elements are called polymers and denoted by $\gamma$, a function $w$ called weight maps the polymer to a complex number, and a symmetric compatibility relation $\sim$ such that $\gamma\nsim \gamma$. The sets of 
pairwise compatible polymers are called admissible sets. The collection of all admissible sets is denoted by $\mathscr{G}$. These two concepts are essential for the definition of abstract polymer partition function,
\begin{equation}\label{abs_paritition_function_em}
\mathcal{Z}(\mathfrak{C},w)\coloneqq \sum_{\Gamma\in \mathscr{G}}\prod_{\gamma\in \Gamma}w_{\gamma}.
\end{equation}
For any ordered sequence of polymers denoted by $a$ we can define the associated incompatibility graph, which is a simple graph with vertex set being distinct polymers in $a$ and edges between any two polymers $\gamma_{i},\gamma_{j}\in a$ if and only if $\gamma_{i}\nsim \gamma_{j}$. Let $\mathscr{G}_{\mathrm{c}}$ denote the collection of all finite, ordered sequences of polymers whose incompatibility graph is connected, and then the logathrithm partition function can be written as 
\begin{equation}\label{log_abstract_polymer_em}
\log \mathcal{Z}(\mathfrak{C},w) = \sum_{a\in \mathscr{G}_{\mathrm{c}}} \varphi(\mathsf{G}_{a}) \prod_{\gamma\in a} w_{\gamma},
\end{equation}
where $\varphi(\mathsf{G})$ is the Ursell function for a graph $\mathsf{G}$. 

The exact computation of the partition function $\mathcal{Z}_{\beta}$ is intractable, as the bosonic Hamiltonian acts on an infinite-dimensional Hilbert space. We therefore introduce a truncation by restricting the on-site boson number to a maximum of $q$. The corresponding truncated partition function is defined as $\mathcal{Z}_{\beta}^{(q)} := \operatorname{Tr}\qty(\Pi_{V,q}e^{-\beta H})$.
Here, $\Pi_{i,q}$ is the projection operator onto the space such that the boson number at the site $i$ is smaller than $q$ while we denote $\Pi_{L,q}=\bigotimes_{i\in L}\Pi_{i,q}$ for any region $L\subseteq V$. 
According to Corollary \ref{corollary_truncation_error} of the Supplementary Material \cite{sm}, we can choose $q\propto \log |V|$ such that $|\log \mathcal{Z}_{\beta}-\log \mathcal{Z}^{(q)}_{\beta}|\leq |V|^{-\theta}$. This truncation estimation is non-trivial. One should first establish the low-boson-density inequality for any site $i\in V$, i.e., $\operatorname{Tr}(n_{i}^{\ell}\rho_{\beta})\leq C^{\ell}\cdot \ell!\cdot \beta^{-\ell/2}$ with $C=\mathcal{O}(1)$ independent from $i$ and $\ell$ but dependent on $\alpha, D, g,\mu,U_{\min}$ and $U_{\max}$, as summarized in Theorem \ref{stheorem_low_density} of the Supplementary Material \cite{sm}. This upper bound for the moments of local particle number has a significant implication called the bosonic concentration bound, summarized in Corollary \ref{coro_boson_concen}, which
implies an exponential suppression of states with high local boson number. Such a guarantee is fundamentally important, as it rigorously justifies the use of finite boson-number truncation schemes to approximate the partition function with controllable error. 

The next step is to connect the truncated partition function $\mathcal{Z}_{\beta}$ to the abstract polymer model. We first identify the polymer here as connected multisets, whose elements are drawn from $E$. Then we first define  $\mathcal{Z}^{(q)}_{W}\coloneq \operatorname{Tr}\qty(\Pi_{V,q}e^{-\beta W})$ and after some calculations we find the quantity $\mathcal{Z}_{\beta}^{(q)}/\mathcal{Z}^{(q)}_{W}$ can be formulated in the same form as Eq.\,\eqref{abs_paritition_function_em} with the weight given by 
\begin{equation}\label{boson_fr_weight_em}
\begin{aligned}[b]
w_{\gamma}=(-1)^{|\gamma|} \sum_{T \subseteq \gamma}(-1)^{|T|} \frac{\operatorname{Tr}_{V_{T}}\qty(\Pi_{V_{T},q}e^{-\beta H_{V_{T}}})}{\operatorname{Tr}_{V_{T}}\qty(\Pi_{V_{T},q}e^{-\beta W_{V_{T}}})}.
\end{aligned}
\end{equation}
Then we can also expand $\log [\mathcal{Z}_{\beta}^{(q)}/\mathcal{Z}^{(q)}_{W}]$ via Eq.\,\eqref{log_abstract_polymer_em}, which can be approximated by the truncated series 
\begin{equation}\label{t_m_em}
\begin{aligned}[b]
T_{m}\coloneq \sum_{a\in \mathscr{G}_{\mathrm{c}}: |a|\leq m} \varphi(\mathsf{G}_{a}) \prod_{\gamma\in a} w_{\gamma}.
\end{aligned}
\end{equation}
The absolute convergence of the cluster expansion in Eq.\,\eqref{log_abstract_polymer_em} is guaranteed by the Kotecký--Preiss criterion \cite{kotecky1986cluster}. This criterion, along with an analysis of the truncation error from Eq.\,\eqref{t_m_em}, is addressed in Proposition \ref{pro_truncation} of the Supplementary Material \cite{sm}, which establishes the bound $\qty| \log[\mathcal{Z}_{\beta}^{(q)}/\mathcal{Z}^{(q)}_{W}] - T_{m} | \leq |V| e^{-m}$.
Furthermore, Proposition \ref{pro_T_m} provides an estimate for the runtime required to compute $T_{m}$, given by $|V|^{\mathcal{O}(km)} \times \mathcal{O}((4keq)^{3km}) \times e^{\mathcal{O}(m)}$. This bound is derived by combining the complexities of cluster enumeration (Lemma \ref{lemma_cluster_list}), computation of the polymer weights $w_{\gamma}$ (Lemma \ref{lemma_weight}), and evaluation of the Ursell functions (Lemma \ref{lemma_ursell}).
By choosing the truncation order $m=\mathcal{O}(\log(N/\epsilon))$, we ensure that the approximation $f_{\beta} := \log\mathcal{Z}^{(q)}_{W} + T_{m}$ is computed with a total error of at most $\epsilon+|V|^{-\theta}$. Substituting this choice of $m$ into the runtime estimate, we find the overall computational complexity to be $e^{\mathcal{O}(\log^{2} (N/\epsilon))}$. This completes the proof once we choose $\epsilon=1/\operatorname{poly}(N)$.

\paragraph*{Proof sketch of Theorem \ref{theorem_boson_lr_clustering}.---}
We then present a sketch of the proof for the power-law clustering theorem for a long-range bosonic system. To employ the interaction-picture cluster expansion technique, we first reformulate the correlation function [cf.\,Eq.\,(\ref{cor})] in a more compact form, which is achieved by introducing the notations $O^{(+)}\coloneq O\otimes \id+\id \otimes O, O^{(0)}\coloneq O\otimes \id $ and $ O^{(1)}\coloneq\id \otimes O-O\otimes \id$ with $\id$ being the identity and write
\begin{equation}\label{cor_Stilde_em}
\begin{aligned}[b]
C_{\beta}(O_X,O_Y)=\frac{1}{\mathcal{Z}_{\beta}^{2}}\operatorname{Tr}\qty[e^{-\beta H^{(+)}}O_X^{(0)}O_Y^{(1)}].
\end{aligned}
\end{equation}
Conventional cluster expansion techniques \cite{kim2025thermal,wild2023classical} are not directly applicable to bosonic systems for two primary reasons. First, the Taylor series of the Boltzmann factor $e^{-\beta H}$ does not converge absolutely for Hamiltonians involving unbounded operators. Second, the final results in clustering theorems for spins/fermions (e.g., Theorem 2 in Ref.\,\cite{kim2025thermal} and Theorem 2 in Ref.\,\cite{kliesch2014locality}) rely on finite operator norms of local observables. These norms are typically infinite in the bosonic case. For instance, the local particle number operator effectively satisfies $\|n_{i}\|=\infty$.
To overcome these difficulties, we employ an interaction-picture cluster expansion. The approach begins by decomposing the Hamiltonian. We define the on-site part as $W \coloneq \sum_{i\in V} W_{i}(n_{i})$, with $W_{i}(n_{i}) = U_{i}n_{i}(n_{i}-1)/2 - \mu_{i}n_{i}$. The remaining off-site terms are then contained in the operator $I \coloneq H-W = \sum_{Z\in E}h_{Z}$. The method proceeds by applying the Dyson series to expand the Boltzmann factor 
\begin{equation}\label{inter_picture_em}
\begin{aligned}[b]
e^{-\beta H^{(+)}}=e^{-\beta W^{(+)}}\widetilde{S}(\beta),
\end{aligned}
\end{equation}
with
\begin{equation}\label{stil_em}
\begin{aligned}[b]
&\widetilde{S}(\beta)=\sum_{m=0}^{\infty}
\\&\sum_{Z_{1},Z_{2},...,Z_{m}
\in E}\ii{0}{\beta}{\tau_{1}}\ii{0}{\tau_{1}}{\tau_{2}}...\ii{0}{\tau_{m-1}}{\tau_{m}}\prod_{l=1}^{m}h_{Z_{l}}(\tau_{l})^{(+)}
\end{aligned}
\end{equation}
and $\bullet(\tau)\coloneq e^{\tau W}\bullet e^{-\tau W}$.
The absolute convergence of such an expansion is discussed in Proposition \ref{pro_abs_boson} of Supplementary Material \cite{sm}. We regard the collection of the regions drawn from $E$ as clusters. For a more compact presentation, one may regard the summation in Eq.\,\eqref{stil_em} as the summation over all ordered sequences with elements drawn from $E$ and denote it by $w$. Then by defining the function 
\begin{equation}\label{}
\begin{aligned}[b]
\widetilde{F}&\colon Z_{1}Z_{2}\cdots Z_{m} 
\mapsto \\&\ii{0}{\beta}{\tau_{1}}\ii{0}{\tau_{1}}{\tau_{2}}...\ii{0}{\tau_{|m|-1}}{\tau_{|m|}}\, \prod_{l=1}^{m}h_{Z_{l}}(\tau_{l})^{(+)},
\end{aligned}
\end{equation}
Eq.\,\eqref{stil_em} can be rewritten as $\widetilde{S}(\beta)=\sum_{w}\widetilde{F}(w)$. 

we put Eq.\,\eqref{inter_picture_em} into Eq.\,\eqref{cor_Stilde_em} to expand the correlation function.
Further analysis reveals that not all clusters contribute meaningfully to the correlation function in Eq.\,\eqref{cor_Stilde_em}. In fact, only those clusters that connect regions $X$ and $Y$ yield non-zero values, see Lemma \ref{lemma_disconnect_cluster} in Supplementary Material \cite{sm}. We usually use $G$ to denote a cluster, and the set of all clusters connecting $X$ and $Y$ is denoted by $\mathcal{G}_{\cl}(E)$ for convenience. Here, connectivity is defined at the collection level: we say regions $X$ and $Y$ are connected by a collection if the latter contains a sequence of pairwise overlapping regions where the first region overlaps with $X$ and the last overlaps with $Y$. This connectivity constraint significantly reduces the number of relevant clusters and provides the foundation for establishing the relationship between correlation functions and the separation distance between the two regions. 

After some arrangement, one will arrive at the following estimation
\begin{equation}\label{cor_beta_stech_em}
\begin{aligned}[b]
&|C_{\beta}(O_{X},O_{Y})|
\leq \sum_{m=0}^{\infty}\sum_{G\in \mathcal{G}_{\cl}(E): |G|=m}\qty(\frac{1}{\operatorname{Tr}_{\widetilde{V}_{G}}e^{-\beta W_{\widetilde{V}_{G}}}})^{2}
\\&\quad \times \sum_{w\in \mathcal{S}(G)}\norm{e^{-\beta W^{(+)}_{\widetilde{V}_{G}}}\widetilde{F}(w)O_{X}^{(0)}O_{Y}^{(1)}}_{1}.
\end{aligned}
\end{equation}
Here, we use $\mathcal{S}(G)$ to denote the collection of the sequences generated by the cluster $G$ and $\|\bullet\|_{1}$ is the trace norm. As an example, given $G=\{Z_1,Z_1,Z_2\}$, we have $\mathcal{S}(G)=\{Z_1 Z_1 Z_2,Z_1 Z_2 Z_1,Z_2 Z_1 Z_1\}$. In Eq.\,\eqref{cor_beta_stech_em} we use $V_{G}\coloneq\bigcup_{Z\in G}Z$ to denote the region corresponding to the cluster $G$ and $\widetilde{V}_{G}\coloneq V_{G}\cup X\cup Y$. We also write $W_{L}\coloneq \sum_{i\in L}W_{i}$ for any region $L\subset V$. A key step in bounding the expression in Eq.\,\eqref{cor_beta_stech_em} is to establish an upper bound on the trace norm of its right-hand side (RHS). This bound is provided in Lemma \ref{lemma_trace_Ftil_w} of the Supplementary Material \cite{sm}. Applying this result effectively reduces the physical problem of bounding the correlation function to the combinatorial task of estimating a weighted sum over clusters.

After careful analysis, we reorganize the summations over all clusters connecting $X$ and $Y$ into two distinct components. The first component yields the factor $(1+d_{X,Y})^{-\alpha}$, under the help of Lemma \ref{lemma_repro}. The second, more challenging component involves summations over clusters that constitute connected components with a fixed region. By ``connected components", we mean collections of regions where any pair of regions is connected by the collection itself. The weight for each cluster involves the quantity $\prod_{l=1}^{|w|}N(w_{l}|w)$ with the self-intersection index $N(w_{l}|w)=|\{Z\in w: V_Z\cap V_{w_{l}}\neq \emptyset\}|$ counting the number of elements in $w$ that geometrically overlap with $w_l$. 
Estimating sums that contain such combinatorial weights is a recurring challenge in the theory of cluster expansions \cite{kuwahara2020gaussian}. In Proposition \ref{pro_cluster_sum}, we establish a bound for these sums, providing a key technical tool for our main results. 

By combining both parts of the summations, we finally arrive at the following estimation
\begin{equation}\label{}
\begin{aligned}[b]
&|C_{\beta}(O_{X},O_{Y})|
\leq C_{4}^{|X|+|Y|}\Phi(\beta)|X||Y|\frac{e^{c_{1}(2t_{1}+|X|+|Y|)/2}}{(1+d_{X,Y})^{\as}}
\\&\quad \times  \sum_{t_{1}=0}^{\infty}\sum_{t_{2}=0}^{\infty}(4C_{3}C_{5}^{2}c_{2}g\nu\sqrt{\beta})^{t_{1}+t_{2}}(2^{\as}\nu)^{t_{1}-1}
\end{aligned}
\end{equation}
with $C_{3},C_{4},C_{5},\nu,c_{1}$ and $c_{2}$ being $\mathcal{O}(1)$ constants depending on $\alpha, D, g,\mu,U_{\min}$ and $U_{\max}$. Then by choosing a sufficiently small $\beta$, the summation above clearly converges, and the proof for the clustering theorem is completed.

\clearpage

\clearpage

\renewcommand\thefootnote{*\arabic{footnote}}

\addtocounter{section}{0}

\setcounter{equation}{0}
\setcounter{theorem}{0}
\setcounter{definition}{0}
\setcounter{lemma}{0}
\setcounter{proposition}{0}
\setcounter{assumption}{0}
\setcounter{corollary}{0}

\renewcommand{\theequation}{S.\arabic{equation}}

\renewcommand{\thesection}{S.\Roman{section}}

\renewcommand{\theHequation}{S.\arabic{equation}}

\renewcommand{\theHtheorem}{S.\arabic{theorem}}

\renewcommand{\theHdefinition}{S.\arabic{definition}}

\renewcommand{\theHlemma}{S.\arabic{lemma}}

\renewcommand{\theHproposition}{S.\arabic{proposition}}

\renewcommand{\theHassumption}{S.\arabic{assumption}}

\renewcommand{\theHcorollary}{S.\arabic{corollary}}
\begin{widetext}

\begin{center}
{\large \bf Supplemental Material for  ``Long-Range Bosonic Systems at Thermal Equilibrium: Computational Complexity and Clustering of Correlations''}\\
\vspace*{0.3cm}
Xin-Hai Tong$^{1}$ and Tomotaka Kuwahara$^{2,3,4}$ \\
\vspace*{0.1cm}
$^{1}${\small \it Department of Physics, The University of Tokyo, 5-1-5 Kashiwanoha, Kashiwa-shi, Chiba 277-8574, Japan} \\
$^{2}${\small \it Analytical Quantum Complexity RIKEN Hakubi Research Team, RIKEN Center for Quantum Computing (RQC), Wako, Saitama 351-0198, Japan} \\
$^{2}${\small \it RIKEN Cluster for Pioneering Research (CPR), Wako, Saitama 351-0198, Japan}\\
$^{3}${\small \it PRESTO, Japan Science and Technology (JST), Kawaguchi, Saitama 332-0012, Japan}\\
\end{center}

\tableofcontents

\section{Notations and Preliminaries}
Though we primarily focus on the long-range Bose-Hubbard model, we first introduce the general notation and then subsequently specialize to the primary interest.

Let $V$ be a finite set of sites embedding in the $D$-dimensional (not necessarily Euclidean) space, where each site $i\in V$ is equipped with an infinite-dimensional Hilbert space $\mathcal{H}_{i}$. Usually, we use the letters $i,j$ (and $x,y$ when necessary) to denote the sites. We refer to $V$ as the vertex set, and define the set $E \coloneqq \{Z \subset V \colon |Z| \leq k\}$ as the collection of all subsets of $V$ with cardinality at most $k\in \mathbb{N}_{\geq 1}$. For each $Z\in E$, we assign a not necessarily bounded Hermitian operator $h_{Z}$ acting on the local Hilbert space $\bigotimes_{i\in Z}\mathcal{H}_{i}$, and define the total $k$-local Hamiltonian as $H=\sum_{Z\in E}h_{Z}$. The total Hamiltonian acts on the tensor product space $\mathcal{H}\coloneq \bigotimes_{i\in V}\mathcal{H}_{i}$. 
Each term $h_Z$ in the sum is understood as an operator on the full Hilbert space $\mathcal{H}$ through the identification $h_Z \equiv h_Z \otimes \id_{V \setminus Z}$, where $\id_{V \setminus Z}=\bigotimes_{i\in V\setminus Z}\id_{i}$ denotes the identity on the subspace $\mathcal{H}_{V \setminus Z} \coloneqq \bigotimes_{i \in V \setminus Z} \mathcal{H}_i$ with $\id_{i}$ being the identity (not normalized) on $\mathcal{H}_{i}$.

To formulate the long-range condition, we first assume that each local term $h_Z$ admits a decomposition of the form:
\begin{equation}\label{sh_Z}
\begin{aligned}[b]
h_{Z}=\sum_{s}J_{Z}^{(s)}h_{Z}^{(s)}.
\end{aligned}
\end{equation}
Here, the coefficients $J_{Z}^{(s)} \in \mathbb{C}$ are coupling constants that may depend on system parameters. The operators $h_{Z}^{(s)}$, by contrast, are simply the product of the creation and annihilation operators $\{a_i^\dagger, a_i\}_{i \in Z}$ and are independent of such parameters.
The range index $s$ depends on a given region $Z$, and in principle, the summation $\sum_{s}$ should be explicitly presented as $\sum_{s\in \mathbb{N}_{Z}}$ with $\mathbb{N}_{Z}$ being some subset of $\mathbb{N}$. However, for simplification, we just omit this notation whenever there is no potential ambiguity. 

Next, to introduce a notion of distance, we equip the vertex set $V$ with an edge set $\mathcal{E}$, thereby defining a simple, connected, and undirected graph $G \coloneqq (V, \mathcal{E})$. The distance between two sites $i,j \in V$ is then defined as their graph distance, i.e., the length of the shortest path connecting them in the graph.
This graph-theoretic distance should be distinguished from any physical distance arising from a specific geometric embedding of the lattice. Note that $d_{i,j}=0$ if and only if $i=j$. With this distance metric, we now define the long-range nature of the couplings by imposing the following condition  for all pairs of sites $i\neq j$,
\begin{equation}\label{slr_condition}
\begin{aligned}[b]
\sum_{Z\in E: Z\ni \{i,j\}}\sum_{s}|J_{Z}^{(s)}| \eqcolon \sum_{Z\in E: Z\ni \{i,j\}}J_{Z} \leq \frac{g}{(1+d_{i,j})^{\as}}, \quad g=\mathcal{O}(1).
\end{aligned}
\end{equation}
Note that the condition \eqref{slr_condition} only restricts the off-site terms (with $i\neq j$). For on-site terms
\begin{equation}\label{}
\begin{aligned}[b]
\sum_{i}h_{i}=\sum_{i}\sum_{s}J_{i}^{(s)}h_{i}^{(s)}\eqcolon H_{0},
\end{aligned}
\end{equation}
we require that its coefficients are always bounded
\begin{equation}\label{on_site_unbound_unifo}
\begin{aligned}[b]
\sum_{s}|J_{i}^{(s)}|\leq g_{0}, \quad g_{0}=\mathcal{O}(1).
\end{aligned}
\end{equation}
Furthermore, we assume that for each site $i \in V$, the local term $h_i$ is non-trivial and contains a component $f(n_i)$, where $f$ is a real-valued function of the local number operator $n_i \coloneqq a_i^\dagger a_i$. This component must be bounded below by a polynomial in $n_i$ with a positive leading coefficient.

Note that one of the primary focuses in this work is the correlation function defined on the Gibbs state $\rho_{\beta}\coloneq e^{-\beta H}/\operatorname{Tr}e^{-\beta H}$: 
\begin{equation}\label{scor}
\begin{aligned}[b]
C_{\beta}(O_{X},O_{Y})\coloneq \operatorname{Tr}(\rho_{\beta}O_{X}O_{Y})-\operatorname{Tr}(\rho_{\beta}O_{X})\cdot \operatorname{Tr}(\rho_{\beta}O_{Y})
\end{aligned}
\end{equation}
where $O_{X}$ and $O_{Y}$ are operators acting on $\bigotimes_{i\in X}\mathcal{H}_{i}$ and $\bigotimes_{i\in Y}\mathcal{H}_{i}$ respectively, with $X,Y\subset V$ being vertex subsets (or referred as regions). We will use $X^{\cc}\coloneq V\backslash X$ to denote the complement of $X$.
For convenience, we first express the correlation function in a more compact form by introducing the following notation:
\begin{equation}\label{splus_0_1}
\begin{aligned}[b]
O^{(+)}\coloneq O\otimes \id+\id \otimes O,\quad O^{(0)}\coloneq O\otimes \id ,\quad O^{(1)}\coloneq\id \otimes O-O\otimes \id
\end{aligned}
\end{equation}
where $\id=\bigotimes_{i\in V}\id_{i}$ denotes the identity operator on $\mathcal{H}$. This allows us to reformulate Eq.\,\eqref{scor} in the doubled total Hilbert space $\mathcal{H}\otimes \mathcal{H}$ as:
\begin{equation}\label{cor_doubled_Hilbert}
\begin{aligned}[b]
C_{\beta}(O_{X},O_{Y})=\frac{1}{\mathcal{Z}^{2}}\operatorname{Tr}\qty[e^{-\beta H^{(+)}}O_{X}^{(0)}O_{Y}^{(1)}],
\end{aligned}
\end{equation}
where $\mathcal{Z}\coloneq \operatorname{Tr}e^{-\beta H}$ is the partition function.

In this supplementary material, we use $C, C_1, C_{2}, c, c_1, c_{2}, ...=\mathcal{O}(1)$, to denote generic constants that are independent of the system size $|V|$. Unless specified otherwise, their values depend only on the model parameters ($\alpha, D, g,\mu,U_{\min}$ and $U_{\max}$). The value of such constants may change from one statement to another. The threshold temperature $\beta_c=\mathcal{O}(1)$ is a constant that is independent of the system size $|V|$. In the event that various results presented herein require distinct threshold values, say $\{\beta_{c,i}\}$, we can, without loss of generality, select a single common threshold $\beta_c := \min_i\beta_{c,i}$ that remains valid for all statements simultaneously. The partition function $\operatorname{Tr}e^{-\beta H}$ is denoted by $\mathcal{Z}_{\beta}$. We may suppress the subscript $\beta$ when its dependence on the inverse temperature $\beta$ is clear from the context or not immediately relevant. See Fig.\,\ref{mindmap_sm} for the illustration of the proof structure of this work.

\begin{figure*}[h]
\centering
\includegraphics[width=\textwidth]{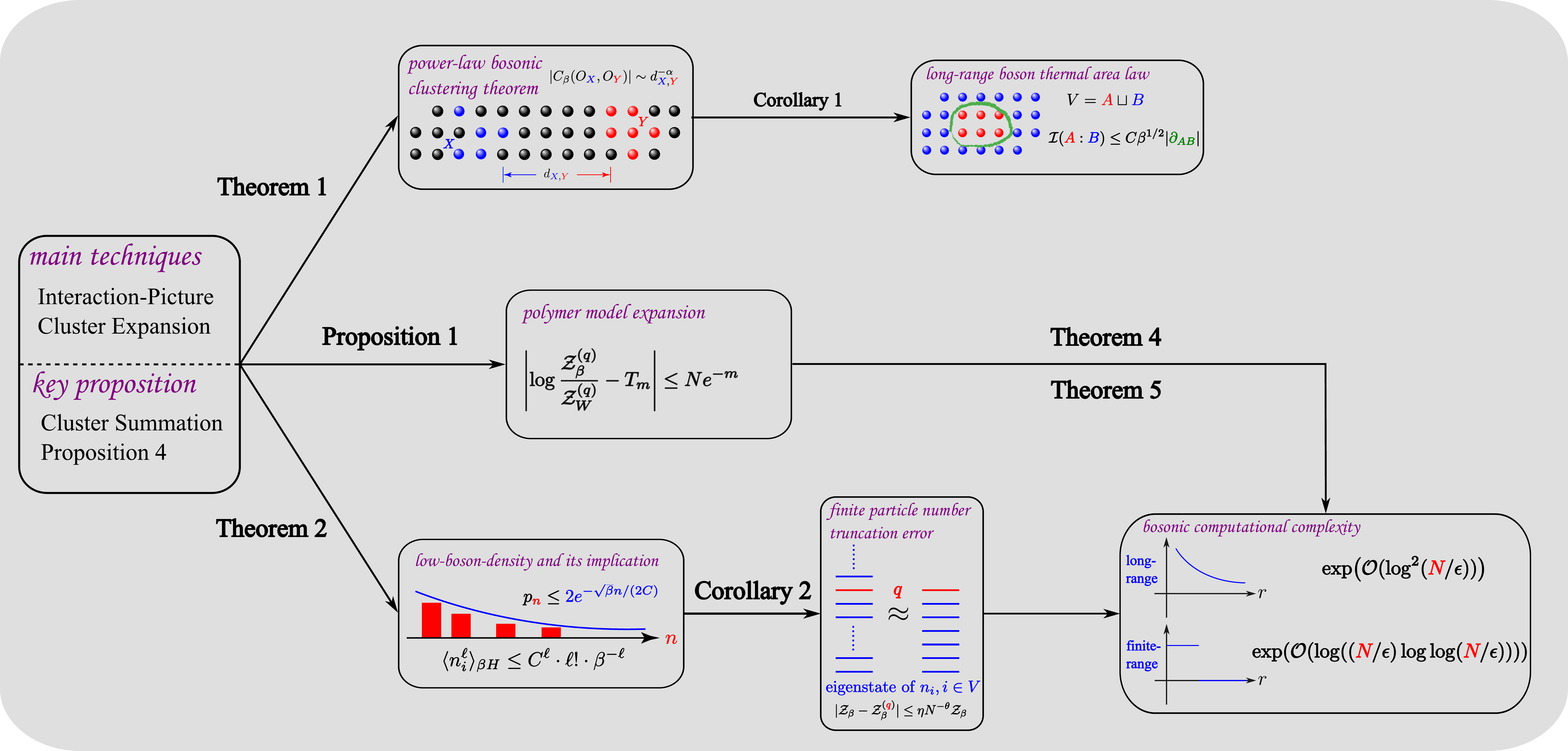}
\captionsetup{justification=raggedright,singlelinecheck=false}
\caption{This figure provides a schematic overview of the main results, methodological framework, and structure of this supplementary material, which concerns the study of high-temperature thermal states in long-range bosonic systems. Our analysis approaches the problem from the perspectives of both complexity theory and statistical mechanics. The system size here is denoted by $N \coloneq |V|$.}
\label{mindmap_sm}
\end{figure*}

\section{Interaction-picture Cluster Expansion Formalism}
One of the main techniques to prove the long-range clustering theorem, as well as the efficient algorithm, is the interaction-picture cluster expansion. This formalism, initiated in Ref.\,\cite{tong2024boson}, is particularly designed for lattice bosons and has promising applications. This section is devoted to presenting the general formulation of such a novel kind of cluster expansion, for later convenience. The following sections will focus on the application aspect of the results in this section.

To highlight the motivation, we start from the correlation function. Before that, we denote $W$ as the summation of all the 
on-site interaction terms, and the rest of the terms are expressed by $W-H\eqcolon I$. Here, the original on-site interaction term are set to be zero in $I=\sum_{Z\in E}h_{Z}=\sum_{Z\in E}\sum_{s}J_{Z}^{(s)}h_{Z}^{(s)}$ and we absorb the minus sign into the parameter $J_{Z}^{(s)}$.
Then the correlation function is given by [cf.\,Eq.\,(\ref{cor_doubled_Hilbert})]
\begin{equation}\label{cor_Stilde}
\begin{aligned}[b]
C_{\beta}(O_X,O_Y)=\frac{1}{\mathcal{Z}^{2}}\operatorname{Tr}\qty[e^{-\beta W^{(+)}}\widetilde{S}(\beta)O_X^{(0)}O_Y^{(1)}]
\end{aligned}
\end{equation}
with the Dyson series defined over the doubled Hilbert space given by
\begin{equation}\label{Stilde}
\begin{aligned}[b]
\widetilde{S}(\beta)=\sum_{m=0}^{\infty}\sum_{Z_{1},Z_{2},...,Z_{m}\in E}\ii{0}{\beta}{\tau_{1}}\ii{0}{\tau_{1}}{\tau_{2}}...\ii{0}{\tau_{m-1}}{\tau_{m}} h_{Z_{1}}(\tau_{1})^{(+)}h_{Z_{2}}(\tau_{2})^{(+)}...h_{Z_{m}}(\tau_{m})^{(+)}, \quad \bullet(\tau)\coloneq e^{\tau W}\bullet e^{-\tau W}
\end{aligned}
\end{equation}

For a more compact and elegant presentation of the proof and related discussions, we introduce several definitions.
Let $G$ denote a multiset with elements being vertex subsets, represented as $G=\{G_{1},G_{2},...,G_{|G|}\}$ where $G_{i}\subseteq V$ for all $i\in [|G|]$ (For a positive integer $n$, we let $[n]$ denote the set $\{1,2,…,n\}$) represents the elements in $G$. A multiset is an unordered collection that may contain repeated elements. It should be noted that we use $\cup$ only for the standard union between sets, not between multisets. For multisets $G$ and $G'$, we define their union (or unordered concatenation) as the multiset $G\oplus  G'\coloneq\{G_{1},G_{2},...,G_{|G|},G'_{1},G'_{2},...,G'_{|G'|}\}$. Obviously, we still have the associative property $G\oplus G' \oplus G''\coloneq (G\oplus G') \oplus G''= G\oplus (G' \oplus G'')$. We denote the set of all multisets whose elements are taken from $E_{0}\subseteq E$ as $\mathcal{M}(E_{0})$.
In our analysis, we denote the set of all ordered sequences generated by a fixed multiset $G$ as $\mathcal{S}(G)$, defined by:
\begin{equation}
\begin{aligned}[b]
\mathcal{S}(G)\coloneq \{G_{P(1)}G_{P(2)}...G_{P(|G|)} \colon  P \in \mathcal{P}(G)\}.
\end{aligned}
\end{equation} 
Here, $\mathcal{P}(G)$ represents the set of all multiset permutations over $G$.
For each element $Z \in G$, let $\mu(Z)$ denote its multiplicity in $G$. If we define $\mathcal{D}(G)$ as the set obtained from $G$ by keeping exactly one copy of each distinct element, then the number of multiset permutations of $G$ is given by:
\begin{equation}\label{multiset_permutation}
|\mathcal{P}(G)| = \frac{|G|!}{\prod_{Z \in \mathcal{D}(G)} \mu(Z)!} = |\mathcal{S}(G)|.
\end{equation}
For example, given $G=\{Z_1,Z_1,Z_2\}$, we have $\mathcal{S}(G)=\{Z_1 Z_1 Z_2,Z_1 Z_2 Z_1,Z_2 Z_1 Z_1\}$, $\mathcal{D}(G)=\{Z_{1},Z_{2}\}$ and $|\mathcal{P}(G)|=3!/(2!1!)=3$. For the empty multiset $G=\emptyset$, we simply have $\mathcal{S}(\emptyset)=\emptyset$, and we also use $\emptyset$ to denote an empty sequence when no ambiguity arises. Then Eq.\,\eqref{Stilde} can be recast as
\begin{equation}\label{sexp_cluster}
\begin{aligned}[b]
\widetilde{S}(\beta) = \sum_{G\in \mathcal{M}(E)}\sum_{w\in \mathcal{S}(G)}\widetilde{F}(w),
\end{aligned}
\end{equation}
where we have defined the function $F$ that maps an ordered sequence of elements in $E$ to an intergal over a product of operators:
\begin{equation}\label{sF_function}
\begin{aligned}[b]
\widetilde{F}\colon Z_{1}Z_{2}\cdots Z_{m} \mapsto \ii{0}{\beta}{\tau_{1}}\ii{0}{\tau_{1}}{\tau_{2}}...\ii{0}{\tau_{|m|-1}}{\tau_{|m|}}\, h_{Z_{1}}(\tau_{1})^{(+)}h_{Z_{2}}(\tau_{2})^{(+)}...h_{Z_{m}}(\tau_{m})^{(+)}
\end{aligned}
\end{equation}
with $m\in \mathbb{N} $ and we define $F(\emptyset)=\id$ for an empty sequence. Note that the absolute convergence can be established in the long-range Bose-Hubbard model as detailed in Proposition \ref{pro_abs_boson}.
 
According to the following lemma, not every term on the RHS (right-hand side) of Eq.\,\eqref{sexp_cluster} contributes nonzero values to the correlation function in Eq.\,\eqref{cor_Stilde}. To address this lemma, we first introduce the concept of connectivity, 
\begin{definition}[Connect/Connected Multiset]
We say two vertex subsets $X,Y\subset L$ are connected by a multiset $G$, if there exists a series of elements $Z_{1}, Z_{2}, \ldots, Z_{n} \in G/w$ ($n\leq |G|$) such that $Z_{1} \cap X \neq \emptyset$, $Z_{n} \cap Y \neq \emptyset$, and $Z_{j} \cap Z_{j+1} \neq \emptyset$ for all $j \in \{1, 2, \ldots, n-1\}$.
We define a multiset to be connected if any pair of its elements is connected by itself. These definitions apply analogously to sequences.
\end{definition}
It should be emphasized that this definition of connectivity is on the level of multisets rather than geometry. We note that $(V_{G},\mathcal{E}_{G})$ is not necessarily a connected subgraph of $(V,\mathcal{E})$, even if $G$ is a connected multiset. Here, we have denoted the corresponding vertex and edge subset as $V_{G} \coloneqq \{i \in V \colon \exists Z \in G \text{ such that } Z \ni i\}$ and $\mathcal{E}_{G} \coloneqq \{e \in \mathcal{E} \colon e = (i,j) \text{ for some } i,j \in V_G\}$, respectively. Then we have 
\begin{lemma}\label{lemma_disconnect_cluster}
Let $G$ be a multiset that does not connect $X$ and $Y$. Define the function $\widetilde{F}$ as in Eq.\,\eqref{sF_function}. Then for any sequence $w\in \mathcal{S}(G)$, we have 
\begin{equation}
\begin{aligned}[b]
\operatorname{Tr}\qty[e^{-\beta W^{(+)}}\widetilde{F}(w)O_{X}^{(0)}O^{(1)}_{Y}]=0.
\end{aligned}
\end{equation}
Here, $O_{X}$ and $O_{Y}$ are operators supported on $X$ and $Y$, respectively.
\end{lemma}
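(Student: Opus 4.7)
The plan is to combine two structural ingredients in succession: a spatial factorization forced by the disconnection hypothesis on $G$, together with a $\mathbb{Z}_{2}$ swap symmetry on the doubled Hilbert space that annihilates anything carrying the antisymmetric label $(1)$. Since the time integrations in $\widetilde{F}(w)$ can be pulled outside the trace, it suffices to show that the traced integrand vanishes pointwise in $\tau_{1}\geq\cdots\geq\tau_{m}$, and I will argue this for each such fixed imaginary-time configuration.

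First I would use the hypothesis that $G$ does not connect $X$ and $Y$ to construct a partition $V=V_{A}\sqcup V_{B}$ with $X\subseteq V_{A}$, $Y\subseteq V_{B}$, and every $Z\in G$ contained entirely in one side. Concretely, take $V_{A}$ to be the union of $X$ together with every $Z\in G$ reachable from $X$ by a chain of pairwise overlapping elements of $G$, and set $V_{B}:=V\setminus V_{A}$; the non-connection of $X$ and $Y$ through $G$ guarantees $Y\subseteq V_{B}$ and that no element of $G$ straddles the cut. Writing $W=W_{A}+W_{B}$ for the induced split of the on-site part then gives $e^{-\beta W^{(+)}}=e^{-\beta W_{A}^{(+)}}\otimes e^{-\beta W_{B}^{(+)}}$ after the natural reshuffling of the doubled Hilbert space as $(\mathcal{H}_{V_{A}}\otimes\mathcal{H}_{V_{A}})\otimes(\mathcal{H}_{V_{B}}\otimes\mathcal{H}_{V_{B}})$.

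Next, at fixed times, every $h_{Z_{l}}(\tau_{l})^{(+)}$ with $Z_{l}\subseteq V_{A}$ commutes with every $h_{Z_{l'}}(\tau_{l'})^{(+)}$ with $Z_{l'}\subseteq V_{B}$, since they act on disjoint regions. I would reorder the product defining $\widetilde{F}(w)$ so that all $V_{A}$-supported factors appear on one side and all $V_{B}$-supported factors on the other, preserving the relative ordering within each side. Combined with $O_{X}^{(0)}$ being supported only on the $V_{A}$-copies and $O_{Y}^{(1)}$ only on the $V_{B}$-copies, the operator inside the trace factorizes as a tensor product, and the trace itself factorizes as
\begin{equation*}
\operatorname{Tr}_{V_{A}\otimes V_{A}}\!\bigl[e^{-\beta W_{A}^{(+)}}\mathcal{O}_{A}\,O_{X}^{(0)}\bigr]\cdot\operatorname{Tr}_{V_{B}\otimes V_{B}}\!\bigl[e^{-\beta W_{B}^{(+)}}\mathcal{O}_{B}\,O_{Y}^{(1)}\bigr],
\end{equation*}
with $\mathcal{O}_{A}$ and $\mathcal{O}_{B}$ the respective ordered products of $h^{(+)}$-factors, so it suffices to show that the $V_{B}$ factor vanishes.

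Finally, I would finish with the swap symmetry on the $V_{B}$ factor. Let $S_{B}$ denote the unitary swap of the two copies of $\mathcal{H}_{V_{B}}$ in the doubled Hilbert space. Every operator of the form $O^{(+)}=O\otimes\id+\id\otimes O$ is $S_{B}$-invariant, so $e^{-\beta W_{B}^{(+)}}\mathcal{O}_{B}$ is swap-symmetric; in contrast, $O_{Y}^{(1)}=\id\otimes O_{Y}-O_{Y}\otimes\id$ satisfies $S_{B}\,O_{Y}^{(1)}\,S_{B}^{\dagger}=-O_{Y}^{(1)}$. Cyclicity and unitary invariance of the trace then give $\operatorname{Tr}_{V_{B}\otimes V_{B}}[(\cdots)O_{Y}^{(1)}]=-\operatorname{Tr}_{V_{B}\otimes V_{B}}[(\cdots)O_{Y}^{(1)}]$, hence the $V_{B}$ factor is zero and the full traced integrand vanishes. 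The hard part will not be the symmetry step itself, which is algebraic, but rather justifying the operator rearrangements, the tensor factorization, and the use of cyclicity in the presence of the unbounded bosonic operators in $\widetilde{F}(w)$; the cleanest fix is to first work with a finite boson-number truncation $\Pi_{V,q}$ (as used elsewhere in the paper) so that all traces are honest finite-dimensional ones, and then pass to $q\to\infty$ at the end using the convergence established in Proposition~\ref{pro_abs_boson}.
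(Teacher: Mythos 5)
Your proposal is correct and follows essentially the same route as the paper: split $V$ into the $X$-side (namely $X$ together with the elements of $G$ chained to it) and its complement containing $Y$, factorize the trace over the two disjoint supports, and kill the $Y$-factor by the swap (anti)symmetry of $O_{Y}^{(1)}$ against the swap-symmetric $e^{-\beta W^{(+)}}$ and $h^{(+)}$ factors. The only difference is that you flag and patch the unbounded-operator subtlety via a finite truncation and a limit, which the paper's proof passes over silently.
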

\begin{proof}
We denote $|w|=m$ as the length of the sequence. By assumption, the sequence $w=Z_{1}Z_{2}...Z_{m}$ cannot connect the regions $X$ and $Y$ i.e.,
\begin{equation*}      
(X \cup Z_{i_1} \cup Z_{i_2} \cup \dots \cup Z_{i_s}) \cap (Y \cup Z_{i_{s+1}} \cup Z_{i_{s+2}} \cup \dots \cup Z_{i_m}) = \emptyset
\end{equation*}
for some $\{i_1, i_2, \dots, i_m\} = \{1, 2, \dots, m\}$. Without loss of generality, let $\{i_1, i_2, \dots, i_s\} = \{1, 2, \dots, s\}$ and $\{1, 2, \dots, s\} \setminus \{i_1, i_2, \dots, i_m\} = \{s+1, s+2, \dots, m\}$. Then we denote
\begin{equation}\label{}
\begin{aligned}[b]
h_{Z_{1}}(\tau_{1})^{(+)}h_{Z_{2}}(\tau_{2})^{(+)}...h_{Z_{s}}(\tau_{s})^{(+)}=\mathcal{W}_{1}, \quad h_{Z_{s+1}}(\tau_{s+1})^{(+)}h_{Z_{s+2}}(\tau_{s+2})^{(+)}...h_{Z_{m}}(\tau_{m})^{(+)}=\mathcal{W}_{2}
\end{aligned}
\end{equation}
and recognize
\begin{equation}
\text{Supp}(X \cup \mathcal{W}_1) \cap \text{Supp}(Y \cup \mathcal{W}_2) = \emptyset. 
\end{equation}
Obviously, we have
\allowdisplaybreaks[4]
\begin{align*}\label{}
&\operatorname{Tr}\qty[e^{-\beta W^{(+)}}\widetilde{F}(w)O_{X}^{(0)}O^{(1)}_{Y}]
\\=&\operatorname{Tr}\qty[e^{-\beta W^{(+)}}\ii{0}{\beta}{\tau_{1}}\ii{0}{\tau_{1}}{\tau_{2}}...\ii{0}{\tau_{|m|-1}}{\tau_{|m|}}\,h_{Z_{1}}(\tau_{1})^{(+)}h_{Z_{2}}(\tau_{2})^{(+)}...h_{Z_{m}}(\tau_{m})^{(+)}O_{X}^{(0)}O^{(1)}_{Y}]
\\=&\ii{0}{\beta}{\tau_{1}}\ii{0}{\tau_{1}}{\tau_{2}}...\ii{0}{\tau_{|m|-1}}{\tau_{|m|}}\,\operatorname{Tr}\qty[ e^{-\beta W^{(+)}} h_{Z_{1}}(\tau_{1})^{(+)}h_{Z_{2}}(\tau_{2})^{(+)}...h_{Z_{m}}(\tau_{m})^{(+)}O_{X}^{(0)}O^{(1)}_{Y}]
\\=&\ii{0}{\beta}{\tau_{1}}\ii{0}{\tau_{1}}{\tau_{2}}...\ii{0}{\tau_{|m|-1}}{\tau_{|m|}}\,\operatorname{Tr}_{X}\qty[e^{-\beta W_{X}^{(+)}}\mathcal{W}_{1} O_{X}^{(0)}] \cdot \operatorname{Tr}_{X^{\cc}}\qty[e^{-\beta W_{X^{\cc}}^{(+)}}\mathcal{W}_{2}O_{Y}^{(1)}].
\refstepcounter{equation}\tag{\theequation}
\end{align*}
Here, the quantity $W_{X}\coloneq \sum_{i\in X}W_{i}$ collects the on-site potentials in the region $X$ and $\operatorname{Tr}_{X}$ denotes the partial trace over $\bigotimes_{i\in X}\mathcal{H}_{i}$ (similar notation has clear meaning for $X^{\cc}$). 

Then we observe that, the operators $O_{Z_{i_k}}^{(+)} = O_{Z_{i_k}} \otimes \id + \id \otimes O_{Z_{i_k}}$ with $k\in [m]$ and $e^{-\beta W_{X^{\cc}}^{(+)}}=e^{-\beta W_{X^{\cc}}}\otimes  e^{-\beta W_{X^{\cc}}}  $ are symmetric for the exchange of the two Hilbert spaces while the operator $O_Y^{(1)} = O_Y \otimes \id - \id \otimes O_Y$ is antisymmetric, which yields $\operatorname{Tr}_{X^{\cc}}\left[ e^{-\beta W_{X^{\cc}}^{(+)}} \mathcal{W}_2 O_Y^{(1)} \right] = 0$. We thus finish the proof of this lemma. 

\end{proof}

{~}

\hrulefill{\bf [ End of Proof of Lemma~\ref{lemma_disconnect_cluster}]}

{~}

From the lemma above, only those multisets that connect $X$ and $Y$ contribute to the correlation function in Eq.\,\eqref{cor_Stilde}, therefore we write
\begin{equation}\label{cor_rho_cl}
\begin{aligned}[b]
C_{\beta}(X,Y)=\operatorname{Tr}\qty[e^{-\beta W^{(+)}}\widetilde{\rho}_{\cl}O_X^{(0)}O_Y^{(1)}]
\end{aligned}
\end{equation}
with
\begin{equation}\label{srho_cl}
\begin{aligned}[b]
\widetilde{\rho}_{\cl}=\frac{1}{\mathcal{Z}^{2}}\sum_{G\in \mathcal{G}(E)}\sum_{w\in \mathcal{S}(G)}\widetilde{F}(w).
\end{aligned}
\end{equation}
Here, we have already defined
\begin{equation}\label{}
\begin{aligned}[b]
\mathcal{G}(E)\coloneq \{G\in \mathcal{M}(E) \colon X \text{ and } Y \text{ are connected by } G\}
\end{aligned}
\end{equation}
to collect all the desirable multisets. 

To expand Eq.\,\eqref{srho_cl} and convert it into a more convenient form, we need the following observation. Even if a multiset $G \in \mathcal{G}(E)$ connects $X$ and $Y$, their union $G\oplus X\oplus Y$ is not necessarily a connected multiset, as illustrated in Fig.~\ref{w_cl_c_example}.
\begin{figure}[h]
\centering
\includegraphics[width=0.6\linewidth]{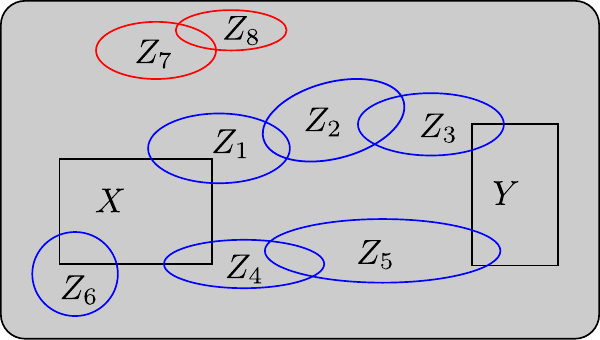}
\caption{An example where the multiset $G = \{Z_1, Z_2, \ldots, Z_8\}$ connects $X$ and $Y$, but $G\oplus X\oplus Y$ is not connected.}
\label{w_cl_c_example}
\end{figure}
This observation motivates us to further refine the enumeration of all desirable multisets in Eq.~\eqref{srho_cl}. 
If $G$ connects $X$ and $Y$, then there always exists a subset $G_0 \subseteq G$ such that $G_0\oplus X\oplus Y$ is connected. This allows us to decompose any $G \in \mathcal{G}(E)$ as follows:
\begin{equation}\label{sdecompos}
G = w_{\text{cl}}(G)\oplus w_{\text{cl}}^{\text{c}}(G)
\end{equation}
where $w_{\text{cl}}^{\text{c}}(G) \coloneqq G \setminus w_{\text{cl}}(G)$, and $w_{\text{cl}}(G)$ is the largest subset (with respect to cardinality) of $G$ such that $\{w_{\text{cl}}(G), X, Y\}$ forms a connected multiset:
\begin{equation}\label{w_cl}
w_{\text{cl}}(G) \coloneqq \arg\max_{G_0 \subseteq G : G_0\oplus X\oplus Y \in \mathcal{C}} |G_0|
\end{equation}
Here, $\mathcal{C}$ denotes the set of all connected multisets whose elements are the vertex subsets. For clarity, we define $\mathcal{C}(E) \coloneqq \{G \in \mathcal{M}(E) \colon G \text{ is connected}\}$ as the set of all connected multisets whose elements are drawn from $E$. For any multiset $G' \notin \mathcal{G}(E)$, we adopt the convention that $w_{\text{cl}}(G') = \emptyset$, ensuring that decomposition \eqref{sdecompos} remains well-defined for all multisets in $\mathcal{C}$.
The decomposition \eqref{sdecompos} provides an alternative approach to evaluating the sum over $\mathcal{G}(E)$ in Eq.~\eqref{srho_cl}. Specifically, we can first identify a multiset $G_1$ such that $G_1\oplus X\oplus Y \in \mathcal{C}$, which can be considered as $w_{\text{cl}}(G)$ for some $G \in \mathcal{M}(E)$. Then, we find another multiset $G_2$ such that $G = G_{1}\oplus G_2$ and $G_{1}=w_{\cl}(G)$. This reverse process allows us to systematically enumerate the elements of $\mathcal{G}(E)$.

To address this process in a formal way, we first denote 
\begin{equation}\label{def_g_cl_E}
\mathcal{G}_{\mathrm{cl}}(E) \coloneqq \{G \in \mathcal{M}(E) \colon G\oplus X\oplus Y \in \mathcal{C}\}
\end{equation}
as the set of all possible elements that can be considered as the image of the map $w_{\mathrm{cl}}(\bullet)$.

Note that for a given $G_{\mathrm{cl}} \in \mathcal{G}_{\mathrm{cl}}(E)$, the choices of $G_{\mathrm{cl}}^{\mathrm{c}}$ such that $G_{\mathrm{cl}}\oplus G_{\mathrm{cl}}^{\mathrm{c}}$ correspond to summands in Eq.~\eqref{srho_cl} are highly constrained by this fixed $G_{\mathrm{cl}}$. We refer the reader to Fig.\,\ref{w_cl_c_example} for examples. If we take $G_{\mathrm{cl}} = \{Z_1, Z_2, Z_3\}$, then $G_0 = \{Z_4, Z_5\}$ is not an acceptable $G_{\mathrm{cl}}^{\mathrm{c}}$ since, by definition, $G_{\mathrm{cl}}\oplus G_0 = \{Z_1, \ldots, Z_5\}$ and $w_{\mathrm{cl}}(G_{\mathrm{cl}}^{(0)}\oplus G_0) = \{Z_1, \ldots, Z_5\} \neq G_{\mathrm{cl}}$. By the same reasoning, we know that $\{Z_6\}$ and $\{Z_4, Z_5, Z_6\}$ are also not eligible, while $\{Z_7, Z_8\}$ is a candidate for $G_{\mathrm{cl}}^{\mathrm{c}}$.
Therefore, the set collecting all possible candidates for $w_{\mathrm{cl}}^{\mathrm{c}}$ should depend on the previously fixed $w_{\mathrm{cl}}^{(0)} \in \mathcal{G}_{\mathrm{cl}}(E)$, leading us to introduce the following definition:
\begin{equation}\label{def_g_cl_c_G}
\mathcal{G}_{\mathrm{cl}}^{\mathrm{c}}(G) \coloneqq \{G_0 \in \mathcal{M}(E) \colon w_{\mathrm{cl}}(G\oplus G_0) = G\}.
\end{equation}

Then the main goal of this section is to justify the following identity:
\begin{equation}\label{cluster_decom}
\begin{aligned}[b]
\widetilde{\rho}_{\cl}=\frac{1}{\mathcal{Z}^{2}}\sum_{G^{\ri}\in \mathcal{G}_{\cl}(E)}\sum_{w^{\ri}\in \mathcal{S}(G^{\ri})}\widetilde{F}(w^{\ri})\cdot \sum_{G^{\rii}\in \mathcal{G}^{\cc}_{\cl}(G^{\ri})}\sum_{w^{\rii}\in \mathcal{S}(G^{\rii})}\widetilde{F}(w^{\rii}).
\end{aligned}
\end{equation}
This will be achieved by the following lemma

\begin{lemma}\label{lemma_shuffle_F}
For a multiset  $G\in \mathcal{M}(E)$, let $G=G^{\ri}\oplus G^{\rii}$
be a decomposition such that $\mathcal{D}(G_{1})\cap \mathcal{D}(G_{2})=\emptyset$ and $f_{Z_{1}}(x_{1})f_{Z_{2}}(x_{2})=f_{Z_{2}}(x_{2})f_{Z_{1}}(x_{1})$ for $Z_{1}\in G^{\ri}, Z_{2}\in G^{\rii}$ and $x_{1},x_{2}\in \mathbb{R}_{\geq 0}$. Here $f_{Z}(x)$ denotes an operator parametrized by $Z\in E$ and $x\in \mathbb{R}_{\geq 0}$. For a sequence $w$, define  
\begin{equation}\label{def_F_w_abstract}
\begin{aligned}[b]
F(w)\coloneq\ii{0}{\beta}{\tau_{1}}\ii{0}{\tau_{1}}{\tau_{2}}...\ii{0}{\tau_{|w|-1}}{\tau_{|w|}} f_{w_{1}}(\tau_{1})f_{w_{2}}(\tau_{2})...f_{w_{|w|}}(\tau_{|w|}),
\end{aligned}
\end{equation}
then we have
\begin{equation}\label{}
\begin{aligned}[b]
\sum_{w\in \mathcal{S}(G)}F(w)=\sum_{w^{\ri}\in \mathcal{S}(G^{\ri})}F(w^{\ri})\cdot \sum_{w^{\rii}\in \mathcal{S}(G^{\rii})}F(w^{\rii}).
\end{aligned}
\end{equation}
\end{lemma}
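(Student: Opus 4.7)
The plan is to treat this as a shuffle-product identity for iterated time-ordered integrals, reducing it to a combinatorial bijection together with one analytic ingredient.

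First I would isolate the combinatorial content. Because $\mathcal{D}(G^{\ri})$ and $\mathcal{D}(G^{\rii})$ are disjoint, every sequence $w \in \mathcal{S}(G)$ unambiguously determines which of its entries come from $G^{\ri}$ and which from $G^{\rii}$. Projecting onto these two sets of positions produces a unique pair $(w^{\ri}, w^{\rii}) \in \mathcal{S}(G^{\ri}) \times \mathcal{S}(G^{\rii})$ together with a shuffle pattern, namely the choice of which $|G^{\ri}|$ among $|G|$ positions are occupied by $w^{\ri}$. Conversely, any such triple assembles a unique $w \in \mathcal{S}(G)$. Using \eqref{multiset_permutation}, the identity $|\mathcal{S}(G)| = \binom{|G|}{|G^{\ri}|}\,|\mathcal{S}(G^{\ri})|\,|\mathcal{S}(G^{\rii})|$ confirms that the map is a bijection.

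Second, I would establish the analytic shuffle identity: for any fixed ordered sequences $u = u_1 \cdots u_m$ and $v = v_1 \cdots v_n$ with operators from $u$ commuting with those from $v$,
\begin{equation*}
F(u)\, F(v) \;=\; \sum_{w \,\in\, u \shuffle v} F(w),
\end{equation*}
where $u \shuffle v$ denotes all interleavings preserving the internal orders of $u$ and $v$. The proof writes the left-hand side as an integral over the product simplex $\{\beta \geq \tau_1 \geq \cdots \geq \tau_m \geq 0\} \times \{\beta \geq \sigma_1 \geq \cdots \geq \sigma_n \geq 0\}$, then partitions the combined $(m+n)$-tuple $(\tau,\sigma)$ according to its total descending order (the ties forming a measure-zero set). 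Each such ordering selects one sub-simplex of $\{\beta \geq t_1 \geq \cdots \geq t_{m+n} \geq 0\}$ labeled by a shuffle, and the commutativity hypothesis then allows us to migrate the $v$-factors past the $u$-factors to put the operator product into the unique time-decreasing order matching the integrand of $F(w)$. Summing the contributions of all sub-simplices yields the identity. Finally, summing over $(w^{\ri}, w^{\rii}) \in \mathcal{S}(G^{\ri}) \times \mathcal{S}(G^{\rii})$ gives
\begin{equation*}
\sum_{w^{\ri}} F(w^{\ri}) \cdot \sum_{w^{\rii}} F(w^{\rii}) = \sum_{w^{\ri}, w^{\rii}} \;\sum_{w \in w^{\ri} \shuffle w^{\rii}} F(w),
\end{equation*}
and the combinatorial bijection regroups the right-hand side into $\sum_{w \in \mathcal{S}(G)} F(w)$, completing the proof.

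The hard part will be the bookkeeping in the analytic step: one must verify that the only commutations invoked when rearranging operators across the combined time line are between $u$-factors and $v$-factors, never between two $u$-factors or two $v$-factors, since those are not assumed to commute. The disjointness $\mathcal{D}(G^{\ri}) \cap \mathcal{D}(G^{\rii}) = \emptyset$ plays the complementary role at the combinatorial level: without it, a common element appearing in both $G^{\ri}$ and $G^{\rii}$ would make the projection $w \mapsto (w^{\ri}, w^{\rii})$ ambiguous and produce overcounting on the right-hand side.
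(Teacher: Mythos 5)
Your proposal is correct and follows essentially the same route as the paper: the disjointness of $\mathcal{D}(G^{\ri})$ and $\mathcal{D}(G^{\rii})$ gives the bijection between $\mathcal{S}(G)$ and shuffles of pairs $(w^{\ri},w^{\rii})$, and the identity $F(u)F(v)=\sum_{w\in u\shuffle v}F(w)$ is obtained from the decomposition of the product simplex into shuffle-labelled sub-simplices together with the cross-family commutativity. The only difference is cosmetic—you read the analytic step from the product of integrals toward the sum over shuffles, while the paper goes the other way—and your remark that the ties form a measure-zero set is a detail the paper leaves implicit.
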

\begin{proof}
We first note the following simple fact:
\begin{equation}\label{}
\begin{aligned}[b]
\sum_{w\in \mathcal{S}(G)}w=\sum_{w^{\ri}\in \mathcal{S}(G^{\ri})}\sum_{w^{\rii}\in \mathcal{S}(G^{\rii})}w^{\ri}\shuffle w^{\rii}.
\end{aligned}
\end{equation}
Here, the symbol $\shuffle$ denotes the shuffle product for two sequences. For convenience we define $F(a+b)\coloneq F(a)+F(b)$ for two sequences $a$ and $b$. If we can prove 
\begin{equation}\label{shuffle_F}
\begin{aligned}[b]
F(w^{\ri}\shuffle w^{\rii})=F(w^{\ri})F(w^{\rii}),
\end{aligned}
\end{equation}
with $w^{\ri/\rii}\in \mathcal{S}(G^{\ri/\rii})$, then we immediately arrive at
\begin{equation}\label{}
\begin{aligned}[b]
\sum_{w\in \mathcal{S}(G)}F(w)&=F(\sum_{w\in \mathcal{S}(G)} w)=F(\sum_{w^{\ri}\in \mathcal{S}(G^{\ri})}\sum_{w^{\rii}\in \mathcal{S}(G^{\rii})}w^{\ri}\shuffle w^{\rii})=\sum_{w^{\ri}\in \mathcal{S}(G^{\ri})}\sum_{w^{\rii}\in \mathcal{S}(G^{\rii})}F(w^{\ri}\shuffle w^{\rii})
\\&=\sum_{w^{\ri}\in \mathcal{S}(G^{\ri})}\sum_{w^{\rii}\in \mathcal{S}(G^{\rii})}F(w^{\ri})F(w^{\rii}),
\end{aligned}
\end{equation}
which finishes the proof. 

Next, we present the proof for Eq.\,\eqref{shuffle_F}. For notational convenience, we replace $w^{\ri},w^{\rii}$ by $a,b$ and simplify $f_{Z}(\tau)$ to $Z(\tau)$. Note that $a$ and $b$ do not share any common elements, and we have
\begin{equation}\label{commu_relation}
\begin{aligned}[b]
Z(\tau)Z'(\tau')=Z'(\tau')Z(\tau), \quad \forall Z\in a, Z'\in b, \tau,\tau'\in \mathbb{R}^{+}.
\end{aligned}
\end{equation}
We will also replace $\ii{0}{\beta}{\tau_{1}}\ii{0}{\tau_{1}}{\tau_{2}}...\ii{0}{\tau_{m-1}}{\tau_{m}}$ by $\int [\tau_{1}\tau_{2}...\tau_{m}]$ for shorter notation.
Then we have (denoting $|a|=m, |b|=n$ and we use $w=a\oplus b \coloneq a_{1}a_{2}...a_{m}b_{1}b_{2}...b_{n}$ to represent their ordered concatenation)
\begin{equation}\label{}
\begin{aligned}[b]
F(a\shuffle b)=\int [\tau_{1}\tau_{2}...\tau_{m+n}]\sum_{\sigma\in \operatorname{OP}(a,b)}w_{\sigma(1)}(\tau_{1})w_{\sigma(2)}(\tau_{2})...w_{\sigma(m+n)}(\tau_{m+n}),
\end{aligned}
\end{equation} 
where $\operatorname{OP}(a,b)$ is the set of all permutations over the elements in $a\oplus b$ with preserving the orders in $a$ and $b$. By switching the argument $\tau_{i}\rightarrow \tau_{\sigma(i)}$ we obtain
\begin{equation}\label{F_a_shu_b}
\begin{aligned}[b]
F(a\shuffle b)=\sum_{\sigma\in \operatorname{OP}(a,b)}\int [\tau_{\sigma(1)}\tau_{\sigma(2)}...\tau_{\sigma(m+n)}]w_{\sigma(1)}(\tau_{\sigma(1)})w_{\sigma(2)}(\tau_{\sigma(2)})...w_{\sigma(m+n)}(\tau_{\sigma(m+n)}).
\end{aligned}
\end{equation}
We know from the property \eqref{commu_relation} that, for any $\sigma$ the integrand on RHS (right hand side) of Eq.\,\eqref{F_a_shu_b} equals $a_{1}(\tau_{1})a_{2}(\tau_{2})...a_{m}(\tau_{m})b_{1}(\tau_{m+1})b_{2}(\tau_{m+2})...b_{n}(\tau_{m+n})$, implying
\begin{equation}\label{F_a_shu_b_2}
\begin{aligned}[b]
F(a\shuffle b)&=\sum_{\sigma\in \operatorname{OP}(a,b)}\int [\tau_{\sigma(1)}\tau_{\sigma(2)}...\tau_{\sigma(m+n)}]a_{1}(\tau_{1})a_{2}(\tau_{2})...a_{m}(\tau_{m})b_{1}(\tau_{m+1})b_{2}(\tau_{m+2})...b_{n}(\tau_{m+n})
\\&=\int [\tau_{1}\tau_{2}...\tau_{m}] a_{1}(\tau_{1})a_{2}(\tau_{2})...a_{m}(\tau_{m})\cdot \int [\tau_{m+1}\tau_{m+2}...\tau_{m+n}]b_{1}(\tau_{m+1})b_{2}(\tau_{m+2})...b_{n}(\tau_{m+n})
\\&=F(a)F(b)
\end{aligned}
\end{equation}
to finish the proof. Here, to obtain the second line, we used the fact that
\begin{equation}\label{}
\begin{aligned}[b]
&\bigcup_{\sigma\in \operatorname{OP}(a,b)}\{\vec{\tau}\in \mathbb{R}^{m+n}| \beta \geq \tau_{\sigma(1)}\geq \tau_{\sigma(2)}\geq ...\geq \tau_{\sigma(m+n)}\geq 0\}
\\=& \{\vec{\tau}\in \mathbb{R}^{m+n}|\beta\geq \tau_{1}\geq \tau_{2}\geq ...\geq \tau_{m}\geq 0, \beta\geq \tau_{m+1}\geq \tau_{m+2}\geq ...\geq \tau_{m+n}\geq 0  \}
\\=& \{\vec{\tau}\in \mathbb{R}^{m+n}|\beta\geq \tau_{1}\geq \tau_{2}\geq ...\geq \tau_{m}\geq 0 \}\bigcup\{\vec{\tau}\in \mathbb{R}^{m+n}|\beta\geq \tau_{m+1}\geq \tau_{m+2}\geq ...\geq \tau_{m+n}\geq 0  \} 
\end{aligned}
\end{equation}
\end{proof}
{~}

\hrulefill{\bf [ End of Proof of Lemma~\ref{lemma_shuffle_F}]}

{~}

By taking $f_{Z}(\tau)=h_{Z}(\tau)^{(+)}$, we clearly justify the Eq.\,\eqref{cluster_decom}. 

The further simplification for Eq.\,\eqref{cluster_decom} requires refined analysis of $\mathcal{G}_{\cl}(E)$ and $\mathcal{G}_{\cl}^{\cc}(G)$. 
Below, we discuss several properties of these newly defined quantities, which are summarized in the following lemma.
\begin{lemma}[Properties of $\mathcal{G}_{\mathrm{cl}}(E)$ and $\mathcal{G}_{\mathrm{cl}}^{\mathrm{c}}(G)$]\label{lemma_properties}
Let $\mathcal{G}_{\mathrm{cl}}(E)$ and $\mathcal{G}_{\mathrm{cl}}^{\mathrm{c}}(G)$ be defined as in Eqs.~\eqref{def_g_cl_E} and \eqref{def_g_cl_c_G}. For any multiset $G$, we denote by $V_{G} \coloneqq \{i \in V \colon \exists Z \in G \text{ such that } Z \ni i\}$ its corresponding vertex subset. Then:
\begin{enumerate}
\item Given any $G \in \mathcal{G}_{\mathrm{cl}}(E)$, if $G' \in \mathcal{G}_{\mathrm{cl}}^{\mathrm{c}}(G)$, then $G'$ does not overlap with $X$, $Y$, and $G$, i.e., $V_{G'} \cap (V_{G} \cup X \cup Y) = \emptyset$.
\item Define $E^{\mathrm{c}}(G,X,Y) \coloneqq \{Z \in E \colon Z \cap (V_G \cup X \cup Y) = \emptyset\}$ to be the collection of all elements in $E$ that do not overlap with $X$, $Y$, and $G$. Then the set of all multisets generated by $E^{\mathrm{c}}(G,X,Y)$ equals $\mathcal{G}_{\mathrm{cl}}^{\mathrm{c}}(G)$, i.e.,
\begin{equation}\label{g_cl_c_G_2}
\mathcal{G}^{\mathrm{c}}_{\mathrm{cl}}(G) = \mathcal{M}(E^{\mathrm{c}}(G,X,Y)).
\end{equation}
\end{enumerate}
\end{lemma}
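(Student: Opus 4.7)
The plan is to exploit the maximality in the definition of $w_{\mathrm{cl}}$ together with the combinatorial structure of multiset connectivity. Both claims essentially reduce to tracking when an additional element can or cannot be ``plugged in'' to extend a connected multiset $G \oplus X \oplus Y$, and the key observation is that a single element $Z$ can be appended while preserving connectedness if and only if $V_Z$ overlaps with $V_G \cup X \cup Y$.

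For the first claim, I would argue by contradiction. Suppose there exists $G' \in \mathcal{G}_{\mathrm{cl}}^{\mathrm{c}}(G)$ and some element $Z \in G'$ with $V_Z \cap (V_G \cup X \cup Y) \neq \emptyset$. Since $G \in \mathcal{G}_{\mathrm{cl}}(E)$, the multiset $G \oplus X \oplus Y$ is connected, so appending $Z$ yields a connected multiset $G \oplus \{Z\} \oplus X \oplus Y$ because $Z$ shares a vertex with some element already in the connected chain. Therefore $G \oplus \{Z\}$ is a candidate in the definition \eqref{w_cl} for $w_{\mathrm{cl}}(G \oplus G')$, which forces $|w_{\mathrm{cl}}(G \oplus G')| \geq |G|+1$, contradicting the defining property $w_{\mathrm{cl}}(G \oplus G') = G$ of $\mathcal{G}_{\mathrm{cl}}^{\mathrm{c}}(G)$.

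For the second claim, the inclusion $\mathcal{G}_{\mathrm{cl}}^{\mathrm{c}}(G) \subseteq \mathcal{M}(E^{\mathrm{c}}(G,X,Y))$ is immediate from part 1, since every element of $G'$ must lie in $E^{\mathrm{c}}(G,X,Y)$. For the reverse inclusion, take $G' \in \mathcal{M}(E^{\mathrm{c}}(G,X,Y))$; I need to check that $w_{\mathrm{cl}}(G \oplus G') = G$. The multiset $G$ itself is an admissible candidate because $G \oplus X \oplus Y \in \mathcal{C}$. To rule out any strictly larger candidate $G^* \supsetneq G$ with $G^* \subseteq G \oplus G'$ and $G^* \oplus X \oplus Y \in \mathcal{C}$, I would consider the non-empty ``excess'' multiset $A = G^* \setminus G \subseteq G'$: each $Z \in A$ satisfies $V_Z \cap (V_G \cup X \cup Y) = \emptyset$, so no chain inside $G^* \oplus X \oplus Y$ can link $A$ to $G \oplus X \oplus Y$ (any such chain would require a first step from some element of $A$ to an element of $G \oplus X \oplus Y$, which is geometrically forbidden). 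This would split $G^* \oplus X \oplus Y$ into at least two connected components, contradicting $G^* \oplus X \oplus Y \in \mathcal{C}$.

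The most delicate point is this last connectivity obstruction: one must argue carefully that no ``indirect'' path through intermediate elements of $A$ can ever reach back into $G \oplus X \oplus Y$. The reason is that membership in $A \subseteq G'$ ensures none of these intermediate elements overlap with $V_G \cup X \cup Y$ either, so the subgraph induced by $A$ is entirely disjoint in terms of vertices from the subgraph induced by $G \oplus X \oplus Y$. Formalizing this via an induction on the length of the connecting chain, or equivalently by noting that the vertex sets $V_A$ and $V_G \cup X \cup Y$ are disjoint and hence induce disjoint components under the ``sharing a vertex'' relation, is the cleanest route.
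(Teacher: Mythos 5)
Your proposal is correct and follows essentially the same route as the paper: part~1 is the identical maximality-contradiction (append an overlapping $Z\in G'$ to $G$ to get a strictly larger connected candidate, contradicting $w_{\mathrm{cl}}(G\oplus G')=G$), and part~2's forward inclusion is deduced from part~1 exactly as in the paper. The only difference is that you spell out the reverse inclusion (the vertex-disjointness of the excess $A$ from $V_G\cup X\cup Y$ forcing disconnection of any larger candidate), which the paper dismisses as ``straightforward to verify''; your elaboration is a valid and welcome filling-in of that step.
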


\begin{proof}
(1) We prove this by contradiction. First, suppose that $V_{G'} \cap X \neq \emptyset$. By the definitions of $\mathcal{G}_{\mathrm{cl}}(E)$ and $\mathcal{G}_{\mathrm{cl}}^{\mathrm{c}}(G)$, we know that there exists $\overline{G} \in \mathcal{M}(E)$ such that $G = w_{\mathrm{cl}}(\overline{G})$ and $\{G', G\} = \overline{G}$. Given the assumption $V_{G'} \cap X \neq \emptyset$, there exists $Z \in G'$ such that $Z \cap X \neq \emptyset$. Therefore, $(G\oplus Z)\oplus X\oplus Y$ is connected. Since $|G\oplus Z| > |G|$, we see that $G$ is not the largest subset of $\overline{G}$ such that it constitutes a connected multiset with $X$ and $Y$, which implies $w_{\mathrm{cl}}(\overline{G}) \neq G$. This contradiction proves that $V_{G'} \cap X = \emptyset$. Similar arguments show that $V_{G'} \cap Y = \emptyset$ and $V_{G'} \cap V_{G} = \emptyset$. Hence, $V_{G'} \cap (V_{G} \cup X \cup Y) = \emptyset$.

(2) From claim (1), we know that every element in $\mathcal{G}_{\mathrm{cl}}^{\mathrm{c}}(G)$ does not overlap with $V_{G}$, $X$, and $Y$, which implies $\mathcal{G}_{\mathrm{cl}}^{\mathrm{c}}(G) \subseteq \mathcal{M}(E^{\mathrm{c}}(G,X,Y))$. Conversely, it is straightforward to verify that every multiset $G' \in \mathcal{M}(E)$ that does not overlap with $V_{G}$, $X$, and $Y$ also belongs to $\mathcal{G}_{\mathrm{cl}}^{\mathrm{c}}(G)$, i.e., $\mathcal{M}(E^{\mathrm{c}}(G,X,Y)) \subseteq \mathcal{G}_{\mathrm{cl}}^{\mathrm{c}}(G)$. This confirms the second claim.
\end{proof}

{~}

\hrulefill{\bf [ End of Proof of Lemma~\ref{lemma_properties}]}

{~}

We proceed with Eq.\,\eqref{cluster_decom} to obtain 
\begin{equation}\label{cluster_decom_interacting}
\begin{aligned}[b]
e^{-\beta W^{(+)}}\widetilde{\rho}_{\cl}&=\frac{1}{\mathcal{Z}^{2}}\sum_{G^{\ri}\in \mathcal{G}_{\cl}(E)}\exp\qty(-\beta W^{(+)}_{\widetilde{V}_{G^{\ri}}})\exp\qty(-\beta W^{(+)}_{\widetilde{V}^{\cc}_{G^{\ri}}})\sum_{w^{\ri}\in \mathcal{S}(G^{\ri})}\widetilde{F}(w^{\ri})\cdot \sum_{G^{\rii}\in \mathcal{G}^{\cc}_{\cl}(G^{\ri})}\sum_{w^{\rii}\in \mathcal{S}(G^{\rii})}\widetilde{F}(w^{\rii})
\\&=\frac{1}{\mathcal{Z}^{2}}\sum_{G\in \mathcal{G}_{\cl}(E)}\exp\qty(-\beta W^{(+)}_{\widetilde{V}_{G}})\sum_{w\in \mathcal{S}(G)}\widetilde{F}(w)\cdot \exp\qty(-\beta H^{(+)}_{\widetilde{V}^{\cc}_{G}}).
\end{aligned}
\end{equation}
Here, $\widetilde{V}_{G}=V_{G}\cup X \cup Y$. Equation \,\eqref{cluster_decom_interacting} will be the starting point to deploy the interaction-picture cluster expansion technique for the long-range bosonic system.

\section{Long-Range Bosonic Clustering Theorem and Implication}
Now, let us restrict ourselves to the long-range Bose-Hubbard model (with the squeezing term), 
\begin{equation}\label{slr_bose_hubbard}
H = -\sum_{i \neq j \in V} \qty[ J_{i,j} (a_i^\dagger a_j + \text{h.c.}) +\widetilde{J}_{i,j} (a_i^\dagger a^{\dagger}_j + \text{h.c.})] + \sum_{i \in V} \left[ \frac{U_i}{2} n_i(n_i - 1) - \mu_i n_i \right],
\end{equation}
with $\max\{|J_{i,j}|,|\widetilde{J}_{i,j}|\}\leq g (1+d_{i,j})^{-\as}$, $|\mu_i| \leq \mu < \infty$, $ 0 < U_{\min} \leq U_i \leq U_{\max} < \infty$ and $\alpha>D$. It can be readily verified that the model \eqref{slr_bose_hubbard} is also covered by condition \eqref{slr_condition}.
Note that in Eq.\,\eqref{slr_bose_hubbard}, we simply have $k=2$ and $E=\{\{i,j\}: i,j\in V\}$. We also denote $(J^{(1)}_{i,j},J^{(2)}_{i,j})=(J_{i,j},\widetilde{J}_{i,j})$ to keep the notation aligned with the convention in Eq.\,\eqref{sh_Z}. If we set all $U_{i}=0$, then Eq.\,\eqref{slr_bose_hubbard} reduces to the long-range free bosons.
Before we move on to the proof for the clustering theorem, we introduce
\begin{definition}[Boson Operator Numbers]\label{definition_boson_operator_number}
For an operator $A$ being the product of creation ($\{a_{i}\}_{i\in V}$) and annihilation operators ($\{a_{i}\}_{i\in V}$), we denote $\mathsf{N}_{i}(A)$ as the total numbers of operators at site $i$ offered by $A$. We also define $\mathsf{N}(A)=\sum_{i\in V_{A}}\mathsf{N}_{i}(A)$ to present the total numbers of operators offered by $A$. 
\end{definition}
\begin{remark}
For example, $\mathsf{N}_{i}(a_{i})=\mathsf{N}(a_{i})=1$, $\mathsf{N}_{i}(n_{i}^{l})=\mathsf{N}(n_{i}^{l})=2l$, $\mathsf{N}_{i}(a_{i}^{\dagger}a_{j})=1$ and $\mathsf{N}(a_{i}^{\dagger}a_{j})=2$. 
\end{remark} 
Then we have
\begin{theorem}[Boson Power-Law Clustering Theorem]\label{stheorem_boson_lr_clustering}
In the long-range Bose-Hubbard model ($\alpha > D$) defined in Eq.\,\eqref{slr_bose_hubbard} over a finite lattice $V$, let operators $O_{X}$ and $O_{Y}$ are products of $\{a_{i}^{\dagger},a_{i}\}_{i\in X}$ and $\{a_{i}^{\dagger},a_{i}\}_{i\in Y}$, respectively. Define the function $\Phi(\beta)\coloneq [\mathsf{N}(O_{X})!]^{1/2} [\mathsf{N}(O_{Y})!]^{1/2}\beta^{-[\mathsf{N}(O_{X})+\mathsf{N}(O_{Y})]/4}$. Then above a threshold temperature $\beta \leq \beta_{c}=\mathcal{O}(1)$,
the following inequality holds for correlation function between $O_{X}$ and $O_{Y}$:
\begin{equation}
|C_{\beta}(O_{X},O_{Y})| \leq C^{|X|+|Y|}\cdot   \frac{\Phi(\beta)}{(1+d_{X,Y})^{\alpha}}.
\end{equation} 
Here, the constant $C=\mathcal{O}(1)$ depends only on the parameter of models (i.e., $U_{\min},U_{\max},\mu,\alpha,g,D$) and the threshold temperature $\beta_{c}$.
\end{theorem}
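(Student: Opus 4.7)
The plan is to combine the interaction-picture cluster expansion developed above (culminating in Eq.~\eqref{cluster_decom_interacting}) with two ingredients: a trace-norm estimate that trades bosonic unboundedness for combinatorial weights via the low-density bound, and a geometric sum over clusters connecting $X$ and $Y$ which produces the factor $(1+d_{X,Y})^{-\alpha}$. Starting from Eqs.~\eqref{cor_rho_cl} and \eqref{cluster_decom_interacting}, I would first observe that the non-connecting factor $\exp(-\beta H^{(+)}_{\widetilde{V}^{\cc}_{G}})$ combines with $\mathcal{Z}^{-2}$ to produce a ratio of partition functions on the complement, which together with the on-site Boltzmann factor $\exp(-\beta W^{(+)}_{\widetilde{V}_{G}})$ yields the bound
\begin{equation*}
|C_{\beta}(O_X,O_Y)| \leq \sum_{G\in \mathcal{G}_{\cl}(E)} \left(\frac{1}{\operatorname{Tr}_{\widetilde{V}_{G}} e^{-\beta W_{\widetilde{V}_{G}}}}\right)^{2} \sum_{w\in \mathcal{S}(G)} \norm{e^{-\beta W^{(+)}_{\widetilde{V}_{G}}}\widetilde{F}(w) O_X^{(0)} O_Y^{(1)}}_{1},
\end{equation*}
exactly as in Eq.~\eqref{cor_beta_stech_em}. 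This recasts the problem in a form where the infinite-dimensional trace is localized to the finite region $\widetilde{V}_{G}$.

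Next I would invoke the trace-norm estimate (Lemma~\ref{lemma_trace_Ftil_w}) to bound the $\|\cdot\|_1$ factor by a combinatorial weight. The key point is that each $\widetilde{F}(w)$ is an $|w|$-fold time-ordered integral of hopping operators in the $W$-interaction picture; acting on the Gibbs-like weight $e^{-\beta W^{(+)}}$ and taking the trace norm, the local particle-number moments are controlled by the low-boson-density inequality (Theorem~\ref{theorem_low_density_main_text}), giving each operator at site $i$ a factor $\mathcal{O}(\beta^{-1/2})$ times a combinatorial self-intersection index $N(w_l|w) = |\{Z\in w : V_Z \cap V_{w_l} \neq \emptyset\}|$. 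This step converts the operator-theoretic problem into a pure estimate on weighted sums over clusters. The operator $O_X^{(0)}O_Y^{(1)}$ contributes the prefactor $\Phi(\beta)$ via the same moment bound, yielding the $[\mathsf{N}(O_X)!]^{1/2}[\mathsf{N}(O_Y)!]^{1/2}\beta^{-[\mathsf{N}(O_X)+\mathsf{N}(O_Y)]/4}$ scaling and the $|X|,|Y|$ dependence.

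The remaining task is to sum over $G \in \mathcal{G}_{\cl}(E)$. Here I would decompose every cluster connecting $X$ and $Y$ into a spine --- a minimal sequence of pairwise overlapping regions $Z_1,\ldots,Z_{t_1}$ with $Z_1\cap X\neq\emptyset$ and $Z_{t_1}\cap Y\neq\emptyset$ --- plus an additional collection of $t_2$ regions attached to the spine forming connected components. The spine summation is controlled by the long-range condition \eqref{slr_condition}: each hop contributes a factor $g(1+d)^{-\alpha}$, and the iterated convolution of such power-laws reproduces an overall $(1+d_{X,Y})^{-\alpha}$ via the reproduction property (Lemma~\ref{lemma_repro}), with only a geometric prefactor in $t_1$. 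The attached-component summation, together with the self-intersection weights, is controlled by Proposition~\ref{pro_cluster_sum}, which provides the crucial bound on $\sum_w \prod_l N(w_l|w)$-type sums. Combining both contributions yields the final geometric series
\begin{equation*}
\sum_{t_1=0}^\infty \sum_{t_2=0}^\infty (2C_3 C_5^k c_2 k g \mu \sqrt{\beta})^{t_1+t_2} (2^{\alpha}\mu)^{t_1-1},
\end{equation*}
which converges for $\beta$ below a threshold $\beta_c = \mathcal{O}(1)$.

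The main obstacle is the third step: bounding the weighted cluster sum without losing either the optimal $\beta^{-\mathsf{N}/4}$ scaling or the power-law decay. The difficulty is specifically bosonic --- because local operators are unbounded, each repetition of a region $Z$ in the cluster produces an increasingly large moment, and naive enumeration via the Kotecky--Preiss condition fails. The self-intersection index $N(w_l|w)$ is what makes this manageable, but it must be summed carefully in conjunction with the long-range hopping weights so that the spine sum retains its algebraic decay $(1+d_{X,Y})^{-\alpha}$ after the reproduction identity. Making this careful split --- so that one power of $(1+d_{Z_l,Z_{l+1}})^{-\alpha}$ is reserved for the geometric spine while the remaining hopping amplitudes are absorbed into the convergent combinatorial series via Proposition~\ref{pro_cluster_sum} --- is the technical heart of the argument and relies essentially on $\alpha>D$.
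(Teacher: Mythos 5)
Your proposal is correct and follows essentially the same route as the paper's proof: the reduction via Proposition~\ref{pro_subsystem_partition} and the interaction-picture expansion, the trace-norm bound of Lemma~\ref{lemma_trace_Ftil_w}, the spine/attachment decomposition (the paper's $\mathcal{G}_{\cl}^{\s}(E)$ and $\mathcal{G}_{\cl}^{\rr}(G^{\ri})$) combined with Lemma~\ref{lemma_repro} and Proposition~\ref{pro_cluster_sum}, and the concluding geometric series are all the same. The only slight misattribution is that the per-site $\mathcal{O}(\beta^{-1/2})$ moment control in Lemma~\ref{lemma_trace_Ftil_w} is obtained by direct trace estimates on $e^{O_x(n_x)}$ rather than by invoking Theorem~\ref{theorem_low_density_main_text}, but the underlying machinery is identical.
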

\begin{remark}
If $\widetilde{O}_{X}$ and $\widetilde{O}_{Y}$ are polynomial functions of $\{a_{i}^{\dagger},a_{i}\}_{i\in X}$ and $\{a_{i}^{\dagger},a_{i}\}_{i\in Y}$ respectively, we can  always decompose $\widetilde{O}_{X}=\sum_{\ell_{1}}A_{\ell_{1}}\widetilde{O}_{X,\ell_{1}}$ and $\widetilde{O}_{Y}=\sum_{\ell_{2}}A_{\ell_{2}}\widetilde{O}_{Y,\ell_{2}}$ such that $\{\widetilde{O}_{X,\ell_{1}}\}$ and $\{\widetilde{O}_{Y,\ell_{2}}\}$ satisfy the precondition of Theorem~\ref{theorem_boson_lr_clustering} and obtain
\begin{equation}\label{}
\begin{aligned}[b]
|C_{\beta}(\widetilde{O}_{X},\widetilde{O}_{Y})|\leq \sum_{\ell_{1},\ell_{2}}|A_{\ell_{1}}A_{\ell_{2}}||C_{\beta }(\widetilde{O}_{X,\ell_{1}},\widetilde{O}_{Y,\ell_{2}})|
\end{aligned}
\end{equation}
to utilize the Theorem~\ref{theorem_boson_lr_clustering}.
\end{remark}
\begin{proof}[Proof of Theorem \ref{stheorem_boson_lr_clustering}]
We start from Eqs.\,\eqref{cluster_decom_interacting} and \eqref{cor_rho_cl} to write
\allowdisplaybreaks[4]
\begin{align*}\label{cor_x_y}
C_{\beta}(O_X,O_Y)&=\frac{1}{\mathcal{Z}^{2}}\sum_{G\in \mathcal{G}_{\cl}(E)}\sum_{w\in \mathcal{S}(G)}\operatorname{Tr}\qty[e^{-\beta W^{(+)}_{\widetilde{V}_{G}}}\widetilde{F}(w)e^{-\beta H^{(+)}_{\widetilde{V}^{\cc}_{G}}}O_{X}^{(0)}O_{Y}^{(1)}]
\\&=\frac{1}{\mathcal{Z}^{2}}\sum_{G\in \mathcal{G}_{\cl}(E)}\sum_{w\in \mathcal{S}(G)}\operatorname{Tr}_{\widetilde{V}_{G}}\qty[e^{-\beta W^{(+)}_{\widetilde{V}_{G}}}\widetilde{F}(w)O_{X}^{(0)}O_{Y}^{(1)}]\cdot \operatorname{Tr}_{\widetilde{V}_{G}^{\cc}}e^{-\beta H^{(+)}_{\widetilde{V}^{\cc}_{G}}}
\\&\leq \sum_{G\in \mathcal{G}_{\cl}(E)}\qty(\frac{1}{\operatorname{Tr}_{\widetilde{V}_{G}}e^{-\beta W_{\widetilde{V}_{G}}}})^{2}\sum_{w\in \mathcal{S}(G)}\qty|\operatorname{Tr}_{\widetilde{V}_{G}}\qty[e^{-\beta W^{(+)}_{\widetilde{V}_{G}}}\widetilde{F}(w)O_{X}^{(0)}O_{Y}^{(1)}]|\\&\leq \sum_{m=0}^{\infty}\sum_{G\in \mathcal{G}_{\cl}(E): |G|=m}\qty(\frac{1}{\operatorname{Tr}_{\widetilde{V}_{G}}e^{-\beta W_{\widetilde{V}_{G}}}})^{2}\sum_{w\in \mathcal{S}(G)}\qty|\operatorname{Tr}_{\widetilde{V}_{G}}\qty[e^{-\beta W^{(+)}_{\widetilde{V}_{G}}}\widetilde{F}(w)O_{X}^{(0)}O_{Y}^{(1)}]|.
\refstepcounter{equation}\tag{\theequation}
\end{align*}
Here, to obtain the third line, we used Proposition \ref{pro_subsystem_partition}. 
Then we use Lemma~\ref{lemma_trace_Ftil_w} to further obtain
\begin{equation}\label{cor_X_Y_proof_2}
\begin{aligned}[b]
&|C_{\beta}(O_{X},O_{Y})|
\\\leq& \sum_{m=0}^{\infty}\sum_{G\in \mathcal{G}_{\cl}(E): |G|=m}\qty(\frac{1}{\operatorname{Tr}_{\widetilde{V}_{G}}e^{-\beta W_{\widetilde{V}_{G}}}})^{2}\sum_{w\in \mathcal{S}(G)}\norm{e^{-\beta W^{(+)}_{\widetilde{V}_{G}}}\widetilde{F}(w)O_{X}^{(0)}O_{Y}^{(1)}}_{1}
\\\leq& \Phi(\beta) \sum_{m=0}^{\infty}\sum_{G\in \mathcal{G}_{\cl}(E): |G|=m}\qty(\frac{1}{\operatorname{Tr}_{\widetilde{V}_{G}}e^{-\beta W_{\widetilde{V}_{G}}}})^{2}\sum_{w\in \mathcal{S}(G)}\qty(\frac{C_{1}}{\sqrt{\beta}})^{2|\widetilde{V}_{G}|}C_{2}^{[\mathsf{N}(O_{X})+\mathsf{N}(O_{Y})]/2}
\frac{\qty(C_{3}\sqrt{\beta})^{m}}{m!} \prod_{x\in V_{w}}[m_{x}(w,\vec{s})!]^{1/2} \\&\quad \times\prod_{l=1}^{m}\qty(\sum_{s_{l}}|J_{w_{l}}^{(s_{l})}|)
\\\leq & C_{4}^{|X|+|Y|}\Phi(\beta)\sum_{m=0}^{\infty}\sum_{G\in \mathcal{G}_{\cl}(E): |G|=m}\frac{\qty(C_{3}C_{5}^{k}\sqrt{\beta})^{m}}{m!}\sum_{w\in \mathcal{S}(G)} \prod_{x\in V_{w}}[m_{x}(w,\vec{s})!]^{1/2}\prod_{l=1}^{m}\qty(\sum_{s_{l}}|J_{w_{l}}^{(s_{l})}|).
\end{aligned}
\end{equation}
Here, we used for high temperatures $\beta\leq \beta_{c}$ that  $\qty(\operatorname{Tr}_{\widetilde{V}_{G}}e^{-\beta W_{\widetilde{V}_{G}}})^{-1}\leq (C_{0,1}\beta^{1/2})^{|\widetilde{V}_{G}|}$ and used $(C_{0,1}^{-1}C_{1})^{|\widetilde{V}_{G}|}\leq (\max\{1,C_{0,1}^{-1}C_{1}\})^{km+|X|+|Y|}\eqcolon C_{5}^{km}C_{5}^{|X|+|Y|}$.  In fact we also used $\mathsf{N}(O_{X})+\mathsf{N}(O_{Y})\leq \max_{i\in X\cup Y }\mathsf{N}_{i}(O_{X}O_{Y}) (|X|+|Y|)$ and denote $C_{4}\coloneq \{1,C^{1/2}_{2}\}^{\max_{i\in X\cup Y }\mathsf{N}_{i}(O_{X}O_{Y})}C_{5}$. To proceed, we need to investigate the summation in the following form:
\begin{equation}\label{}
\begin{aligned}[b]
\sum_{G\in \mathcal{G}_{\cl}(E)}\frac{C^{|G|}}{|G|!}\sum_{w\in \mathcal{S}(G)} f(w), \quad \text{with }  f(w)\coloneq \prod_{x\in V_{w}}[m_{x}(w,\vec{s})!]^{1/2}\prod_{l=1}^{m}\qty(\sum_{s_{l}}|J_{w_{l}}^{(s_{l})}|) \text{ and } C\coloneq C_{3}C_{5}^{k}\sqrt{\beta}.
\end{aligned}
\end{equation}

To approach this estimation, we employ a strategy similar to the one used for Eq.\,\eqref{srho_cl}. Our methodology consists of two main steps: 
First, we decompose the multisets in Eq.\,\eqref{cor_X_Y_proof_2} according to specific criteria. Then, we analyze the reverse process of this decomposition to derive our estimate.
This approach enables us to systematically address complex summation structures of this kind.

We first note that for $G \in \mathcal{G}_{\text{cl}}(E)$, there always exists a decomposition $G = G_1\oplus G_2$ such that the multiset $G_1\oplus X\oplus Y$ is simply connected. Intuitively, this means there are no ``holes'' in the contour consisting of $G_1\oplus X\oplus Y$. 
For example, as illustrated in Fig.\,\ref{w_cl_s_example}, consider $G = \{Z_1, Z_2, \ldots, Z_8\}$. The multiset $G_1\oplus X\oplus Y$ is simply connected if we choose $G_1 = \{Z_1, Z_2, Z_3\}$ or $G_1 = \{Z_4, Z_5, Z_6\}$. However, it is not simply connected if we take $G_1 = \{Z_1, Z_2, \ldots, Z_6\}$.
We remark that the concept of a simply connected multiset can be defined rigorously as follows, using the concept of a dual associated graph:
\begin{definition}[Dual Associated Graph]
The associated dual graph of a multiset $G$ is defined as follows: the elements in its underlying set $\mathcal{D}(G)=\{g_1, g_2, \ldots, g_m\}$ serve as vertices, and for any $i \neq j$, there exists an edge between vertices $g_i$ and $g_j$ if and only if $g_i$ overlaps with $g_j$ (i.e., $V_{g_i} \cap V_{g_j} \neq \emptyset$).
\end{definition}
Then we call a multiset $G$ simply connected if and only if its associated dual graph is acyclic.

Based on the analysis above, we denote by $\mathcal{SC}$ the set of all simply connected multisets whose elements are vertex subsets. We then introduce
\begin{equation}\label{g_cl_s}
\mathcal{G}_{\text{cl}}^{\text{s}}(E) \coloneqq \{G \in \mathcal{M}(E) \colon G\oplus X\oplus Y \in \mathcal{SC}\}
\end{equation}
to collect all multisets in $\mathcal{M}(E)$ that simply connect $X$ and $Y$. However, not all multisets connecting $X$ and $Y$ are simply connected. To address the remaining multisets, for a fixed $G \in \mathcal{G}_{\text{cl}}^{\text{s}}(E)$, we define the set
\begin{equation}
\mathcal{G}_{\text{cl}}^{\text{r}}(G) = \{G' \in \mathcal{M}(E) \colon G\oplus G'\oplus X\oplus Y \in \mathcal{C}\},
\end{equation}
whose elements can be combined with those in Eq.\,\eqref{g_cl_s} to constitute a multiset in $\mathcal{G}_{\text{cl}}(E)$.
\begin{figure}[h]
\centering
\includegraphics[width=0.6\linewidth]{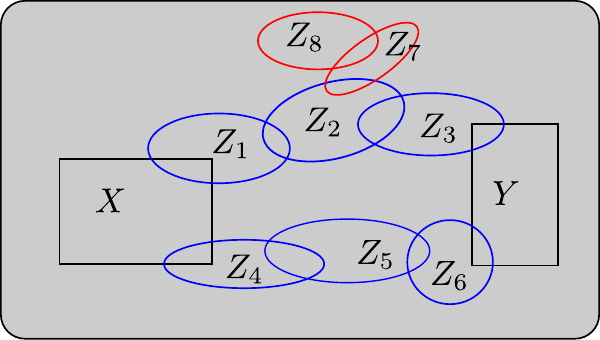}
\caption{Illustration of the concept of a ``simply connected'' multiset.}
\label{w_cl_s_example}
\end{figure}
We then have 
\begin{equation}\label{}
\begin{aligned}[b]
\sum_{G\in \mathcal{G}_{\cl}(E)}\frac{C^{|G|}}{|G|!}\sum_{w\in \mathcal{S}(G)} f(w)\leq \sum_{G^{\ri}\in \mathcal{G}_{\cl}^{\s}(E)}\sum_{w^{\ri}\in \mathcal{S}(G^{\ri})}f(w^{\ri})\cdot \sum_{G^{\rii}\in \mathcal{G}_{\cl}^{\rr}(G^{\ri})}\sum_{w^{\rii}\in \mathcal{S}(G^{\rii})}f(w^{\rii})\cdot \frac{|G|!}{|G^{\ri}|!|G^{\rii}|!}\cdot \frac{2^{|G|}}{|G|!}\cdot C^{|G|}.
\end{aligned}
\end{equation}
Here, the factor $2^{|G|}$ comes from the reorganization of the product $\prod_{x\in V_{w}}[m_{x}(w,\vec{s})!]^{1/2}$ into $2^{m_{x}(w,\vec{s})}\prod_{x\in V_{w}}[m_{x}(w^{\ri},\vec{s})!]^{1/2}\prod_{x\in V_{w}}[m_{x}(w^{\rii},\vec{s})!]^{1/2}$. This comes from the fact that $m_{x}(w,\vec{s})=m_{x}(w^{\ri},\vec{s})+m_{x}(w^{\rii},\vec{s})$, giving us $m_{x}(w,\vec{s})!\leq 2^{m_{x}(w,\vec{s})}m_{x}(w^{\ri},\vec{s})!m_{x}(w^{\rii},\vec{s})!$ and note that $\prod_{x\in V_{w}}2^{m_{x}(w,\vec{s})}=4^{|w|}=4^{|G|}$. Here, we have noticed that each off-site term in the Bose-Hubbard model offers exactly one operator to the sites in its support. With these facts, we move further via
\begin{equation}\label{sum_c_G_G_fac}
\begin{aligned}[b]
\sum_{G\in \mathcal{G}_{\cl}(E)}\frac{C^{|G|}}{|G|!}\sum_{w\in \mathcal{S}(G)} f(w)\leq \sum_{t_{1}=0}^{\infty}\sum_{t_{2}=0}^{\infty}(2C)^{t_{1}+t_{2}}\sum_{G^{\ri}\in \mathcal{G}_{\cl}^{\s}(E): |G^{\ri}|=t_{1}}\frac{1}{t_{1}!}\sum_{w^{\ri}\in \mathcal{S}(G^{\ri})}f(w^{\ri})\cdot \sum_{G^{\rii}\in \mathcal{G}_{\cl}^{\rr}(G^{\ri}): |G^{\rii}|=t_{2}}\frac{1}{t_{2}!}\sum_{w^{\rii}\in \mathcal{S}(G^{\rii})}f(w^{\rii}).
\end{aligned}
\end{equation}
The first summation can be estimated via Lemma \ref{lemma_repro} with,
\begin{equation}\label{sum_t_1}
\begin{aligned}[b]
\sum_{G^{\ri}\in \mathcal{G}_{\cl}^{\s}(E): |G^{\ri}|=t_{1}}\frac{1}{t_{1}!}\sum_{w^{\ri}\in \mathcal{S}(G^{\ri})}f(w^{\ri})&=\frac{1}{t_{1}!}\sum_{G\in \mathcal{G}_{\cl}^{\s}(E): |G|=t_{1}}|\mathcal{P}(G)|\prod_{x\in V_{w}}[m_{x}(w,\vec{s})!]^{1/2}\cdot \prod_{l=1}^{m}\qty(\sum_{s_{l}}|J_{w_{l}}^{(s_{l})}|)
\\&\leq \sum_{G\in \mathcal{G}_{\cl}^{\s}(E): |G|=t_{1}}\prod_{x\in V_{w}}[m_{x}(w,\vec{s})!]^{1/2}\cdot\prod_{l=1}^{m}\qty(\sum_{s_{l}}|J_{w_{l}}^{(s_{l})}|)
\\&\leq \sum_{G\in \mathcal{G}_{\cl}^{\s}(E): |G|=t_{1}}(\sqrt{2})^{t_{1}}\prod_{l=1}^{m}\qty(\sum_{s_{l}}|J_{w_{l}}^{(s_{l})}|)
\\&\leq (\sqrt{2})^{t_{1}}\sum_{i_{1}\in X}\sum_{i_{2}\in V}\sum_{i_{3}\in V}...\sum_{i_{t_{1}}\in V}\sum_{i_{t_{1}+1}\in Y}\qty(\sum_{s_{1}}|J_{i_{1}i_{2}}^{(s_{1})}|)\qty(\sum_{s_{2}}|J_{i_{2}i_{3}}^{(s_{2})}|)...\qty(\sum_{s_{t_{1}}}|J_{i_{t_{1}}i_{t_{1}+1}}^{(s_{t_{1}})}|)
\\&\leq (\sqrt{2}g)^{t_{1}}(2^{\as}\nu)^{t_{1}-1}|X||Y|\frac{1}{(1+d_{X,Y})^{\as}}
\end{aligned}
\end{equation}
where in the third line we used the fact that in this case $k= 2$. For the second summation in Eq.\,\eqref{sum_c_G_G_fac}, we simply
use Proposition~\ref{pro_cluster_sum} with $\gamma=g\nu, \nu \coloneq \sup_{i\in V}\sum_{j\in V}(1+d_{i,j})^{-\alpha}$ (note that the on-site terms are not involved in the interaction-picture cluster expansion) and $\phi(Z)=\sum_{s}|J_{Z}^{(s)}|$ to obtain 
\allowdisplaybreaks[4]
\begin{align*}\label{sum_t_2}
\sum_{G^{\rii}\in \mathcal{G}_{\cl}^{\rr}(G^{\ri}): |G^{\rii}|=t_{2}}\frac{1}{t_{2}!}\sum_{w^{\rii}\in \mathcal{S}(G^{\rii})}f(w^{\rii})&=\sum_{G\in \mathcal{G}_{\cl}^{\rr}(G^{\ri}): |G|=t_{2}}\frac{1}{t_{2}!}\sum_{w\in \mathcal{S}(G)}\prod_{x\in V_{w}}[m_{x}(w,\vec{s})!]^{1/2}\prod_{l=1}^{m}\qty(\sum_{s_{l}}|J_{w_{l}}^{(s_{l})}|)
\\&\leq \sum_{G\in \mathcal{G}_{\cl}^{\rr}(G^{\ri}): |G|=t_{2}}\frac{1}{t_{2}!}\sum_{w\in \mathcal{S}(G)}\prod_{x\in V_{w}}\qty[m_{x}(w,\vec{s})^{m_{x}(w,\vec{s})}]^{1/2}\prod_{l=1}^{m}\qty(\sum_{s_{l}}|J_{w_{l}}^{(s_{l})}|)
\\&\leq \sum_{G\in \mathcal{G}_{\cl}^{\rr}(G^{\ri}): |G|=t_{2}}\frac{1}{t_{2}!}\sum_{w\in \mathcal{S}(G)}\prod_{l=1}^{t_{2}}\qty(\sum_{s_{l}}|J_{w_{l}}^{(s_{l})}|)N(w_{l}|w)
\\&\leq e^{c_{1}|G^{\ri}|/k}(c_{2}g\nu k)^{t_{2}}
\\&\leq e^{c_{1}(t_{1}k+|X|+|Y|)/k}(c_{2}g\nu k)^{t_{2}}.
\refstepcounter{equation}\tag{\theequation}
\end{align*}
Here, for the third line, we also used Lemma \ref{lemma_estimate_m_x}.
Putting Eqs.\,\eqref{sum_t_1} and \eqref{sum_t_2} into Eq.\,\eqref{sum_c_G_G_fac} and from Eq.\,\eqref{cor_X_Y_proof_2} we know that
\begin{equation}\label{sum_t1_t2}
\begin{aligned}[b]
&|C_{\beta}(O_{X},O_{Y})|
\leq C_{4}^{|X|+|Y|}\Phi(\beta)\sum_{t_{1}=0}^{\infty}\sum_{t_{2}=0}^{\infty}(2C_{3}C_{5}^{k}c_{2}kg\nu\sqrt{\beta})^{t_{1}+t_{2}}(2^{\as}\nu)^{t_{1}-1}|X||Y|\frac{1}{(1+d_{X,Y})^{\as}}e^{c_{1}(t_{1}k+|X|+|Y|)/k}.
\end{aligned}
\end{equation}
To ensure the convergence of the summation, we require $2C_{3}C_{5}^{k}c_{2}kg\nu\sqrt{\beta}\cdot 2^{\as}\nu<1$ and $2C_{3}C_{5}^{k}c_{2}kg\nu\sqrt{\beta}<1$. Equivalently, we set the threshold temperature $\beta\leq\beta_{c}< (2^{\as+1}C_{3}C_{5}^{k}c_{2}kg\nu^{2})^{-2}$. Thus, we complete the proof of the present theorem. 
\end{proof}
\begin{remark}
Redefine $\Phi(\beta)\coloneq [\mathsf{N}(O_{X})!]^{1/2} [\mathsf{N}(O_{Y})!]^{1/2}\beta^{-[\mathsf{N}(O_{X})+\mathsf{N}(O_{Y})]/2}$ we can also obtain
\begin{equation}
|C_{\beta}(O_{X},O_{Y})| \leq \mathcal{O}(1)^{|X|+|Y|}\cdot   \frac{\Phi(\beta)}{(1+d_{X,Y})^{\alpha}}.
\end{equation} 
for free bosons. 
\end{remark}

{~}

\hrulefill{\bf [ End of Proof of Theorem~\ref{stheorem_boson_lr_clustering}]}

{~}

The clustering theorem directly leads us to the following thermal area law. 

\begin{corollary}\label{cor_thermal_area_law_sm}
Let $\rho_{\beta}$ be the thermal state of the long-range Bose-Hubbard model defined by Eqs.~\eqref{slr_bose_hubbard} with $\alpha > D$ on a finite lattice $V$. Let 
$V=A\sqcup B$. be a bypartition and $(\rho_{A},\rho_{B})\coloneq (\operatorname{Tr}_{B}\rho_{\beta}, \operatorname{Tr}_{A}\rho_{\beta})$ be the reduced density operators. At high temperatures $\beta < \beta_{c}=\mathcal{O}(1)$, the mutual information between subsystems $A$ and $B$ satisfies $\mathcal{I}(A:B)\leq C \cdot \beta^{1/2} \cdot |\partial_{AB}|$
with $C$ being an $\mathcal{O}(1)$ constant and $|\partial _{AB}|$ being the number of sites in the mutual boundary region of $A$ and $B$.
\end{corollary}
\begin{proof}
We introduce the free energy functional $F(\bullet)\coloneq \operatorname{Tr}\qty(H\bullet)-\beta^{-1}S(\bullet)$, where $S(\bullet)$ is the von Neumann entropy. From the Gibbs variational principle \cite{lemm2023thermal,alhambra2023quantum}, we have the inequality $F(\rho_{A} \otimes \rho_{B})\geq F(\rho_{\beta})$. This, combined with the definition of mutual information and the additivity of entropy for product states, $S(\rho_{A}\otimes \rho_{B})=S(\rho_{A})+S(\rho_{B})$, leads to
\begin{equation}\label{IAB_HI}
\mathcal{I}(A:B) \leq \beta \operatorname{Tr}[H(\rho_{A}\otimes \rho_{B}-\rho_{\beta})].
\end{equation}
By decomposing the Hamiltonian as $H = H_A + H_B + H_\partial$, we note that the bound is controlled entirely by the boundary term $H_{\partial}=-\sum_{i\in A,j\in B}J_{i,j}(a_{i}^{\dagger}a_{j}+a_{j}^{\dagger}a_{i})$.

A key simplification arises from observing that $\operatorname{Tr}[H_{\partial}(\rho_{A}\otimes \rho_{B})]=0$. This is a consequence of the total particle number conservation, which implies that single-annihilation-operator expectation values vanish; for instance, $\operatorname{Tr}_{A}(a_{i}\rho_{A}) = \operatorname{Tr}(a_{i}\rho_{\beta}) = 0$. Since terms in the expansion of $\operatorname{Tr}[H_{\partial}(\rho_{A}\otimes \rho_{B})]$ factorize into products of such expectation values (e.g., $\operatorname{Tr}_A(a_i^\dagger \rho_A) \operatorname{Tr}_B(a_j \rho_B)$), the entire term is zero. The inequality for the mutual information thus becomes
\begin{equation}\label{}
\begin{aligned}[b]
\mathcal{I}(A:B) \leq \beta  \sum_{i\in A,j\in B} \qty{|J_{i,j}| \left|\operatorname{Tr}\qty[(a_{i}^{\dagger}a_{j}+a_{j}^{\dagger}a_{i})\rho_{\beta}]\right|+|\widetilde{J}_{i,j}| \left|\operatorname{Tr}\qty[(a_{i}^{\dagger}a_{j}^{\dagger}+a_{j}a_{i})\rho_{\beta}]\right|}.
\end{aligned}
\end{equation}
The vanishing single-operator expectation values also allow us to recast this bound in terms of two-point correlation functions:
\begin{equation}\label{area_law_3}
\mathcal{I}(A:B) \leq \beta  \sum_{i\in A,j\in B}\qty{|J_{i,j}| \qty[|C_{\beta}(a_{i}^{\dagger},a_{j})|+|C_{\beta}(a_{j}^{\dagger},a_{i})|]+|\widetilde{J}_{i,j}| \qty[|C_{\beta}(a_{i}^{\dagger},a_{j}^{\dagger})|+|C_{\beta}(a_{j},a_{i})|]}.
\end{equation}
We now apply the clustering property of correlations from Theorem \ref{theorem_boson_lr_clustering}. Noting that the local operator norms are $\mathsf{N}(a_{i}^{\dagger})=\mathsf{N}(a_{j})=1$, we obtain the bound $|C_{\beta}(a_{i}^{\dagger},a_{j})| \leq C_{1}^{2}\beta^{-1/2}d_{i,j}^{-\alpha}$ for $i\in A, j\in B$, where $C_{1}$ is the $\mathcal{O}(1)$ constant from the theorem. Similar inequalities hold for other correlation functions in Eq.\,\eqref{area_law_3}. Substituting these bounds and the condition on the couplings $\max\{|J_{i,j}|,|\widetilde{J}_{i,j}|\}\leq g (1+d_{i,j})^{-\as}$ from Eq.~\eqref{slr_condition} into Eq.~\eqref{area_law_3}, we find 
\begin{equation}\label{}
\begin{aligned}[b]
\mathcal{I}(A:B)\leq 2C_{1}^{2}g\beta^{1/2}\sum_{i\in A,j\in B}d_{i,j}^{-2\alpha}.
\end{aligned}
\end{equation}
It is known that for $2\alpha>D+1$, where $D$ is the lattice dimension, the summation is upper-bounded by a term proportional to the boundary area, $\sum_{i\in A,j\in B}d_{i,j}^{-2\alpha} \lesssim |\partial_{AB}|$ \cite{minato2022fate}. This completes the proof.
\end{proof}
{~}

\hrulefill{\bf [ End of Proof of Corollary~\ref{cor_thermal_area_law_sm}]}

{~}

\section{Low Boson Density and Implication}
The low-boson-density inequality, which provides a concentration bound on the local particle number, is a foundational result with significant applications, as detailed in the main text. To the best of our knowledge, this inequality was first rigorously established for high-temperature thermal states of the finite-range Bose-Hubbard model in Ref.\,\cite{tong2024boson}. In the present work, we extend this analysis to the long-range Bose-Hubbard model, defined in Eq.\,\eqref{slr_bose_hubbard}. Furthermore, we demonstrate a key application of this bound to the classical simulation of bosonic systems. Specifically, the concentration of the particle number permits a controlled truncation of the infinite-dimensional local Hilbert space, thereby enabling efficient classical algorithms with a provably negligible error.

\begin{theorem}[Low Density Inequality]\label{stheorem_low_density}
In long-range Bose-Hubbard model ($\alpha > D$) defined in Eq.\,\eqref{slr_bose_hubbard} on a finite lattice $V$, above a temperature threshold $\beta \leq \beta_{c}$, we have for $l\in \mathbb{N}$ and $i\in V$ that $\operatorname{Tr}(n_{i}^{l}\rho_{\beta})\leq C^{l}\cdot l!\cdot \beta^{-l/2}$ with $C=\mathcal{O}(1)$ independent from $i$ and $l$. 
\end{theorem}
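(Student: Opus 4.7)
The plan is to deploy the interaction-picture cluster expansion in the single (rather than doubled) Hilbert space. Writing $e^{-\beta H} = e^{-\beta W} S(\beta)$ with the single-copy Dyson series $S(\beta) = \sum_{G \in \mathcal{M}(E)} \sum_{w \in \mathcal{S}(G)} F(w)$, where $F(w)$ is the single-copy analog of $\widetilde{F}$ from Eq.\,(\ref{sF_function}) obtained by dropping the $(+)$ superscripts, the moment becomes
$$
\operatorname{Tr}(n_i^l \rho_\beta) = \frac{\sum_{G}\sum_{w \in \mathcal{S}(G)} \operatorname{Tr}\bigl[e^{-\beta W} F(w)\, n_i^l\bigr]}{\sum_{G}\sum_{w \in \mathcal{S}(G)} \operatorname{Tr}\bigl[e^{-\beta W} F(w)\bigr]}.
$$
By an analog of Lemma \ref{lemma_disconnect_cluster} with the single insertion $n_i^l$ playing the role of the observable pair, any cluster $G$ whose elements do not connect to site $i$ factorizes, and after applying Lemma \ref{lemma_shuffle_F} and Lemma \ref{lemma_properties} the disconnected factors cancel between numerator and denominator. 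The surviving sum runs over clusters $G$ such that $G\oplus\{i\}$ is connected as a multiset.

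For each such cluster the on-site Boltzmann weight $e^{-\beta W_{V_G\cup\{i\}}}$ factorizes across sites, reducing each site-$j$ trace to an on-site expectation of a monomial in $a_j^\dagger, a_j$ containing $m_j(w)+l\,\delta_{ji}$ operators. Because $W_j(n_j)=U_j n_j^2/2-\mu_{0,j} n_j$ with $U_j\geq U_{\min}>0$, a direct Gaussian/Laplace estimate (or an elementary number-basis computation) yields $\langle n_j^p\rangle_{\beta W_j}\leq C^p\,p!\,\beta^{-p/2}$ and, more generally, a $C^p\sqrt{p!}\beta^{-p/2}$ bound for a degree-$p$ monomial; the trivial cluster $G=\emptyset$ then supplies the desired $C^l l!\beta^{-l/2}$ base case. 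For $G\neq\emptyset$ one separates the $l$-dependent factor from the cluster-dependent factor via $(m_i(w)+l)!\leq 2^{m_i(w)+l}\, m_i(w)!\, l!$, exactly as in the proof of Theorem \ref{stheorem_boson_lr_clustering}. Combining this with the time-ordering integral $\int d\vec\tau\leq \beta^m/m!$ and the on-site trace ratios of Lemma \ref{lemma_trace_Ftil_w} (adapted to the single copy), the contribution of a size-$m$ cluster takes the schematic form $l!\beta^{-l/2}\,(C\sqrt\beta)^m\prod_{x\in V_w}\sqrt{m_x(w)!}\,\prod_{l'=1}^{m}\sum_{s_{l'}}|J_{w_{l'}}^{(s_{l'})}|$.

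The final step is the combinatorial sum over clusters $G$ with $G\oplus\{i\}$ connected, weighted as above. This is precisely the setting of Proposition \ref{pro_cluster_sum} with $\phi(Z)=\sum_s|J_Z^{(s)}|$, and the long-range decay (\ref{slr_condition}) with $\alpha>D$ guarantees $\mu:=\sup_i\sum_j(1+d_{i,j})^{-\alpha}<\infty$, so the proposition produces an $(c_2 g\mu k\sqrt\beta\,C')^m$ bound summable for $\beta\leq \beta_c=\mathcal{O}(1)$. Choosing $\beta_c$ small enough makes the geometric series in $m$ convergent, leaving $\operatorname{Tr}(n_i^l\rho_\beta)\leq C^l\,l!\,\beta^{-l/2}$ with $C$ independent of $i$ and $l$. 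The main obstacle is the clean isolation of the $l!\beta^{-l/2}$ factor from the cluster combinatorics: since $n_i^l$ directly inflates the operator count at site $i$ from $m_i(w)$ to $m_i(w)+l$, one must ensure that the splitting $(m_i(w)+l)!\leq 2^{m_i(w)+l} m_i(w)! l!$ does not let an $l$-dependence leak into the combinatorial constant, which is handled by absorbing the factor $2^l$ into a redefinition of $C$ that remains finite because the underlying cluster sum is convergent.
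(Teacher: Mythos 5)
Your overall route is the same as the paper's: expand $e^{-\beta H}=e^{-\beta W}S(\beta)$ in the single copy, isolate the clusters connected to $\{i\}$ via Lemma~\ref{lemma_shuffle_F}, bound each cluster's contribution by per-site trace norms (Lemma~\ref{lem_trace_f_w}), split off the $l$-dependent factorial, and close the combinatorial sum with Proposition~\ref{pro_cluster_sum}. The final schematic weight you write matches the paper's Eq.~\eqref{X_ens_result_first_term_2}.

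There is, however, one genuine gap: your claim that ``the disconnected factors cancel between numerator and denominator.'' They do not cancel. After the decomposition, the disconnected clusters attached to a fixed connected piece $G$ re-exponentiate to $e^{-\beta H_{V_G^{\mathrm{c}}}}$, so the quantity you must control is the ratio $\operatorname{Tr}_{V_G^{\mathrm{c}}}e^{-\beta H_{V_G^{\mathrm{c}}}}/\operatorname{Tr}e^{-\beta H}$, and this is \emph{not} equal to $(\operatorname{Tr}_{V_G}e^{-\beta W_{V_G}})^{-1}$, because $H$ contains hopping terms inside $V_G$ and across the boundary $\partial V_G$ that are absent from $W_{V_G}+H_{V_G^{\mathrm{c}}}$. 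The paper establishes the needed one-sided inequality $\operatorname{Tr}_{V_G^{\mathrm{c}}}e^{-\beta H_{V_G^{\mathrm{c}}}}\cdot\operatorname{Tr}_{V_G}e^{-\beta W_{V_G}}\leq \operatorname{Tr}e^{-\beta H}$ as Proposition~\ref{pro_subsystem_partition}, via the Gibbs variational principle together with the observation that hopping terms have vanishing expectation in number-diagonal product states; this is the step that converts the global $\mathcal{Z}$ in your denominator into the local normalization $\operatorname{Tr}_{V_G}e^{-\beta W_{V_G}}$ that absorbs the $(c/\sqrt{\beta})^{|V_G|}$ factor from the trace-norm bound. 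Without it your argument does not close. Relatedly, there is no single-copy ``analog of Lemma~\ref{lemma_disconnect_cluster}'': the vanishing there relies on the antisymmetry of $O_Y^{(1)}$ under swapping the two copies, whereas here disconnected clusters merely factorize and must be bounded, not discarded. Your handling of the $l$-dependence via $(m_i(w)+l)!\leq 2^{m_i(w)+l}m_i(w)!\,l!$ is fine and equivalent to the paper's use of $\sqrt{(2l)!}\leq 2^l l!$.
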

\begin{proof}
To address the low-density condition, we begin with the thermal average of a local operator $O_X$:
\begin{equation}\label{}
\begin{aligned}[b]
\langle O_X \rangle_{\beta H}=\frac{1}{\mathcal{Z}}\operatorname{Tr}\qty(O_Xe^{-\beta H})=\frac{1}{\mathcal{Z}}\operatorname{Tr}\qty[O_Xe^{-\beta W}S(\beta)]
\end{aligned}
\end{equation}
with
\begin{equation}\label{dyson_single}
\begin{aligned}[b]
S(\beta)=\sum_{G\in \mathcal{M}(E)}\sum_{w\in \mathcal{S}(G)}F(w), \quad F(w)\coloneq  \ii{0}{\beta}{\tau_{1}}\ii{0}{\tau_{1}}{\tau_{2}}...\ii{0}{\tau_{|w|-1}}{\tau_{|w|}} h_{w_{1}}(\tau_{1})h_{w_{2}}(\tau_{2})...h_{w_{|w|}}(\tau_{|w|}).
\end{aligned}
\end{equation}
Then we proceed as follows (we omitted the label $G\in \mathcal{M}(E)$ when no potential confusion arises.)
\begin{equation}\label{X_ensemble}
\begin{aligned}[b]
\langle O_X \rangle_{\beta H}&=\frac{1}{\mathcal{Z}}\sum_{G\in \mathcal{M}(E)}\sum_{w\in \mathcal{S}(G)}\operatorname{Tr}\qty[O_Xe^{-\beta W}F(w)]
\\&=\frac{1}{\mathcal{Z}}\qty(\sum_{G:V_{G}\cap X\neq \emptyset}+\sum_{G:V_{G}\cap X= \emptyset})\sum_{w\in \mathcal{S}(G)}\operatorname{Tr}\qty[O_Xe^{-\beta W}F(w)]
\\&\leq \frac{1}{\mathcal{Z}}\sum_{G:V_{G}\cap X\neq \emptyset}\sum_{w\in \mathcal{S}(G)}\operatorname{Tr}\qty[O_Xe^{-\beta W}F(w)]+ \frac{\abs{\operatorname{Tr}_{X}\qty(O_X e^{-\beta W_{X}})}\cdot \operatorname{Tr}_{X^{\cc}}e^{-\beta H_{X^{\cc}}}}{\operatorname{Tr}e^{-\beta H}}.
\end{aligned}
\end{equation}
Here we used $\sum_{G: V_{G}\cap X=\emptyset}\sum_{w\in \mathcal{S}(G)}\operatorname{Tr}\qty[X e^{-\beta W}F(w)]=\operatorname{Tr}_{X}\qty(X e^{-\beta W_{X}})\cdot \operatorname{Tr}_{X^{\cc}}e^{-\beta H_{X^{\cc}}}$. Generally, by using Proposition~\ref{pro_subsystem_partition}, we upper bound RHS of Eq.\,\eqref{X_ensemble} via
\begin{equation}\label{X_ens_result}
\begin{aligned}[b]
\langle O_X \rangle_{\beta H}\leq \frac{1}{\mathcal{Z}}\sum_{G:V_{G}\cap X\neq \emptyset}\sum_{w\in \mathcal{S}(G)}\operatorname{Tr}\qty[O_Xe^{-\beta W}F(w)]+ \langle O_X \rangle_{\beta W_{X}} ,
\end{aligned}
\end{equation} 
In a more specific case where $X=n_{i}^{l}$ with $l\in \mathbb{N}$, supported on a single site $i$, then Eq.\,\eqref{X_ens_result} reduces to 
\begin{equation}\label{X_ens_result}
\begin{aligned}[b]
\langle n_{i}^{l} \rangle_{\beta H}\leq \frac{1}{\mathcal{Z}}\sum_{G\in \mathcal{M}(E):V_{G}\ni i}\sum_{w\in \mathcal{S}(G)}\operatorname{Tr}\qty[n_{i}^{l}e^{-\beta W}F(w)]+   \langle n_{i}^{l} \rangle_{\beta W_{i}}=\term{1}+\term{2} ,
\end{aligned}
\end{equation}

Now, the only task is to estimate the first term in Eq.\,\eqref{X_ens_result}, which is similar to the treatment in Theorem \ref{theorem_boson_lr_clustering} but is not implemented in the doubled Hilbert space. We start by noting that
\begin{equation}\label{X_ens_result_first_term}
\begin{aligned}[b]
\terms{1}{X_ens_result}=\operatorname{Tr}(n_{i}^{l}e^{-\beta W}\rho_{\cl}),
\end{aligned}
\end{equation}
where by further using Lemma \ref{lemma_shuffle_F} with setting $f_{Z}(\tau)=h_{Z}(\tau)$ we arrive at
\begin{equation}\label{rho_cl_single}
\begin{aligned}[b]
\rho_{\cl}=\frac{1}{\mathcal{Z}}\sum_{G\in \mathcal{M}(E):V_{G}\ni i}\sum_{w\in \mathcal{S}(G)}F(w)=\frac{1}{\mathcal{Z}}\sum_{G^{\ri}\in \mathcal{G}_{\cl}(i)}\sum_{w^{\ri}\in \mathcal{S}(G^{\ri})}\sum_{G^{\rii}\in \mathcal{G}_{\cl}^{\cc}(G^{\ri})}\sum_{w^{\rii}\in \mathcal{S}(G^{\rii})}F(w^{\ri})\cdot F(w^{\rii}).
\end{aligned}
\end{equation}
In Eq.\,\eqref{rho_cl_single}, we denote $\mathcal{G}_{\cl}(i)\coloneq \{G\in \mathcal{M}(E)\colon G\oplus\{i\}\in \mathcal{C}\}$ to collect all the multiset that constitute a connected multiset with $\{i\}$.  To obtain the RHS, we apply the same treatment as in Eq.\,\eqref{cluster_decom}.
By Eq.\,\eqref{cluster_decom_interacting}, we recognize that
\begin{equation}\label{exp_beta_W_rho_cl}
\begin{aligned}[b]
e^{-\beta W}\rho_{\cl}=\frac{1}{\mathcal{Z}}\sum_{G\in \mathcal{G}_{\cl}(i)}e^{-\beta W_{\widetilde{i}}}\sum_{w\in \mathcal{S}(G)}F(w) e^{-\beta H_{\widetilde{i}^{\cc}}}=\frac{1}{\mathcal{Z}}\sum_{G\in \mathcal{G}_{\cl}(i)}e^{-\beta W_{V_{G}}}\sum_{w\in \mathcal{S}(G)}F(w) e^{-\beta H_{V_{G}^{\cc}}}.
\end{aligned}
\end{equation}
Here, we use $i$ as a shorthand of $V_{i}=\{i\}$ and we note that $\widetilde{i}=i \cup V_{G}=V_{G}$. Putting Eq.\,\eqref{exp_beta_W_rho_cl} into Eq.\,\eqref{X_ens_result_first_term} we obatin
\begin{equation}\label{X_ens_result_first_term}
\begin{aligned}[b]
\terms{1}{X_ens_result}&=\frac{1}{\mathcal{Z}}\sum_{G\in \mathcal{G}_{\cl}(i)}\sum_{w\in \mathcal{S}(G)}\operatorname{Tr}\qty[n_{i}^{l}e^{-\beta W_{V_{G}}}F(w)e^{-\beta H_{V_{G}^{\cc}}}]
\\&\leq \sum_{G\in \mathcal{G}_{\cl}(i)}\frac{\operatorname{Tr}e^{-\beta H_{V^{\cc}_{G}}}}{\operatorname{Tr}e^{-\beta H}}\sum_{w\in \mathcal{S}(G)}\norm{n_{i}^{l}e^{-\beta W_{V_{G}}}F(w)}_{1}
\\&\leq \sum_{G\in \mathcal{G}_{\cl}(i)}\frac{1}{\operatorname{Tr}_{V_{G}}e^{-\beta W_{V_{G}}}}\sum_{w\in \mathcal{S}(G)}\norm{n_{i}^{l}e^{-\beta W_{V_{G}}}F(w)}_{1} .
\end{aligned}
\end{equation}
Here, we used Proposition~\ref{pro_subsystem_partition} to obtain the last line of Eq.\,\eqref{X_ens_result_first_term}. In the last line, we actually change the notation $\bullet(\tau)\coloneq e^{\tau W}\bullet e^{-\tau W}$ to  $\bullet(\tau)\coloneq e^{\tau W_{V_{G}}}\bullet e^{-\tau W_{V_{G}}}$ in $F(w)$.

The further estimation for the norm on the RHS of Eq.\,\eqref{X_ens_result_first_term} will be achieved by using Lemmas~\ref{lem_trace_f_w} and \ref{lemma_estimate_m_x}, i.e., we put the following inequality
\begin{equation}\label{}
\begin{aligned}[b]
\norm{n_{i}^{l}e^{-\beta W_{V_{G}}}F(w)}_{1}\leq \sqrt{(2l)!}\qty(\frac{2C_{1}}{\sqrt{\beta}})^{l}  \qty(\sqrt{\frac{2C_{1}C_{2}}{\beta}})^{|V_{G}|}\frac{(2eC_{1}\sqrt{\beta})^{m}}{m!}\prod_{l=1}^{m}\qty(\sum_{s_{l}}|J_{w_{l}}^{(s_{l})}|)N(w_{l}|w),
\end{aligned}
\end{equation}
into Eq.\,\eqref{X_ens_result_first_term} to arrive at
\begin{equation}\label{X_ens_result_first_term_2}
\begin{aligned}[b]
&\terms{1}{X_ens_result}
\\\leq& \sqrt{(2l)!}\qty(\frac{2C_{1}}{\sqrt{\beta}})^{l}\sum_{m=0}^{\infty}\sum_{G\in \mathcal{G}_{\cl}(i): |G|=m}\frac{1}{\operatorname{Tr}_{V_{G}}e^{-\beta W_{V_{G}}}}\sum_{w\in \mathcal{S}(G)} \qty(\sqrt{\frac{2C_{1}C_{2}}{\beta}})^{|V_{G}|}\frac{(2eC_{1}\sqrt{\beta})^{m}}{m!}\prod_{l=1}^{m}\qty(\sum_{s_{l}}|J_{w_{l}}^{(s_{l})}|)N(w_{l}|w)
\\\leq & \sqrt{(2l)!}\qty(\frac{2C_{1}}{\sqrt{\beta}})^{l}\sum_{m=0}^{\infty}\frac{(2eC_{1}C_{3}^{k}\sqrt{\beta})^{m}}{m!}\sum_{G\in \mathcal{G}_{\cl}(i): |G|=m}\sum_{w\in \mathcal{S}(G)}\prod_{l=1}^{m}\qty(\sum_{s_{l}}|J_{w_{l}}^{(s_{l})}|)N(w_{l}|w)
\\\leq & \sqrt{(2l)!}\qty(\frac{2C_{1}}{\sqrt{\beta}})^{l}\sum_{m=0}^{\infty}\frac{(2eC_{1}C_{3}^{k}\sqrt{\beta})^{m}}{m!}e^{c_{1}/k}(c_{2}g\nu k)^{m}m!.
\end{aligned}
\end{equation}
Here, we used $\operatorname{Tr}_{V_{G}}e^{-\beta W_{V_{G}}}\leq (C_{3,0}\beta^{1/2})^{|V_{G}|}$ with $C_{3,0}=\mathcal{O}(1)$ (see Lemma~5 in Ref.\,\cite{tong2024boson}) and denote $C_{3}\coloneq C_{3,0}\max\{2C_{1}C_{2},1\}=\mathcal{O}(1)$. To deal with the third line in Eq.\,\eqref{X_ens_result_first_term_2}, we simply
use Proposition~\ref{pro_cluster_sum} with $\gamma=g\nu, \nu \coloneq \sup_{i\in V}\sum_{j\in V}(1+d_{i,j})^{-\alpha}$ (note that the on-site terms are not involved in the interaction-picture cluster expansion) and $\phi(Z)=\sum_{s}|J_{Z}^{(s)}|$.
Clearly, the series on the right-hand side of Eq.\,\eqref{X_ens_result_first_term_2} converges above a threshold temperature, and therefore, we finish the proof. Note that we also need to use $(2l)!\leq 2^{2l}(l!)^{2}$. 
\end{proof}
{~}

\hrulefill{\bf [ End of Proof of Theorem~\ref{stheorem_low_density}]}

{~}
\begin{remark}
Parallel result:
in long-range free-boson model, above a temperature threshold $\beta \leq \beta_{c}$, we have for $l\in \mathbb{N}$ and $i\in V$ that $\langle n_{i}^{l} \rangle_{\beta H}\leq C^{l}\cdot l! \cdot \beta^{-l}$. Note that in this case, the series in Eq.\,\eqref{X_ens_result_first_term_2} will be temperature independent, and we should impose some extra conditions on the parameters of free bosons to ensure the convergence of the corresponding series (as well as the existence of the Gibbs state). 
\end{remark}
\begin{remark}
The temperature scaling here is optimal, i.e., it is the same as on-site case (with all $J_{ij}=\widetilde{J}_{ij}=0$).
\end{remark}

\begin{corollary}[Boson-Number Concentration]\label{coro_boson_concen}
Let $\{\ket{n}_{i}\}_{n\in \mathbb{N}}$ be the eigenstates of $n_{i}$ and define $p^{(i)}_{n}\coloneq \prescript{}{i}{\bra{n}}\operatorname{Tr}_{i^{\cc}}\rho_{\beta}\ket{n}_{i} $ as the boson-number concentration at state $n\in \mathbb{N}$ for $i\in V$. Then for high temperature $\beta\leq \beta_{c}$, we have $p_{n}^{(i)}\leq 2e^{-\sqrt{\beta}n/(2C)}$. Here, the constant $C=\mathcal{O}(1)$ is the same as that in the conclusion of Theorem \ref{stheorem_low_density}.
\end{corollary}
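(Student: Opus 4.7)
The plan is to execute a Chernoff-style exponential Markov argument whose only input is the moment bound already established in Theorem~\ref{stheorem_low_density}. The first observation is that for any $t>0$
\begin{equation*}
\operatorname{Tr}\!\bigl(e^{tn_i}\rho_\beta\bigr) \;=\; \sum_{m=0}^{\infty} e^{tm}\,p_m^{(i)} \;\geq\; e^{tn}\,p_n^{(i)},
\end{equation*}
since every term in the sum is nonnegative. Rearranging gives the Chernoff-type estimate $p_n^{(i)}\leq e^{-tn}\operatorname{Tr}\!\bigl(e^{tn_i}\rho_\beta\bigr)$, which reduces the task to controlling the exponential moment generating function of $n_i$.

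Next I would Taylor-expand $e^{tn_i}$, exchange trace with sum using Tonelli on the nonnegative partial sums, and substitute the moment bound of Theorem~\ref{stheorem_low_density}:
\begin{equation*}
\operatorname{Tr}\!\bigl(e^{tn_i}\rho_\beta\bigr) \;=\; \sum_{\ell=0}^{\infty}\frac{t^\ell}{\ell!}\operatorname{Tr}(n_i^\ell\rho_\beta) \;\leq\; \sum_{\ell=0}^{\infty}\Bigl(\tfrac{tC}{\sqrt{\beta}}\Bigr)^{\!\ell} \;=\; \frac{1}{1-tC/\sqrt{\beta}},
\end{equation*}
valid for $t<\sqrt{\beta}/C$. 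The factorials in the moment bound cancel exactly against those in the Taylor coefficients, which is precisely why the bound of Theorem~\ref{stheorem_low_density} produces a genuinely geometric series and hence an exponential (rather than merely stretched) tail.

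Finally I would choose $t=\sqrt{\beta}/(2C)$, which lies strictly inside the radius of convergence and reduces the denominator to $1/2$. Combining the two estimates delivers
\begin{equation*}
p_n^{(i)} \;\leq\; 2\,e^{-\sqrt{\beta}\,n/(2C)},
\end{equation*}
which is the claim. I do not anticipate any serious obstacle: the interchange of trace and infinite sum is justified by the nonnegativity of $p_m^{(i)}$ and of each moment, and finiteness of the series at the chosen $t$ is guaranteed by the geometric estimate above. The only mild subtlety is that $n_i$ is unbounded, so one should not take the spectral-theoretic exponential for granted; this is handled by always interpreting $\operatorname{Tr}(e^{tn_i}\rho_\beta)$ via the explicit series representation, whose finiteness is itself the content of the second step.
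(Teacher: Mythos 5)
Your proposal is correct and is essentially the paper's own argument: the paper likewise bounds the exponential moment $\sum_n e^{cn}p_n^{(i)}$ by expanding the exponential, interchanging the (nonnegative) sums, inserting the moment bound from Theorem~\ref{stheorem_low_density} so the factorials cancel into a geometric series, and then chooses $c=\sqrt{\beta}/(2C)$ and drops all but one term to extract $p_n^{(i)}\leq 2e^{-\sqrt{\beta}n/(2C)}$. The only difference is presentational (you phrase it as a Chernoff bound), so no further comment is needed.
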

\begin{proof}
From Theorem \ref{stheorem_low_density}, we have for all $l\in \mathbb{N}, i\in V$ and $\beta\leq \beta_{c}$ that
\begin{equation}\label{moment_summation}
\begin{aligned}[b]
\operatorname{Tr}(n_{i}^{l}\rho_{\beta})=\sum_{n=0}^{\infty}n^{l}p^{(i)}_{n}\leq C^{l} l! \beta^{-l/2}.
\end{aligned}
\end{equation} 
We omit the subscript ``$(i)$'' for $p_{n}^{(i)}$ for brevity and consider
\begin{equation}\label{boson_concen_1}
\begin{aligned}[b]
\sum_{n=0}^{\infty}e^{cn}p_{n}=\sum_{n=0}^{\infty}\sum_{m=0}^{\infty}\frac{(cn)^{m}}{m!}p_{n}
= \sum_{m=0}^{\infty}\frac{c^{m}}{m!}\sum_{n=0}^{\infty}n^{m}p_{n} \leq \sum_{m=0}^{\infty}  \frac{c^{m}}{m!} (C\beta^{-1/2})^{m}m!=\sum_{m=0}^{\infty} (cC\beta^{-1/2})^{m}.
\end{aligned}
\end{equation}
By choosing $c=0.5C^{-1}\beta^{1/2}$ we arrive at
\begin{equation}\label{}
\begin{aligned}[b]
\sum_{n=0}^{\infty}e^{cn}p_{n}=\sum_{n=0}^{\infty}\sum_{m=0}^{\infty}\frac{(cn)^{m}}{m!}p_{n}=\frac{1}{1-cC\beta^{-1/2}},
\end{aligned}
\end{equation} 
which clearly implies
\begin{equation}\label{}
\begin{aligned}[b]
p_{n}\leq \frac{e^{-cn}}{1-cC\beta^{-1/2}}=\frac{e^{-0.5C^{-1}\beta^{1/2}n}}{1-0.5C^{-1}\beta^{1/2}C\beta^{-1/2}}=2e^{-\sqrt{\beta} n/(2C)}
\end{aligned}
\end{equation}
and finished the proof.
\end{proof}
{~}

\hrulefill{\bf [ End of Proof of Corollary~\ref{coro_boson_concen}]}

{~}

The results presented above enable us to estimate the error due to the truncation of on-site boson particle number, summarized as follows: 
\begin{lemma}\label{lemma_truncation_dim}
For any site $i\in V$, let $\{\ket{m}_{i}\}_{m\in \mathbb{N}}$ be the eigenstates of $n_{i}$ and define $\Pi_{i,\leq q}\coloneq \sum_{m=0}^{q}\ket{m}_{i}\prescript{}{i}{\bra{m}}$ as the local projection operator onto the subspace spanned by the first $q+1$ eigenstates of $n_{i}$. Then we have 
\begin{equation}\label{}
\begin{aligned}[b]
\|\Pi_{V,q}\rho_{\beta}\Pi_{V,q}-\rho_{\beta}\|_{1}\le 2\sum_{j=1}^{|V|}\qty[\mathrm{Tr}(\Pi_{j,>q}\rho_{\beta})]^{1/2}
\end{aligned}
\end{equation} 
with $\Pi_{X,q}\coloneq \bigotimes_{i\in X}\Pi_{i,\leq q}$ for any $X\subseteq V$ and $\Pi_{i,>q}\coloneq \sum_{m=q}^{\infty}\ket{m}_{i}\prescript{}{i}{\bra{m}}$.
\end{lemma}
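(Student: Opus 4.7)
The plan is to decompose the truncation error as a telescoping sum over sites and control each increment by a gentle-measurement type inequality. Enumerate $V=\{1,2,\ldots,N\}$ with $N=|V|$ and introduce the partial projectors $P_0:=\id$ and $P_j:=\bigl(\bigotimes_{i\leq j}\Pi_{i,\leq q}\bigr)\otimes\id_{\{j+1,\ldots,N\}}$ for $1\leq j\leq N$, so that $P_N=\Pi_{V,q}$. Writing
\begin{equation*}
\Pi_{V,q}\rho_{\beta}\Pi_{V,q}-\rho_{\beta}=\sum_{j=1}^{N}\qty(P_j\rho_{\beta}P_j-P_{j-1}\rho_{\beta}P_{j-1})
\end{equation*}
and applying the triangle inequality reduces the claim to producing a bound of $2[\mathrm{Tr}(\Pi_{j,>q}\rho_{\beta})]^{1/2}$ for each $j$-th increment.

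Fix $j$ and set $\sigma_j:=P_{j-1}\rho_{\beta}P_{j-1}\geq 0$ and $Q_j:=\Pi_{j,>q}=\id-\Pi_{j,\leq q}$. Since $\Pi_{j,\leq q}$ acts only on site $j$ while $P_{j-1}$ acts only on sites $\{1,\ldots,j-1\}$, the two commute and $P_j\rho_{\beta}P_j=\Pi_{j,\leq q}\sigma_j\Pi_{j,\leq q}$. Inserting the resolution $\id=\Pi_{j,\leq q}+Q_j$ on both sides of $\sigma_j$ and cancelling the $\Pi_{j,\leq q}\sigma_j\Pi_{j,\leq q}$ term gives the key identity
\begin{equation*}
\sigma_j-\Pi_{j,\leq q}\sigma_j\Pi_{j,\leq q}=Q_j\sigma_j+\Pi_{j,\leq q}\sigma_j Q_j.
\end{equation*}
To each term I would apply the Cauchy--Schwarz bound $\|A\sigma_j B\|_1\leq \|A\sqrt{\sigma_j}\|_2\|\sqrt{\sigma_j}B\|_2$ combined with $\|P\sqrt{\sigma_j}\|_2^2=\mathrm{Tr}(P\sigma_j)$ valid for any projection $P$, yielding $\|Q_j\sigma_j\|_1$ and $\|\Pi_{j,\leq q}\sigma_j Q_j\|_1$ each at most $\sqrt{\mathrm{Tr}(\sigma_j)\mathrm{Tr}(Q_j\sigma_j)}$, hence a total bound of $2\sqrt{\mathrm{Tr}(\sigma_j)\mathrm{Tr}(Q_j\sigma_j)}$ on the $j$-th increment.

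The final step is to pass from $\sigma_j$ back to $\rho_\beta$ in both trace factors. Cyclicity together with $P_{j-1}^2=P_{j-1}\leq\id$ gives $\mathrm{Tr}(\sigma_j)=\mathrm{Tr}(P_{j-1}\rho_{\beta})\leq 1$. Likewise, since $P_{j-1}$ and $Q_j$ commute (disjoint supports) and $P_{j-1}Q_j\leq Q_j$ as commuting projections, one obtains $\mathrm{Tr}(Q_j\sigma_j)=\mathrm{Tr}(P_{j-1}Q_j\rho_{\beta})\leq \mathrm{Tr}(\Pi_{j,>q}\rho_{\beta})$. Combining these and summing the resulting per-site bound $2\sqrt{\mathrm{Tr}(\Pi_{j,>q}\rho_{\beta})}$ over $j$ yields the advertised inequality. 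No individual step is hard; the only care needed is in respecting that each newly introduced projector on site $j$ commutes with the earlier projectors collected in $P_{j-1}$, after which the gentle-measurement estimate and trace bookkeeping are routine.
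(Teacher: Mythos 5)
Your proof is correct and follows essentially the same route as the paper's: a site-by-site telescoping decomposition of $\Pi_{V,q}\rho_{\beta}\Pi_{V,q}-\rho_{\beta}$ followed by a gentle-measurement/Cauchy--Schwarz estimate yielding $2[\mathrm{Tr}(\Pi_{j,>q}\rho_{\beta})]^{1/2}$ per site. The only cosmetic difference is that the paper first strips the previously applied projectors using $\|\Pi A \Pi\|_{1}\leq \|A\|_{1}$ and then applies H\"older to $e^{-\beta H/2}$ factors, whereas you keep them inside $\sigma_{j}=P_{j-1}\rho_{\beta}P_{j-1}$ and discharge them at the end via $\mathrm{Tr}(Q_{j}P_{j-1}\rho_{\beta})\leq \mathrm{Tr}(Q_{j}\rho_{\beta})$.
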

\begin{proof}
Let us label $V=\{1,2,...,N\}$ and denote $X_{j}\coloneq \{1,2,...,j\}$ as well as $X_{0}=\emptyset$. Then we decompose
\begin{equation}\label{}
\begin{aligned}[b]
\Pi_{V,q}e^{-\beta H}\Pi_{V,q}-e^{-\beta H} = \sum_{j=0}^{|V|-1}(\Pi_{X_{j+1},q}e^{-\beta H}\Pi_{X_{j+1},q}-\Pi_{X_{j},q}e^{-\beta H}\Pi_{X_{j},q}),
\end{aligned}
\end{equation}
and have
\allowdisplaybreaks[4]
\begin{align*}\label{}
\|\Pi_{V,q}e^{-\beta H}\Pi_{V,q}-e^{-\beta H}\|_{1} &\le \sum_{j=0}^{|V|-1}\|\Pi_{X_{j+1},q}e^{-\beta H}\Pi_{X_{j+1},q}-\Pi_{X_{j},q}e^{-\beta H}\Pi_{X_{j},q}\|_{1}\\
& \le \sum_{j=0}^{|V|-1}\|\Pi_{X_{j},q}(\Pi_{j+1,\le q}e^{-\beta H}\Pi_{j+1,\le q}-e^{-\beta H})\Pi_{X_{j},q}\|_{1}\\
& \le \sum_{j=0}^{|V|-1}\|\Pi_{j+1,\le q}e^{-\beta H}\Pi_{j+1,\le q}-e^{-\beta H}\|_{1}  \\
& = \sum_{j=0}^{|V|-1}\|(\id_{j}-\Pi_{j+1,>q})e^{-\beta H}(\id_{j}-\Pi_{j+1,>q})-e^{-\beta H}\|_{1}  \\
& = \sum_{j=0}^{|V|-1}\|-\Pi_{j+1,>q}e^{-\beta H}(\id_{j}-\Pi_{j+1,>q})\|_{1} 
\le 2\sum_{j=0}^{|V|-1}\|\Pi_{j+1,>q}e^{-\beta H}\|_{1}.
\refstepcounter{equation}\tag{\theequation}
\end{align*}
Here, the operator $\id_{j}$ denotes the identity operator on the local Hilbert space $\mathcal{H}_{j}$. 
Finally, using the H\"{o}lder's inequality, we obtain
\begin{align*}
\|\Pi_{j+1,>q}e^{-\beta H}\|_{1} & \le \|\Pi_{j+1,>q}e^{-\beta H/2}\|_{2}\|e^{-\beta H/2}\|_{2}
\le \mathcal{Z}_{\beta}\left[\frac{\mathrm{Tr}(\Pi_{j+1,>q}e^{-\beta H}\Pi_{j+1,>q})}{\mathcal{Z}_{\beta}}\right]^{1/2},
\end{align*}
and hence by noticing $\rho_{\beta}=e^{-\beta H}/\mathcal{Z}_{\beta}$ we obtain 
\begin{equation}\label{}
\begin{aligned}[b]
\|\Pi_{V,q}\rho_{\beta}\Pi_{V,q}-\rho_{\beta}\|_{1}\le2\sum_{j=0}^{|V|-1}[\mathrm{Tr}(\Pi_{j+1,>q}\rho_{\beta})]^{1/2} =2\sum_{j=1}^{|V|}[\mathrm{Tr}(\Pi_{j,>q}\rho_{\beta})]^{1/2}
\end{aligned}
\end{equation}
to finish the proof.
\end{proof}
{~}

\hrulefill{\bf [ End of Proof of Lemma~\ref{lemma_truncation_dim}]}

{~}

\begin{corollary}\label{corollary_truncation_error}
Under the same settings of Lemma \ref{lemma_truncation_dim}, let $\mathcal{Z}_{\beta}^{(q)}\coloneq \operatorname{Tr}(\Pi_{V,q}e^{-\beta H})$ be the $q$-boson number truncated partition function. Denote $\theta>0$, $0<\eta<1$ and $f_{\beta}\coloneq 2\sqrt{2}\qty(1-e^{-\sqrt{\beta}/(2C)})^{-1/2}$ with the constant $C=\mathcal{O}(1)$ being the same as that in the conclusion of Theorem \ref{stheorem_low_density}. If we choose $q=4 C\beta^{-1/2}[(\theta+1)+\ln (\eta^{-1}f_{\beta})/\ln 2]\ln |V|$, then for high temperature $\beta\leq \beta_{c}$, we have  
\begin{equation}\label{}
\begin{aligned}[b]
|\mathcal{Z}_{\beta}-\mathcal{Z}_{\beta}^{(q)}|\leq \eta |V|^{-\theta} \mathcal{Z}_{\beta}.
\end{aligned}
\end{equation}
Here we have naturally assumed that $|V|\geq 2$ so that the Bose-Hubbard model is well-defined.
\end{corollary}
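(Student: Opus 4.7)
The plan is to reduce the partition-function truncation error to the trace-norm bound on the truncated density matrix already supplied by \Lem{lemma_truncation_dim}, and then feed in the exponential tail decay of the local boson-number distribution furnished by \Cor{coro_boson_concen}.

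First, I would exploit the projector identity $\Pi_{V,q}^{2}=\Pi_{V,q}$ together with the cyclicity of the trace to rewrite $\mathcal{Z}_{\beta}^{(q)}=\operatorname{Tr}(\Pi_{V,q}e^{-\beta H}\Pi_{V,q})$. This converts the partition-function gap into a trace of an operator difference and a single application of $|\operatorname{Tr}(\cdot)|\leq \|\cdot\|_{1}$ yields
\[
|\mathcal{Z}_{\beta}-\mathcal{Z}_{\beta}^{(q)}|\leq \|e^{-\beta H}-\Pi_{V,q}e^{-\beta H}\Pi_{V,q}\|_{1}.
\]
Dividing by $\mathcal{Z}_{\beta}$ and invoking \Lem{lemma_truncation_dim} then reduces the task to controlling the site-wise tail sum $\sum_{j=1}^{|V|}[\operatorname{Tr}(\Pi_{j,>q}\rho_{\beta})]^{1/2}$.

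Next, for each site $j\in V$, \Cor{coro_boson_concen} supplies the uniform bound $p^{(j)}_{n}\leq 2e^{-\sqrt{\beta}n/(2C)}$, whose tail sum over $n>q$ is a geometric series with ratio $e^{-\sqrt{\beta}/(2C)}$. Taking square roots produces a factor proportional to $(1-e^{-\sqrt{\beta}/(2C)})^{-1/2}e^{-\sqrt{\beta}q/(4C)}$, and summing over the $|V|$ sites gives an estimate of the form
\[
\frac{|\mathcal{Z}_{\beta}-\mathcal{Z}_{\beta}^{(q)}|}{\mathcal{Z}_{\beta}}\leq f_{\beta}\,|V|\,e^{-\sqrt{\beta}q/(4C)},
\]
where the prefactor $2\sqrt{2}/\sqrt{1-e^{-\sqrt{\beta}/(2C)}}$ coincides exactly with the $f_{\beta}$ defined in the statement.

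Finally, enforcing the right-hand side to be at most $\eta|V|^{-\theta}$ is equivalent to the linear condition $\sqrt{\beta}q/(4C)\geq (\theta+1)\ln|V|+\ln(\eta^{-1}f_{\beta})$. Using the standing assumption $|V|\geq 2$, we have $\ln|V|/\ln 2\geq 1$, which allows the additive term $\ln(\eta^{-1}f_{\beta})$ (nonnegative in the regime of interest, since $\eta<1$ and $f_{\beta}\geq 2\sqrt{2}$) to be absorbed into a factor of $\ln|V|$, reproducing precisely the quoted choice $q=4C\beta^{-1/2}[(\theta+1)+\ln(\eta^{-1}f_{\beta})/\ln 2]\ln|V|$. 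The analytic content is entirely carried by \Lem{lemma_truncation_dim} and the site-uniform concentration of \Cor{coro_boson_concen}, so I expect no serious obstacle; the only delicate step is the constant bookkeeping needed to match the precise prefactor in $q$, in particular the use of $\ln|V|\geq \ln 2$ to rewrite a constant term as a term linear in $\ln|V|$.
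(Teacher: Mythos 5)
Your proposal is correct and follows essentially the same route as the paper's proof: reduce to the trace-norm bound of Lemma~\ref{lemma_truncation_dim}, control each site's tail $\operatorname{Tr}(\Pi_{j,>q}\rho_{\beta})$ by the geometric series from Corollary~\ref{coro_boson_concen}, arrive at $f_{\beta}|V|e^{-\sqrt{\beta}q/(4C)}$, and then use $\ln|V|\geq\ln 2$ to absorb the constant $\ln(\eta^{-1}f_{\beta})$ into a multiple of $\ln|V|$, which is exactly how the paper justifies the stated choice of $q$. No gaps.
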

\begin{proof}
By noticing $\Pi_{V,q}^{2}=\Pi_{V,q}$ so that $\mathcal{Z}_{\beta}^{(q)}=\operatorname{Tr}(\Pi_{V,q}e^{-\beta H}\Pi_{V,q})$, we arrive at
\begin{equation}\label{re_cor_trun}
\begin{aligned}[b]
|\mathcal{Z}_{\beta}-\mathcal{Z}_{\beta}^{(q)}|=\operatorname{Tr}\qty(e^{-\beta H}-\Pi_{V,q}e^{-\beta H}\Pi_{V,q})\leq \mathcal{Z}_{\beta} \|\Pi_{V,q}\rho_{\beta}\Pi_{V,q}-\rho_{\beta}\|_{1}\leq 2\sum_{j=1}^{|V|}[\mathrm{Tr}(\Pi_{j,>q}\rho_{\beta})]^{1/2}\mathcal{Z}_{\beta},
\end{aligned}
\end{equation}
where we have already used Lemma \ref{lemma_truncation_dim}. Then, we upper bound the RHS of Eq.\,\eqref{re_cor_trun} as follows:
\begin{equation}\label{}
\begin{aligned}[b]
\text{RHS of \eq{re_cor_trun}}\leq 2\mathcal{Z}_{\beta}\sum_{j=1}^{|V|}\qty[\sum_{m=q}^{\infty}\mathrm{Tr}\qty(\ket{m}_{i}\prescript{}{i}{\bra{m}}\rho_{\beta})]^{1/2}=2\mathcal{Z}_{\beta}\sum_{j=1}^{|V|}\sqrt{\sum_{m=q}^{\infty}p_{m}^{(j)}}.
\end{aligned}
\end{equation}
From Corollary \ref{coro_boson_concen}, we have
$p_{m}^{(j)}\leq 2e^{-\sqrt{\beta}m/(2C)}$ and therefore
\begin{equation}\label{}
\begin{aligned}[b]
\text{RHS of \eq{re_cor_trun}}\leq 2\sqrt{2}\mathcal{Z}_{\beta} |V|\sqrt{\frac{e^{-\sqrt{\beta}q/(2C)}}{1-e^{-\sqrt{\beta}/(2C)}}}=2\mathcal{Z}_{\beta}\qty[\frac{2}{1-e^{-\sqrt{\beta}/(2C)}}]^{1/2}|V|e^{-\sqrt{\beta}q/(4C)}=f_{\beta}|V|e^{-\sqrt{\beta}q/(4C)}.
\end{aligned}
\end{equation}
Clearly, if we choose $q$ to be $4C\beta^{-1/2}\ln (\eta^{-1}f_{\beta}|V|^{\theta+1})=4C\beta^{-1/2}[\ln (\eta^{-1}f_{\beta})+(\theta+1)\ln |V|]$, the RHS of Eq.\,\eqref{re_cor_trun} equals $\eta |V|^{-\theta}$. Here note that $|V|\geq 2$, we choose a greater value for $q$, i.e., $4 C\beta^{-1/2}[(\theta+1)+\ln (\eta^{-1}f_{\beta})/\ln 2]\ln |V|$, which completes the proof.
\end{proof}
{~}

\hrulefill{\bf [ End of Proof of Corollary~\ref{corollary_truncation_error}]}

{~}
\begin{remark}
Note that $\eta |V|^{-\theta}<\eta<1$, the result presented above implies the following (to be aligned with the later sections we temprorarily use log to denote the logarithm), $|\log \mathcal{Z}_{\beta}-\log \mathcal{Z}^{(q)}_{\beta}|\leq -\log \qty(1-\eta |V|^{-\theta})$. By using $-\log(1-x) \leq -\log(1-x_{0})x/x_{0}$ for $0<x\leq x_{0}<1$ and choose $(x,x_{0})=(\eta |V|^{-\theta},\eta)$, we have  $|\log \mathcal{Z}_{\beta}-\log \mathcal{Z}^{(q)}_{\beta}|\leq -\log \qty(1-\eta)|V|^{-\theta}$. Finally by choosing $\eta=1-e^{-1}$ we arrive at the following more convenient form:
\begin{equation}\label{}
\begin{aligned}[b]
|\log \mathcal{Z}_{\beta}-\log \mathcal{Z}^{(q)}_{\beta}|\leq |V|^{-\theta}.
\end{aligned}
\end{equation}
\end{remark}

\section{Classical Algorithm for Partition Function}
To construct an efficient algorithm for the high-temperature partition function of the long-range Bose-Hubbard model [cf.\,Eq.\,(\ref{slr_bose_hubbard})], we first introduce the framework of abstract cluster expansion for polymer models. This method has proven to be powerful for various complexity problems in spin systems, as detailed in Refs.\,\cite{mann2024algorithmic,sanchez2025high}. However, it is not directly applicable to bosonic systems. By utilizing this method in conjunction with the interaction-picture cluster expansion and several crucial technical lemmas, we present a systematic approach to analyze bosonic systems from a complexity-theoretic perspective. For simplification, we focus on the canonical long-range Bose-Hubbard model, i.e., we set $\widetilde{J}_{ij}=0$ for any $i,j\in V$ in Eq.\,\eqref{slr_bose_hubbard}, which implies the conservation of total particle numbers.

First, we present the formal definition of a polymer model and summarize established results for later convenience. For a detailed discussion of polymer models, we refer the reader to Refs.\,\cite{kotecky1986cluster,malyshev1980cluster} or Chapter~5 of Ref.\,\cite{friedli2018statistical}. Recent works \cite{mann2024algorithmic,sanchez2025high} also provide a concise overview. Now, we begin with the definitions.
\begin{definition}[Abstract Polymer Model]\label{def_polymer_model}
Let $\mathfrak{C}$ be a countable set, whose elements are called polymers. Let $w: \mathfrak{C}\to\mathbb{C}$ be a weight function that assigns to each polymer $\gamma\in \mathfrak{C}$ a complex number $w_{\gamma}\in \mathbb{C}$, called the weight of $\gamma$. Let $\sim$ be a symmetric, irreflexive compatibility relation on $\mathfrak{C}$. The tuple $(\mathfrak{C}, w, \sim)$ is called an abstract polymer model.
\end{definition}
We are particularly interested in sets of pairwise compatible polymers, which are called admissible sets. The collection of all admissible sets is denoted by $\mathscr{G}$, i.e.,
\begin{equation}\label{eq:admissible_sets}
\mathscr{G}\coloneqq \{\Gamma\subseteq \mathfrak{C} \colon \forall \gamma,\gamma'\in \Gamma, \gamma\neq\gamma' \implies \gamma\sim \gamma'\}.
\end{equation}
By convention, the empty set $\emptyset$ is admissible. With these notations, the abstract polymer partition function is defined as
\begin{equation}\label{abs_paritition_function}
\mathcal{Z}(\mathfrak{C},w)\coloneqq \sum_{\Gamma\in \mathscr{G}}\prod_{\gamma\in \Gamma}w_{\gamma}.
\end{equation}
To proceed, we introduce the notion of an incompatibility graph.
\begin{definition}[Incompatibility Graph]\label{def_incom_graph}
Let $a = (\gamma_1, \dots, \gamma_n) \in \bigcup_{k=1}^{\infty} \mathfrak{C}^{\times k}$ be a non-empty ordered sequence of polymers. Let $\mathcal{D}(a)$ be the set of distinct polymers in $a$. The incompatibility graph $\mathsf{G}_{a}$ is the graph whose vertex set is $\mathcal{D}(a)$ and for any $i \neq j$, there exists an edge between vertices $\gamma_i$ and $\gamma_j$ if and only if $\gamma_i\nsim \gamma_{j}$.
\end{definition}
Let $\mathscr{G}_{\mathrm{c}}$ denote the collection of all finite, ordered sequences of polymers whose incompatibility graph is connected:
\begin{equation}\label{eq:connected_graphs}
\mathscr{G}_{\mathrm{c}}\coloneqq \left\{ a\in \bigcup_{k=1}^{\infty} \mathfrak{C}^{\times k}\colon \mathsf{G}_{a} \text{ is connected} \right\}.
\end{equation}
In some literatures \cite{kotecky1986cluster,mann2024algorithmic,sanchez2025high}, the elements of $\mathscr{G}_{\mathrm{c}}$ are called clusters. We avoid this terminology to prevent confusion with the previously introduced interaction-picture cluster expansion.

The abstract cluster expansion provides a series representation for the logarithm of the partition function. To align with standard notation in related fields, we use $\log$ rather than $\ln$ in this section. The expansion is given by
\begin{equation}\label{log_abstract_polymer}
\log \mathcal{Z}(\mathfrak{C},w) = \sum_{a\in \mathscr{G}_{\mathrm{c}}} \varphi(\mathsf{G}_{a}) \prod_{\gamma\in a} w_{\gamma},
\end{equation}
where $\varphi(\mathsf{G})$ is the Ursell function for a graph $\mathsf{G}$, defined as
\begin{equation}\label{ursell_function}
\varphi(\mathsf{G})\coloneqq \frac{1}{|V(\mathsf{G})|!} \sum_{\substack{S\subseteq E(\mathsf{G}):\\ (V(\mathsf{G}), S) \text{ is connected}}}(-1)^{|S|}.
\end{equation}
Here, $V(\mathsf{G})$ and $E(\mathsf{G})$ denote the vertex and edge sets of $\mathsf{G}$, respectively. The summation in Eq.\,\eqref{ursell_function} runs over all spanning, connected subgraphs $(V(\mathsf{G}),S)$. 
Since the ordering of elements in the sequence $a$ does not affect the terms in the sum on the RHS of Eq.\,\eqref{log_abstract_polymer}, the sum over sequences can be reformulated as a sum over multisets of polymers. A sequence $a$ is represented by the multiset denoted via $\widetilde{a} \coloneqq \bigcup_{\gamma\in a}(\gamma, \mu(\gamma))$, where the pairs consist of the distinct elements $\gamma$ in $a$ and their corresponding multiplicities $\mu_(\gamma)$.
Using this multiset representation, Eqs.\,\eqref{log_abstract_polymer} and \eqref{ursell_function} can be rewritten as
\begin{subequations}\label{eq:multiset_reformulation}
\begin{align}
\log \mathcal{Z}(\mathfrak{C},w) &= \sum_{\widetilde{a}\in \widetilde{\mathscr{G}}_{\mathrm{c}}} \varphi(\mathsf{G}_{\widetilde{a}}) \prod_{(\gamma, \mu) \in \widetilde{a}} \frac{w_{\gamma}^{\mu}}{\mu!} \label{eq:logZ_multiset} \\
\varphi(\mathsf{G}) &\coloneqq \sum_{\substack{S\subseteq E(\mathsf{G}):\\ (V(\mathsf{G}), S) \text{ is connected}}}(-1)^{|S|}, \label{eq:ursell_multiset}
\end{align}
\end{subequations}
where $\widetilde{\mathscr{G}}_{\mathrm{c}}$ is the collection of all finite multisets of polymers whose incompatibility graph is connected:
\begin{equation}\label{eq:connected_graphs_multiset}
\widetilde{\mathscr{G}}_{\mathrm{c}}\coloneqq \left\{ \bigcup_{\gamma\in a}(\gamma, \mu(\gamma)) \colon a\in \bigcup_{k=1}^{\infty} \mathfrak{C}^{\times k}, \text{the incompatibility graph } \mathsf{G}_{a} \text{ is connected} \right\}.
\end{equation}

The convergence of the series in Eq.\,\eqref{log_abstract_polymer} is a non-trivial problem. A sufficient condition is given by the following theorem.
\begin{theorem}[Main Theorem in Ref.\,\cite{kotecky1986cluster}]
\label{thm:KP}
Let $(\mathfrak{C},w,\sim)$ be an abstract polymer model. If there exist functions $\delta_{1}, \delta_{2}: \mathfrak{C} \mapsto \mathbb{R}^{+}$ such that for all $\gamma_{0}\in \mathfrak{C}$,
\begin{equation}\label{eq:convergence_condition}
\sum_{\gamma\nsim \gamma_{0}}|w_{\gamma}|e^{\delta_{1}(\gamma)+\delta_{2}(\gamma)}\leq \delta_{1}(\gamma_0),
\end{equation}
then the abstract cluster expansion converges absolutely, $\mathcal{Z}(\mathfrak{C},w)\neq 0$, and for any $\gamma_0 \in \mathfrak{C}$, we have the bound
\begin{equation}\label{eq:bound_on_clusters}
\sum_{\substack{a\in \mathscr{G}_{\mathrm{c}}: a\ni \gamma_{0}}} \left|\varphi(\mathsf{G}_{a})\prod_{\gamma\in a}w_{\gamma}\right| e^{\sum_{\gamma\in a}\delta_{2}(\gamma)}\leq \delta_{1}(\gamma_{0}).
\end{equation}
\end{theorem}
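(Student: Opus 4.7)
The plan is to follow the classical inductive proof of the Kotecký--Preiss criterion, working first with finite sub-collections $\Lambda\subset \mathfrak{C}$ and then taking the limit $\Lambda\uparrow \mathfrak{C}$. Concretely, for any finite $\Lambda$ I would consider the truncated partition function $\mathcal{Z}_{\Lambda}\coloneqq \sum_{\Gamma\in\mathscr{G},\,\Gamma\subseteq\Lambda}\prod_{\gamma\in\Gamma}w_{\gamma}$, and for a chosen $\gamma_{0}\in\Lambda$ split the admissible sets $\Gamma$ according to whether $\gamma_{0}\in\Gamma$ or not. This yields the fundamental recursion
\begin{equation*}
\mathcal{Z}_{\Lambda}=\mathcal{Z}_{\Lambda\setminus\{\gamma_{0}\}}+w_{\gamma_{0}}\,\mathcal{Z}_{\Lambda\setminus N[\gamma_{0}]},\qquad N[\gamma_{0}]\coloneqq\{\gamma_{0}\}\cup\{\gamma\in\mathfrak{C}:\gamma\nsim\gamma_{0}\},
\end{equation*}
which turns the combinatorial problem into an analytic one concerning ratios of truncated partition functions.

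Next, I would prove by induction on $|\Lambda|$ the two simultaneous statements: (i) $\mathcal{Z}_{\Lambda}\neq 0$, and (ii) $|\mathcal{Z}_{\Lambda\setminus\{\gamma_{0}\}}/\mathcal{Z}_{\Lambda}|\leq e^{\delta_{1}(\gamma_{0})}$ for every $\gamma_{0}\in\Lambda$. The base case $\Lambda=\emptyset$ is immediate since $\mathcal{Z}_{\emptyset}=1$. For the inductive step, enumerate $(N[\gamma_{0}]\setminus\{\gamma_{0}\})\cap\Lambda=\{\gamma_{1},\dots,\gamma_{k}\}$ and telescope the ratio
\begin{equation*}
\frac{\mathcal{Z}_{\Lambda\setminus N[\gamma_{0}]}}{\mathcal{Z}_{\Lambda\setminus\{\gamma_{0}\}}}=\prod_{i=1}^{k}\frac{\mathcal{Z}_{\Lambda_{i}\setminus\{\gamma_{i}\}}}{\mathcal{Z}_{\Lambda_{i}}},\qquad \Lambda_{i}\coloneqq\Lambda\setminus\{\gamma_{0},\gamma_{1},\dots,\gamma_{i-1}\}.
\end{equation*}
Each factor can be controlled by the induction hypothesis applied to the strictly smaller collection $\Lambda_{i}$, producing bounds of the form $e^{\delta_{1}(\gamma_{i})}$. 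Inserted back into the recursion rewritten as $\mathcal{Z}_{\Lambda}/\mathcal{Z}_{\Lambda\setminus\{\gamma_{0}\}}=1+w_{\gamma_{0}}\mathcal{Z}_{\Lambda\setminus N[\gamma_{0}]}/\mathcal{Z}_{\Lambda\setminus\{\gamma_{0}\}}$, the hypothesis \eqref{eq:convergence_condition} is exactly what is needed to show the denominator is bounded away from zero (giving $\mathcal{Z}_{\Lambda}\neq 0$) and to close (ii). Once (i) and (ii) hold uniformly in $\Lambda$, taking logarithms and writing $\log\mathcal{Z}_{\Lambda}$ as a telescoping sum over single-polymer removals provides an absolutely convergent representation as $\Lambda\uparrow\mathfrak{C}$.

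The remaining task is to identify this telescoping sum with the abstract cluster expansion \eqref{log_abstract_polymer}, and to derive the bound \eqref{eq:bound_on_clusters} on the sum over sequences containing a fixed $\gamma_{0}$. For this I would invoke a tree-graph identity (Penrose or Brydges--Kennedy--Federbush type) that rewrites the Ursell weight $\varphi(\mathsf{G}_{a})$ as a signed sum over spanning trees of $\mathsf{G}_{a}$ with coefficients of absolute value at most one, so that the convergence of $\sum_{a\ni\gamma_{0}}|\varphi(\mathsf{G}_{a})|\prod_{\gamma\in a}|w_{\gamma}|e^{\delta_{2}(\gamma)}$ reduces to a sum over rooted trees weighted by the KP data. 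An inductive argument on tree depth, using \eqref{eq:convergence_condition} at each branching, then yields the bound $\delta_{1}(\gamma_{0})$. The main obstacle is the interplay between the complex-valued weights $w_{\gamma}$ and the inductive control: because $\mathcal{Z}_{\Lambda}$ need not be positive, one cannot reason monotonically and must carefully bound moduli of ratios through the telescoping product; a secondary difficulty is ensuring that the decomposition of $N[\gamma_{0}]$ and the tree-graph rearrangement reproduce the precise factor $e^{\delta_{1}(\gamma)+\delta_{2}(\gamma)}$ appearing in the hypothesis, which is what allows the induction to close without loss.
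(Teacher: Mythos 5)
First, note that the paper does not prove this statement at all: it is imported verbatim as the main theorem of Kotecký--Preiss \cite{kotecky1986cluster} and used as a black box, so there is no in-paper proof to compare against. Your overall architecture is a recognized one --- the fundamental recursion $\mathcal{Z}_{\Lambda}=\mathcal{Z}_{\Lambda\setminus\{\gamma_{0}\}}+w_{\gamma_{0}}\mathcal{Z}_{\Lambda\setminus N[\gamma_{0}]}$ with telescoping ratios is the Dobrushin-style route, and the Penrose/tree-graph bound on the Ursell function with induction over rooted trees is the standard way to obtain \eqref{eq:bound_on_clusters}. Indeed, the second stage alone already suffices: once the rooted-tree induction gives \eqref{eq:bound_on_clusters}, absolute convergence of the series for $\log\mathcal{Z}$ follows by summing over $\gamma_{0}$, and $\mathcal{Z}(\mathfrak{C},w)=\exp(\log\mathcal{Z})\neq 0$ comes for free, so the first stage is logically redundant.

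The genuine gap is in the induction hypothesis of your first stage. You propose to prove $|\mathcal{Z}_{\Lambda\setminus\{\gamma_{0}\}}/\mathcal{Z}_{\Lambda}|\leq e^{\delta_{1}(\gamma_{0})}$ and to control the telescoped product by $\prod_{i=1}^{k}e^{\delta_{1}(\gamma_{i})}=e^{\sum_{i}\delta_{1}(\gamma_{i})}$, where the $\gamma_{i}$ run over \emph{all} polymers of $\Lambda$ incompatible with $\gamma_{0}$. The hypothesis \eqref{eq:convergence_condition} bounds $\sum_{\gamma\nsim\gamma_{0}}|w_{\gamma}|e^{\delta_{1}(\gamma)+\delta_{2}(\gamma)}$, not $\sum_{\gamma\nsim\gamma_{0}}\delta_{1}(\gamma)$; the latter can be arbitrarily large while \eqref{eq:convergence_condition} holds (take $\gamma_{0}$ incompatible with $10^{6}$ polymers, each with $\delta_{1}=1$ and $|w|=10^{-100}$). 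Then $|w_{\gamma_{0}}|e^{\sum_{i}\delta_{1}(\gamma_{i})}$ is not small, the denominator is not bounded away from zero, and the induction does not close --- so the claim that ``the hypothesis \eqref{eq:convergence_condition} is exactly what is needed'' fails for the quantity you actually produce. The standard fix is to strengthen the induction hypothesis to a per-polymer bound that carries the weight, e.g.\ $|\mathcal{Z}_{\Lambda\setminus\{\gamma\}}/\mathcal{Z}_{\Lambda}|\leq \exp\bigl(c\,|w_{\gamma}|e^{\delta_{1}(\gamma)+\delta_{2}(\gamma)}\bigr)$ (or the multiplicative Dobrushin form $(1-\mu_{\gamma})^{-1}$ with $\mu_{\gamma}$ proportional to $|w_{\gamma}|e^{\delta_{1}(\gamma)+\delta_{2}(\gamma)}$): then the telescoped product becomes $\exp\bigl(c\sum_{\gamma\nsim\gamma_{0}}|w_{\gamma}|e^{\delta_{1}(\gamma)+\delta_{2}(\gamma)}\bigr)\leq e^{c\,\delta_{1}(\gamma_{0})}$, which is precisely what \eqref{eq:convergence_condition} controls. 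A secondary, more minor omission is the identification of the telescoping sum of logarithms with the formal series \eqref{log_abstract_polymer}; this requires an interpolation/analyticity argument (scale all weights by $z$ and match Taylor coefficients on a disc where $\mathcal{Z}\neq 0$), which you should at least flag.
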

This theorem lays the foundation for constructing an approximation algorithm for the partition function, as we will demonstrate.

To apply these results to $k$-local quantum lattice models defined on a finite set of sites $V$, we identify our polymers with connected multisets, whose elements are drawn from $E$ (the vertex subsets with cardinality at most $k$). Formally, we have $\mathfrak{C}=\mathcal{C}(E)$ (see discussion below Eq.\,\eqref{w_cl}). To maintain consistency with the literature on polymer models, we hereafter denote multisets by $\gamma$ instead of $G$. The compatibility relation is defined by disjointness: two polymers $\gamma$ and $\gamma'$ are compatible, denoted $\gamma\sim \gamma'$, if and only if their underlying vertex sets do not overlap, i.e., $V_{\gamma}\cap V_{\gamma'}=\emptyset$. 

The next step is to connect the physical partition function $\mathcal{Z}_{\beta}$ to the polymer-model partition function $\mathcal{Z}(\mathfrak{C},w)$ as defined in Eq.\,\eqref{abs_paritition_function}. 
We first define $\mathcal{Z}^{(q)}_{W}\coloneq \operatorname{Tr}\qty(\Pi_{V,q}e^{-\beta W})$ and consider the following quantity related to the truncated partition function, $\mathcal{Z}^{(q)}_\beta/\mathcal{Z}^{(q)}_{W}=\operatorname{Tr}\qty{\Pi_{V,q}e^{-\beta W}S[E](\beta)}/\operatorname{Tr}\qty(\Pi_{V,q}e^{-\beta W})$ with $S[A]\coloneq \sum_{G\in \mathcal{M}(A)}\sum_{w\in \mathcal{S}(G)}F(w)$ being the Dyson series in single Hilbert space [cf.\,Eq.\,(\ref{dyson_single})] over the subset $A\subseteq E$. Then by denoting the function
\begin{equation}\label{g_A_complexity}
\begin{aligned}[b]
g(A)\coloneq  \frac{\operatorname{Tr}\qty{\Pi_{V,q}e^{-\beta W}S[A](\beta)}}{\operatorname{Tr}\qty(\Pi_{V,q}e^{-\beta W})},
\end{aligned}
\end{equation}
we move forward with 
\begin{equation}\label{boson_pari_zw}
\begin{aligned}[b]
\frac{\mathcal{Z}_{\beta}^{(q)}}{\mathcal{Z}_{W}^{(q)}}=g(E)
= \sum_{S \subseteq E}(-1)^{|S|} \sum_{T \subseteq S}(-1)^{|T|} g(T)
= \sum_{S \subseteq E}\prod_{\gamma \in \Gamma(S)}(-1)^{|\gamma|} \sum_{T \subseteq \gamma}(-1)^{|T|} g(T)
=\sum_{S \subseteq E} \prod_{\gamma \in \Gamma(S)} w_\gamma
=\sum_{\Gamma \in \mathcal{G}} \prod_{\gamma \in \Gamma} w_\gamma
\end{aligned}
\end{equation}
Here, for the second equal sign, we used Lemma \ref{lemma_inclusion_exclusion}. On the RHS of the third equal sign, we use  $\Gamma(S)$ to denote the set of maximally connected components of the multiset $S$. For example, if $S=\{Z_{1},Z_{2},...,Z_{8}\}$ as in Fig.\,\ref{w_cl_c_example}, then we have $\Gamma(S)=\{\{Z_{1},Z_{2},Z_{3}\},\{Z_{4},Z_{5}\},\{Z_{6}\},\{Z_{7},Z_{8}\}\}$. In Eq.\,\eqref{boson_pari_zw}, we have also defined the weight function as
\begin{equation}\label{boson_fr_weight}
\begin{aligned}[b]
w_{\gamma}\coloneq (-1)^{|\gamma|} \sum_{T \subseteq \gamma}(-1)^{|T|} g(T)
&=(-1)^{|\gamma|} \sum_{T \subseteq \gamma}(-1)^{|T|} \frac{\operatorname{Tr}_{V_{T}}\qty{\Pi_{V_{T},q}e^{-\beta W_{V_{T}}}S[T](\beta)}}{\operatorname{Tr}_{V_{T}}\qty(\Pi_{V_{T},q}e^{-\beta W_{V_{T}}})}
\\&=(-1)^{|\gamma|} \sum_{T \subseteq \gamma}(-1)^{|T|} \frac{\operatorname{Tr}_{V_{T}}\qty(\Pi_{V_{T},q}e^{-\beta H_{V_{T}}})}{\operatorname{Tr}_{V_{T}}\qty(\Pi_{V_{T},q}e^{-\beta W_{V_{T}}})},
\end{aligned}
\end{equation}
here we have used $e^{-\beta W_{L}}S[L](\beta)=e^{-\beta H_{L}}$ with $H_{L}\coloneq \sum_{Z\in E: Z\subseteq L}h_{Z}$ for any region $L\subseteq V$.
One may need the following result to justify the decomposition used in the third equal sign of Eq.\,\eqref{boson_pari_zw}:
\begin{corollary}\label{corollary_shuffle}
Let $g$ be the function defined in Eq.\,\eqref{g_A_complexity}, then for two compatible multisets, i.e., $\gamma^{\ri}\sim \gamma^{\rii}$, we have $g(\gamma^{\ri}\oplus \gamma^{\rii})=g(\gamma^{\ri})g(\gamma^{\rii})$.
\end{corollary}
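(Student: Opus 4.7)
The plan is to reduce the multiplicativity of $g$ to a factorization of the Dyson series $S[\cdot](\beta)$ combined with the site-wise factorization of $\Pi_{V,q}e^{-\beta W}$. The compatibility hypothesis $\gamma^{\ri}\sim\gamma^{\rii}$ amounts to $V_{\gamma^{\ri}}\cap V_{\gamma^{\rii}}=\emptyset$, and I will use it twice: once to invoke Lemma~\ref{lemma_shuffle_F} at the operator level, and once to cancel a common trace factor supported on the complement.

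First I would verify the hypothesis of Lemma~\ref{lemma_shuffle_F}. Disjoint supports imply $\mathcal{D}(\gamma^{\ri})\cap\mathcal{D}(\gamma^{\rii})=\emptyset$ because each region $Z\in E$ is contained in its own support. Since $W=\sum_{i\in V}W_i(n_i)$ is strictly on-site, the imaginary-time evolved operator $h_Z(\tau)=e^{\tau W}h_Z e^{-\tau W}$ remains supported on $V_Z$, so $h_Z(\tau)$ and $h_{Z'}(\tau')$ commute for every $Z\in\gamma^{\ri}$, $Z'\in\gamma^{\rii}$, and every $\tau,\tau'\ge 0$. Every multiset $G\in\mathcal{M}(\gamma^{\ri}\oplus\gamma^{\rii})$ then decomposes uniquely as $G=G^{\ri}\oplus G^{\rii}$ with $G^{\ri}\in\mathcal{M}(\gamma^{\ri})$ and $G^{\rii}\in\mathcal{M}(\gamma^{\rii})$; applying Lemma~\ref{lemma_shuffle_F} with $f_Z(\tau)=h_Z(\tau)$ to each such $G$ and summing over this bijection yields the operator identity $S[\gamma^{\ri}\oplus\gamma^{\rii}](\beta)=S[\gamma^{\ri}](\beta)\,S[\gamma^{\rii}](\beta)$.

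Next, since $S[\gamma^{\ri}](\beta)$ acts as the identity outside $V_{\gamma^{\ri}}$ and $S[\gamma^{\rii}](\beta)$ outside $V_{\gamma^{\rii}}$, and since $\Pi_{V,q}e^{-\beta W}=\bigotimes_{i\in V}\Pi_{i,q}e^{-\beta W_i(n_i)}$ factorizes site-wise, both the numerator $\operatorname{Tr}(\Pi_{V,q}e^{-\beta W}S[\gamma^{\ri}\oplus\gamma^{\rii}](\beta))$ and the denominator $\operatorname{Tr}(\Pi_{V,q}e^{-\beta W})$ defining $g(\gamma^{\ri}\oplus\gamma^{\rii})$ split as tensor-product traces over the three disjoint regions $V_{\gamma^{\ri}}$, $V_{\gamma^{\rii}}$, and $V\setminus V_{\gamma^{\ri}\oplus\gamma^{\rii}}$. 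The trace over the complement appears identically in numerator and denominator and cancels, and performing the same cancellation on each of the two remaining pieces identifies the residual ratios with $g(\gamma^{\ri})$ and $g(\gamma^{\rii})$ respectively, giving $g(\gamma^{\ri}\oplus\gamma^{\rii})=g(\gamma^{\ri})g(\gamma^{\rii})$.

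I do not anticipate any substantive obstacle: the corollary is essentially a bookkeeping consequence of Lemma~\ref{lemma_shuffle_F} together with the locality of the on-site potential $W$. The only point worth stating carefully is that $S[A](\beta)$ depends on $A$ only through its underlying set of regions, so the multiset/set distinction for the argument of $g$ is immaterial once one has the bijection $\mathcal{M}(\gamma^{\ri}\oplus\gamma^{\rii})\leftrightarrow\mathcal{M}(\gamma^{\ri})\times\mathcal{M}(\gamma^{\rii})$ provided by disjoint supports.
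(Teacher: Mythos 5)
Your proposal is correct and follows essentially the same route as the paper: factorize the Dyson series via Lemma~\ref{lemma_shuffle_F} with $f_Z(\tau)=h_Z(\tau)$ using the bijection $\mathcal{M}(\gamma^{\ri}\oplus\gamma^{\rii})\leftrightarrow\mathcal{M}(\gamma^{\ri})\times\mathcal{M}(\gamma^{\rii})$, then split the traces over the disjoint supports and cancel the complement. Your explicit verification of the commutativity hypothesis of Lemma~\ref{lemma_shuffle_F} (locality of $h_Z(\tau)$ under the on-site $W$) is a detail the paper leaves implicit, but it is the same argument.
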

\begin{proof}
By assumption, we have $V_{\gamma^{\ri}}\bigcap V_{\gamma^{\rii}}=\emptyset$. Then for any $G\in \mathcal{M}(\gamma^{\ri}\oplus \gamma^{\rii})$, we can always decompose it as $G=G^{\ri}\oplus G^{\rii}$ such that $G^{\ri}\in \mathcal{M}(\gamma^{\ri})$ and $G^{\rii}\in \mathcal{M}(\gamma^{\rii})$. Then by Lemma \ref{lemma_shuffle_F} with $f_{Z}(\tau)$ chosen to be $h_{Z}(\tau)$, we obtain
\begin{equation}\label{}
\begin{aligned}[b]
    S[\gamma^{\ri}\oplus \gamma^{\rii}](\beta)&=\sum_{G\in \mathcal{M}(\gamma^{\ri}\oplus \gamma^{\rii})}\sum_{w\in \mathcal{S}(G)}F(w)=\sum_{G^{\ri}\in \mathcal{M}(\gamma^{\ri})}\sum_{G^{\rii}\in \mathcal{M}(\gamma^{\rii})}\sum_{w\in \mathcal{S}(G^{\ri}\oplus G^{\rii})}F(w)
    \\&=\sum_{G^{\ri}\in \mathcal{M}(\gamma^{\ri})}\sum_{G^{\rii}\in \mathcal{M}(\gamma^{\rii})}\sum_{w^{\ri}\in \mathcal{S}(G^{\ri})}F(w^{\ri})\sum_{w^{\rii}\in \mathcal{S}( G^{\rii})}F(w^{\rii})=S[\gamma^{\ri}](\beta)S[\gamma^{\rii}](\beta).
\end{aligned}
\end{equation}
Then we note the following relation 
\begin{equation}\label{}
\begin{aligned}[b]
g(\gamma^{\ri}\oplus \gamma^{\rii})&=\frac{\operatorname{Tr}\qty{\Pi_{V,q}e^{-\beta W}S[\gamma^{\ri}\oplus \gamma^{\rii}](\beta)}}{\operatorname{Tr}\qty(\Pi_{V,q}e^{-\beta W})}=
\frac{\operatorname{Tr}_{V_{\gamma^{\ri}\oplus \gamma^{\rii}}}\qty{\Pi_{V_{\gamma^{\ri}\oplus \gamma^{\rii}},q}e^{-\beta W_{V_{\gamma^{\ri}\oplus \gamma^{\rii}}}}S[\gamma^{\ri}](\beta)S[\gamma^{\rii}](\beta)}}{\operatorname{Tr}_{V_{\gamma^{\ri}\oplus \gamma^{\rii}}}\qty(\Pi_{V_{\gamma^{\ri}\oplus \gamma^{\rii}},q}e^{-\beta W_{V_{\gamma^{\ri}\oplus \gamma^{\rii}}}})}
\\&=\frac{\operatorname{Tr}_{V_{\gamma^{\ri}}}\qty{\Pi_{V_{\gamma^{\ri}},q}e^{-\beta W_{V_{\gamma^{\ri}}}}S[\gamma^{\ri}](\beta)}}{\operatorname{Tr}_{V_{\gamma^{\ri} }}\qty(\Pi_{V_{\gamma^{\ri} },q}e^{-\beta W_{V_{\gamma^{\ri} }}})}\cdot \frac{\operatorname{Tr}_{V_{\gamma^{\rii}}}\qty{\Pi_{V_{\gamma^{\rii}},q}e^{-\beta W_{V_{\gamma^{\rii}}}}S[\gamma^{\rii}](\beta)}}{\operatorname{Tr}_{V_{\gamma^{\rii} }}\qty(\Pi_{V_{\gamma^{\rii} },q}e^{-\beta W_{V_{\gamma^{\rii} }}})}=g(\gamma^{\ri})g(\gamma^{\rii})
\end{aligned}
\end{equation}
to finish the proof.
\end{proof}
{~}

\hrulefill{\bf [ End of Proof of Corollary~\ref{corollary_shuffle}]}

{~}

By establishing Eqs.\,\eqref{boson_pari_zw} and \eqref{boson_fr_weight}, we have succefully reformulate the quantity $\mathcal{Z}^{(q)}_\beta/\mathcal{Z}^{(q)}_{W}$ in the form of abstract polymer partition function. To approximate the latter, we introduce the following truncated expansion \cite{mann2024algorithmic} for $\mathcal{Z}^{(q)}_\beta/\mathcal{Z}^{(q)}_{W}$:
\begin{equation}\label{truncat_T_m}
\begin{aligned}[b]
T_{m}\coloneq \sum_{a\in \mathscr{G}_{\mathrm{c}}: |a|\leq m} \varphi(\mathsf{G}_{a}) \prod_{\gamma\in a} w_{\gamma}=\sum_{\widetilde{a}\in \widetilde{\mathscr{G}}_{\mathrm{c}}: |\widetilde{a}|\leq m} \varphi(\mathsf{G}_{\widetilde{a}}) \prod_{(\gamma, \mu) \in \widetilde{a}} \frac{w_{\gamma}^{\mu}}{\mu!},
\end{aligned}
\end{equation}
with the length of sequences in $\mathscr{G}_{\cc}$ is defined as the summation of the cardinalities of all the multisets it contains, i.e., $|a|\coloneq \sum_{\gamma\in a}|\gamma|$ and $|\widetilde{a}|\coloneq \sum_{(\gamma,\mu)\in \widetilde{a}}\mu|\gamma|$.

Then the following proposition provides the estimation for this truncation formula:
\begin{proposition}\label{pro_truncation}
For the long-range Bose-Hubbard model ($\alpha > D$) defined by Eq.\,(\ref{slr_bose_hubbard}) over a finite lattice $V$, the expansion in the form of Eq.\,\eqref{log_abstract_polymer} for $\mathcal{Z}^{(q)}_\beta/\mathcal{Z}_{W}^{(q)}$ converges absolutely with the truncation error for $m\in \mathbb{Z}^{+}$:
\begin{equation}\label{}
\begin{aligned}[b]
\qty|\log\frac{\mathcal{Z}_{\beta}^{(q)}}{\mathcal{Z}_{W}^{(q)}}-T_{m}|\leq |V| e^{-m},
\end{aligned}
\end{equation}
for high temperatures $\beta\leq \beta_{c}$.
\end{proposition}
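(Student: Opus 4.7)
The plan is to apply the Kotecký--Preiss criterion (Theorem \ref{thm:KP}) to the polymer model with weights given by Eq.~\eqref{boson_fr_weight}, and then extract the truncation error from the resulting absolute-convergence bound. To this end I would choose the KP weight functions as $\delta_1(\gamma)=\delta_2(\gamma)=c|\gamma|$ for a constant $c=\mathcal{O}(1)$ to be fixed later. The factor $e^{\delta_2(\gamma)}=e^{c|\gamma|}$ is the mechanism that, once the KP hypothesis is verified, produces the $e^{-m}$ truncation decay, while $e^{\delta_1(\gamma)}$ supplies the extra slack needed to sum over polymers incompatible with a fixed $\gamma_0$.

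The central technical ingredient is an exponential bound on the polymer weight, of the form $|w_\gamma|\leq (C\sqrt{\beta})^{|\gamma|}$ up to subexponential combinatorial factors in $|\gamma|$. To obtain it, I would expand the numerator $\operatorname{Tr}_{V_T}(\Pi_{V_T,q}e^{-\beta H_{V_T}})$ using the interaction-picture Dyson series developed in Section II, writing it as $\operatorname{Tr}_{V_T}[\Pi_{V_T,q}e^{-\beta W_{V_T}}S[T](\beta)]$, and then exchange the inclusion-exclusion sum $\sum_{T\subseteq\gamma}(-1)^{|T|}$ with the Dyson sum. This alternating sum forces a cancellation: only sequences whose set of off-site terms \emph{covers} the multiset $\gamma$ (i.e., uses every element of $\gamma$ at least once) survive, so in particular only connected Dyson terms of length at least $|\gamma|$ contribute. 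The residual sum is then controlled by the trace-norm estimates of Lemma \ref{lem_trace_f_w} together with the low-density moment bound of Theorem \ref{stheorem_low_density}, which together compensate for the unboundedness of the local bosonic operators; multiset permutation factors are handled exactly as in the proof of Theorem \ref{stheorem_low_density}.

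With this weight bound in hand, the KP condition $\sum_{\gamma\nsim\gamma_0}|w_\gamma|e^{\delta_1(\gamma)+\delta_2(\gamma)}\leq \delta_1(\gamma_0)$ is verified by enumerating connected multisets $\gamma$ whose vertex support touches $V_{\gamma_0}$: using Proposition \ref{pro_cluster_sum} with the long-range summability $\sup_i\sum_j(1+d_{i,j})^{-\alpha}\leq \mu<\infty$ (guaranteed by $\alpha>D$), this sum is dominated by a geometric series in the small parameter $C\sqrt{\beta}\,e^{2c}\cdot\mu$, whose convergence and the inequality $\leq c|\gamma_0|$ follow by choosing $\beta\leq \beta_c$ sufficiently small. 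Theorem \ref{thm:KP} then delivers both absolute convergence of the expansion in Eq.~\eqref{log_abstract_polymer} for $\log(\mathcal{Z}^{(q)}_\beta/\mathcal{Z}^{(q)}_W)$ and the anchored bound $\sum_{a\in\mathscr{G}_{\mathrm{c}},\,a\ni\gamma_0}|\varphi(\mathsf{G}_a)\prod_{\gamma\in a}w_\gamma|\,e^{c|a|}\leq c|\gamma_0|$.

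For the truncation estimate I would write $\log(\mathcal{Z}^{(q)}_\beta/\mathcal{Z}^{(q)}_W)-T_m$ as the sum over clusters with $|a|>m$. On this range $e^{c|a|}\geq e^{cm}$, so the anchored KP bound gives $\sum_{a\ni\gamma_0,\,|a|>m}|\varphi(\mathsf{G}_a)\prod_\gamma w_\gamma|\leq c|\gamma_0|e^{-cm}$. A standard anchoring argument---each cluster is tagged by a distinguished polymer together with a distinguished site $i\in V$ in that polymer's support, incurring at most a $|\gamma_0|$ multiplicity which is absorbed into $\delta_1(\gamma_0)=c|\gamma_0|$---turns the sum over clusters into a sum over at most $|V|$ anchoring sites, yielding $|V|e^{-m}$ after rescaling $c$ so that $e^{-cm}\leq e^{-m}$. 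The hard part will be Step~2: all conventional cluster-expansion weight bounds rely on finite operator norms and break down for bosons, so the cancellation in the inclusion-exclusion must be orchestrated directly inside the interaction-picture expansion, where the remaining trace integrals can finally be controlled by the low-boson-density estimates.
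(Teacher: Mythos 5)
Your proposal follows essentially the same route as the paper: the inclusion--exclusion cancellation reducing $w_\gamma$ to a sum over sequences covering $\gamma$ (the paper's Eq.~\eqref{mobius_interacting} via Lemma~\ref{lemma_inclusion_exclusion_2}), the Kotecký--Preiss verification with $\delta_2(\gamma)=|\gamma|$ using Proposition~\ref{pro_cluster_sum} and the summability constant $\mu$, and the site-anchoring that converts the anchored tail bound into $|V|e^{-m}$. The only minor correction is that the trace estimates must be the projector-compatible version, Lemma~\ref{lem_trace_f_w_q} (which normalizes by $\operatorname{Tr}(e^{-\beta W}\Pi_{V,q})$ and rests on Lemma~\ref{lem_series_sum_estimate}), rather than Lemma~\ref{lem_trace_f_w} plus the low-density moment bound, which is not needed at this step.
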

\begin{proof}
To proceed, we need to find an alternative expression for the weight function $w$. To that end, we first show
\begin{equation}\label{mobius_interacting}
\begin{aligned}[b]
\sum_{T \subseteq\gamma}(-1)^{|T|}S[T](\beta)=\sum_{T \subseteq \gamma}(-1)^{|T|}\sum_{w\in T^{\ast}}F(w) =(-1)^{|\gamma|}\sum_{w\in \gamma^{\ast}: \mathcal{D}(w)=\gamma}F(w).
\end{aligned}
\end{equation}
Here, we use the shorthand $\gamma^{\ast}\coloneq \bigcup_{G\in \mathcal{M}(\gamma)}\mathcal{S}(G)$ to denote the set of all the sequences with elements drawn from $\gamma$. The constraint $\mathcal{D}(w)=\gamma$ on the RHS of the second equal sign of Eq.\,\eqref{mobius_interacting} requires that the sequence $a$ contains every element of $\gamma$ at least once. To justify the second equal sign, we simply note that for any fixed word $w$ with underlying edge subset denoted by $S$, only those subsets $T\supseteq S$ contribute to the term $F(w)$. More precisely, they only contribute exactly once. Then we have the corresponding coefficients as $\sum_{T: \gamma\supseteq T\supseteq S}(-1)^{|T|}$ and by Lemma \ref{lemma_inclusion_exclusion_2} we obtain the desired result. 
Based on Eqs.\,\eqref{mobius_interacting} and \eqref{boson_fr_weight}, we arrive at a much more compact form of the weight formulated as
\begin{equation}\label{weight_ens_f_w}
\begin{aligned}
w_{\gamma}=\sum_{w\in \gamma^{\ast}: \mathcal{D}(w)=\gamma}\frac{\operatorname{Tr}\qty[\Pi_{V,q}e^{-\beta W}F(w)]}{\operatorname{Tr}\qty(\Pi_{V,q}e^{-\beta W})}=\sum_{w\in \gamma^{\ast}: \mathcal{D}(w)=\gamma}\frac{\operatorname{Tr}_{V_{\gamma}}\qty[e^{-\beta W_{V_{\gamma}}}F(w)\Pi_{V_{\gamma},q}]}{\operatorname{Tr}_{V_{\gamma}}\qty(e^{-\beta W_{V_{\gamma}}}\Pi_{V_{\gamma},q})}.
\end{aligned}
\end{equation}
To avoid confusion, we emphasize that the notations $w$ and $w_{\gamma}$ represent the sequence and weight, respectively.

Note that the magnitude of the weight function $w_{\gamma}$ can be bounded as 
\begin{equation}\label{}
\begin{aligned}[b]
|w_{\gamma}|&\leq \sum_{w\in \gamma^{\ast}: \mathcal{D}(w)=\gamma}\qty|\frac{\operatorname{Tr}_{V_{\gamma}}\qty[e^{-\beta W_{V_{\gamma}}}F(w)\Pi_{V_{\gamma},q}]}{\operatorname{Tr}_{V_{\gamma}}\qty(e^{-\beta W_{V_{\gamma}}}\Pi_{V_{\gamma},q})}|
\leq \sum_{w\in \gamma^{\ast}: \mathcal{D}(w)=\gamma}\frac{(C \sqrt{\beta})^{|w|}}{|w|!}\sum_{s_{1}s_{2}...s_{m}}\prod_{i=1}^{m}|J_{w_{i}}^{(s_{i})}|\cdot  \prod_{x\in V_{w}}[m_{x}(w,\vec{s})!]^{1/2}
\\&\leq \sum_{w\in \gamma^{\ast}: \mathcal{D}(w)=\gamma} \frac{(C\beta^{1/2})^{|w|}}{|w|!}\prod_{i=1}^{m}\qty(\sum_{s_{i}}|J_{w_{i}}^{(s_{i})}|)N(w_{i}|w)
\end{aligned}
\end{equation}
with $C=\mathcal{O}(1)$.
Here, for high temperature $\beta\leq \beta_{c}$ we have used Lemma \ref{lem_trace_f_w_q} and we use Lemma \ref{lemma_estimate_m_x} in the last line. For $k$-local Hamiltonian (which equals $2$ for Bose-Hubbard model), we set $\delta_{1}(\gamma)=k^{-1}|V_{\gamma}|$ and $\delta_{2}(\gamma)=|\gamma|$. Then for any site $x\in V$, we consider the following summation over all multisets $\gamma$ containing $x$ (denoted by $\gamma\nsim x$):
\allowdisplaybreaks[4]
\begin{align*}\label{kp_cri_long_boson_x}
\sum_{\gamma\nsim x}|w_{\gamma}|e^{\delta_{1}(\gamma)+\delta_{2}(\gamma)}&\leq \sum_{\gamma\nsim x}e^{\delta_{1}(\gamma)+\delta_{2}(\gamma)} \sum_{w\in \gamma^{\ast}: \mathcal{D}(w)=\gamma} \frac{(C\beta^{1/2})^{|w|}}{|w|!}\prod_{i=1}^{|w|}\qty(\sum_{s_{i}}|J_{w_{i}}^{(s_{i})}|)N(w_{i}|w)
\\&\leq \sum_{\gamma\nsim x} \sum_{w\in \gamma^{\ast}: \mathcal{D}(w)=\gamma} e^{2|w|}\frac{(C\beta^{1/2})^{|w|}}{|w|!}\prod_{i=1}^{|w|}\qty(\sum_{s_{i}}|J_{w_{i}}^{(s_{i})}|)N(w_{i}|w)
\\&= \sum_{m=1}^{\infty}\frac{(e^{2}C\beta^{1/2})^{m}}{m!}\sum_{G\in \mathcal{G}^{E}_{m}(x)}|\mathcal{P}(G)|\prod_{i=1}^{|w|}\qty(\sum_{s_{i}}|J_{w_{i}}^{(s_{i})}|)N(w_{i}|w)
\\&\leq \sum_{m=1}^{\infty}\frac{(e^{2}C\beta^{1/2})^{m}}{m!} c_{1}^{1/k}(c_{2}g\nu k)^{m}m!= c_{1}^{1/k} \frac{e^{2}c_{2}Cg\nu k\beta^{1/2}}{1-e^{2}c_{2}Cg\nu k\beta^{1/2}},
\refstepcounter{equation}\tag{\theequation}
\end{align*}
Here, for the second line, we used $|V_\gamma|\leq (k-1)|\gamma|+1\leq k|\gamma|$ and $|\gamma|\leq |w|$. In the third line, we use $\mathcal{G}^{E}_{m}(L) \coloneqq \{G \in \mathcal{M}(E) \colon |G| = m, G\oplus L \in \mathcal{C}\}$ to the define
the set of all multisets of size $m$ such that they constitute a connected multiset with the region $L$. To obtain the last line of Eq.\,\eqref{kp_cri_long_boson_x}, we used Proposition \ref{pro_cluster_sum} with $\gamma=g\nu, \nu \coloneq \sup_{i\in V}\sum_{j\in V}(1+d_{i,j})^{-\alpha}$ and $\phi(Z)=\sum_{s}|J_{Z}^{(s)}|$. To ensure the convergence of the summation, we also require $\beta \leq \beta_{c}< (e^{2}c_{2}Cg\nu k)^{2}$. We further impose 
\begin{equation}\label{extra_beta_c}
\begin{aligned}[b]
c_{1}^{1/k} \frac{e^{2}c_{2}Cg\nu k\beta^{1/2}}{1-e^{2}c_{2}Cg\nu k\beta^{1/2}} \leq k^{-1} \implies \beta \leq \qty[(e^{2}c_{2}Cg\nu k)^{-1}f_{0}^{-1}\qty(k^{-1}c_{1}^{-1/k})]^{2}=\mathcal{O}(1),
\end{aligned}
\end{equation}
with  $f_{0}(x)\coloneq x/(1-x)$. Based on Eq.\,\eqref{extra_beta_c}, for a fixed multiset $\gamma_{0}\in \mathfrak{C}$, we repeat the estimation for every site in $\gamma_{0}$ and obtain
\begin{equation}\label{kp_lr_boson}
\begin{aligned}[b]
\sum_{\gamma\nsim\gamma_{0}}|w_{\gamma}|e^{\delta_{1}(\gamma)+\delta_{2}(\gamma)}\leq \sum_{x\in \gamma_{0}}\sum_{\gamma\nsim x}|w_{\gamma}|e^{\delta_{1}(\gamma)+\delta_{2}(\gamma)}\leq c_{1}^{1/k} \frac{e^{2}c_{2}Cg\nu k\beta^{1/2}}{1-e^{2}c_{2}Cg\nu k\beta^{1/2}}|V_{\gamma_{0}}|\leq k^{-1}|V_{\gamma_{0}}|=\delta_{1}(\gamma_{0}).
\end{aligned}
\end{equation}
Equation~\eqref{kp_lr_boson} is exactly the precondition of Theorem \ref{thm:KP}. Then by setting $\gamma_{0}=\{x\}$ in Eq.\,\eqref{eq:bound_on_clusters} we arrive at
\begin{equation}\label{}
\sum_{\substack{a\in \mathscr{G}_{\mathrm{c}}: a\ni x}} \left|\varphi(\mathsf{G}_{a})\prod_{\gamma\in a}w_{\gamma}\right| e^{\sum_{\gamma\in a}\delta_{2}(\gamma)}\leq \delta_{1}(x)< 1.
\end{equation}
Note that $\sum_{\gamma\in a}\delta_{2}(\gamma)=\sum_{\gamma\in a}|\gamma|=|a|$ equals the length of sequence $a$, which leads us to

\begin{equation}\label{}
\begin{aligned}[b]
|V|=\sum_{x\in V}\sum_{\substack{a\in \mathscr{G}_{\mathrm{c}}}: a\ni x} \left|\varphi(\mathsf{G}_{a})\prod_{\gamma\in a}w_{\gamma}\right| e^{|a|}\geq \sum_{\substack{a\in \mathscr{G}_{\mathrm{c}}}} \left|\varphi(\mathsf{G}_{a})\prod_{\gamma\in a}w_{\gamma}\right| e^{|a|}=\sum_{m=1}^{\infty}\sum_{\substack{a\in \mathscr{G}_{\mathrm{c}}:|a|=m}} \left|\varphi(\mathsf{G}_{a})\prod_{\gamma\in a}w_{\gamma}\right| e^{m}.
\end{aligned}
\end{equation}
Therefore, the summation for the subsequence adheres to
\begin{equation}\label{kp_result_final}
\begin{aligned}[b]
\sum_{\substack{a\in \mathscr{G}_{\mathrm{c}}:|a|> m}} \left|\varphi(\mathsf{G}_{a})\prod_{\gamma\in a}w_{\gamma}\right|\leq |V|e^{-m}.
\end{aligned}
\end{equation}
Combining Eqs.\,\eqref{kp_result_final} and \eqref{truncat_T_m}, we complete the proof of the present theorem.
\end{proof}

{~}

\hrulefill{\bf [ End of Proof of Proposition~\ref{pro_truncation}]}

{~}

Then the following lemmas establish the estimation for the computational complexity of the truncated expansion $T_{m}$ for the quantity $\mathcal{Z}_{\beta}^{(q)}/\mathcal{Z}^{(q)}_{W}$.

\begin{lemma}[Lemma 7 of Ref.\,\cite{mann2024algorithmic}]\label{lemma_ursell}
The Ursell function $\varphi(\mathsf{G})$ can be computed in runtime $\exp (\mathcal{O}(|V(\mathsf{G})|))$.
\end{lemma}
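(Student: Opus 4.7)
The plan is to exhibit a subset-based dynamic-programming algorithm whose running time is controlled by $3^{|V(\mathsf{G})|}$. Setting $n \coloneq |V(\mathsf{G})|$, I would first introduce, for every $S \subseteq V(\mathsf{G})$, the auxiliary quantities
\begin{equation*}
\varphi(S) \coloneqq \sum_{\substack{S' \subseteq E(\mathsf{G}[S]):\\ (S, S') \text{ connected}}} (-1)^{|S'|}, \qquad M(S) \coloneqq \sum_{S' \subseteq E(\mathsf{G}[S])} (-1)^{|S'|},
\end{equation*}
where $\mathsf{G}[S]$ is the induced subgraph. Then $\varphi(V(\mathsf{G}))$ recovers the multiset form of $\varphi(\mathsf{G})$ in \eqref{eq:ursell_multiset}, while a direct factorization over edges gives $M(S) = \mathbbm{1}[E(\mathsf{G}[S]) = \emptyset]$, so that $M$ is trivially computable in $\operatorname{poly}(n)$ time for each $S$.

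The combinatorial heart of the argument is to group spanning subgraphs of $\mathsf{G}[S]$ according to the partition $\pi$ of $S$ they induce into connected components. Edges of $\mathsf{G}[S]$ that join distinct blocks of $\pi$ must be absent from $S'$, contributing a factor $1$ each, while within each block $T$ the signed sum is by definition $\varphi(T)$. This yields the key identity
\begin{equation*}
M(S) \;=\; \sum_{\pi \,\text{partition of } S}\; \prod_{T \in \pi} \varphi(T).
\end{equation*}
Fixing a canonical vertex $v_{0} \in V(\mathsf{G})$ and isolating the block of $\pi$ containing $v_{0}$ converts this identity into the recursion
\begin{equation*}
\varphi(S) \;=\; M(S) \;-\; \sum_{\substack{T \subsetneq S:\\ v_{0} \in T}} \varphi(T)\, M(S \setminus T), \qquad \varphi(\{v_{0}\}) = 1,
\end{equation*}
valid for every $S$ with $v_{0} \in S$. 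Crucially, only values $\varphi(T)$ with $v_{0} \in T$ are ever invoked, so the table stays of manageable size. The algorithm processes these subsets in order of increasing cardinality and outputs $\varphi(V(\mathsf{G}))$ at the end.

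For the runtime, there are $2^{n-1}$ subsets $S$ containing $v_{0}$, and for such an $S$ of size $k$ the recursion sums over $2^{k-1}$ values of $T$. The total operation count is therefore $\sum_{k=1}^{n} \binom{n-1}{k-1}\, 2^{k-1} = 3^{n-1}$, each step costing $\operatorname{poly}(n)$ to look up $M$, manipulate subsets, and index the stored values of $\varphi$; this delivers the advertised runtime $\exp(\mathcal{O}(n)) = \exp(\mathcal{O}(|V(\mathsf{G})|))$. The only non-routine step is the verification of the factorization identity for $M(S)$ — one must carefully check that the contribution of inter-block edges is exactly $1$ per edge and that every spanning subgraph is counted exactly once through its component partition — while the canonical-vertex device bypasses the naive Möbius inversion on the partition lattice, which would otherwise incur a Bell-number $B_{n}$ blow-up rather than the desired exponential.
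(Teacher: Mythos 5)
Your proposal is correct. The paper itself gives no proof of this lemma—it is imported verbatim as Lemma~7 of Ref.~\cite{mann2024algorithmic}—so there is nothing internal to compare against; your subset dynamic program is the standard argument underlying that cited result. The two key steps both check out: the factorization $M(S)=\sum_{\pi}\prod_{T\in\pi}\varphi(T)$ holds because every spanning subgraph $(S,S')$ determines a unique component partition, inter-block edges are forced absent (contributing a factor $1$ each), and the sign splits as $(-1)^{|S'|}=\prod_{T}(-1)^{|S'\cap E(\mathsf{G}[T])|}$; and anchoring at a canonical vertex $v_{0}$ turns this into a well-founded recursion over the $2^{n-1}$ subsets containing $v_{0}$, with total cost $\sum_{k=1}^{n}\binom{n-1}{k-1}2^{k-1}=3^{n-1}$ up to polynomial factors, i.e.\ $\exp(\mathcal{O}(|V(\mathsf{G})|))$. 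The only cosmetic point worth noting is that your $\varphi(V(\mathsf{G}))$ computes the unnormalized (multiset) form in Eq.~\eqref{eq:ursell_multiset}; recovering the normalized version in Eq.~\eqref{ursell_function} requires only the additional division by $|V(\mathsf{G})|!$, which is immaterial to the runtime.
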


\begin{lemma}\label{lemma_cluster_list}
Let $V$ be a finite lattice, and let $k, m \in \mathbb{N}$. Let polymer set $\mathfrak{C}$ be the set of all multisets with elements drawn from $E$ (vertex subsets of $V$ with cardinality at most $k$). Then, the multiset of polymers with size at most $m$ and a connected associated incompatibility graph can be enumerated in $|V|^{\mathcal{O}(km)}$ time. Here we have naturally assumed that $|V|\geq 2$ since it is usually a quite large integer.
\end{lemma}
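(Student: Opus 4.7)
The plan is to enumerate the desired multisets by first reducing to the simpler task of enumerating \emph{ordered} sequences of elements from $E$ together with a choice of block-partition, and then checking connectivity conditions on the result. The key observation is that any polymer sequence $a = (\gamma_1,\ldots,\gamma_n)$ with $|a| = \sum_i|\gamma_i|\leq m$ contains at most $m$ total elements of $E$ across all of its polymers, so after fixing some canonical ordering of $E$ we may represent $a$ as (i) a flat ordered list $Z_1 Z_2 \cdots Z_\ell \in E^\ell$ with $\ell\leq m$, obtained by concatenating the canonically-ordered elements of $\gamma_1,\gamma_2,\ldots,\gamma_n$, together with (ii) a partition of $\{1,\ldots,\ell\}$ into $n$ consecutive blocks recording which elements belong to which polymer.

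Counting the candidates at each stage, the number of flat lists of length $\ell$ is at most $|E|^\ell$, and since each $Z\in E$ is a subset of $V$ of size at most $k$, we have $|E|\leq k|V|^k$. Summing over $\ell\leq m$ yields at most $\sum_{\ell=0}^m (k|V|^k)^\ell = |V|^{\mathcal{O}(km)}$ flat lists. The number of consecutive-block partitions of a length-$\ell$ list is at most $2^{\ell-1}\leq 2^{m-1}$, an extra factor absorbed into the $|V|^{\mathcal{O}(km)}$ bound. For each candidate, verifying the two connectivity requirements, namely that each individual block forms a connected multiset (its internal overlap graph is connected) and that the resulting incompatibility graph on the polymers is connected, can be done in $\mathrm{poly}(m,k,|V|)$ time by a breadth-first search over the overlap relation among the at most $m$ listed elements of $E$ and a subsequent breadth-first search over the $n\leq m$ polymers. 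These polynomial factors contribute only $\mathrm{poly}(|V|,m)$ overhead per candidate.

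Finally, to pass from enumerated ordered sequences to the multisets of polymers actually sought, it suffices to record each candidate in a canonical form (for example, by sorting the polymers under a fixed total order on $\mathcal{C}(E)$) and discarding duplicates. Equivalently, each valid multiset of polymers appears at least once in the enumeration via its canonical flat list and canonical partition, and the over-counting (by at most $m!$ reorderings) does not affect the stated runtime bound $|V|^{\mathcal{O}(km)}$, since $m!\leq m^m = e^{\mathcal{O}(m\log m)} \leq |V|^{\mathcal{O}(m)}$ under the mild assumption $|V|\geq 2$.

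The main conceptual obstacle is not the counting itself but the reduction to flat lists: one must verify that every sequence of connected polymers admits such a flat-list-plus-partition representative (which is immediate once a canonical order is fixed on $E$ and on polymers) and, crucially, that the partition together with the connectivity check on each block faithfully reconstructs the polymer decomposition. After this, the argument is essentially combinatorial bookkeeping; the $|V|^{\mathcal{O}(km)}$ bound follows from $|E|\leq |V|^k$ raised to the $m$-th power, with all remaining factors (number of partitions, connectivity checks, canonicalization) contributing only subleading $\exp(\mathcal{O}(m\log m))$ or $\mathrm{poly}(|V|,m)$ overhead.
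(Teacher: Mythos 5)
Your proof is correct, but it takes a genuinely different route from the paper's. The paper proves this lemma by invoking Lemma~6 of Ref.\,\cite{mann2024algorithmic}, which enumerates connected clusters by locally growing them on the interaction hypergraph in time $\mathfrak{d}^{2m}\times e^{\mathcal{O}(m)}\times N^{\mathcal{O}(1)}$, and then reduces the long-range case to a bound on the maximum degree, $\mathfrak{d}\leq \sum_{\ell=2}^{k}\binom{N-1}{\ell-1}\leq kN^{k-1}$, which yields $N^{\mathcal{O}(km)}$. You instead give a self-contained brute-force enumeration: flat ordered lists over $E$ of length at most $m$, consecutive-block partitions, connectivity filtering, and canonicalization. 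Your counting is sound ($|E|\leq k|V|^{k}$, at most $2^{m-1}$ block partitions, $\mathrm{poly}$ overhead per candidate), completeness holds because every multiset of connected polymers has a canonical flat-list representative, and the total work is governed by the number of candidates, so the over-counting by reorderings is harmless regardless of your side remark that $m!\leq |V|^{\mathcal{O}(m)}$ (which, strictly, needs $m\leq\mathrm{poly}(|V|)$, but is not actually required for the runtime). What the paper's degree-based route buys, and yours does not, is the sharper finite-range statement: when $\mathfrak{d}=\mathcal{O}(1)$ the same machinery gives enumeration in $e^{\mathcal{O}(m)}\times N$ time, which is what drives the almost-polynomial runtime of Theorem~\ref{theorem_boson_fr_complexity}; a global sweep over $E^{\leq m}$ is stuck at $N^{\Omega(m)}$ even for nearest-neighbor couplings. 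For the long-range lemma as stated, however, both arguments land on the same $|V|^{\mathcal{O}(km)}$ bound and yours is a valid, more elementary proof.
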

\begin{proof}
From Lemma 6 in Ref.\,\cite{mann2024algorithmic}, if the interacting system is represented by a (hyper)graph with maximum degree $\mathfrak{d}$, then the runtime to finish the required task is at most (we denote $ N=|V|$)
\begin{equation}\label{complexity_cluster_m}
\begin{aligned}[b]
\mathfrak{d}^{2m}\times e^{\mathcal{O}(m)}\times N^{\mathcal{O}(1)}.
\end{aligned}
\end{equation}
For long-range systems, the maximum degree $\mathfrak{d}$ is not an $\mathcal{O}(1)$ constant and should be estimated.  We address this problem in the long-range $k$-local system. By definition, for a fixed site $i\in V$, the number of $\ell$-local edge including $i$ is at most $\binom{N-1}{\ell-1}$. Then we bound the degree at site $i$ as
\begin{equation}\label{}
\begin{aligned}[b]
\mathfrak{d}_{i}\leq \sum_{l=2}^{k}\binom{N-1}{\ell-1}\leq k N^{k-1},
\end{aligned}
\end{equation}
where the RHS can be used to bound the maximum degree for such a long-range system. Then clearly from Eq.\,\eqref{complexity_cluster_m} we understand now the complexity should be modified as $N^{\mathcal{O}(km)}$. Here we regard $e^{\mathcal{O}(m)}$ as $N^{\mathcal{O}(km)}$ since usually $N\geq 2$ is a quite large integer.
\end{proof}

{~}

\hrulefill{\bf [ End of Proof of Lemma~\ref{lemma_cluster_list}]}

{~}

\begin{lemma}\label{lemma_weight}
In the canonical long-range Bose-Hubbard model defined in Eq.\,\eqref{slr_bose_hubbard} with $\widetilde{J}_{i,j}=0$ on a finite lattice $V$, let the polymer $\gamma$ be the set of multisets with elements drawn from $E$ (vertex subsets of $V$ with cardinality at most $k$). Then the weight function $w_{\gamma}$ defined in Eq.\,\eqref{boson_fr_weight} with fixed truncated boson number $q$ can be computed in $\mathcal{O}\left( (4keq)^{3k|\gamma|}\right)$ time.
\end{lemma}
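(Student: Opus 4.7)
The plan is to evaluate $w_\gamma$ directly from the definition in Eq.\,\eqref{boson_fr_weight}, which is an alternating sum of $2^{|\gamma|}$ ratios $g(T) = \operatorname{Tr}_{V_T}(\Pi_{V_T,q} e^{-\beta H_{V_T}})/\operatorname{Tr}_{V_T}(\Pi_{V_T,q} e^{-\beta W_{V_T}})$ indexed by sub-multisets $T\subseteq \gamma$. The overall runtime is then $2^{|\gamma|}$ times the cost of computing one such ratio on the subsystem $V_T$, and throughout I use $|V_T|\le k|T|\le k|\gamma|$.

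The denominator factorizes because $W_{V_T}=\sum_{i\in V_T}W_i(n_i)$ is diagonal in the occupation basis, giving $\operatorname{Tr}_{V_T}(\Pi_{V_T,q} e^{-\beta W_{V_T}}) = \prod_{i\in V_T}\sum_{n=0}^{q}e^{-\beta W_i(n)}$, computable in $\mathcal{O}(q|V_T|)$ operations. For the numerator I would exploit the conservation of the total particle number $N=\sum_{i\in V_T}n_i$, which is guaranteed by the canonical assumption $\widetilde{J}_{i,j}=0$. Since $\Pi_{V_T,q}$ caps local occupations by $q$, only sectors with $0\le N\le q|V_T|$ contribute, and within each sector $H_{V_T}$ acts as a Hermitian matrix on a space of dimension $d_N=\binom{N+|V_T|-1}{|V_T|-1}$. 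Diagonalizing sector by sector and assembling $\sum_N \operatorname{Tr}(\Pi_{V_T,q} e^{-\beta H_{V_T}^{(N)}})$ then yields the numerator exactly.

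For the complexity count, the total dimension of the relevant sectors satisfies
\begin{equation*}
\sum_{N=0}^{q|V_T|} d_N \;=\; \binom{(q+1)|V_T|}{|V_T|} \;\le\; \bigl(e(q+1)\bigr)^{|V_T|} \;\le\; \bigl(e(q+1)\bigr)^{k|\gamma|},
\end{equation*}
by the standard estimate $\binom{n}{m}\le (en/m)^m$. Since diagonalization in a sector of dimension $d_N$ costs $\mathcal{O}(d_N^3)$ and $\sum_N d_N^3 \le (\sum_N d_N)^3$, a single ratio $g(T)$ is computable in $\mathcal{O}((e(q+1))^{3k|\gamma|})$ operations, with negligible $\mathcal{O}(|V_T|^2)$ overhead for assembling the matrix of $H_{V_T}$ from its at most $\binom{|V_T|}{2}$ pair terms. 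Multiplying by the $2^{|\gamma|}$ sub-multisets and absorbing this factor into the base of the exponential (using $2^{|\gamma|}\le 2^{3k|\gamma|}$) yields the claimed bound $\mathcal{O}((4keq)^{3k|\gamma|})$.

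The main technical obstacle is that $\Pi_{V_T,q} H_{V_T}\Pi_{V_T,q}$ is not invariant on $\mathrm{ran}\,\Pi_{V_T,q}$: a hopping term $a_i^\dagger a_j$ can promote a site from $q$ to $q+1$ bosons, so one cannot naively diagonalize a matrix of dimension $(q+1)^{|V_T|}$ and stop. Particle-number conservation is precisely what reduces the working Hilbert space to dimension $(e(q+1))^{|V_T|}$, rather than something that would scale with the global total-particle cutoff $q|V_T|$ in the base of the exponential; this is why the lemma is stated for the canonical model, and the extension to the general squeezing case considered in Theorem~\ref{theorem_boson_lr_complexity} would require either an enlarged local cutoff together with a further truncation-error bound, or a direct series evaluation controlled by the estimates of Proposition~\ref{pro_abs_boson}.
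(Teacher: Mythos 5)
Your proposal is correct and follows essentially the same route as the paper: both exploit total particle-number conservation in the canonical model to reduce the trace computation to a subspace of dimension $\binom{(q+1)|V_T|}{|V_T|}\le (e(q+1))^{|V_T|}$, apply the $\mathcal{O}(d^3)$ cost for the trace of a matrix exponential, and multiply by the $2^{|\gamma|}$ sub-multisets. Your sector-by-sector diagonalization is just a block-diagonal refinement of the paper's single projector $\widetilde{\Pi}_{L,q}$ onto total occupation at most $q|L|$, and you correctly identify the same key obstacle (non-invariance of $\mathrm{ran}\,\Pi_{V_T,q}$ under $H_{V_T}$) that motivates restricting the lemma to the canonical case.
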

\begin{proof}
We repeat the expression of the weight function below again for the readers' convenience,
\begin{equation}\label{boson_fr_weight_re}
\begin{aligned}[b]
w_{\gamma}
=(-1)^{|\gamma|} \sum_{T \subseteq \gamma}(-1)^{|T|} \frac{\operatorname{Tr}_{V_{T}}\qty(\Pi_{V_{T},q}e^{-\beta H_{V_{T}}})}{\operatorname{Tr}_{V_{T}}\qty(\Pi_{V_{T},q}e^{-\beta W_{V_{T}}})}.
\end{aligned}
\end{equation}
First, we note that in the canonical Bose-Hubbard model, any sub-Hamiltonian $H_{L}\coloneq \sum_{Z\subseteq L}h_{Z}$ defined on $L\subseteq V$ preserves the total boson number on this region. This motivates us first to denote $\Pi_{i,m}\coloneq \ket{m}_{i}\prescript{}{i}{\bra{m}}$ with $\{\ket{m}_{i}\}_{m\in \mathbb{N}}$ being the eigenstates of $n_{i}$. Then we introduce the following projector
\begin{equation}\label{}
\begin{aligned}[b]
\widetilde{\Pi}_{L,q}\coloneq \sum_{\substack{m_{1},m_{2},...,m_{|L|}\in \mathbb{N}:\\m_{1}+m_{2}+...+m_{|L|}\leq qL}} \bigotimes_{i\in L}\Pi_{i,m_{i}}.
\end{aligned}
\end{equation}
This operator projects onto the subspace of states for which the total number of bosons in the region $L$ is at most $q|L|$. It is easy to check the relation $\Pi_{L,q}=\Pi_{L,q}\widetilde{\Pi}_{L,q}=\widetilde{\Pi}_{L,q}\Pi_{L,q}$ . More importantly, we have $[H_{L},\widetilde{\Pi}_{L,q}]=0$ since $H_{L}$ preserves the total boson number on the region $L$. If we denote $\id_{L}\coloneq \bigotimes_{i\in L}\id_{i}$ with $\id_{i}$ being the identity of $\mathcal{H}_{i}$, then we have $H_{L}=\widetilde{\Pi}_{L,q}H_{L}\widetilde{\Pi}_{L,q}+(\id_{L}- \widetilde{\Pi}_{L,q})H_{L}(\id_{L}- \widetilde{\Pi}_{L,q})$. This relation implies that $e^{-\beta H_{L}}= \exp(-\beta\widetilde{\Pi}_{L,q}H_{L}\widetilde{\Pi}_{L,q}) \exp[-\beta (\id_{L}- \widetilde{\Pi}_{L,q})H_{L}(\id_{L}- \widetilde{\Pi}_{L,q})]$. Then note that
\begin{equation}\label{Pi_exp_H_Pi}
\begin{aligned}[b]
\widetilde{\Pi}_{L,q}e^{-\beta H_{L}}\widetilde{\Pi}_{L,q}&= \widetilde{\Pi}_{L,q}\exp(-\beta\widetilde{\Pi}_{L,q}H_{L}\widetilde{\Pi}_{L,q}) \exp[-\beta (\id_{L}- \widetilde{\Pi}_{L,q})H_{L}(\id_{L}- \widetilde{\Pi}_{L,q})]\widetilde{\Pi}_{L,q}
\\&=\exp(-\beta\widetilde{\Pi}_{L,q}H_{L}\widetilde{\Pi}_{L,q})\widetilde{\Pi}_{L,q} \exp[-\beta (\id_{L}- \widetilde{\Pi}_{L,q})H_{L}(\id_{L}- \widetilde{\Pi}_{L,q})]\widetilde{\Pi}_{L,q}
=\exp(-\beta\widetilde{\Pi}_{L,q}H_{L}\widetilde{\Pi}_{L,q})\widetilde{\Pi}_{L,q}.
\end{aligned}
\end{equation}
Here, we have used the fact that $[\widetilde{\Pi}_{L,q},O]=0$ implies $[\widetilde{\Pi}_{L,q},e^{O}]=0$, which can be rigorously proven via spectral theorem \cite{reed1972methods} even for unbound operator $O$ as long as it is also self-adjoint (Hermitian). From Eq.\,\eqref{Pi_exp_H_Pi} we further obtain
\begin{equation}\label{tr_L_pi_L_q}
\begin{aligned}[b]
\operatorname{Tr}_{L}\qty(\Pi_{L,q}e^{-\beta H_{L}})&=\operatorname{Tr}_{L}\qty(\Pi_{L,q}\widetilde{\Pi}_{L,q}e^{-\beta H_{L}}\widetilde{\Pi}_{L,q})=\operatorname{Tr}_{L}\qty[\Pi_{L,q}\exp(-\beta\widetilde{\Pi}_{L,q}H_{L}\widetilde{\Pi}_{L,q})\widetilde{\Pi}_{L,q}]
\\&=\operatorname{Tr}_{L}\qty[\Pi_{L,q}\widetilde{\Pi}_{L,q}\exp(-\beta\widetilde{\Pi}_{L,q}H_{L}\widetilde{\Pi}_{L,q})\widetilde{\Pi}_{L,q}]=\operatorname{Tr}_{L}\qty[\widetilde{\Pi}_{L,q}\exp(-\beta\widetilde{\Pi}_{L,q}H_{L}\widetilde{\Pi}_{L,q})\widetilde{\Pi}_{L,q}].
\end{aligned}
\end{equation}
Clearly, the dimension of the range of the projector  $\widetilde{\Pi}_{L,q}$ equals the number of non-negative integer solutions to the inequality $\sum_{i=1}^{|L|}m_{i}\leq q|L|$, which is exactly given by $\binom{(q+1)|L|}{|L|}$. Also, the corresponding dimension of the range of $\Pi_{L,q}$ is $\qty[(q+1)|L|]^{|L|}$ and greater than  of $\widetilde{\Pi}_{L,q}$, therefore we can get rid of $\Pi_{L,q}$ in the last line of Eq.\,\eqref{tr_L_pi_L_q}. Note that for finite dimensional matrix $A$, the computational complexity of $\operatorname{Tr}e^{A}$ is of the order $\mathcal{O}((\operatorname{dim}A)^{3})$. Therefore
The computational complexity of $\operatorname{Tr}_{L}\qty(\Pi_{L,q}e^{-\beta H_{L}})$ can be roughly upper bounded by $\mathcal{O}((2eq)^{3|L|})$, since $\binom{(q+1)|L|}{|L|}\leq [e(q+1)]^{|L|}\leq (2eq)^{|L|}$. Here, we have used the elementary bound $\binom{n}{m} \leq (ne/m)^m$. 

It follows directly that the complexity of computing the ratio $\operatorname{Tr}_{V_{T}}\qty(\Pi_{V_{T},q}e^{-\beta H_{V_{T}}})/\operatorname{Tr}_{V_{T}}\qty(\Pi_{V_{T},q}e^{-\beta W_{V_{T}}})$ is thus upper-bounded by $\mathcal{O}\left( (2eq)^{3|V_{T}|}\right)$.
Given that $|V_{T}|\leq |V_{\gamma}|\leq k|\gamma|$ and that the expression for $w_{\gamma}$ in Eq.\,\eqref{boson_fr_weight_re} is a sum over $2^{|\gamma|}$ such terms, the total computational complexity to determine $w_{\gamma}$ is of the order $\mathcal{O}\left(2^{|\gamma|} (2keq)^{6|\gamma|}\right)$. Therefore, we complete the proof.
\end{proof}

{~}

\hrulefill{\bf [ End of Proof of Lemma~\ref{lemma_weight}]}

{~}

\begin{remark}
A key aspect of our analysis is that the boson number truncation $q = \mathcal{O}(\log |V|)$ for high temperature $\beta\leq \beta_{c}$, as established in Corollary \ref{corollary_truncation_error}. This fundamentally distinguishes our result from Lemma 19 of Ref.\,\cite{mann2024algorithmic}. 
\end{remark}

By using the lemmas above, we are able to estimate the computational complexity for $T_{m}$, as stated in the following result,
\begin{proposition}\label{pro_T_m}
In the canonical long-range Bose-Hubbard model defined on a finite lattice $V$, let the polymer $\gamma$ be the set of multisets of sites with cardinality at most a fixed integer $k$. Define the corresponding weight function $w_{\gamma}$ as in Eq.\,\eqref{boson_fr_weight} with fixed truncated boson number $q$. Then the truncated expansion $T_{m}$ as defined in Eq.\,\eqref{truncat_T_m} can be computed in time $|V|^{\mathcal{O}(km)}\times \mathcal{O}((4keq)^{3km}) \times e^{\mathcal{O}(m)}$.
\end{proposition}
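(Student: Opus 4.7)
The plan is to directly combine the three preceding complexity lemmas via a straightforward cost analysis of the sum defining $T_m$. Starting from the multiset reformulation in \eq{eq:logZ_multiset}, computing $T_m$ decomposes into three subtasks: (i) enumerating all multisets $\widetilde{a} \in \widetilde{\mathscr{G}}_{\mathrm{c}}$ with $|\widetilde{a}| \leq m$; (ii) for each such $\widetilde{a}$, evaluating the product of weights $\prod_{(\gamma,\mu) \in \widetilde{a}} w_\gamma^\mu / \mu!$; and (iii) evaluating the Ursell function $\varphi(\mathsf{G}_{\widetilde{a}})$. The overall runtime is then bounded by the enumeration cost multiplied by the per-cluster cost of computing (ii) and (iii), followed by a summation whose cost is dominated by what came before.

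For the enumeration stage, I would invoke \Lem{lemma_cluster_list}, which bounds the time to list all relevant multisets by $|V|^{\mathcal{O}(km)}$; this same quantity also upper-bounds the number of clusters appearing in the sum. For the weight stage, I would apply \Lem{lemma_weight} polymer-by-polymer: each distinct $w_{\gamma_i}$ is computable in $\mathcal{O}((4keq)^{3k|\gamma_i|})$ time, and the remaining scalar operations (raising to the multiplicity $\mu_i$, dividing by $\mu_i!$, and forming the product) are of negligible cost by comparison. Using additivity of the exponents, the per-cluster weight-product cost is at most
\begin{equation*}
\prod_i (4keq)^{3k|\gamma_i|} \;\leq\; (4keq)^{3k\sum_i \mu_i |\gamma_i|} \;\leq\; (4keq)^{3km},
\end{equation*}
where the last inequality uses $|\widetilde{a}| = \sum_i \mu_i|\gamma_i| \leq m$. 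For the Ursell stage, \Lem{lemma_ursell} gives a cost of $\exp(\mathcal{O}(|V(\mathsf{G}_{\widetilde{a}})|))$, and since $|V(\mathsf{G}_{\widetilde{a}})| = |\mathcal{D}(\widetilde{a})| \leq |\widetilde{a}| \leq m$, this contributes $e^{\mathcal{O}(m)}$ per cluster.

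Multiplying the enumeration bound by the per-cluster costs of the weight product and the Ursell function yields the claimed runtime $|V|^{\mathcal{O}(km)} \times \mathcal{O}((4keq)^{3km}) \times e^{\mathcal{O}(m)}$. I do not anticipate a substantive obstacle: the proof is essentially bookkeeping. The only subtlety worth flagging is that, in the weight stage, the per-polymer exponents must aggregate under the constraint $\sum_i \mu_i|\gamma_i| \leq m$ (rather than, say, acquiring an extra factor proportional to the number of distinct polymers), which is what ensures the final exponent is $3km$ instead of something larger.
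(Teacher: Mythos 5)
Your proposal is correct and follows essentially the same route as the paper: enumerate the connected clusters via Lemma~\ref{lemma_cluster_list}, evaluate each weight via Lemma~\ref{lemma_weight}, and compute the Ursell functions via Lemma~\ref{lemma_ursell}, then multiply the costs. Your aggregation of the per-polymer weight costs using $\sum_i \mu_i|\gamma_i|\leq m$ is a slightly cleaner piece of bookkeeping than the paper's "at most $m$ terms, each costing $\mathcal{O}((4keq)^{3km})$," but the argument is the same.
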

\begin{proof}
From Lemma \ref{lemma_cluster_list}, we first list all the elements in $\{\widetilde{a}\in \widetilde{\mathscr{G}}_{\mathrm{c}}: |\widetilde{a}|\leq m\}$ in time $|V|^{\mathcal{O}(km)}$. For each multiset of polymers, the Ursell function can be computed within time $\exp(\mathcal{O}(m))$ from Lemma \ref{lemma_ursell}. From Eq.\,\eqref{truncat_T_m} we know that for each sequence there are at most $m$ terms and each weight can be computed at most in time $\mathcal{O}\left( (4keq)^{3km}\right)$ by Lemma \ref{lemma_weight}. 
\end{proof}

{~}

\hrulefill{\bf [ End of Proof of Proposition~\ref{pro_T_m}]}

{~}

\begin{theorem}\label{theorem_s_lr_complexity}
For the canonical long-range Bose-Hubbard model defined by Eq.\,\eqref{slr_bose_hubbard} with $\widetilde{J}_{i,j}=0$ with decay exponent $\alpha > D$ on a finite lattice $V$. Denote $N=|V|$, then there exists a classical algorithm that, for any given precision $\epsilon>0$ and index $\theta>0$, computes an approximation $f_{\beta}$ satisfying
$|\log \mathcal{Z}_{\beta}-f_{\beta}|\leq N^{-\theta}+\epsilon.$
at any fixed temperature above a threshold, $\beta\leq \beta_{c}=\mathcal{O}(1)$. The runtime of the algorithm is quasi-polynomial in the system size, given by $e^{\mathcal{O}(\log^{2} (N/\epsilon))}.$
\end{theorem}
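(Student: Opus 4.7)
The plan is to build the approximant $f_\beta := \log \mathcal{Z}_W^{(q)} + T_m$ from two pieces, a boson-number cutoff $q$ and a cluster-expansion order $m$ that I will tune. The on-site partition function $\mathcal{Z}_W^{(q)} = \operatorname{Tr}(\Pi_{V,q} e^{-\beta W})$ factorizes over sites into independent single-site traces $\sum_{n=0}^{q} e^{-\beta W_i(n)}$, and so $\log \mathcal{Z}_W^{(q)}$ is computed exactly in time $\mathcal{O}(Nq)$. The nontrivial piece is $T_m$, the truncated cluster expansion of $\log(\mathcal{Z}_\beta^{(q)}/\mathcal{Z}_W^{(q)})$ defined in Eq.\,\eqref{truncat_T_m}. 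By the triangle inequality the total error splits as
\[
|\log \mathcal{Z}_\beta - f_\beta| \le |\log \mathcal{Z}_\beta - \log \mathcal{Z}_\beta^{(q)}| + \left|\log(\mathcal{Z}_\beta^{(q)}/\mathcal{Z}_W^{(q)}) - T_m\right|,
\]
so $q$ and $m$ can be tuned independently.

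For the first term I invoke Corollary~\ref{corollary_truncation_error} together with the converted-to-logarithm form in the remark following it: taking $q = C_\theta \log N$ with a sufficiently large $\beta$-dependent constant $C_\theta$ yields $|\log \mathcal{Z}_\beta - \log \mathcal{Z}_\beta^{(q)}| \le N^{-\theta}$. For the second term Proposition~\ref{pro_truncation} supplies $|\log(\mathcal{Z}_\beta^{(q)}/\mathcal{Z}_W^{(q)}) - T_m| \le N e^{-m}$, so the choice $m = \lceil \log(N/\epsilon)\rceil$ drives this term below $\epsilon$. Together these deliver the claimed bound $|\log \mathcal{Z}_\beta - f_\beta| \le N^{-\theta} + \epsilon$.

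The runtime estimate is then a direct substitution into Proposition~\ref{pro_T_m}, which gives $T_m$ in time $N^{\mathcal{O}(km)} \times \mathcal{O}((4keq)^{3km}) \times e^{\mathcal{O}(m)}$. Inserting $m = \mathcal{O}(\log(N/\epsilon))$ and $q = \mathcal{O}(\log N)$, and recalling $k=2$ for the Bose-Hubbard model, the first factor dominates and becomes
\[
N^{\mathcal{O}(\log(N/\epsilon))} = e^{\mathcal{O}(\log N \cdot \log(N/\epsilon))} = e^{\mathcal{O}(\log^2(N/\epsilon))},
\]
while the $(4keq)^{3km}$ factor contributes only a subdominant $e^{\mathcal{O}(\log\log N \cdot \log(N/\epsilon))}$ and the $e^{\mathcal{O}(m)}$ factor is trivial. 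The on-site part $\log \mathcal{Z}_W^{(q)}$ is absorbed into the polynomial overhead. Summing everything yields the advertised $e^{\mathcal{O}(\log^2(N/\epsilon))}$ runtime.

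The main obstacle, and the reason one gets $\log^2$ rather than $\log$ in the exponent, is the polynomial-in-$N$ enumeration cost hidden in Lemma~\ref{lemma_cluster_list}: in the long-range regime the maximum degree of the interaction hypergraph scales as $N^{k-1}$ rather than as $\mathcal{O}(1)$, so listing all connected polymer multisets of size $\le m$ costs $N^{\mathcal{O}(km)}$. This forces a delicate balance: $m$ must be chosen just large enough to kill the cluster-truncation error, because every extra unit of $m$ multiplies the runtime by a factor polynomial in $N$. A parallel delicacy is that $q$ must grow only logarithmically in $N$; this is exactly what the low-boson-density bound of Theorem~\ref{stheorem_low_density} and the concentration bound of Corollary~\ref{coro_boson_concen} provide, and it is what prevents the $(4keq)^{3km}$ factor from dominating the analysis.
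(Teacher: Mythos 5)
Your proposal is correct and follows essentially the same route as the paper's proof: the same decomposition $f_\beta=\log\mathcal{Z}_W^{(q)}+T_m$, the same invocation of Corollary~\ref{corollary_truncation_error} with $q=\mathcal{O}(\log N)$ and Proposition~\ref{pro_truncation} with $m=\mathcal{O}(\log(N/\epsilon))$, and the same substitution into Proposition~\ref{pro_T_m} for the runtime. The added remarks on why the long-range degree forces the $N^{\mathcal{O}(km)}$ enumeration cost (hence $\log^2$ in the exponent) are consistent with the paper's discussion.
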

\begin{proof}
First, as detailed in Corollary \ref{corollary_truncation_error}, we truncate the on-site boson number up to $q\propto \log N$ (the constant here depends on $\beta$ and $\theta$) to obtain the truncated partition function $\mathcal{Z}_{\beta}^{(q)}$.  The estimation for the truncation error is given by $|\log \mathcal{Z}_{\beta}-\log \mathcal{Z}^{(q)}_{\beta}|\leq N^{-\theta}$. Then we utilize the abstract cluster expansion to obtain $T_{m}$ and the error is estimated in Proposition Eq.\,\eqref{pro_truncation}, in the form of $|\log \mathcal{Z}_{\beta}^{(q)}-\log \mathcal{Z}_{W}^{(q)}-T_{m}|\leq N e^{-m}$. Here, the quantity $\mathcal{Z}_{W}^{(q)}=\prod_{x\in V}\sum_{n=0}^{q}e^{-\beta W_{x}(n)}$ denotes the truncated partition function with all the hopping terms turned off. By combining these two steps, we arrive at
\begin{equation}\label{}
\begin{aligned}[b]
|\log \mathcal{Z}_{\beta}-\log \mathcal{Z}_{W}^{(q)}-T_{m}|\leq N^{-\theta}+N e^{-m}.
\end{aligned}
\end{equation}
Note that the computational time for $\log \mathcal{Z}_{W}^{(q)}$ is simply $\mathcal{O}(qN)=\mathcal{O}(N\log N)$, which together with Proposition \ref{pro_T_m} implies that the computational time for $\log \mathcal{Z}_{W}^{(q)}+T_{m}$ is given by 
\begin{equation}\label{}
\begin{aligned}[b]
\mathcal{O}(N\log N)+ N^{\mathcal{O}(m)}\times \mathcal{O}((\log N)^{3km}) \times e^{\mathcal{O}(m)}.
\end{aligned}
\end{equation}
By setting $m=\mathcal{O}(\log(N/\epsilon))$, the total runtime here is the order of 
\begin{equation}\label{}
\begin{aligned}[b]
\mathcal{O}(N\log N)+ N^{\mathcal{O}(\log(N/\epsilon))}\times \mathcal{O}((\log N)^{\mathcal{O}(\log(N/\epsilon))}) \times e^{\mathcal{O}(\log(N/\epsilon))},
\end{aligned}
\end{equation}
which simplifies to $e^{\mathcal{O}(\log^{2} (N/\epsilon))}$ and thus completes the proof. Note that the constants omitted in the final result also depend on $\beta,\beta_{c},\theta$, and parameters in the model.
\end{proof}

{~}

\hrulefill{\bf [ End of Proof of Theorem~\ref{theorem_s_lr_complexity}]}

{~}

\begin{remark}
The implicit constants in the $\mathcal{O}$-notation for the runtime depend on the chosen temperature $\beta$, threshold temperature $\beta_{c}$, the flexible index $\theta$ and model parameters ($\alpha, D, g,\mu,U_{\min}$ and $U_{\max}$).
\end{remark}

\section{Useful Propositions}

\begin{proposition}\label{pro_subsystem_partition}
In the long-range Bose-Hubbard model ($\alpha>D$) 
defined in Eq.\,\eqref{slr_bose_hubbard} over a finite lattice $V$, the following inequality holds for any region $L\subset V$:
\begin{equation}\label{subsystem_cor}
\begin{aligned}[b]
\frac{\operatorname{Tr}_{L^{\cc}}e^{-\beta H_{L^{\cc}}}}{\operatorname{Tr}e^{-\beta H}}\leq \frac{1}{\operatorname{Tr}_{L}e^{-\beta W_{L}}}.
\end{aligned}
\end{equation}
Here, $H_{L^{\cc}}\coloneq \sum_{Z\in E: Z\subseteq L^{\cc}}h_{Z}$ captures all the interactions not on a fix region $L\subseteq V$ and $W_{L}\coloneq \sum_{i\in L}W_{i}$ collects all the on-site interaction terms in $L$.
\end{proposition}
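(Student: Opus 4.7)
The plan is to establish the equivalent form $\operatorname{Tr}e^{-\beta H}\geq \operatorname{Tr}_{L}e^{-\beta W_{L}}\cdot \operatorname{Tr}_{L^{\cc}}e^{-\beta H_{L^{\cc}}}$ by applying the Gibbs variational principle (Peierls--Bogoliubov inequality) to a carefully chosen product state, exploiting the particle-number structure of the on-site potential $W_{L}$.

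The first step is to introduce the trial density operator
\begin{equation*}
\sigma \coloneq \rho_{L}\otimes \rho_{L^{\cc}}, \quad \rho_{L}\coloneq \frac{e^{-\beta W_{L}}}{\operatorname{Tr}_{L}e^{-\beta W_{L}}}, \quad \rho_{L^{\cc}}\coloneq \frac{e^{-\beta H_{L^{\cc}}}}{\operatorname{Tr}_{L^{\cc}}e^{-\beta H_{L^{\cc}}}},
\end{equation*}
so that additivity of the von Neumann entropy on product states gives $S(\sigma)=S(\rho_{L})+S(\rho_{L^{\cc}})$. The variational principle then yields $-\log \operatorname{Tr}e^{-\beta H}\leq \beta \operatorname{Tr}(H\sigma)-S(\sigma)$, and upon noting that $-\log \operatorname{Tr}_{L}e^{-\beta W_{L}}=\beta \langle W_{L}\rangle_{\rho_{L}}-S(\rho_{L})$ and analogously for $\rho_{L^{\cc}}$, the claim reduces to showing
\begin{equation*}
\operatorname{Tr}(H\sigma)= \langle W_{L}\rangle_{\rho_{L}}+\langle H_{L^{\cc}}\rangle_{\rho_{L^{\cc}}}.
\end{equation*}

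Next I would decompose $H=W_{L}+(H_{L}-W_{L})+H_{\partial}+H_{L^{\cc}}$, where $H_{L}-W_{L}$ contains the hopping and squeezing terms supported entirely in $L$, and $H_{\partial}$ collects the off-site terms that cross the boundary between $L$ and $L^{\cc}$. The key observation is that $W_{L}$ is a polynomial in the local number operators $\{n_{i}\}_{i\in L}$, so $[\rho_{L},n_{i}]=0$ for every $i\in L$; in particular $\rho_{L}$ is diagonal in the occupation-number basis on $L$. Every term in $H_{L}-W_{L}$ and $H_{\partial}$ is, by Eq.\,\eqref{slr_bose_hubbard}, a product of creation/annihilation operators that necessarily changes the local particle number on at least one site in $L$. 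Partial tracing over $L$ with the diagonal state $\rho_{L}$ therefore annihilates all of these contributions, so $\operatorname{Tr}(H_{\partial}\sigma)=0$ and $\operatorname{Tr}((H_{L}-W_{L})\sigma)=0$. Combined with the trivial factorizations $\operatorname{Tr}(W_{L}\sigma)=\langle W_{L}\rangle_{\rho_{L}}$ and $\operatorname{Tr}(H_{L^{\cc}}\sigma)=\langle H_{L^{\cc}}\rangle_{\rho_{L^{\cc}}}$, this gives the required identity and the inequality follows by rearrangement.

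The main technical obstacle is the vanishing of the boundary expectation $\operatorname{Tr}(H_{\partial}\sigma)$: one must make sure that this argument is valid even in the presence of the squeezing terms $\widetilde{J}_{i,j}(a_{i}^{\dagger}a_{j}^{\dagger}+\mathrm{h.c.})$ that violate global particle-number conservation. The point is that the argument only requires $\rho_{L}$ to be diagonal in the \emph{local} number basis on $L$, which is guaranteed by $W_{L}=\sum_{i\in L}W_{i}(n_{i})$; one does not need $\rho_{L^{\cc}}$ to share this property. A secondary concern is technical well-definedness of the Peierls--Bogoliubov bound for the unbounded operator $H$, but this is standard: the quadratic growth of the on-site repulsion in Eq.\,\eqref{eq:param_bounds} ensures $e^{-\beta H}$ is trace class and all expectation values appearing above are finite at any $\beta>0$.
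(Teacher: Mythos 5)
Your proposal is correct and follows essentially the same route as the paper's proof: the Gibbs variational principle applied to the product trial state $\rho_{W_L}\otimes\rho_{H_{L^{\cc}}}$, with the cross and intra-$L$ hopping/squeezing expectations vanishing because $\rho_{W_L}$ is diagonal in the local number basis. Your explicit remark that only the diagonality of the $L$-factor is needed (so the squeezing terms cause no trouble) is a slightly more careful justification of the step the paper states in one line.
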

\begin{proof}
We introduced a free energy functional 
\begin{equation}\label{}
\begin{aligned}[b]
F(\bullet)\coloneq \operatorname{Tr}(H\bullet)-\beta^{-1}S(\bullet)
\end{aligned}
\end{equation}
with  $\rho_{1}\coloneq \rho_{H_{L^{\cc}}}\otimes\rho_{W_{L}}\coloneq   e^{-\beta H_{L^{\cc}}}/\operatorname{Tr}_{L^{\cc}}e^{-\beta H_{L^{\cc}}} \otimes e^{-\beta W_{L}}/\operatorname{Tr}_{L}e^{-\beta W_{L}}$ and $\rho_{H}\coloneq e^{-\beta H}/\operatorname{Tr}e^{-\beta H}$.
From the Gibbs variational principle \cite{lemm2023thermal,alhambra2023quantum} we have
\begin{equation}\label{free_energy}
\begin{aligned}[b]
F(\rho_{1})\geq F(\rho_{H}).
\end{aligned}
\end{equation} 
To proceed, we first note that 
\begin{equation}\label{tr_H_rho_1}
\begin{aligned}[b]
\operatorname{Tr}\qty(H\rho_{1})
=&\operatorname{Tr}\qty[\qty(H_{L^{\cc}}+H_{L}+H_{\partial L})\rho_{1}]
\\=&\operatorname{Tr}_{L}\qty(H_{L}\rho_{W_{L}})+\operatorname{Tr}_{L^{\cc}}\qty(H_{L^{\cc}}\rho_{H_{L^{\cc}}})+\operatorname{Tr}\qty(H_{\partial L}\rho_{1})
=E(\rho_{H_{L^{\cc}}})+\operatorname{Tr}_{L}(H_{L}\rho_{W_{L}})+\operatorname{Tr}\qty(H_{\partial L}\rho_{1})
\end{aligned}
\end{equation}
Here, the operators $H_{L}\coloneq \sum_{Z\in E: Z\subseteq L}h_{Z}$ and $H_{\partial L}\coloneq \sum_{Z\in E: Z\cap L,L^{\cc}\neq \emptyset}h_{Z}$ captures the interactions on $L$ and boundary of $L$, respectively. We also used $E(\rho_{\bullet})\coloneq \operatorname{Tr}\bullet \rho_{\bullet}$ to present the internal energy for a given thermal Gibbs state $\rho_{\bullet}$.
On the other hand, we have
\begin{equation}\label{S_rho_1}
\begin{aligned}[b]
\beta^{-1}S(\rho_{1})
=&\beta^{-1}S(\rho_{W_{L}})+\beta^{-1}S(\rho_{H_{L^{\cc}}})
\\=&E(\rho_{W_{L}})+E(\rho_{H_{L^{\cc}}})+\beta^{-1}\ln \operatorname{Tr}_{L}e^{-\beta W_{L}}
+\beta^{-1}\ln \operatorname{Tr}_{L^{\cc}}e^{-\beta H_{L^{\cc}}}.
\end{aligned}
\end{equation}
By substituting \eqtwo{tr_H_rho_1}{S_rho_1} into \eq{free_energy} we obtain
\begin{equation}\label{free_energy_results}
\begin{aligned}[b]
&\operatorname{Tr}(H\rho_{1})-\beta^{-1}S(\rho_{1})\geq -\beta^{-1}\ln \operatorname{Tr}e^{-\beta H}
\\\implies& \operatorname{Tr}_{L}(H_{L}\rho_{W_{L}})+\operatorname{Tr}\qty(H_{\partial L}\rho_{1})-\langle W_{L} \rangle_{\beta W_{L}}-\beta^{-1}\ln \operatorname{Tr}_{L}e^{-\beta W_{L}}
-\beta^{-1}\ln \operatorname{Tr}_{L^{\cc}}e^{-\beta H_{L^{\cc}}}\geq -\beta^{-1}\ln \operatorname{Tr}e^{-\beta H} 
\\\implies& \ln\frac{\operatorname{Tr}_{L}e^{-\beta W_{L}}\operatorname{Tr}_{L^{\cc}}e^{-\beta H_{L^{\cc}}}}{\operatorname{Tr}e^{-\beta H}}\leq \beta \qty[\operatorname{Tr}_{L}(H'_{L}\rho_{W_{L}})+\operatorname{Tr}\qty(H_{\partial L}\rho_{1})].
\end{aligned}
\end{equation}
Here, we used $H_{L}'$ to denote all the terms except for the on-site interactions in the region $L$, i.e., the hopping terms. Then we note that
\begin{equation}\label{tr_h_prime}
\begin{aligned}[b]
\operatorname{Tr}_{L}(H'_{L}\rho_{W_{L}})=\operatorname{Tr}\qty(H_{\partial L}\rho_{1})=0,
\end{aligned}
\end{equation}
since the hopping term does not preserve local particle number at any site. By putting Eq.\,\eqref{tr_h_prime} into Eq.\,\eqref{free_energy_results} we clearly finish the proof.
\end{proof}

{~}

\hrulefill{\bf [ End of Proof of Lemma~\ref{pro_subsystem_partition}]}

{~}

\begin{lemma}\label{lem_trace_f_w}
Let $w$ be a sequence, denote $V_{w}=\bigcup_{Z\in w}Z$ and $m=|w|$ as the support and length of $w$, respectively. Define
\begin{equation}\label{}
\begin{aligned}[b]
F(w)\coloneq  \ii{0}{\beta}{\tau_{1}}\ii{0}{\tau_{1}}{\tau_{2}}...\ii{0}{\tau_{m-1}}{\tau_{m}} h_{w_{1}}(\tau_{1})h_{w_{2}}(\tau_{2})...h_{w_{m}}(\tau_{m}), \quad \bullet(\tau)\coloneq e^{\tau W_{V_{w}}}\bullet e^{-\tau W_{V_{w}}}.
\end{aligned}
\end{equation} 
If $\{h_{Z}\}_{Z\in w}$ comes from the hopping and squeezing terms in the long-range Bose-Hubbard model [cf.\,Eq.\,(\ref{slr_bose_hubbard})] and write it as $h_{Z}=\sum_{s}J_{Z}^{(s)}h^{(s)}_{Z}$, then for any site $\overline{x}\in V_{w}$ the following inequality holds with $c_{1},c_{2},c_{3}=\mathcal{O}(1)$ independent from $\overline{x}$ for high temperatures $\beta\leq \beta_{c}$:
\begin{equation}\label{}
\begin{aligned}[b]
\norm{n_{\overline{x}}^{l}e^{-\beta W_{V_{w}}}F(w)}_{1}\leq \sqrt{(2l)!} \qty(\frac{c_{1}}{\sqrt{\beta}})^{l}\qty(\frac{c_{2}}{\sqrt{\beta}})^{|V_{w}|}\frac{(c_{3} \sqrt{\beta})^{m}}{m!}\sum_{s_{1}s_{2}...s_{m}}\prod_{i=1}^{m}|J_{w_{i}}^{(s_{i})}|\cdot  \prod_{x\in V_{w}}[m_{x}(w,\vec{s})!]^{1/2}.
\end{aligned}
\end{equation} 
Here, we introduced the vector $\vec{s}\coloneq (s_{1},s_{2},...,s_{m})$ and denote $\mathsf{N}_{x}(h^{(s_{1})}_{w_{1}}h^{(s_{2})}_{w_{2}}...h^{(s_{m})}_{w_{m}})$ as $m_{x}(w,\vec{s})$. 
\end{lemma}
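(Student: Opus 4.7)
The strategy is to bound the trace norm via the Hölder-type inequality $\|XY\|_{1}\leq \|X\|_{2}\|Y\|_{2}$, exploit the fact that $W_{V_{w}}=\sum_{x\in V_{w}}W_{x}(n_{x})$ is diagonal in the Fock basis and factorizes across sites, and reduce every estimate to single-site Gaussian moment sums that converge thanks to the uniform lower bound $U_{\min}>0$ in \eqref{eq:param_bounds}.

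First I would decompose $h_{Z_{i}}=\sum_{s_{i}}J_{Z_{i}}^{(s_{i})}h_{Z_{i}}^{(s_{i})}$ inside each factor of $F(w)$, apply the triangle inequality to extract the sum over $\vec{s}$ and the moduli $\prod_{i}|J_{Z_{i}}^{(s_{i})}|$, and apply Minkowski to move the time-ordered integral $\int_{\Delta_{m}}d\vec{\tau}$ outside the trace norm. The simplex has volume $\beta^{m}/m!$, producing the $1/m!$ denominator in the claim. The task then reduces to a $\vec{\tau}$-uniform estimate on $\|n_{\bar{x}}^{l}e^{-\beta W_{V_{w}}}\prod_{i=1}^{m}h_{Z_{i}}^{(s_{i})}(\tau_{i})\|_{1}$, to which I apply Hölder with the split
\[
X=n_{\bar{x}}^{l}e^{-\beta W_{V_{w}}/2},\qquad Y=e^{-\beta W_{V_{w}}/2}\prod_{i=1}^{m}h_{Z_{i}}^{(s_{i})}(\tau_{i}),
\]
so that both $\|X\|_{2}^{2}$ and $\|Y\|_{2}^{2}$ factorize as tensor products over $x\in V_{w}$. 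Using the interaction-picture identity $e^{\tau W_{x}}a_{x}e^{-\tau W_{x}}=a_{x}\exp\{-\tau[W_{x}(n_{x})-W_{x}(n_{x}-1)]\}$ (and its adjoint), each $h_{Z_{i}}^{(s_{i})}(\tau_{i})$ is rewritten as a bare ladder monomial multiplied by a diagonal, operator-valued phase that is linear in $n_{x}$; the net ladder degree at site $x$ is exactly $m_{x}(w,\vec{s})$.

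Each single-site contribution then takes the form $\sum_{n\geq 0}e^{-\beta W_{x}(n)+\tau\cdot\ell(n)}p(n)$ for a linear $\ell(n)$ and polynomial $p$. The quadratic growth of $W_{x}$, with leading coefficient at least $U_{\min}/2$, dominates, and completing the square reduces the estimate to the standard Gaussian moment bound
\[
\sum_{n\geq 0}e^{-\beta U_{\min}n^{2}/2}\,n^{k}\leq \mathcal{O}(1)\,\sqrt{k!}\,\beta^{-(k+1)/2}.
\]
Applied to $\|X\|_{2}^{2}=\operatorname{Tr}(n_{\bar{x}}^{2l}e^{-\beta W_{V_{w}}})$, site $\bar{x}$ contributes $\sqrt{(2l)!}\,c^{2l}\beta^{-l-1/2}$ and each remaining site contributes $c\beta^{-1/2}$, giving $\|X\|_{2}\lesssim \sqrt{(2l)!}(c_{1}/\sqrt{\beta})^{l}(c_{2}/\sqrt{\beta})^{(|V_{w}|-1)/2}$ after using $(2l)!^{1/4}\leq \sqrt{(2l)!}$. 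Applied to $\|Y\|_{2}^{2}$, each site yields $m_{x}!\,c^{2m_{x}}\beta^{-(m_{x}+1)/2}$ from the matrix-element bound $|\langle n+\delta|(\text{ops})|n\rangle|^{2}\lesssim (n+m_{x})^{m_{x}}$; using $\sum_{x}m_{x}(w,\vec{s})=2m$ for Bose--Hubbard gives $\|Y\|_{2}\lesssim c_{3}^{m}(\prod_{x}m_{x}!)^{1/2}\beta^{-m/2-|V_{w}|/4}$. Multiplying the two combines the $\beta$-powers to $\beta^{-l/2-|V_{w}|/2-m/2}$, which after the simplex volume $\beta^{m}/m!$ becomes the announced $\beta^{m/2-l/2-|V_{w}|/2}/m!$ dependence matching $(c_{1}/\sqrt{\beta})^{l}(c_{2}/\sqrt{\beta})^{|V_{w}|}(c_{3}\sqrt{\beta})^{m}/m!$ and factorials $\sqrt{(2l)!}\prod_{x}\sqrt{m_{x}!}$.

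The main obstacle is controlling the operator-valued interaction-picture phases $\exp\{-\tau[W_{x}(n_{x})-W_{x}(n_{x}-1)]\}$ in the second step: they depend non-trivially on the ordering $0\leq \tau_{m}\leq\cdots\leq\tau_{1}\leq\beta$ and, since $W_{x}(n)-W_{x}(n-1)=U_{x}(n-1/2)-\mu_{0,x}$, grow linearly in $n_{x}$ with coefficients $\mathcal{O}(\tau U_{\max})$. They must be absorbed into the Gaussian weight $e^{-\beta W_{x}/2}$ without destroying either the Gaussian convergence or the sharp $[m_{x}!]^{1/2}$ scaling. Because these phases are merely linear in $n_{x}$ while $W_{x}$ is quadratic with leading coefficient at least $U_{\min}/2$, completing the square suffices: the shift in the Gaussian mean is at most $\mathcal{O}(\tau U_{\max}/U_{\min})$, and the residual multiplier is uniformly bounded by $e^{\mathcal{O}(\beta U_{\max}^{2}/U_{\min})}$ per site, which for $\beta\leq \beta_{c}$ is absorbed into $c_{1},c_{2},c_{3}$. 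Everything else is routine bookkeeping using the elementary inequalities $(2k-1)!!\leq \sqrt{(2k)!}$, $k^{k}\leq e^{k}k!$, and the shifted Gaussian identity.
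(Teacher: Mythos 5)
Your overall strategy coincides with the paper's: factorize $e^{-\beta W_{V_w}}$ and the interaction-picture phases over sites, separate the ladder-operator content from the diagonal Gaussian weight by a Cauchy--Schwarz/H\"older step, and reduce everything to single-site moment sums of the form $\sum_n (n+p)^p e^{O_x(n)}$ normalized against $\sum_n e^{-\beta W_x(n)}$; the simplex volume $\beta^m/m!$ and the identity $\sum_x m_x(w,\vec s)=2m$ then reproduce the claimed powers of $\beta$ exactly as you describe. The only organizational difference is that you apply H\"older once globally (with $X=n_{\bar x}^l e^{-\beta W_{V_w}/2}$ and $Y=e^{-\beta W_{V_w}/2}\prod_i h(\tau_i)$) before factorizing, whereas the paper factorizes the trace norm over sites first and applies Cauchy--Schwarz site by site; this is immaterial.

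The one step that does not survive as written is your treatment of the accumulated interaction-picture phases. Commuting all $m_x(w,\vec s)$ ladder operators at site $x$ past the diagonal weights produces a total exponent of the form $O_x(n_x)=-\sum_{i}\Delta\tau_i\,W_x(n_x+S_{i,x})$ with $\sum_i\Delta\tau_i=\beta$ and $|S_{i,x}|\le m_x(w,\vec s)$; its linear-in-$n_x$ coefficient is of order $\beta U_{\max}\, m_x(w,\vec s)$, not $\mathcal{O}(\tau U_{\max})$, because the individual per-operator shifts add up. Naively completing the square against $-\tfrac{\beta U_x}{2}n_x^2$ then yields a mean shift of order $m_x(w,\vec s)$ and an apparent multiplier $e^{\mathcal{O}(\beta U_{\max}^2 m_x(w,\vec s)^2/U_{\min})}$ per site --- super-exponential in $m$ and fatal to the convergence of the subsequent cluster sums --- rather than the $e^{\mathcal{O}(\beta U_{\max}^{2}/U_{\min})}$ you claim. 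The bound is rescued only because each shifted quadratic $-W_x(n_x+S_{i,x})$ carries its own constant term $-\tfrac{U_x}{2}S_{i,x}^2+\mu_{0,x}S_{i,x}$, which cancels the $+\tfrac{\beta U}{2}m_x(w,\vec s)^2$ produced by completing the square; equivalently, one bounds $O_x(n)\le -\beta\min_{|y|\le m_x(w,\vec s)}W_x(n+y)$ and invokes a shifted-Gaussian moment estimate (the analogue of Lemma~\ref{lem_series_sum_estimate}), which is what the paper does. Your conclusion is correct and the rest of the bookkeeping checks out, but the justification of this step must track the constant terms of the shifted quadratics (or use the min-over-shifts bound); as stated it would not close.
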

\begin{remark}
In fact, $m_{x}(w,\vec{s})$ does not depend on $\vec{s}$ in the long-range Bose-Hubbard model [cf.\,Eq.\,(\ref{slr_bose_hubbard})], since each term offers exactly one operator in its support.
\end{remark}
\begin{proof}
By denoting $l_{x}=l$ for $x=\overline{x}$ and $l_{x}=0$ otherwise, we first notice that
\begin{equation}\label{nbarx_l}
\begin{aligned}[b]
n_{\overline{x}}^{l}e^{-\beta W_{V_{w}}}F(w)=\int [\tau_{1}\tau_{2}...\tau_{m}] \sum_{s_{1},s_{2},...,s_{m}}|J_{w_{1}}^{(s_{1})}||J_{w_{2}}^{(s_{2})}|...|J_{w_{m}}^{(s_{m})}|\cdot \prod_{x\in V_{w}} \qty[n_{x}^{l_{x}}e^{-\beta W_{x}}h_{w_{1},x}^{(s_{1})}(\tau_{1})h_{w_{2},x}^{(s_{2})}(\tau_{2})...h_{w_{m},x}^{(s_{m})}(\tau_{m})].
\end{aligned}
\end{equation}
Here, for fixed sequence $w$ and vector $\vec{s}\coloneq (s_{1},s_{2},...,s_{m})$, we have decomposed $h_{w_{i}}^{(s_{i})}(\tau_{i})=\prod_{x\in w_{i}}h_{i,x}(\tau_{i})$ for each $i=1,2,...,m$ into different sites. For each operator $h_{i,x}(\tau_{i})$, where the notations for $w$ and $\vec{s}$ will be omitted for simplification when no confusion arises, we denote the ``net'' number of bosonic operators within (i.e., \# of $a_{x}^{\dagger}$ minus \# of $a_{x}$) as $b_{i,x}$, then we have $h_{i,x}(\tau_{i})=e^{\tau_{i}W_{x}(n_{x})}h_{i,x}(\tau_{i})e^{-\tau_{i}W_{x}(n_{x})}=h_{i,x}e^{\tau_{i}W_{x}(n_{x}+b_{i,x})-\tau_{i}W_{x}(n_{x})}$. Therefore we note that 
\begin{equation}\label{e_beta_W_x_h}
\begin{aligned}[b]
e^{-\beta W_{x}}h_{w_{1},x}^{(s_{1})}(\tau_{1})h_{w_{2},x}^{(s_{2})}(\tau_{2})...h_{w_{m},x}^{(s_{m})}(\tau_{m})=h_{1,x}h_{2,x}...h_{m,x}e^{O_{x}(n_{x})} \eqcolon h_{x}(w,\vec{s})e^{O_{x}(n_{x})}.
\end{aligned}
\end{equation}
If we denote $S_{i,x}\coloneq \sum_{j=i}^{m}b_{j,x}$, then we can exciplitly write
\begin{equation}\label{O_x_n_x}
\begin{aligned}[b]
O_{x}(n_{x})&=-\beta W_{x}(n_{x}+S_{1,x})+\tau_{1}W_{x}(n_{x}+S_{1,x})-\tau_{1}W_{x}(n_{x}+S_{2,x})+\tau_{2}W_{x}(n_{x}+S_{2,x})-\tau_{2}W_{x}(n_{x}+S_{3,x})
\\&\quad +\tau_{3}W_{x}(n_{x}+S_{2,x})+...+\tau_{m}W_{x}(n_{x}+S_{m,x})-\tau_{m}W_{x}(n_{x})
\\&=-(\beta-\tau_{1})W_{x}(n_{x}+S_{1,x})-(\tau_{1}-\tau_{2})W_{x}(n_{x}+S_{2,x})-(\tau_{2}-\tau_{3})W_{x}(n_{x}+S_{3,x})
\\&\quad -...-(\tau_{m-1}-\tau_{m})W_{x}(n_{x}+S_{m})-\tau_{m}W_{x}(n_{x}).
\end{aligned}
\end{equation}
The quantity $b_{i,x}$ and $O_{x}(n_{x})$ also depend on $w$ and $\vec{s}$, but we omit the notations for clarity.
With these preparations, we proceed with the similar treatment in Appendix E\,2 of Ref.\,\cite{tong2024boson} to obtain for high temperature $\beta \leq \mathcal{O}(1)$ that 
\begin{equation}\label{norm_n_l_hws}
\begin{aligned}[b]
\norm{n_{x}^{l_{x}}h_{x}(w,\vec{s})e^{O_{x}(n_{x})}}_{1}\leq \sqrt{\frac{2C_{1}C_{2}}{\beta}}\qty(\frac{2C_{1}}{\sqrt{\beta}})^{l_{x}}\qty(\frac{2eC_{1}}{\sqrt{\beta}})^{m_{x}/2}\sqrt{m_{x}!}\sqrt{(2l_{x})!},
\end{aligned}
\end{equation}
with $C_{1},C_{2}=\mathcal{O}(1)$ depending only on the parameter of models (i.e., $U_{\min},U_{\max},\mu$) and the threshold temperature.
Here, we have also denoted the number of totol bosonic operators at site $x$ offer by a fixed sequence $w$ and vector $\vec{s}$ as $m_{x}(w,\vec{s})\coloneq \mathsf{N}(h_{x}(w,\vec{s}))=\mathsf{N}_{x}(h^{(s_{1})}_{w_{1}}h^{(s_{2})}_{w_{2}}...h^{(s_{m})}_{w_{m}})$ and shorted it as $m_{x}$ whenever no confusion arises. Putting Eqs.\,\eqref{e_beta_W_x_h} and \eqref{norm_n_l_hws}
into Eq.\,\eqref{nbarx_l}, we obtain
\begin{equation}\label{trace_norm_n_xbar_l}
\begin{aligned}[b]
&\norm{n_{\overline{x}}^{l}e^{-\beta W_{V_{w}}}F(w)}_{1}
\\\leq& \int [\tau_{1}\tau_{2}...\tau_{m}] \sum_{s_{1},s_{2},...,s_{m}}|J_{w_{1}}^{(s_{1})}||J_{w_{2}}^{(s_{2})}|...|J_{w_{m}}^{(s_{m})}| \cdot \prod_{x\in V_{w}}\norm{n_{x}^{l_{x}}h_{x}(w,\vec{s})e^{O_{x}(n_{x})}}_{1}
\\\leq& \int [\tau_{1}\tau_{2}...\tau_{m}] \sum_{s_{1},s_{2},...,s_{m}}|J_{w_{1}}^{(s_{1})}||J_{w_{2}}^{(s_{2})}|...|J_{w_{m}}^{(s_{m})}| \cdot \prod_{x\in V_{w}}\sqrt{\frac{2C_{1}C_{2}}{\beta}}\qty(\frac{2C_{1}}{\sqrt{\beta}})^{l_{x}}\qty(\frac{2eC_{1}}{\sqrt{\beta}})^{m_{x}(w,\vec{s})/2}\sqrt{m_{x}(w,\vec{s})!}\sqrt{(2l_{x})!}
\\\leq & \sqrt{(2l)!} \frac{\beta^{m}}{m!}\sum_{s_{1},s_{2},...,s_{m}}|J_{w_{1}}^{(s_{1})}||J_{w_{2}}^{(s_{2})}|...|J_{w_{m}}^{(s_{m})}| \qty(\sqrt{\frac{2C_{1}C_{2}}{\beta}})^{|V_{w}|}\qty(\frac{2C_{1}}{\sqrt{\beta}})^{l}\qty(\frac{2eC_{1}}{\sqrt{\beta}})^{m}\prod_{x\in V_{w}}\sqrt{m_{x}(w,\vec{s})!}.
\end{aligned}
\end{equation}
Here, to obtain the last line, we evaluate the integral and use the upper bound $l_{x}\leq l$ for sufficiently high temperatures. We have also used $\sum_{x\in V_{w}}m_{x}(w,\vec{s})=2|w|=2m$ for the Bose-Hubbard model, where every off-site term is $2$-local and offers exactly one operator to its support.

\end{proof}

{~}

\hrulefill{\bf [ End of Proof of Lemma~\ref{lem_trace_f_w}]}

{~}

\begin{remark}
For free bosons, $W_{x}(n_{x})$ is a linear function but Lemma~8 in Ref.\,\cite{tong2024boson}  is still applicable if we revise it to
\begin{equation}\label{revise_lemma_8}
\begin{aligned}[b]
\sum_{n=0}^{\infty}(n+p)^{p} 
\exp[\max_{m\in \{n-p,n+p\}}\qty(-a m)]&= \sum_{n=0}^{\infty}(n+p)^{p}e^{-a(n-p)} \leq e^{ap} \sum_{n=0}^{\infty}(n+2p)^{p}e^{-an} \leq e^{ap} c^{p}a^{-(p+1)}p^{p+1}
\\&\leq c^{p}a^{-(p+1)}p!,
\end{aligned}
\end{equation}
where we used Lemma~7 in Ref.\,\cite{tong2024boson}. 
Then Eq.\,\eqref{revise_lemma_8} gives us the alternative version of the present lemma for the long-range free bosons in the form of:
\begin{equation}\label{}
\begin{aligned}[b]
\norm{n_{x}^{l}e^{-\beta W_{V_{w}}}F(w)}_{1}\leq \sqrt{(2l)!} \qty(\frac{c_{1}}{\beta})^{l}\qty(\frac{c_{2}}{\beta})^{|V_{w}|}\frac{c_{3}^{m}}{m!}\sum_{s_{1}s_{2}...s_{m}}\prod_{i=1}^{m}|J_{w_{i}}^{(s_{i})}|\cdot  \prod_{x\in V_{w}}[m_{x}(w,\vec{s})!]^{1/2}.
\end{aligned}
\end{equation} 
\end{remark}

\begin{lemma}\label{lemma_trace_Ftil_w}
Let $w$ be a sequence, denote $V_{w}=\bigcup_{Z\in w}Z$ and $m=|w|$ as the support and length of $w$, respectively. Let $\widetilde{V}_{w} \supseteq V_{w}$ be some extension of $V_{w}$, define
\begin{equation}\label{}
\begin{aligned}[b]
\widetilde{F}(w)\coloneq  \ii{0}{\beta}{\tau_{1}}\ii{0}{\tau_{1}}{\tau_{2}}...\ii{0}{\tau_{m-1}}{\tau_{m}} h_{w_{1}}(\tau_{1})^{(+)}h_{w_{2}}(\tau_{2})^{(+)}...h_{w_{m}}(\tau_{m})^{(+)}, \quad \bullet(\tau)\coloneq e^{\tau W_{\widetilde{V}_{w}}}\bullet e^{-\tau W_{\widetilde{V}_{w}}}.
\end{aligned}
\end{equation} 
If $\{h_{Z}\}_{Z\in w}$ comes from the hopping terms in the long-range Bose-Hubbard model [cf.\,Eq.\,(\ref{slr_bose_hubbard})] and write it as $h_{Z}=\sum_{s}J_{Z}^{(s)}h^{(s)}_{Z}$, then for any two regions $X,Y\subset \widetilde{V}_{w}$ with $O_{X}$ and $O_{Y}$ being products of $\{a_{i}^{\dagger},a_{i}\}_{i\in X}$ and $\{a_{i}^{\dagger},a_{i}\}_{i\in Y}$ respectively, the following inequality holds with $c_{1},c_{2},c_{3}=\mathcal{O}(1)$ for high temperatures $\beta\leq \beta_{c}$:
\begin{equation}\label{}
\begin{aligned}[b]
&\norm{e^{-\beta W^{(+)}_{\widetilde{V}_{w}}}\widetilde{F}(w)O_{X}^{(0)}O_{Y}^{(1)}}_{1}
\leq \Phi(\beta) \qty(\frac{c_{1} }{\sqrt{\beta}})^{2|\widetilde{V}_{w}|}c_{2}^{[\mathsf{N}(O_{X})+\mathsf{N}(O_{Y})]/2}\frac{\qty(c_{3}\cdot \sqrt{\beta})^{m}}{m!}\sum_{s_{1}s_{2}...s_{m}}\prod_{l=1}^{m}|J_{w_{l}}^{(s_{l})}|\cdot  \prod_{x\in V_{w}}[m_{x}(w,\vec{s})!]^{1/2}
\end{aligned}
\end{equation}
Here, we introduced the vector $\vec{s}\coloneq (s_{1},s_{2},...,s_{m})$ and denote $\mathsf{N}_{x}(h^{(s_{1})}_{w_{1}}h^{(s_{2})}_{w_{2}}...h^{(s_{m})}_{w_{m}})$ as $m_{x}(w,\vec{s})$. We have also defined the function $\Phi(\beta)\coloneq [\mathsf{N}(O_{X})!]^{1/2} [\mathsf{N}(O_{Y})!]^{1/2}\beta^{-[\mathsf{N}(O_{X})+\mathsf{N}(O_{Y})]/4}$.
\end{lemma}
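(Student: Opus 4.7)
The plan is to exploit the tensor-product factorization of the doubled Hilbert space to reduce the bound to the single-copy estimate of Lemma \ref{lem_trace_f_w}. Since $e^{-\beta W^{(+)}_{\widetilde V_{w}}}=e^{-\beta W_{\widetilde V_{w}}}\otimes e^{-\beta W_{\widetilde V_{w}}}$ and each $h_{Z}(\tau)^{(+)}=h_{Z}(\tau)\otimes\id+\id\otimes h_{Z}(\tau)$, I would first expand $\widetilde{F}(w)$ factor by factor into $2^{m}$ tensor-product terms indexed by ordered splittings $w=w^{(1)}\sqcup w^{(2)}$ that assign each $h_{w_{l}}(\tau_{l})$ to the left or right copy; similarly $O_{X}^{(0)}=O_{X}\otimes\id$ lives on one copy and $O_{Y}^{(1)}=\id\otimes O_{Y}-O_{Y}\otimes\id$ splits into two signed terms on opposite copies. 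The resulting $2^{m+1}$ terms all have the form $A_{1}\otimes A_{2}$, so by $\lVert A_{1}\otimes A_{2}\rVert_{1}=\lVert A_{1}\rVert_{1}\lVert A_{2}\rVert_{1}$ the problem reduces to bounding single-copy trace norms of the schematic form $\lVert e^{-\beta W_{\widetilde V_{w}}}F(w^{(i)})\,O\rVert_{1}$ with $O\in\{\id,O_{X},O_{Y},O_{X}O_{Y}\}$.

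For each such single-copy factor, I would apply the Cauchy--Schwarz estimate $\lVert AB\rVert_{1}\le \lVert A\rVert_{2}\lVert B\rVert_{2}$ with $A=e^{-\beta W_{\widetilde V_{w}}/2}F(w^{(i)})$ and $B=e^{-\beta W_{\widetilde V_{w}}/2}O$. The Hilbert--Schmidt norm $\lVert B\rVert_{2}^{2}=\operatorname{Tr}(O^{\dagger}e^{-\beta W_{\widetilde V_{w}}}O)$ factorizes site-wise into one-mode traces of the shape $\sum_{n}\langle n|O_{i}^{\dagger}O_{i}|n\rangle e^{-\beta W_{i}(n)}$, each a polynomial in $n$ of degree $\mathsf{N}_{i}(O)$ controlled by the on-site moment estimate implicit in Corollary \ref{coro_boson_concen}: this yields a factor $\sim \mathsf{N}_{i}(O)!\,\beta^{-[\mathsf{N}_{i}(O)+1]/2}c_{2}^{\mathsf{N}_{i}(O)}$. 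Taking the square root and using $\prod_{i}\mathsf{N}_{i}(O)!\le \mathsf{N}(O)!$ reproduces the scaling $[\mathsf{N}(O_{X})!]^{1/2}[\mathsf{N}(O_{Y})!]^{1/2}\beta^{-[\mathsf{N}(O_{X})+\mathsf{N}(O_{Y})]/4}c_{2}^{[\mathsf{N}(O_{X})+\mathsf{N}(O_{Y})]/2}$ advertised in $\Phi(\beta)$, together with the prefactor $(c_{1}/\sqrt{\beta})^{|X\cup Y|}$. The Dyson-side factor $\lVert A\rVert_{2}^{2}=\operatorname{Tr}(F(w^{(i)})^{\dagger}e^{-\beta W_{\widetilde V_{w}}}F(w^{(i)}))$ is of exactly the form already treated by Lemma \ref{lem_trace_f_w} (with $l=0$, applied to the reversed-concatenated sequence), producing the $(c_{1}/\sqrt{\beta})^{|V_{w^{(i)}}|}$, $(c_{3}\sqrt{\beta})^{|w^{(i)}|}/|w^{(i)}|!$, and $\prod_{x}[m_{x}(w^{(i)},\vec s^{(i)})!]^{1/2}$ factors.

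To recombine the two copies I would use that operator counts and sequence lengths add across copies, $m_{x}(w,\vec s)=m_{x}(w^{(1)},\vec s^{(1)})+m_{x}(w^{(2)},\vec s^{(2)})$ and $|w^{(1)}|+|w^{(2)}|=m$, so that $m_{x}(w^{(1)},\vec s^{(1)})!\,m_{x}(w^{(2)},\vec s^{(2)})!\le m_{x}(w,\vec s)!$ and $1/(|w^{(1)}|!\,|w^{(2)}|!)=\binom{m}{|w^{(1)}|}/m!$. The sum over the $2^{m+1}$ left/right assignments, combined with these binomial factors, contributes at most $4^{m+1}$ and is absorbed into the redefined $\mathcal{O}(1)$ constant $c_{3}$; similarly, the two-copy doubling of the site-wise prefactor turns $(c_{1}/\sqrt\beta)^{|V_{w}|}$ into $(c_{1}/\sqrt\beta)^{2|\widetilde V_{w}|}$ after subsuming the $|X\cup Y|$ contribution into the constant.

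The hard part will be justifying the Cauchy--Schwarz splitting rigorously given that $F(w^{(i)})$ involves products of unbounded bosonic operators and does not commute with $e^{-\beta W}$. My plan is to regularize by first truncating the local Fock space to $n_{i}\le q$, establish the bound term by term in the resulting finite-dimensional matrix setting where all quantities are manifestly finite and the Cauchy--Schwarz splitting is elementary, and then pass to the limit $q\to\infty$ via dominated convergence, using the low-boson-density bound of Theorem \ref{stheorem_low_density} to provide the uniform domination needed on the moment series. A minor secondary task is checking that the two signed contributions coming from $O_{Y}^{(1)}$ yield identical upper bounds (they do, since the Cauchy--Schwarz estimate is insensitive to the overall sign), so the resulting factor of $2$ can be absorbed into $c_{2}^{\mathsf{N}(O_{Y})/2}$.
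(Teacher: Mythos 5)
Your overall strategy coincides with the paper's: expand $h_{Z}(\tau)^{(+)}=h_{Z}(\tau)\otimes\id+\id\otimes h_{Z}(\tau)$ into $2^{m}$ left/right assignments (the paper's ``partitions'' $w=w^{\ri}\circ w^{\rii}$), use multiplicativity of the trace norm over tensor products to reduce to single-copy factors, Cauchy--Schwarz each factor into a Dyson part and an observable part, bound both site-wise by moment estimates, and recombine via $m_{x}(w^{\ri},\vec s)+m_{x}(w^{\rii},\vec s)=m_{x}(w,\vec s)$ and $m_{x}(w^{\ri},\vec s)!\,m_{x}(w^{\rii},\vec s)!\le m_{x}(w,\vec s)!$. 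Your bookkeeping of the $\Phi(\beta)$, $(c_{1}/\sqrt\beta)^{2|\widetilde V_{w}|}$ and $(c_{3}\sqrt\beta)^{m}/m!$ factors is consistent with the paper's.

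However, there is a genuine gap in the central step. You propose $\lVert e^{-\beta W}F(w^{(i)})O\rVert_{1}\le\lVert A\rVert_{2}\lVert B\rVert_{2}$ with $A=e^{-\beta W/2}F(w^{(i)})$ and $B=e^{-\beta W/2}O$, but $AB=e^{-\beta W/2}F(w^{(i)})e^{-\beta W/2}O\neq e^{-\beta W}F(w^{(i)})O$ since $F(w^{(i)})$ does not commute with $e^{-\beta W/2}$. This is not an unboundedness issue, so truncating the Fock space and passing to the limit does not repair it: even for finite matrices, Cauchy--Schwarz applies only to an actual operator factorization, and your chosen $A,B$ do not multiply to the operator you need to bound. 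The paper avoids this by first pushing every $e^{\tau W}$ factor through the bare bosonic monomials using the shift relation $f(n_{x})\,a_{x}^{\dagger}=a_{x}^{\dagger}f(n_{x}+1)$, which collects the entire imaginary-time weight into a single number-diagonal operator $e^{O_{x}(n_{x})}$ [cf.\,Eqs.\,(\ref{e_beta_W_x_h}) and (\ref{O_x_n_x})]; only then does it split $h_{x}e^{O_{x}}O=\bigl(h_{x}e^{O_{x}/2}\bigr)\bigl(e^{O_{x}/2}O\bigr)$ and apply Cauchy--Schwarz, yielding $\operatorname{Tr}\qty(h_{x}^{\dagger}h_{x}e^{O_{x}})$ and $\operatorname{Tr}\qty(|O|^{2}e^{O_{x}})$. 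To salvage your route you would have to take $A=e^{-\beta W}F(w^{(i)})e^{\beta W/2}$ (equivalently, shift all time arguments by $\beta/2$) and then control $\operatorname{Tr}\qty(F^{\dagger}e^{-2\beta W}Fe^{\beta W})$, which forces exactly the same shift bookkeeping via the quantities $S_{i,x}$ and $O_{x}(n_{x})$; without that commutation analysis the estimate of the Dyson-side Hilbert--Schmidt norm is not ``of exactly the form already treated by Lemma~\ref{lem_trace_f_w}''. A further, more minor point: verifying that $O_{x}(n_{x})\le-\beta\min_{y}W_{x}(n_{x}+y)$ over the allowed shifts is what makes the truncated single-site sums converge with the right $\beta$-scaling, and this too relies on the explicit form of $O_{x}(n_{x})$ rather than on a dominated-convergence argument.
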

\begin{proof}
Then our next step is to investigate the structure of sequences on the right-hand side of Eq.\,\eqref{cor_x_y}: 
\begin{equation}\label{tr_vtil_g_tilF}
\begin{aligned}[b]
&\norm{e^{-\beta W^{(+)}_{\widetilde{V}_{w}}}\widetilde{F}(w)O_{X}^{(0)}O_{Y}^{(1)}}_{1}
\\=&\int[\tau_{1}\tau_{2}...\tau_{m}]\sum_{s_{1}s_{2}...s_{m}}|J_{w_{1}}^{(s_{1})}||J_{w_{2}}^{(s_{2})}|...|J_{w_{m}}^{(s_{m})}|\norm{e^{-\beta W_{\widetilde{V}_{w}}^{(+)}}h_{w_{1}}(\tau_{1})^{(+)}h_{w_{2}}(\tau_{2})^{(+)}...h_{w_{m}}(\tau_{m})^{(+)}O_{X}^{(0)}O_{Y}^{(1)}}_{1}.
\end{aligned}
\end{equation}
To handle the product of operators $h^{(s_{1})}_{w_{1}}(\tau_{1})^{(+)}h^{(s_{2})}_{w_{2}}(\tau_{2})^{(+)}...h^{(s_{m})}_{w_{m}}(\tau_{m})^{(+)} \eqcolon H(w,\vec{s})$, we introduce the concept of partition below.  Let $w=w^{\ri}\circ w^{\rii}$ be a partition of the sequence $w$, such that for $i\in 1,2,...,m$, either $w_{i}^{\ri}=w_{i}, w^{\rii}_{i}=\id$ or $w_{i}^{\ri}=\id, w^{\rii}_{i}=w_{i}$. Naturally, we have either $h_{w_{i}^{\ri}}^{(s_{i})}=h_{w_{i}}^{(s_{i})}, h_{w_{i}^{\rii}}^{(s_{i})}=\id$ or $h_{w_{i}^{\ri}}^{(s_{i})}=\id, h_{w_{i}^{\rii}}^{(s_{i})}=h_{w_{i}}^{(s_{i})}$. By defining $H(w^{\ri},\vec{s})=h_{w_{1}^{\ri}}^{(s_{1})}(\tau_{1})h_{w_{2}^{\ri}}^{(s_{2})}(\tau_{2})...h_{w_{m}^{\ri}}^{(s_{m})}(\tau_{m})$ and $H(w^{\rii},\vec{s})=h_{w_{1}^{\rii}}^{(s_{1})}(\tau_{1})h_{w_{2}^{\rii}}^{(s_{2})}(\tau_{2})...h_{w_{m}^{\rii}}^{(s_{m})}(\tau_{m})$, we obtain
\begin{equation}\label{}
\begin{aligned}[b]
H(w,\vec{s})=\sum_{\text{parition}}H(w^{\ri},\vec{s})\otimes H(w^{\rii},\vec{s}),
\end{aligned}
\end{equation}
which enables us to arrive at [cf.\,Eq.\,(\ref{tr_vtil_g_tilF})], 
\begin{equation}\label{sum_parition}
\begin{aligned}[b]
&\norm{e^{-\beta W_{\widetilde{V}_{w}}^{(+)}}h^{(s_{1})}_{w_{1}}(\tau_{1})^{(+)}h^{(s_{2})}_{w_{2}}(\tau_{2})^{(+)}...h^{(s_{m})}_{w_{m}}(\tau_{m})^{(+)}O_{X}^{(0)}O_{Y}^{(1)}}_{1}
\\\leq &\sum_{\text{parition}}\norm{\qty(e^{-\beta W_{\widetilde{V}_{w}}}\otimes e^{-\beta W_{\widetilde{V}_{w}}})\qty[H(w^{\ri},\vec{s})\otimes H(w^{\rii},\vec{s})]\qty(O_{X}O_{Y}\otimes \id- O_{X}\otimes O_{Y})}_{1}
\\=&\sum_{\text{parition}} \bigg\{ \norm{e^{-\beta W_{\widetilde{V}_{w}}}H(w^{\ri},\vec{s})O_{X}O_{Y}}_{1}\cdot \norm{e^{-\beta W_{\widetilde{V}_{w}}}H(w^{\rii},\vec{s})}_{1}
+ \norm{e^{-\beta W_{\widetilde{V}_{w}}}H(w^{\ri},\vec{s})O_{X}}_{1}\norm{e^{-\beta W_{\widetilde{V}_{w}}}H(w^{\rii},\vec{s})O_{Y}}_{1}\bigg\}.
\end{aligned}
\end{equation}
Obviously, there are $2^{m}$ such partitions, and we focus on the first term in the summation on the right-hand side of Eq.\,\eqref{sum_parition}, 
\begin{equation}\label{trace_norm_exp_H_w_ri}
\begin{aligned}[b]
\norm{e^{-\beta W_{\widetilde{V}_{w}}}H(w^{\ri},\vec{s})O_{X}O_{Y}}_{1}&=\norm{\prod_{x\in \widetilde{V}_{w}}\qty[h_{x}(w^{\ri},\vec{s})]\prod_{x\in \widetilde{V}_{w}}e^{O_{x}(n_{x})} \cdot O_{X}O_{Y}}_{1}
\\&\leq \qty(\operatorname{Tr}_{\widetilde{V}_{w}}\qty{ \prod_{x\in \widetilde{V}_{w}}\qty[h_{x}(w^{\ri},\vec{s})^{\dagger}h_{x}(w^{\ri},\vec{s})]e^{O_{x}(n_{x})}}\cdot \operatorname{Tr}_{\widetilde{V}_{w}}\qty{\qty|O_{X}O_{Y}|^{2} \prod_{x\in \widetilde{V}_{w}}e^{O_{x}(n_{x})}})^{1/2}.
\end{aligned}
\end{equation}
For the second term, we denote $o_{x}\coloneq \mathsf{N}_{x}(O_{X}O_{Y})$ for simplification and bound it as 
\begin{equation}\label{tr_v_w_ox_oy_2}
\begin{aligned}[b]
&\operatorname{Tr}_{\widetilde{V}_{w}}\qty{\qty|O_{X}O_{Y}|^{2} \prod_{x\in \widetilde{V}_{w}}e^{O_{x}(n_{x})}}
\leq \prod_{x\in \widetilde{V}_{w}}\qty(\frac{2C_{1}}{\sqrt{\beta}})^{o_{x}+1}o_{x}!
\\\leq & \qty(\frac{2C_{1}}{\sqrt{\beta}})^{\mathsf{N}(O_{X})+\mathsf{N}(O_{Y})}\qty(\frac{2C_{1}}{\sqrt{\beta}})^{|V_{w}|}\qty[\mathsf{N}(O_{X})+\mathsf{N}(O_{Y})]!
\\\leq & \qty(\frac{2C_{1}}{\sqrt{\beta}})^{\mathsf{N}(O_{X})+\mathsf{N}(O_{Y})}\qty(\frac{2C_{1}}{\sqrt{\beta}})^{|V_{w}|}2^{\mathsf{N}(O_{X})+\mathsf{N}(O_{X})} \mathsf{N}(O_{X})! \mathsf{N}(O_{Y})!
\\= & \qty(\frac{4C_{1}}{\sqrt{\beta}})^{\mathsf{N}(O_{X})+\mathsf{N}(O_{Y})}\qty(\frac{2C_{1}}{\sqrt{\beta}})^{|V_{w}|} \mathsf{N}(O_{X})! \mathsf{N}(O_{Y})!
.
\end{aligned}
\end{equation}
On the other hand [cf.\,Ref,\,\cite{tong2024boson}]
\begin{equation}\label{tr_v_w_h_x_w}
\begin{aligned}[b]
\operatorname{Tr}_{\widetilde{V}_{w}}\qty{ \prod_{x\in \widetilde{V}_{w}}\qty[h_{x}(w^{\ri},\vec{s})^{\dagger}h_{x}(w^{\ri},\vec{s})]e^{O_{x}(n_{x})}}&\leq \prod_{x\in \widetilde{V}_{w}}\qty(\frac{2C_{1}}{\sqrt{\beta}})^{m_{x}(w^{\ri},\vec{s})+1}m_{x}(w^{\ri},\vec{s})!
\\&=\qty(\frac{2C_{1}}{\sqrt{\beta}})^{|\widetilde{V}_{w}|}\prod_{x\in \widetilde{V}_{w}}\qty(\frac{2C_{1}}{\sqrt{\beta}})^{m_{x}(w^{\ri},\vec{s})}m_{x}(w^{\ri},\vec{s})! .
\end{aligned}
\end{equation}
By putting Eqs.\,\eqref{tr_v_w_ox_oy_2} and \eqref{tr_v_w_h_x_w} into Eq.\,\eqref{trace_norm_exp_H_w_ri}, we obtain 
\begin{equation}\label{}
\begin{aligned}[b]
&\norm{e^{-\beta W_{\widetilde{V}_{w}}}H(w^{\ri},\vec{s})O_{X}O_{Y}}_{1}
\\\leq& \qty(\frac{2C_{1}}{\sqrt{\beta}})^{|\widetilde{V}_{w}|} \qty(\frac{4C_{1}}{\sqrt{\beta}})^{[\mathsf{N}(O_{X})+\mathsf{N}(O_{Y})]/2}[\mathsf{N}(O_{X})!]^{1/2} [\mathsf{N}(O_{Y})!]^{1/2}\prod_{x\in \widetilde{V}_{w}}\qty(\frac{2C_{1}}{\sqrt{\beta}})^{m_{x}(w^{\ri},\vec{s})/2}[m_{x}(w^{\ri},\vec{s})!]^{1/2}.
\end{aligned}
\end{equation}
We can easily see that 
\allowdisplaybreaks[4]
\begin{align*}\label{e_beta_W_H_w_i}
&\norm{e^{-\beta W_{\widetilde{V}_{w}}}H(w^{\ri},\vec{s})O_{X}O_{Y}}_{1}\cdot \norm{e^{-\beta W_{\widetilde{V}_{w}}}H(w^{\rii},\vec{s})}_{1}
\\\leq& \qty(\frac{2C_{1}}{\sqrt{\beta}})^{|\widetilde{V}_{w}|} \qty(\frac{4C_{1}}{\sqrt{\beta}})^{[\mathsf{N}(O_{X})+\mathsf{N}(O_{Y})]/2}[\mathsf{N}(O_{X})!]^{1/2} [\mathsf{N}(O_{Y})!]^{1/2}\prod_{x\in \widetilde{V}_{w}}\qty(\frac{2C_{1}}{\sqrt{\beta}})^{m_{x}(w^{\ri},\vec{s})/2}[m_{x}(w^{\ri},\vec{s})!]^{1/2}
\\\quad &\times
\qty(\frac{2C_{1}}{\sqrt{\beta}})^{|V_{w}|} \prod_{x\in \widetilde{V}_{w}}\qty(\frac{2C_{1}}{\sqrt{\beta}})^{m_{x}(w^{\rii},\vec{s})/2}[m_{x}(w^{\rii},\vec{s})!]^{1/2}
\\\leq & \qty(\frac{2C_{1}}{\sqrt{\beta}})^{2|\widetilde{V}_{w}|}\qty(\frac{4C_{1}}{\sqrt{\beta}})^{[\mathsf{N}(O_{X})+\mathsf{N}(O_{Y})]/2}[\mathsf{N}(O_{X})!]^{1/2} [\mathsf{N}(O_{Y})!]^{1/2}\prod_{x\in \widetilde{V}_{w}}\qty(\frac{2C_{1}}{\sqrt{\beta}})^{m_{x}(w,\vec{s})/2}[m_{x}(w,\vec{s})!]^{1/2}
\\\leq & \qty(\frac{2C_{1}}{\sqrt{\beta}})^{2|\widetilde{V}_{w}|}\qty(\frac{4C_{1}}{\sqrt{\beta}})^{[\mathsf{N}(O_{X})+\mathsf{N}(O_{Y})]/2}[\mathsf{N}(O_{X})!]^{1/2} [\mathsf{N}(O_{Y})!]^{1/2}\qty(\frac{2C_{1}}{\sqrt{\beta}})^{m}\prod_{x\in \widetilde{V}_{w}}[m_{x}(w,\vec{s})!]^{1/2}
\\= & \qty(\frac{2C_{1}}{\sqrt{\beta}})^{2|\widetilde{V}_{w}|}\qty(\frac{4C_{1}}{\sqrt{\beta}})^{[\mathsf{N}(O_{X})+\mathsf{N}(O_{Y})]/2}[\mathsf{N}(O_{X})!]^{1/2} [\mathsf{N}(O_{Y})!]^{1/2}\qty(\frac{2C_{1}}{\sqrt{\beta}})^{m}\prod_{x\in V_{w}}[m_{x}(w,\vec{s})!]^{1/2}.
\refstepcounter{equation}\tag{\theequation}
\end{align*}
Here, in the second line we used $m_{x}(w^{\ri},\vec{s})+m_{x}(w^{\rii},\vec{s})=m_{x}(w,\vec{s})$ and $m_{x}(w^{\ri},\vec{s})!m_{x}(w^{\rii},\vec{s})!\leq m_{x}(w,\vec{s})!$. To obtain the last line we note that $m_{x}(w,\vec{s})=0$ for $x\in \widetilde{V}_{w}\backslash V_{w}$. Obviously, from the similar treatment we can also bound the second term in the summation of Eq.\,\eqref{sum_parition} in the same form as Eq.\,\eqref{e_beta_W_H_w_i}. Note that there are $2^{m}$ partitions in total and we obtain from Eq.\,\eqref{sum_parition} that  
\begin{equation}\label{}
\begin{aligned}[b]
&\norm{e^{-\beta W_{\widetilde{V}_{w}}^{(+)}}h^{(s_{1})}_{w_{1}}(\tau_{1})^{(+)}h^{(s_{2})}_{w_{2}}(\tau_{2})^{(+)}...h^{(s_{m})}_{w_{m}}(\tau_{m})^{(+)}O_{X}^{(0)}O_{Y}^{(1)}}_{1}
\\\leq& \qty(\frac{2C_{1}}{\sqrt{\beta}})^{2|\widetilde{V}_{w}|}\qty(\frac{4C_{1}}{\sqrt{\beta}})^{[\mathsf{N}(O_{X})+\mathsf{N}(O_{Y})]/2}[\mathsf{N}(O_{X})!]^{1/2} [\mathsf{N}(O_{Y})!]^{1/2}\qty(\frac{8C_{1}}{\sqrt{\beta}})^{m}\prod_{x\in V_{w}}[m_{x}(w,\vec{s})!]^{1/2}.
\end{aligned}
\end{equation}
By putting this inequality into Eq.\,\eqref{tr_vtil_g_tilF}, we arrive at
\begin{equation}\label{}
\begin{aligned}[b]
&\norm{e^{-\beta W^{(+)}_{\widetilde{V}_{w}}}\widetilde{F}(w)O_{X}^{(0)}O_{Y}^{(1)}}_{1}
\\\leq& \qty(\frac{2C_{1}}{\sqrt{\beta}})^{2|\widetilde{V}_{w}|}\qty(\frac{4C_{1}}{\sqrt{\beta}})^{[\mathsf{N}(O_{X})+\mathsf{N}(O_{Y})]/2}[\mathsf{N}(O_{X})!]^{1/2} [\mathsf{N}(O_{Y})!]^{1/2}\frac{\qty(8C_{1}\sqrt{\beta})^{m}}{m!} \sum_{s_{1}s_{2}...s_{m}}\prod_{l=1}^{m}|J_{w_{l}}^{(s_{l})}|\cdot  \prod_{x\in V_{w}}[m_{x}(w,\vec{s})!]^{1/2}
\end{aligned}
\end{equation}

{~}

\hrulefill{\bf [ End of Proof of Lemma~\ref{lemma_trace_Ftil_w}]}

{~}

\begin{remark}
For free bosons,
\begin{equation}\label{}
\begin{aligned}[b]
&\norm{e^{-\beta W^{(+)}_{\widetilde{V}_{w}}}\widetilde{F}(w)O_{X}^{(0)}O_{Y}^{(1)}}_{1}
\leq \Phi(\beta) \qty(\frac{c_{1}}{\beta})^{2|V_{w}|}c_{2}^{[\mathsf{N}(O_{X})+\mathsf{N}(O_{Y})]/2}\frac{c_{3}^{m}}{m!}\sum_{s_{1}s_{2}...s_{m}}\prod_{l=1}^{m}|J_{w_{l}}^{(s_{l})}|\cdot  \prod_{x\in V_{w}}[m_{x}(w,\vec{s})!]^{1/2}
\end{aligned}
\end{equation}
\end{remark}

\end{proof}

\begin{lemma}[Lemma 2 of Ref.\,\cite{kim2025thermal}]\label{lemma_repro}
Letthe $|V|\times |V|$ matrix $\bm{L}\coloneq \{L_{i,j}\}_{i,j\in V}$ be such that $L{i,j}\leq g (1+d_{i,j})^{-\as}$ then the following inequality holds,
\begin{equation}\label{}
\begin{aligned}[b]
\qty[\bm{L}^{\ell}]_{i,j}\leq \frac{g^{\ell}(2^{\as}\nu)^{\ell-1}}{(1+d_{i,j})^{\as}}=(2^{\as}g\nu)^{\ell-1}L_{i,j}.
\end{aligned}
\end{equation}
Here, the finiteness of $\nu \coloneq \sup_{i\in V}\sum_{j\in V}(1+d_{i,j})^{-\alpha}$ is guaranteed by the condition $\as>D$.
\end{lemma}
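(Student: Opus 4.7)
The plan is to proceed by induction on $\ell$. The base case $\ell=1$ is immediate from the hypothesis $L_{i,j}\leq g(1+d_{i,j})^{-\alpha}$. For the inductive step, I would expand
\begin{equation*}
[\bm{L}^{\ell+1}]_{i,j} = \sum_{k\in V}[\bm{L}^{\ell}]_{i,k}\,L_{k,j}\leq g^{\ell+1}(2^{\alpha}\mu)^{\ell-1}\sum_{k\in V}\frac{1}{(1+d_{i,k})^{\alpha}(1+d_{k,j})^{\alpha}},
\end{equation*}
using the inductive hypothesis on $[\bm{L}^{\ell}]_{i,k}$ and the given bound on $L_{k,j}$. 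The problem thus reduces to establishing the \emph{reproducibility estimate}
\begin{equation*}
\sum_{k\in V}\frac{1}{(1+d_{i,k})^{\alpha}(1+d_{k,j})^{\alpha}}\leq \frac{2^{\alpha}\mu}{(1+d_{i,j})^{\alpha}}.
\end{equation*}

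To prove the estimate, I would partition $V=A\sqcup B$ with $A\coloneqq\{k:\,d_{i,k}\geq d_{i,j}/2\}$ and $B\coloneqq V\setminus A$. The triangle inequality $d_{i,j}\leq d_{i,k}+d_{k,j}$ forces $d_{k,j}>d_{i,j}/2$ for every $k\in B$, so on each piece one factor is uniformly controlled by $(1+d_{i,j}/2)^{-\alpha}$, which the elementary bound $1+d_{i,j}/2\geq (1+d_{i,j})/2$ upgrades to $2^{\alpha}(1+d_{i,j})^{-\alpha}$. The remaining ``near'' factor sums against $\sum_{k\in V}(1+d_{*,k})^{-\alpha}\leq \mu$, whose finiteness is exactly what the condition $\alpha>D$ supplies. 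Adding the two halves and feeding the bound back into the induction step closes the argument and gives
\begin{equation*}
[\bm{L}^{\ell+1}]_{i,j}\leq \frac{g^{\ell+1}(2^{\alpha}\mu)^{\ell}}{(1+d_{i,j})^{\alpha}},
\end{equation*}
which is the statement at level $\ell+1$.

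The main obstacle I anticipate is purely bookkeeping: the naive $A/B$ split outlined above in fact yields a prefactor of $2^{\alpha+1}\mu$ rather than the stated $2^{\alpha}\mu$, since summing two copies of the ``far'' bound contributes an extra factor of $2$. Recovering the precise constant $2^{\alpha}\mu$ requires either a sharper choice of threshold in the split, a more careful treatment of the overlap, or absorbing the factor of $2$ into the definition of $\mu$, as is standard in the long-range literature. Beyond this constant-tracking issue no new ingredient is needed; the clean geometric scaling $(2^{\alpha}g\mu)^{\ell-1}L_{i,j}$ arises automatically because each additional matrix power triggers exactly one application of the reproducibility estimate and therefore picks up exactly one factor of $g\cdot 2^{\alpha}\mu$.
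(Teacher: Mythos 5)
The paper itself does not prove this lemma---it is imported verbatim as Lemma~2 of Ref.~\cite{kim2025thermal}---so your proposal is measured against the standard argument rather than an in-text derivation. Your overall strategy (induction on $\ell$, reducing everything to a single reproducibility estimate for the convolution $\sum_{k}(1+d_{i,k})^{-\alpha}(1+d_{k,j})^{-\alpha}$) is exactly the right one, and the induction step is correctly set up. The one genuine gap is the point you yourself flag: the $A/B$ split delivers $2^{\alpha+1}\mu$, not $2^{\alpha}\mu$, and none of the three escape routes you list actually works---you cannot redefine $\mu$ (it is fixed in the statement), and no choice of threshold in the split removes the doubling, because the two ``near'' sums run over disjoint sets but of \emph{different} summands, so they can only be bounded by $\mu$ each.

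The missing ingredient is convexity rather than a set decomposition. Since $\alpha>D\geq 1$, the map $x\mapsto x^{\alpha}$ is convex, so the power-mean inequality gives, with $a=d_{i,k}$ and $b=d_{k,j}$,
\begin{equation*}
(1+d_{i,j})^{\alpha}\leq (1+a+b)^{\alpha}\leq \bigl[(1+a)+(1+b)\bigr]^{\alpha}\leq 2^{\alpha-1}\bigl[(1+a)^{\alpha}+(1+b)^{\alpha}\bigr],
\end{equation*}
which after dividing by $(1+a)^{\alpha}(1+b)^{\alpha}(1+d_{i,j})^{\alpha}$ yields the pointwise bound
\begin{equation*}
\frac{1}{(1+d_{i,k})^{\alpha}(1+d_{k,j})^{\alpha}}\leq \frac{2^{\alpha-1}}{(1+d_{i,j})^{\alpha}}\left[\frac{1}{(1+d_{i,k})^{\alpha}}+\frac{1}{(1+d_{k,j})^{\alpha}}\right].
\end{equation*}
Summing over $k\in V$ now costs exactly $2\mu$, and $2^{\alpha-1}\cdot 2\mu=2^{\alpha}\mu$, recovering the stated constant; feeding this into your induction step closes the proof. (Two minor remarks: the argument tacitly uses $L_{i,j}\geq 0$, which holds in all applications here since $\bm{L}$ bounds absolute values of couplings; and the convexity step is where the hypothesis $\alpha>D\geq 1$ enters beyond merely guaranteeing $\mu<\infty$.)
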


One of the most significant steps in establishing the clustering theorem is the following proposition concerning cluster summation. It was initiated as Proposition 4 in Ref.\ cite{kuwahara2020gaussian}, but an error in the proof had been pointed out (We would like to thank Jeongwan Haah for pointing out this issue). Here we present the corrected proof. This result appears ubiquitously when employing the cluster expansion technique:
\begin{proposition}[Cluster Summation]\label{pro_cluster_sum}
Let $V$ be a finite vertex set and $E$ be the set of all subsets of $V$ with cardinality at most $k$. Denote by $\mathcal{M}(E)$ the set of all multisets with elements from $E$, and by $\mathcal{C}$ the set of all connected multisets. Let $\phi: \{Z\colon Z\subseteq V\}  \mapsto \mathbb{R}_{\geq 0}$ be a function that maps vertex subsets to non-negative real numbers.  Suppose that for any vertex $i \in V$, the following inequality holds: $\sum_{Z\in E: Z \ni i} \phi(Z) \leq \gamma$. For any vertex subset $L \subseteq V$, define
$\mathcal{G}^{E}_{m}(L) \coloneqq \{G \in \mathcal{M}(E) \colon |G| = m, G\oplus L \in \mathcal{C}\}$
as the set of all multisets of size $m$ such that they constitute a connected multiset with $L$. Denote $N(G_{s}|G\oplus L)\coloneq |\{Z\in G\oplus L\colon V_Z\cap V_{G_{s}}\neq \emptyset  \}|$ to represent the number of the elements in $G\oplus L$ overlapping with $G_{s}$, with $G_s$ representing the $s$-th element of the multiset $G$ under some fixed ordering. Then we have
\begin{equation}
\sum_{G \in \mathcal{G}^{E}_{m}(L)} |\mathcal{P}(G)| \prod_{s=1}^{m}N(G_{s}|G\oplus L) \phi(G_{s}) \leq c_{1}^{|L|/k}(c_{2}\gamma k)^{m}m!
\end{equation}
where $|\mathcal{P}(G)|$ is the number of multiset permutations of $G$ and $(c_{1},c_{2})=(e^{405}, e^{1231})$.
\end{proposition}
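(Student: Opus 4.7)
The strategy is to reformulate the left-hand side as a weighted enumeration over rooted trees attached to $L$ in the intersection graph of the multiset. Since $|\mathcal{P}(G)| = |\mathcal{S}(G)|$, I would first convert the sum over multisets into a sum over ordered sequences $w = Z_{1}Z_{2}\cdots Z_{m}$ with $Z_{i} \in E$ and $w \oplus L \in \mathcal{C}$, so that the target reduces to
\begin{equation*}
\sum_{w}\prod_{s=1}^{m} N(w_{s} \mid w \oplus L)\,\phi(w_{s}) \leq c_{1}^{|L|/k}(c_{2}\gamma k)^{m}\,m!.
\end{equation*}

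Next, I would interpret $\prod_{s} N(w_{s} \mid w \oplus L)$ combinatorially as the number of ``anchor'' maps $\sigma:\{1,\ldots,m\}\to w \oplus L$ satisfying $V_{\sigma(s)} \cap V_{w_{s}} \neq \emptyset$ for every $s$. Exchanging the order of summation reduces the target to bounding $\sum_{(w,\sigma)}\prod_{s}\phi(w_{s})$. Each anchor $\sigma(s)$ is either an element of $L$ or some earlier/later entry of $w$, so from the directed edges $\{s \mapsto \sigma(s)\}$ one can extract a spanning tree $T$ of the intersection graph on $w \oplus L$ rooted at $L$, with the remaining (non-tree) anchors of $\sigma$ contributing only a bounded multiplicative overhead per element.

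The core estimate is then a greedy enumeration along $T$. Starting from $L$ and growing outward in, say, depth-first order, each new element $Z \in w$ attached to a parent $P$ must overlap $P$ at some site $i \in V_{P}$. The site $i$ provides a factor of at most $k$, while the hypothesis $\sum_{Z \ni i}\phi(Z) \leq \gamma$ bounds the weighted choice of $Z$ by $\gamma$, so each growth step costs at most $k\gamma$. Summing over tree topologies on $|L|+m$ labeled nodes via Cayley-type estimates, accounting for the permutation-ordering redundancy of sequences (which is responsible for the target $m!$), and packaging the non-tree anchors of $\sigma$ into an $\mathcal{O}(1)^{m}$ factor yields the advertised bound. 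The prefactor $c_{1}^{|L|/k}$ arises from choosing the root anchor and summing over the $|L|$ boundary sites, with the $1/k$ in the exponent reflecting the saving from grouping sites into $k$-local regions.

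The main obstacle I expect is the careful bookkeeping needed to ensure that the geometric growth in $m$ comes out as $(c_{2}\gamma k)^{m}m!$ rather than a super-exponential $m^{m}$-type bound, and that the $L$-dependence enters only as $c_{1}^{|L|/k}$ rather than as $c^{|L|}$. This requires matching the orderings in the Cayley enumeration against the permutations of multiset entries, and absorbing the redundancy from non-tree anchors (which double-count intersection edges) into the constant $c_{2}$ rather than letting it multiply $m$. The enormous constants $c_{1} = e^{405}$ and $c_{2} = e^{1231}$ in the statement are a direct consequence of the cumulative Stirling- and Cayley-type losses at each of these steps and are not sharp; any reasonable tightening of the argument leaves the structural form of the bound unchanged.
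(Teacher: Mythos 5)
Your opening move---rewriting $|\mathcal{P}(G)|$ as a sum over ordered sequences and expanding $\prod_{s}N(G_{s}|G\oplus L)$ into a sum over ``anchor maps'' $\sigma:[m]\to w\oplus L$ with $V_{\sigma(s)}\cap V_{w_{s}}\neq\emptyset$---is exactly the paper's starting point (the vector $\vec{t}\in\mathbb{M}^{m}$ in Lemma~\ref{lemma_cluster_sum}). But the two steps you then wave through are precisely where the proof lives, and as stated they fail. First, the directed graph $\{s\mapsto\sigma(s)\}$ is a functional graph, not a spanning tree rooted at $L$: only the component containing $L$ is a tree, while every other component (and there can be many) contains exactly one cycle, since in such a component the number of edges equals the number of vertices. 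You cannot ``extract a spanning tree rooted at $L$'' from $\sigma$ alone, because the anchor graph need not even connect all of $w$ to $L$; connectivity of $G\oplus L$ is a property of the overlap structure of the multiset, which is a different graph. The paper breaks each off-root cycle using the extra constraint that every element overlaps a prescribed base set (Lemma~\ref{sublemma_clustersum_1}), a constraint that is only available after the multiset has been organized into shells by distance from $L$---a decomposition absent from your argument.

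Second, and more seriously, you identify the $m^{m}$-versus-$m!$ issue as ``the main obstacle I expect'' but do not resolve it, and your proposed tools make it worse: Cayley-type enumeration of trees on $|L|+m$ labeled nodes and the raw count of anchor maps both produce $m^{m}$-type factors, and there is no mechanism in your sketch that converts these to $m!$. In the paper this conversion is the heart of the proof of Proposition~\ref{pro_cluster_sum}: the multiset is layered as $X_{1}\oplus\cdots\oplus X_{l}$ by multiset distance from $L$, the identity $n_{G}/(n_{X_{1}}\cdots n_{X_{l}})=m!/(q_{1}!\cdots q_{l}!)$ extracts an explicit $m!$ from the interleaving of the layers' orderings, Lemma~\ref{lemma_cluster_sum} is applied layer by layer (each layer's elements overlap the previous layer, supplying the base set needed to break cycles), and the residual $\prod_{j}q_{j}^{q_{j}}/q_{j}!\leq e^{m}$ is absorbed into $c_{2}^{m}$. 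The counting of anchor-graph configurations itself (Lemma~\ref{sublemma_clustersum_2}) requires a nontrivial bound on the integer partition function and the estimate $\prod_{s}n_{s}^{-n_{s}}\leq(q_{0}^{2}/M)^{-q_{0}}e^{5q_{0}}$; it is not a ``bounded multiplicative overhead per element.'' Without the layered decomposition and the interleaving identity, your argument cannot produce the factor $m!$ in place of $m^{m}$, so the proof as proposed is incomplete at its most critical point.
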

The proof for this proposition is based on the following lemma:
\begin{lemma}[Lemma for Cluster Summation]\label{lemma_cluster_sum}
Under the same setting of Proposition~\ref{pro_cluster_sum}, let $\overline{G} \in \mathcal{M}(E)$ be a multiset with cardinality $\overline{m}$. For a vertex subset $L \subseteq V$, we define
$\mathcal{I}^{E}_{m}(L) \coloneqq \{G \in \mathcal{M}(E) \colon |G| = m, \, G_{q} \cap L \neq \emptyset, \forall q = 1, 2, \ldots, m\}$
to be the collection of all multisets in $\mathcal{M}(E)$ with cardinality $m$ such that each element of the multiset overlaps with $L$.
Then the following inequality holds
\begin{equation}\label{eqlemma_7}
\begin{aligned}[b]
\sum_{G\in \mathcal{I}^{E}_{m}(\overline{G})}|\mathcal{P}(G)|\prod_{s=1}^{m}N(G_{s}|\overline{G}\oplus G)\phi(G_{s})\leq 2^{5}(4\gamma k)^{m}m^{m}e^{405 \overline{m}}.
\end{aligned}
\end{equation} 
For a multiset $G$, we understand $\mathcal{I}^{E}_{m}(\overline{G})$ as $\mathcal{I}^{E}_{m}(V_{\overline{G}})$.
\end{lemma}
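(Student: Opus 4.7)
The plan is to reduce the estimate to a combinatorial sum over ``parent assignments'' that witness each overlap counter $N(G_s|\overline{G}\oplus G)$, and then bound the resulting expression using the single-vertex hypothesis $\sum_{Z\ni v}\phi(Z)\le\gamma$. By the symmetry identity $\sum_{G}|\mathcal{P}(G)|f(G)=\sum_{w}f(w)$, valid for any function $f$ of the underlying multiset, the left-hand side of \eqref{eqlemma_7} rewrites as
\[
\sum_{w=(Z_1,\ldots,Z_m)}\prod_{s=1}^{m}N(Z_s\,|\,\overline{G}\oplus w)\,\phi(Z_s),
\]
where $w$ ranges over ordered $m$-tuples with each $Z_s\in E$ satisfying $V_{Z_s}\cap V_{\overline{G}}\neq\emptyset$.

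Next, I would use the tautology $N(Z_s|\overline{G}\oplus w)=\sum_{Z'\in \overline{G}\oplus w}\mathbf{1}[V_{Z'}\cap V_{Z_s}\neq\emptyset]$ to transform the product into a sum over parent functions $\pi\colon[m]\to\overline{G}\sqcup[m]$, each term imposing the constraint that $Z_s$ and $Z_{\pi(s)}$ (or the indicated element of $\overline{G}$) share at least one vertex. The number of such assignments is at most $(\overline{m}+m)^m$, which by the elementary inequality $(1+\overline{m}/m)^m\le e^{\overline{m}}$ is bounded by $m^m e^{\overline{m}}$; this produces the $m^m$ factor in the target bound.

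For each fixed $\pi$, I would process indices according to the directed functional graph $s\mapsto\pi(s)$ on $[m]$. In tree components eventually rooted at $\overline{G}$, I work top-down: for an index $s$ whose parent $Z_{\pi(s)}$ (or the fixed element of $\overline{G}$) has already been chosen, I select a witness vertex $v_s\in V_{Z_{\pi(s)}}$ (at most $k$ choices) and apply $\sum_{Z_s\ni v_s}\phi(Z_s)\le\gamma$, yielding a factor $k\gamma$ per index. The main obstacle is the presence of cyclic components of $\pi$ that are disconnected from $\overline{G}$: a naive topological processing fails on such cycles. I plan to handle them by exploiting the standing constraint $V_{Z_s}\cap V_{\overline{G}}\neq\emptyset$ built into $\mathcal{I}_m^E(\overline{G})$, anchoring one element of each disconnected cycle directly to $V_{\overline{G}}$ at cost $k|V_{\overline{G}}|\gamma\le k^2\overline{m}\gamma$, while the remaining cycle and tree elements of that component each still contribute $k\gamma$. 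Crude combinatorial counting on the number of functional graphs with $q$ disconnected cycles lets the resulting $\overline{m}^q$ accumulation be absorbed into an exponential prefactor via identities of the form $\sum_{q}\binom{m}{q}\overline{m}^q/q!\le e^{\overline{m}}$.

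Collecting the contributions---$(\overline{m}+m)^m$ from the parent assignments (yielding $m^m e^{\overline{m}}$), $k^m$ from witness-vertex choices, $\gamma^m$ from the anchored weighted sums, and an $e^{O(\overline{m})}$ prefactor absorbing the cycle-handling overcount---produces the claimed bound $2^5(4\gamma k)^m m^m e^{405\overline{m}}$, where the generous constants $2^5$ and $e^{405}$ serve as a loose catch-all for numerical slack picked up during cycle treatment. The central difficulty throughout is precisely the self-referential character of $N(Z_s|w)$, which couples all summation variables in $w$; the parent-assignment trick decouples them at the cost of the $m^m$ enumeration factor and the cycle-handling overhead, and the proof of the parent lemma Proposition~\ref{pro_cluster_sum} then follows from Stirling's estimate $m^m\le e^m\,m!$ applied with connectivity of $G\oplus L$ restricting the effective combinatorics.
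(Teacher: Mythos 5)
Your outline reproduces the paper's strategy almost exactly: you expand each counter $N(G_{s}|\overline{G}\oplus G)$ into a sum of overlap indicators, which converts the left-hand side of \eqref{eqlemma_7} into a sum over parent maps $\vec{t}\in\mathbb{M}^{m}$ (your $\pi$); the associated functional graph has one tree component attached to the merged root $\bm{0}$ and unicyclic components elsewhere; tree indices are summed top-down at cost $\gamma k$ each via $\sum_{Z\cap Z'\neq\emptyset}\phi(Z)\leq\gamma|Z'|\leq\gamma k$; and each disconnected cycle is broken by anchoring one of its elements to $V_{\overline{G}}$ at cost $O(\gamma k\overline{m})$. This is Lemma~\ref{sublemma_clustersum_1} in all but notation, and that part of your plan is sound.

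The gap is in how you dispose of the factor $\overline{m}^{q}$ accumulated over the $q$ disconnected cycles. First, the identity you invoke, $\sum_{q}\binom{m}{q}\overline{m}^{q}/q!\leq e^{\overline{m}}$, is false: since $\binom{m}{q}\geq(m/q)^{q}$, the sum dominates $\max_{q}(m\overline{m}/q^{2})^{q}\sim e^{2\sqrt{m\overline{m}}/e}$, which exceeds $e^{\overline{m}}$ once $m\gtrsim\overline{m}$. Second, and more importantly, the quantitative claim it is meant to certify --- that the number of parent maps with exactly $q$ disconnected cyclic components carries a $1/q!$-type suppression relative to the total count $(m+\overline{m})^{m}$ --- is precisely the technical core of the lemma and cannot be dismissed as ``crude combinatorial counting.'' It is genuinely needed: $q$ can be as large as $m$ (every fixed point $t_{s}=s$ is a one-element cyclic component, because $N(Z_{s}|Z_{s})=1$ is a vacuous constraint), so the per-map bound is $(\gamma k)^{m}\overline{m}^{q}$ with $\overline{m}^{q}$ as large as $\overline{m}^{m}$, and pairing the worst case with the full count $(m+\overline{m})^{m}\leq m^{m}e^{\overline{m}}$ overshoots the target \eqref{eqlemma_7} by a factor that grows like $(m/\mathrm{const})^{m}$ when $\overline{m}\sim m$. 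The rescue is that maps with $q$ cycles are rare, with a count suppressed by $(q^{2}/M)^{q}$, $M=m-m_{0}$, so that $\sum_{q}\overline{m}^{q}e^{6q}/q^{q}\leq e^{e^{6}\overline{m}}$ can be performed \emph{before} the remaining sums; establishing this occupies Lemma~\ref{sublemma_clustersum_2} (a component count organized by the integer partition $\{n_{s}\}$ of $M$, using $p(M)\leq M^{M}2^{-M+5}/M!$) together with Lemma~\ref{lemma_n_s_M_Q} (the lower bound $\prod_{s}n_{s}^{n_{s}}\geq(q^{2}/M)^{q}e^{-5q}$ under $\sum_{s}sn_{s}=M$, $\sum_{s}n_{s}=q$). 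Until you supply that count --- or an equivalent, e.g.\ via unsigned Stirling numbers of the first kind for the permutation part of each unicyclic component --- the proof does not close.
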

\begin{proof}
The LHS of Eq.\,\eqref{eqlemma_7} is not convenient to deal with, so we first rewrite it as (we omit the constraint like $Z\in E$ whenever no potential confusion arises)
\begin{equation}\label{}
\begin{aligned}[b]
\text{LHS of Eq.\,\eqref{eqlemma_7}}&=\sum_{Z_{1}\cap V_{\overline{G}}\neq \emptyset}\sum_{Z_{2}\cap V_{\overline{G}}\neq \emptyset}...\sum_{Z_{m}\cap V_{\overline{G}}\neq \emptyset}\prod_{s=1}^{m}N(Z_{s}|\overline{G}\oplus Z_{1}\oplus Z_{2}\oplus...\oplus Z_{m})\phi(G_{s})
\\&=\sum_{Z_{1},Z_{2},...Z_{m}\cap V_{\overline{G}}\neq \emptyset}\prod_{s=1}^{m}N(Z_{s}|\overline{G}\oplus Z_{1}\oplus Z_{2}\oplus...\oplus Z_{m})\phi(G_{s}).
\end{aligned}
\end{equation}
Note that we always have the decompositions $N(Z_{s}|\overline{G}\oplus Z_{1}\oplus Z_{2}\oplus...\oplus Z_{m})= N(Z_{s}|\overline{G})+\sum_{t=1}^{m}N(Z_{s}|Z_{t})$ and we denote $\overline{G}=\{Z_{01},Z_{02},...,Z_{0\overline{m}}\}$. To adopt a more compact notation, we allow the index $t$ takes the value from $\{1,2,...,m,01,02,...,0\overline{m}\}\eqcolon \mathbb{M}$, corresponding to $Z_{1},Z_{2},...,Z_{m},Z_{01},Z_{02},...,Z_{0\overline{m}}$. Therefore, we have
\begin{equation}\label{lhs_eqlemma_cluster}
\begin{aligned}[b]
\text{LHS of Eq.\,\eqref{eqlemma_7}}&= \sum_{Z_{1},Z_{2},...Z_{m}\cap V_{\overline{G}}\neq \emptyset} \prod_{s=1}^{m}\sum_{t\in \mathbb{M}}N(Z_{s}|Z_{t})\phi(Z_{s})
=\sum_{Z_{1},Z_{2},...Z_{m}\cap V_{\overline{G}}\neq \emptyset}\sum_{t_{1},t_{2},...,t_{m}\in \mathbb{M}}\prod_{s=1}^{m}N(Z_{s}|Z_{t_{s}})\phi(Z_{s})
\\&=\sum_{t_{1},t_{2},...,t_{m}\in \mathbb{M}}\sum_{Z_{1},Z_{2},...Z_{m}\cap V_{\overline{G}}\neq \emptyset}\prod_{s=1}^{m}N(Z_{s}|Z_{t_{s}})\phi(Z_{s}).
\end{aligned}
\end{equation}
On RHS of Eq.\,\eqref{lhs_eqlemma_cluster}, the factor $N(Z_{s}|Z_{t_{s}})$ equals $1$ if $Z_{s}$ overlaps with $Z_{t_{s}}$ and vanishes otherwise. Therefore, for fixed vector $\vec{t}\in \mathbb{M}^{m}$, the nonzero contribution only comes from configurations $\{Z_{1},Z_{2},...,Z_{m}\}$ that satisfies both $Z_{s}\cap Z_{t_{s}}\neq \emptyset$ and $Z_{s}\cap V_{\overline{G}}\neq \emptyset$ for all $s\in [m]$. 
For an intuitive presentation, we map each vector  $\vec{t}\in \mathbb{M}^{m}$ to an undirected graph. Here, we regard $\mathbb{M}$ as the vertex set and put an edge connecting $i\in [m]$ (note that $[m]\subset \mathbb{M}$) and $j\in \mathbb{M}$ if and only if $t_{i}=j$. The corresponding graph is denoted by $\mathsf{G}$.

Now we turn to investigate the structure of the graph. First For convenience we denote $\mathbb{O}\coloneq \mathbb{M}\backslash [m]= \{01,02,...,0\overline{m}\}$. 
We interpret $t_{i}=j$ as a map $t \colon [m]\mapsto \mathbb{M}$ and proceed to identify the set of vertices in $[m]$ whose trajectories, under repeated application of $t$, terminate in the set $\mathbb{O}$. Namely, we denote $\mathbb{U}_{0}\coloneq \{i\in [m]\colon \exists n\in \mathbb{N}\text{ such that } t^{n}(i)\in \mathbb{O}\}$. The $n$-th iterate $t^n(i) = t(t^{n-1}(i))$ is well-defined for $n \ge 1$ so long as the intermediate vertices $t^k(i)$ for all $k \in \{0, 1, \dots, n-1\}$ lie within the domain $[m]$. At the graphic level, we merge the vertices in $\mathbb{O}$ as a single vertex and denote it by $\bm{0}$. Then by construction, the graph (denoted by $\mathsf{G}_{0}$) with the vertex set $\mathbb{U}_{0}\bigcup \{\bm{0}\}$ and the edge set $\{(i,t_{i})\}_{i\in \mathbb{U}_{0}}$ is a connected graph. By Lemma \ref{lemma_cyclomatic}, we further know that $\mathsf{G}_{0}$ is a tree, since the number of vertices is greater than the number of edges within the graph $\mathsf{G}_{0}$. 
Next, we analyze the structure of the remaining graph. Let $\mathbb{U}_{1} \coloneqq [m] \setminus \mathbb{U}_{0}$ be the set of vertices not connected to the subgraph $\mathsf{G}_{0}$, and let $\mathsf{G}_{1}$ be the subgraph of $\mathsf{G}$ induced by the vertex set $\mathbb{U}_{1}$. The edge set of $\mathsf{G}_{1}$ is given by $\{(i, t_i) \colon i \in \mathbb{U}_{1}\}$. By construction, for any vertex $i \in \mathbb{U}_{1}$, its successor $t_i$ must also belong to $\mathbb{U}_{1}$. 
Consider any connected component $\mathsf{G}'_1$ of $\mathsf{G}_1$ with vertex set $\mathbb{U}'_1 \subseteq \mathbb{U}_1$. Within this component, every vertex $i \in \mathbb{U}'_1$ has exactly one outgoing edge $(i, t_i)$ to a vertex $t_i \in \mathbb{U}'_1$. Consequently, the number of vertices in $\mathsf{G}'_1$ is equal to the number of edges. It follows from Lemma~\ref{lemma_cyclomatic} that such a graph must contain exactly one cycle.

In summary, the graph $\mathsf{G}$ exhibits the following structural properties:
\begin{enumerate}
\item The connected component containing the vertex $\bm{0}$ is a tree and thus contains no cycles.
\item Every other connected component of $\mathsf{G}$ contains exactly one cycle.
\end{enumerate}
We introduce some more notations to characterize these structural properties. Let $q_0$ be the number of connected components in $\mathsf{G}$ that do not contain the vertex $\bm{0}$, so that the total number of connected components is $q_0+1$. We denote the set of components by $\{\mathsf{G}_q\}_{q=0}^{q_0}$, where $\mathsf{G}_0$ is the unique component containing $\bm{0}$.
For each $q \in \{0, 1, \dots, q_0\}$, let $\mathbb{U}_q$ be the vertex set of the component $\mathsf{G}_q$
and let $m_q \coloneq |\mathbb{U}_q|$ denote its order (i.e., the number of vertices)
. A further parameter of interest is the in-degree of the root vertex $\bm{0}$, which we define as the number of vertices that are mapped directly to it $\ell_0 \coloneq |\{i \in \mathbb{U}_0 \colon t(i) = \bm{0}\}|$.
By definition, we have $\ell_0 \leq m_0$. See Fig. \ref{G_decompose} for an intuitive illustration. 
\begin{figure}[h]
\centering                                                     
\includegraphics[width=\linewidth]{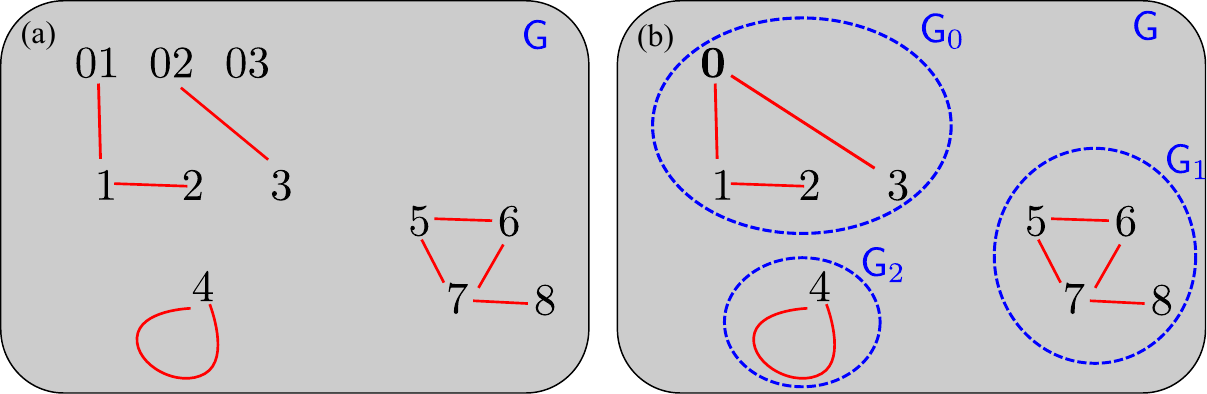}
\caption{We set $(m,\overline{m})=(8,3)$ and $\vec{t}=(01,1,02,4,6,7,5,7)$ to obtain the corresponding graph $\mathsf{G}$ in Fig (a). Then we have the decomposition in Fig. (b) with $(q_{0},\ell_{0})=(2,2)$ and $(m_{0},m_{1},m_{2})=(4,4,1)$. }
\label{G_decompose}
\end{figure}

From Lemmas \ref{sublemma_clustersum_1} and \ref{sublemma_clustersum_2}, we upper bound LHS of Eq.\,\eqref{eqlemma_7} as follows:
\allowdisplaybreaks[4]
\begin{align*}\label{}
&\sum_{G\in \mathcal{I}^{E}_{m}(\overline{G})}|\mathcal{P}(G)|\prod_{s=1}^{m}N(G_{s}|\overline{G}\oplus G)\phi(G_{s})
\\\leq& \sum_{q_{0}=0}^{m}\sum_{m_{0}=0}^{m-q_{0}}\sum_{\ell_{0}=0}^{m_{0}}(\gamma k)^m \overline{m}^{q_0} \times \binom{m}{m_0} \binom{m_{0}}{\ell_{0}}\overline{m}^{\ell_{0}}m_{0}^{m_{0}-\ell_{0}}\frac{(m-m_{0})^{m-m_{0}}2^{-(m-m_{0})+5}e^{m-m_{0}+6q_{0}}}{[q_0^2 / (m-m_0)]^{q_0}}
\\=&2^{5}(\gamma k)^{m}\sum_{q_{0}=0}^{m}\sum_{m_{0}}^{m-q_{0}}\overline{m}^{q_{0}}\binom{m}{m_{0}}(\overline{m}+m_{0})^{m_{0}}(m-m_{0})^{m-m_{0}}2^{-(m-m_{0})+5}e^{m-m_{0}+6q_{0}}\qty(\frac{m-m_{0}}{q_{0}^{2}})^{q_{0}}
\\\leq & 2^{5}(\gamma k)^{m}\sum_{q_{0}=0}^{m}\frac{\overline{m}^{q_{0}}e^{6q_{0}}}{q_{0}^{q_{0}}}\sum_{m_{0}=0}^{m}\binom{m}{m_{0}}(\overline{m}+m_{0})^{m_{0}}(m-m_{0})^{m-m_{0}}\qty(\frac{e^{1+1/e}}{2})^{m-m_{0}}
\\\leq & 2^{5}(\gamma k)^{m}\sum_{q_{0}=0}^{m}\frac{\overline{m}^{q_{0}}e^{6q_{0}}}{q_{0}^{q_{0}}}\sum_{m_{0}=0}^{m}\binom{m}{m_{0}}(\overline{m}+m)^{m_{0}}(m+\overline{m})^{m-m_{0}}\qty(\frac{e^{1+1/e}}{2})^{m-m_{0}}
\\= & 2^{5}(\gamma k)^{m}(\overline{m}+m)^{m}\sum_{q_{0}=0}^{m}\frac{\overline{m}^{q_{0}}e^{6q_{0}}}{q_{0}^{q_{0}}}\qty(\frac{e^{1+1/e}}{2}+1)^{m}
\\\leq  & 2^{5}(\gamma k)^{m}(\overline{m}+m)^{m}\qty(\frac{e^{1+1/e}}{2}+1)^{m}\sum_{q_{0}=0}^{\infty}\frac{\overline{m}^{q_{0}}e^{6q_{0}}}{q_{0}!}\leq 2^{5}(\gamma k)^{m}(\overline{m}+m)^{m}\qty(\frac{e^{1+1/e}}{2}+1)^{m}e^{e^{6}\overline{m}}
\refstepcounter{equation}\tag{\theequation}
\end{align*}
Here, in the second line we used $\sum_{\ell_0=0}^{m_0} \binom{m_0}{\ell_0} m_0^{m_0-\ell_0} \overline{m}^{\ell_0} =   \left( m_0 +\overline{m} \right)^{m_0}$. For the second line we used $[( m-m_{0})/q_{0} ]^{q_0} \le e^{(m-m_0)/e}$. Finally we used $(\overline{m}+m)^{m}\leq m^{m}(1+\overline{m}/m)^{m}\leq m^{m}e^{\overline{m}}$, $e^{1+1/e}/2+1<4$ and $\ln e^{e^6+1}< 405$ to finish the proof of the present Lemma.  
\end{proof}

{~}

\hrulefill{\bf [ End of Proof of Lemma~\ref{lemma_cluster_sum}]}

{~}

\begin{lemma}[Page 27-30 of Ref.\,\cite{berge2001theory}]\label{lemma_cyclomatic}
The cyclomatic number $r$ of a graph is defined as the number of independent cycles in the graph, given by 
$r=e-v+k$,
where $v$ is the number of edges in the given graph, $v$ is the number of vertices, and $k$ is the number of connected components.
\end{lemma}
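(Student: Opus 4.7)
The plan is to prove this classical graph-theoretic identity by exhibiting an explicit basis of the cycle space of the graph, whose cardinality will manifestly be $e - v + k$. It is convenient to interpret the cyclomatic number $r$ as the dimension over $\mathbb{F}_{2}$ of the cycle space, i.e., the vector space of edge subsets in which every vertex has even degree (equivalently, edge-disjoint unions of simple cycles). Counting independent cycles in this sense matches the intended meaning of ``number of independent cycles''.

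First, I would reduce to the connected case by observing that both sides of $r = e - v + k$ are additive across components: if $G$ decomposes into components $G_{1}, \ldots, G_{k}$ with $v_{i}, e_{i}, r_{i}$ vertices, edges, and cyclomatic numbers respectively, then $r = \sum_{i} r_{i}$ while $\sum_{i}(e_{i} - v_{i} + 1) = e - v + k$, so it suffices to prove $r = e - v + 1$ when $G$ is connected. Next, I would pick any spanning tree $T \subseteq G$. A standard induction (iteratively removing a leaf) shows that every tree on $n$ vertices has exactly $n - 1$ edges, so $|E(T)| = v - 1$, leaving exactly $e - v + 1$ \emph{chord} edges in $E(G) \setminus E(T)$. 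For each chord $f = \{u, w\}$, the tree $T$ contains a unique $u$-$w$ path, and adjoining $f$ to it yields the unique cycle $C_{f}$ in $T \cup \{f\}$, the fundamental cycle associated with $f$.

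The main (and only nontrivial) step is verifying that $\{C_{f} : f \text{ is a chord}\}$ is a basis of the cycle space over $\mathbb{F}_{2}$, so that $r$ equals the number of chords $e - v + 1$. Linear independence is immediate: $C_{f}$ is the unique fundamental cycle containing the chord $f$, so any vanishing $\mathbb{F}_{2}$-combination forces every coefficient to be zero. For spanning, given an arbitrary cycle $C$, let $S$ be the set of chords of $G$ that lie in $C$ and form the symmetric difference $C' = C \,\triangle\, \bigl( \triangle_{f \in S}\, C_{f} \bigr)$. By construction, each chord appearing in $C$ is cancelled, so $C' \subseteq E(T)$; but $T$ is a forest and contains no cycle, which forces $C' = \emptyset$. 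Hence $C$ is the symmetric sum of the $C_{f}$ with $f \in S$, and the fundamental cycles span the cycle space. Combined with the additive reduction across components, this yields $r = e - v + k$ in general. The only real subtlety is keeping the symmetric-difference manipulation honest — in particular, checking that $C'$ inherits the even-degree property at every vertex — and the rest is direct bookkeeping. (As an aside, I would also note the apparent typo in the stated lemma: ``$v$ is the number of edges'' should read ``$e$ is the number of edges''; the proof is unaffected.)
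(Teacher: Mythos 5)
Your proof is correct. The paper does not actually prove this lemma --- it is stated as a classical fact and attributed to Berge's textbook --- so there is no in-paper argument to compare against; your spanning-tree/fundamental-cycle construction is the standard proof of the identity $r=e-v+k$, and all the steps (additivity over components, the $v-1$ edge count for a spanning tree, linear independence via the uniqueness of the chord in each fundamental cycle, and the spanning argument via symmetric differences landing inside the acyclic tree) are sound. You are also right that the statement contains a typo: the first occurrence of ``$v$'' should read ``$e$'' for the number of edges.
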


\begin{lemma}\label{sublemma_clustersum_1}
Under the same setting of Proposition~\ref{pro_cluster_sum},assume that the graph generated by a fixed $\vec{t}\in \mathbb{M}^{m}$ has $q_{0}+1$ connected components, then the summation of $\prod_{s=1}^{m}N(Z_{s}|Z_{t_{s}})\phi(Z_{s})$ under the conditions of $Z_s \cap V_{\overline{G}} \neq \emptyset$ for $s \in [m]$ is upper-bounded as follows
\begin{equation}\label{conclusion_sublemma_clustersum_1}
\begin{aligned}[b]
\sum_{Z_{1},Z_{2},...Z_{m}\cap V_{\overline{G}}\neq \emptyset}\prod_{s=1}^{m}N(Z_{s}|Z_{t_{s}})\phi(Z_{s}) \le (\gamma k)^m \overline{m}^{q_0}.
\end{aligned}
\end{equation}

\end{lemma}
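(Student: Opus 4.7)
The plan is to bound the sum by iteratively performing the summations over $Z_1,\ldots,Z_m$ in an order dictated by the structure of the graph $\mathsf{G}$ associated to $\vec{t}$. The two elementary bounds I will use repeatedly are: (i) for any fixed subset $Z'$ with $|V_{Z'}|\le k$, the hypothesis of Proposition~\ref{pro_cluster_sum} gives
\begin{equation*}
\sum_{Z\in E:\,Z\cap Z'\neq\emptyset}\phi(Z)\le\sum_{j\in V_{Z'}}\sum_{Z\ni j}\phi(Z)\le k\gamma,
\end{equation*}
and (ii) since $|V_{\overline{G}}|\le k\overline{m}$, the same hypothesis yields $\sum_{Z\cap V_{\overline{G}}\neq\emptyset}\phi(Z)\le k\overline{m}\gamma$. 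Because $N(Z_s|Z_{t_s})\in\{0,1\}$, each factor in the product imposes precisely the overlap condition $Z_s\cap Z_{t_s}\neq\emptyset$ (or it vanishes).

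I will treat each connected component $\mathsf{G}_q$ of $\mathsf{G}$ separately. For the distinguished component $\mathsf{G}_0$ containing $\bm{0}$, which is a tree by the paragraph preceding the lemma, every vertex $i\in\mathbb{U}_0$ has a unique parent $t_i$ lying either in $\mathbb{U}_0$ or in $\mathbb{O}$. Processing $\mathbb{U}_0$ in any topological order rooted at $\bm{0}$ (so that $t_i$ is summed before $i$ whenever $t_i\in\mathbb{U}_0$, while $Z_{t_i}$ is already fixed when $t_i\in\mathbb{O}$), bound (i) gives a factor $k\gamma$ for each of the $m_0$ vertices, contributing $(k\gamma)^{m_0}$.

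For each remaining component $\mathsf{G}_q$ ($1\le q\le q_0$), the structural analysis in the lemma's setup tells us $\mathsf{G}_q$ is unicyclic, say with cycle $c_1\!\to\! c_2\!\to\!\cdots\!\to\! c_{\ell_q}\!\to\! c_1$ and pendant trees hanging off the cycle. I first designate $Z_{c_{\ell_q}}$ as the ``anchor'' and use only the constraint $Z_{c_{\ell_q}}\cap V_{\overline{G}}\neq\emptyset$ coming from $\mathcal{I}^E_m(\overline{G})$: bound (ii) yields $k\overline{m}\gamma$. I then trivially upper bound the one cyclic constraint $N(Z_{c_{\ell_q}}|Z_{c_1})\le 1$; this breaks the cycle. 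The remaining cyclic constraints form a directed path $c_1\!\to\! c_2\!\to\!\cdots\!\to\! c_{\ell_q}$, which I sum in reverse order $c_{\ell_q-1},\ldots,c_1$, each step giving $k\gamma$ via bound (i). Finally the pendant tree vertices are summed in topological order away from the cycle, each contributing $k\gamma$. Altogether, component $\mathsf{G}_q$ contributes $(k\gamma)^{m_q}\overline{m}$.

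Multiplying across all components and using $\sum_{q=0}^{q_0}m_q=m$ gives exactly $(k\gamma)^m\overline{m}^{q_0}$, matching the claimed bound in Eq.~\eqref{conclusion_sublemma_clustersum_1}. The only nontrivial step is handling the cycles in the non-distinguished components; the key insight is that the constraint $Z_s\cap V_{\overline{G}}\neq\emptyset$, inherited from the definition of $\mathcal{I}^E_m(\overline{G})$ and unused in component~$\mathsf{G}_0$, is precisely strong enough to initiate the summation in each of the $q_0$ cyclic components—and it is the source of the single factor of $\overline{m}$ contributed by each, yielding $\overline{m}^{q_0}$ in total. I expect the main bookkeeping hazard to be making precise the existence of a valid summation order around and out of each cycle; this is handled by explicitly dropping one cyclic edge per component (using $N\le 1$) to reduce each unicyclic graph to a forest.
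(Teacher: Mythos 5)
Your proposal is correct and follows essentially the same route as the paper's proof: both decompose the sum over the connected components of the graph generated by $\vec{t}$, bound the tree component containing $\bm{0}$ by $(\gamma k)^{m_0}$ via the overlap constraints $Z_s\cap Z_{t_s}\neq\emptyset$, and handle each of the $q_0$ unicyclic components by dropping one cyclic factor $N(\cdot|\cdot)\le 1$ and anchoring the summation with the condition $Z_s\cap V_{\overline{G}}\neq\emptyset$, which yields the factor $\gamma|V_{\overline{G}}|\le\gamma k\overline{m}$ per component. The bookkeeping $(\gamma k)^{m-q_0}(\gamma k\overline{m})^{q_0}=(\gamma k)^m\overline{m}^{q_0}$ matches the paper's computation exactly.
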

\begin{proof}
In parallel to $N(\bullet|\circ)$, we introduce the notation $\Delta (Z_{s}|V_{\overline{G}})$, whose value is defined to be $1$ if $Z_{s}$ overlaps with $V_{\overline{G}}$ and to be $0$ otherwise. Then LHS of Eq.\,\eqref{conclusion_sublemma_clustersum_1} can be recast in the form of unrestricted summation: 
\begin{equation}\label{}
\begin{aligned}[b]
\text{LHS of Eq.\,\eqref{conclusion_sublemma_clustersum_1}}=\sum_{Z_{1},Z_{2},...,Z_{m}\in E}\prod_{s=1}^{m}N(Z_{s}|Z_{t_{s}})\Delta (Z_{s}|V_{\overline{G}})\phi(Z_{s}).
\end{aligned}
\end{equation}
By assumption, the graph $\mathsf{G}$ generated by has $q_{0}+1$ connected components and we denote the corresponding vertex subsets as $\mathbb{U}_{i}$ for $i\in \{0,1,...,q_{0}\}$. Then by introducing the shorthand $f(Z_{s})\coloneq N(Z_{s}|Z_{t_{s}})\Delta (Z_{s}|V_{\overline{G}})\phi(Z_{s})$, we have
\begin{equation}\label{s_mbU_0_1_2_q}
\begin{aligned}[b]
\text{LHS of Eq.\,\eqref{conclusion_sublemma_clustersum_1}}&=\sum_{Z_{1},Z_{2},...,Z_{m}\in E} \prod_{s\in \mathbb{U}_{0}}f(Z_{s})\times \prod_{s\in \mathbb{U}_{1}}f(Z_{s})\times... \times \prod_{s\in \mathbb{U}_{q_{0}}}f(Z_{s})
\\&=\sum_{\{Z_{i}\in E\}_{i\in \mathbb{U}_{0}}}\prod_{s\in \mathbb{U}_{0}}f(Z_{s})\times \sum_{\{Z_{i}\in E\}_{i\in \mathbb{U}_{1}}}\prod_{s\in \mathbb{U}_{1}}f(Z_{s})\times... \times \sum_{\{Z_{i}\in E\}_{i\in \mathbb{U}_{q_0}}}\prod_{s\in \mathbb{U}_{q_0}}f(Z_{s}),
\end{aligned}
\end{equation}
where we do not introduce different notations for the dummy indices $s$ for clarity. Then we consider the factors on the RHS of Eq.\,\eqref{s_mbU_0_1_2_q} by cases.

(\emph{i}) For any $k\in [q_{0}]$, we know that the graph $\mathsf{G}_{k}$ contains exactly one cycle. We denote the vertex set of this cycle as $\mathbb{C}_{k}$ and then write $|\mathbb{C}_{k}|=c$. Note that in the cycle, we can always relabel $\{Z_{s}\}_{s\in \mathbb{C}_{k}}$ by $\{\oZ_{i}\}_{i\in [c]}$ with the convention $\oZ_{c+1}\equiv \oZ_{1}$. With these notations, we focus on the factor labelled by $\mathbb{U}_{k}$ on the RHS of  Eq.\,\eqref{s_mbU_0_1_2_q},
\begin{equation}\label{u_k_cycle_decom}
\begin{aligned}[b]
&\sum_{\{Z_{i}\in E\}_{i\in \mathbb{U}_{k}}}\prod_{s\in \mathbb{U}_{k}}f(Z_{s})
\\=&\sum_{\{Z_{i}\in E\}_{i\in \mathbb{U}_{k}}}\prod_{s\in \mathbb{C}_{k}}f(Z_{s})\times \prod_{s\in \mathbb{U}_{k}\backslash \mathbb{C}_{k}}f(Z_{s})
\\=&\sum_{\{Z_{i}\in E\}_{i\in \mathbb{U}_{k}}}\prod_{i=1}^{c}N(\oZ_{i}|\oZ_{i+1})\Delta (\oZ_{i}|V_{\overline{G}})\phi(\oZ_{i}) \times \prod_{s\in \mathbb{U}_{k}\backslash \mathbb{C}_{k}}N(Z_{s}|Z_{t_{s}})\Delta (Z_{s}|V_{\overline{G}})\phi(Z_{s})
\\=& \sum_{\oZ_{1},\oZ_{2},...,\oZ_{c}\in E}\prod_{i=1}^{c}N(\oZ_{i}|\oZ_{i+1})\Delta (\oZ_{i}|V_{\overline{G}})\phi(\oZ_{i}) \times \sum_{\{Z_{i}\in E\}_{i\in \mathbb{U}_{k}\backslash \mathbb{C}_{k}}} \prod_{s\in \mathbb{U}_{k}\backslash \mathbb{C}_{k}}N(Z_{s}|Z_{t_{s}})\Delta (Z_{s}|V_{\overline{G}})\phi(Z_{s}). 
\end{aligned}
\end{equation}
In the cyclic structure, i.e., the first factor on the right-hand side of Eq.\,\eqref{u_k_cycle_decom}, without loss of generality, we ``break'' the cycle at the vertex $\oZ_{c}$. Note that we always have $N(\bullet|\circ), \Delta (\bullet| \circ)\leq 1$, which enables us to obtain
\begin{equation}\label{u_k_cycle_decom_cycle}
\begin{aligned}[b]
 \sum_{\oZ_{1},\oZ_{2},...,\oZ_{c}\in E}\prod_{i=1}^{c}N(\oZ_{i}|\oZ_{i+1})\Delta (\oZ_{i}|V_{\overline{G}})\phi(\oZ_{i})&\leq  \sum_{\oZ_{1},\oZ_{2},...,\oZ_{c}\in E} \Delta (\oZ_{1}|V_{\overline{G}})\phi(\oZ_{1})\prod_{i=1}^{c-1}N(\oZ_{i}|\oZ_{i+1})\phi(\oZ_{i})
 \\&=\sum_{\oZ_{1}\cap V_{\overline{G}}\neq \emptyset}\phi(\oZ_{1})\sum_{\oZ_{2}\cap \oZ_{1}\neq \emptyset}\phi(\oZ_{2})... \sum_{\oZ_{c-1}\cap \oZ_{c}\neq \emptyset}\phi(\oZ_{c-1})
 \\&\leq \gamma |V_{\overline{G}}|\cdot (\gamma k)^{c-1}. 
\end{aligned}
\end{equation}
Here, we have already used $\sum_{Z\in E: Z \ni i} \phi(Z) \leq \gamma$ to obtain $\sum_{Z\cap L\neq \emptyset}\phi(Z)\leq \sum_{i\in L}\sum_{Z\ni i}\phi(Z)\leq \gamma |L|$ for any region $L\subset V$.
For the second factor, there is no cycle structure, therefore we simply have
\begin{equation}\label{u_k_cycle_decom_nocycle}
\begin{aligned}[b]
\sum_{\{Z_{i}\in E\}_{i\in \mathbb{U}_{k}\backslash \mathbb{C}_{k}}} \prod_{s\in \mathbb{U}_{k}\backslash \mathbb{C}_{k}}N(Z_{s}|Z_{t_{s}})\Delta (Z_{s}|V_{\overline{G}})\phi(Z_{s})&\leq \sum_{\{Z_{i}\in E\}_{i\in \mathbb{U}_{k}\backslash \mathbb{C}_{k}}} \prod_{s\in \mathbb{U}_{k}\backslash \mathbb{C}_{k}}N(Z_{s}|Z_{t_{s}})\phi(Z_{s})
\\&=\sum_{\{Z_{s}\cap Z_{t_{s}}\neq \emptyset\}_{s\in \mathbb{U}_{k}\backslash \mathbb{C}_{k}}}\prod_{s\in \mathbb{U}_{k}\backslash \mathbb{C}_{k}}\phi(Z_{s})\leq (\gamma k)^{|\mathbb{U}_{k}|-c}.
\end{aligned}
\end{equation}
By combining Eqs.\,\eqref{u_k_cycle_decom_cycle} and \eqref{u_k_cycle_decom_nocycle}, we conclude that
\begin{equation}\label{U_k_result}
\begin{aligned}[b]
\sum_{\{Z_{i}\in E\}_{i\in \mathbb{U}_{k}}}\prod_{s\in \mathbb{U}_{k}}f(Z_{s})\leq \gamma|V_{\overline{G}}|(\gamma k)^{|\mathbb{U}_{k}|-1}
\end{aligned}
\end{equation}
for any $k\in [q_{0}]$.

(\emph{ii}) For the graph $\mathsf{G}_{0}$, we know that it does not contain a cycle, and therefore we simply have
\begin{equation}\label{U_0_result}
\begin{aligned}[b]
\sum_{\{Z_{i}\in E\}_{i\in \mathbb{U}_{0}}}\prod_{s\in \mathbb{U}_{0}}f(Z_{s})\leq \sum_{\{Z_{i}\in E\}_{i\in \mathbb{U}_{0}}}\prod_{s\in \mathbb{U}_{0}}N(Z_{s}|Z_{t_{s}})\phi(Z_{s})=\sum_{\{Z_{s}\cap Z_{t_{s}}\}_{s\in \mathbb{U}_{0}}\neq \emptyset}\prod_{s\in \mathbb{U}_{0}}\phi(Z_{s})\leq (\gamma k)^{|\mathbb{U}_{0}|}.
\end{aligned}
\end{equation}
Note that, for any $s\in \mathbb{U}_{0}$, we have $Z_{t_{s}}\in \{Z_{01},Z_{02},...,Z_{0\overline{m}}\}=\overline{G}$ by construction and all the elements in $\overline{G}$ has the cardinality also at most $k$.

By putting Eqs.\,\eqref{U_k_result} and \eqref{U_0_result} into Eq.\,\eqref{s_mbU_0_1_2_q}, we finally obtain
\begin{equation}\label{}
\begin{aligned}[b]
\text{LHS of Eq.\,\eqref{conclusion_sublemma_clustersum_1}}\leq (\gamma k)^{|\mathbb{U}_{0}|} \prod_{i=1}^{q_{0}}\gamma|V_{\overline{G}}|(\gamma k)^{|\mathbb{U}_{k}|-1}\leq (\gamma k)^{m-q_{0}} |V_{\overline{G}}|^{q_{0}}\leq \gamma^{m} k^{m-q_{0}} (k \overline{m})^{q_{0}}=\text{RHS of Eq.\,\eqref{conclusion_sublemma_clustersum_1}},
\end{aligned}
\end{equation}
where we used $|V_{\overline{G}}|\leq k \overline{m}$ since $|\overline{G}|=m$ and the fact that $m=\sum_{q=0}^{q_{0}}|\mathbb{U}_{q}|$. Thus, we complete the proof.
\end{proof}

{~}

\hrulefill{\bf [ End of Proof of Lemma~\ref{sublemma_clustersum_1}]}

{~}

\begin{lemma}\label{sublemma_clustersum_2}
Under the same setting of Lemma~\ref{lemma_cluster_sum}, the total number of graphs $\vec{t}\in \mathbb{M}^{m}$with fixed $q_0, m_0(\equiv |\mathbb{U}_{0}|)$ and $\ell_0$ is upper-bounded by 
$$
\binom{m}{m_0} \binom{m_{0}}{\ell_{0}}^{m_0-\ell_0}\overline{m}^{\ell_{0}}m_{0}^{m_{0}-\ell_{0}}\frac{(m-m_{0})^{m-m_{0}}2^{-(m-m_{0})+5}e^{m-m_{0}+6q_{0}}}{[q_0^2 / (m-m_0)]^{q_0}}, 
$$
where we notice that $[q_0^2 / (m-m_0)]^{q_0} = 1$ in applying the above upper bound to the case of $q_0=0$ (we have already adopted the convention $0^{0}=1$).
\end{lemma}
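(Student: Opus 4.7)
The plan is to factor the counting of $\vec{t}\in \mathbb{M}^{m}$ with prescribed $(q_{0},m_{0},\ell_{0})$ into three independent multiplicative contributions, and then bound each factor in turn. The rough correspondence with the claimed inequality is: the choice of $\mathbb{U}_{0}\subseteq [m]$ produces the outer $\binom{m}{m_{0}}$; the restriction $\vec{t}|_{\mathbb{U}_{0}}$ produces the middle block $\binom{m_{0}}{\ell_{0}}\overline{m}^{\ell_{0}}m_{0}^{m_{0}-\ell_{0}}$; and the restriction $\vec{t}|_{\mathbb{U}_{1}}$, with $\mathbb{U}_{1}\coloneq [m]\setminus \mathbb{U}_{0}$, produces the functional-graph factor involving $(m-m_{0})$ and $q_{0}$. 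A key preliminary observation is that $t_{i}\in \mathbb{U}_{1}$ for every $i\in \mathbb{U}_{1}$: otherwise the forward orbit of $i$ under $\vec{t}$ would escape into $\mathbb{U}_{0}\cup \mathbb{O}$ and eventually reach $\bm{0}$, contradicting $i\notin \mathbb{U}_{0}$. Hence the two restrictions genuinely decouple, validating the product decomposition.

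For the $\mathbb{U}_{0}$ contribution I would use a direct overcount. Select the $\ell_{0}$ vertices in $\mathbb{U}_{0}$ that map directly to $\mathbb{O}$ in $\binom{m_{0}}{\ell_{0}}$ ways; assign each of them one of $\overline{m}$ targets in $\mathbb{O}$, giving $\overline{m}^{\ell_{0}}$; and crudely bound the choices for each of the remaining $m_{0}-\ell_{0}$ vertices (which must map into $\mathbb{U}_{0}$) by $m_{0}$, giving $m_{0}^{m_{0}-\ell_{0}}$. This does overcount assignments whose induced subgraph on $\mathbb{U}_{0}\cup\{\bm{0}\}$ fails to be a tree, but since only an upper bound is sought, the slack is harmless and the product is exactly the central group of factors in the target inequality.

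For the $\mathbb{U}_{1}$ contribution, with $n\coloneq m-m_{0}$, I would bound the number of functions $f:\mathbb{U}_{1}\to\mathbb{U}_{1}$ whose functional graph has exactly $q_{0}$ (unicyclic) components. The strategy is to enumerate each component by its vertex set, its unique cycle length $c_{i}$, the cyclic arrangement of the $c_{i}$ cycle vertices, and the labeled forest of trees hanging off the cycle vertices; the standard Cayley-type formula $c_{i}(c_{i}-1)!\,n_{i}^{n_{i}-c_{i}-1}$ counts the rooted-forest piece in a component of size $n_{i}$ with prescribed cycle of length $c_{i}$. Summing the resulting multinomial expression over admissible tuples $(n_{i},c_{i})_{i=1}^{q_{0}}$ with $\sum_{i} n_{i}=n$ and $1\leq c_{i}\leq n_{i}$, then invoking the Stirling-type bound $q_{0}^{2q_{0}}\lesssim (q_{0}!)^{2}e^{2q_{0}}$ along with the geometric-series estimates on $\sum_{c\geq 1}n^{c-1}/c$, should yield the desired form $n^{n}(n/q_{0}^{2})^{q_{0}}(e/2)^{n}e^{6q_{0}}$ up to an absolute constant absorbed into the prefactor $2^{5}$.

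The hardest step will be the last one. Producing the exact dependence $(n/q_{0}^{2})^{q_{0}}e^{6q_{0}}$ together with the exponential factor $(e/2)^{n}=e^{n}/2^{n}$ requires organising the cycle-length and component-size sums carefully and applying Stirling only at a specific place, because short cycles (in particular the self-loops $c_{i}=1$) behave very differently from long cycles and are responsible for the characteristic $2^{-n+5}$ factor that cannot be absorbed into an overall $e^{\mathcal{O}(n)}$ bound. The boundary case $q_{0}=0$, where $\mathbb{U}_{1}=\emptyset$ and the component-counting factor degenerates, must be verified separately by inspection; this is precisely why the statement adopts the convention $[q_{0}^{2}/(m-m_{0})]^{q_{0}}=1$ in that case.
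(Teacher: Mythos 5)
Your decomposition of $\vec{t}$ into the restriction to $\mathbb{U}_{0}$ and the restriction to $\mathbb{U}_{1}$, the observation that $t_{i}\in\mathbb{U}_{1}$ for $i\in\mathbb{U}_{1}$ so the two counts decouple, and the overcount $\binom{m}{m_{0}}\binom{m_{0}}{\ell_{0}}\overline{m}^{\ell_{0}}m_{0}^{m_{0}-\ell_{0}}$ for the tree component containing $\bm{0}$ all coincide with the paper's argument (the exponent on $\binom{m_{0}}{\ell_{0}}$ in the stated lemma is a typo; both you and the paper's proof use the first power). The problem is that everything nontrivial in this lemma lives in the $\mathbb{U}_{1}$ factor, and there your proposal stops at a plan. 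You propose an exact Cayley-type enumeration of functional graphs with exactly $q_{0}$ unicyclic components and then assert that Stirling bounds and geometric-series estimates ``should yield the desired form'' $(m-m_{0})^{m-m_{0}}2^{-(m-m_{0})+5}e^{m-m_{0}+6q_{0}}/[q_{0}^{2}/(m-m_{0})]^{q_{0}}$. That final form is precisely the content of the lemma, and you have not shown how the sum over component sizes and cycle lengths produces either the denominator $(q_{0}^{2}/(m-m_{0}))^{q_{0}}$ or the factor $2^{-(m-m_{0})}$; you explicitly flag this as the hardest step and defer it. As written, the proof is incomplete at exactly the point where the lemma has content.

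For comparison, the paper does not use exact unicyclic enumeration at all. It groups the $q_{0}$ components by size, letting $n_{s}$ be the number of components with $s$ vertices (so $\sum_{s}sn_{s}=M\coloneq m-m_{0}$ and $\sum_{s}n_{s}=q_{0}$), bounds the connection patterns inside a size-$s$ component crudely by $s^{s}$, and obtains for each fixed profile $\{n_{s}\}$ the bound $M!\prod_{s}(e^{1+s}/n_{s})^{n_{s}}$. Two separate inputs then produce the factors you attribute to ``short cycles'': the denominator $(q_{0}^{2}/M)^{q_{0}}$ (times $e^{5q_{0}}$) comes from a dedicated optimization lemma lower-bounding $\prod_{s}n_{s}^{n_{s}}$ under the two linear constraints, and the factor $M^{M}2^{-M+5}/M!$ comes from bounding the number of admissible profiles $\{n_{s}\}$ by the number-theoretic partition function $p(M)$ together with the estimate $p(M)\leq M^{M}2^{-M+5}/M!$. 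So your intuition about the origin of the $2^{-(m-m_{0})+5}$ term is mistaken: it is an artifact of the partition-counting bound, not of distinguishing self-loops from long cycles. If you want to pursue your exact-enumeration route, you would still need an analogue of the paper's optimization lemma to extract the $q_{0}^{-2q_{0}}$ dependence from the sum over component-size tuples; without that (or the bound you merely name, $q_{0}^{2q_{0}}\lesssim(q_{0}!)^{2}e^{2q_{0}}$, applied inside a correctly organized multinomial sum) the argument does not close.
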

\begin{proof}
We first focused on the graph $\mathsf{G}_{0}$ with fixed $m_{0}$ and $\ell_{0}$, the total number of the graphs adhering to such conditions is given by 
\begin{equation}\label{q_0_graph_count}
\begin{aligned}[b]
\binom{m}{m_{0}}\binom{m_{0}}{\ell_{0}}\overline{m}^{\ell_{0}}m_{0}^{m_{0}-\ell_{0}}.
\end{aligned}
\end{equation}
Here, the term $\overline{m}^{\ell_{0}}$ counts the number of ways how $\ell_{0}$ vertices link to $\overline{m}$ vertices (merged as $\bm{0}$). Then the factor $(m_{0}-\ell_{0})^{m_{0}-\ell_{0}}$ counts how the rest of $m_{0}-\ell_{0}$ vertices connect to $m_{0}$ vertices (rather than connecting to $\bm{0}$). If $q_{0}=0$, then 
Eq.\,\eqref{q_0_graph_count} gives the upper bound for the total number of graphs.

For $q_0\geq 1$, we consider the number of combination for subgraphs of $\{\mathsf{G}_q\}_{q=1}^{q_0}$ with $\sum_{q=1}^{q_{0}} |\mathbb{U}_{q}| = m-m_0$.
In the following, let us set $M=m-m_0>0$ for convenience. Let $n_{s}$ be the number of subgraphs in $\{\mathsf{G}_q\}_{q=1}^{q_0}$ containing $s$ vertices, i.e., $n_{s}\coloneq |\{k\in [q_{0}]: |\mathbb{U}_{k}|=s \}|$. Then we naturally have the following equations:
\begin{align}\label{cluster_pattern_s_cond}
\sum_{s=1}^M s n_s  = M ,  \quad
\sum_{s=1}^M n_s  = q_0,
\end{align} 
with $n_{1},n_{2},...,n_{M}$ being non-negative integers. For example, the decomposition in Fig.\,\ref{G_decompose} corresponds to the case of $(q_{0},M,n_{1},n_{4})=(2,8,1,2)$ and $n_{2}=n_{3}=n_{5}=n_{6}=n_{7}=n_{8}=0$. For a fixed $\{n_{1}, n_2,n_3,\ldots,n_M\}$, the number of ways to assign $M$ vertices into this configuration to constitute the subgraphs is given by 
\begin{align}
\label{ineq_patters_of_n_s}
& \binom{M}{n_1} \binom{M-n_1}{2n_2} \frac{(2n_2)!}{(2!)^{n_2} n_2!} (2^{2})^{n_2} \cdots  \binom{M-2n_2-\cdots - (M-1) n_{M-1}}{Mn_M}  \frac{(Mn_M)!}{(M!)^{n_M} n_M!} (M^M)^{n_M}   \notag \\
&=\prod_{s=1}^M\binom{M-n_1-2n_2\cdots - (s-1) n_{s-1}}{s n_s} \frac{(sn_s)!}{(s!)^{n_s} n_s!} s^{sn_s} .
\end{align} 
Note that in the product, the cases $n_{s}=0$ simply contribute $1$ and thus do not affect the final result.
Here, $\binom{M-n_1-2n_2-\cdots - (s-1) n_{s-1}}{s n_{s}} $ is the number of combinations for vertices in the $n_s$ size-$s$ subgraphs.
The combinatorial factor $(sn_s)!/[(s!)^{n_s} n_s!]$ represents the number of ways to partition a set of \(sn_s\) vertices into \(n_s\) indistinguishable subgraphs, where each subgraph contains exactly \(s\) vertices. The term \((sn_s)!\) in the numerator corresponds to the total number of permutations of the vertices. This initial count is then divided by the terms in the denominator to correct for redundancies. The term \((s!)^{n_s}\) corrects for the overcounting of arrangements of vertices within each of the \(n_s\) subgraphs, since the internal ordering of vertices does not define a new partition. On the other hand, the term \(n_s!\) corrects for the overcounting of permutations \emph{among} the \(n_s\) subgraphs themselves, as they are considered indistinguishable.
Finally, in each of the connected subgraphs, the number of patterns of connection is at most $s^s$, which yields $s^{s n_s}$ in total.   
The expression~\eqref{ineq_patters_of_n_s} further reduces to 
\allowdisplaybreaks[4]
\begin{align*}\label{ineq_patters_of_n_s_simple}
\prod_{s=1}^M\binom{M-n_1-2n_2-\cdots - (s-1) n_{s-1}}{s n_s} \frac{(sn_s)!}{(s!)^{n_s} n_s!} s^{sn_s}  
&=\frac{M!}{\prod_{s=1}^M (sn_s)!} \prod_{s=1}^m \frac{(sn_s)!}{(s!)^{n_s} n_s!} s^{sn_s}
\\&=M!  \prod_{s=1}^M \frac{s^{sn_s} }{(s!)^{n_s} n_s!} 
\le M! \prod_{s=1}^M \left( \frac{e^{1+s}}{n_s} \right)^{n_s}
\refstepcounter{equation}\tag{\theequation}
\end{align*}
where the inequality is derived from $x! \ge (x/e)^x$ for all $x\in \mathbb{N}$.
By applying Lemma \ref{lemma_n_s_M_Q} to Eq.\,\eqref{ineq_patters_of_n_s_simple}, we obtain 
\begin{align}
\label{ineq_patters_of_n_s_simple_fin_Pre}
\prod_{s=1}^M\binom{M-n_1-2n_2-\cdots - (s-1) n_{s-1}}{s n_s} \frac{(sn_s)!}{(s!)^{n_s} n_s!} s^{sn_s}  
\le \frac{e^{M+6q_0}  M! }{(q_0^2/M)^{q_0} }  ,
\end{align}
where we use $\sum_s n_s=q_0$ and $\sum_s s n_s=M$. 
By considering all the combinations of $\{n_{1},n_2,n_3,\ldots,n_M\}$, we obtain the upper bound on the patterns of subgraphs with fixed $M$ and $q_0$ as $p(M)e^{M+6q_0}  M!/(q_0^2/M)^{q_0}$. Here, the function $p(M)$ is the partition function (in number theory), representing the number of possible partitions of a non-negative integer $M$. Clearly, the non-negative integer solution $\{n_{1},n_2,n_3,\ldots,n_M\}$ is smaller than $p(M)$. Then, by using the upper bound 
\begin{equation}\label{p_m_ra}
\begin{aligned}[b]
p(M)\leq M^{M}2^{-M+5}/M!,
\end{aligned}
\end{equation}
we complete the proof with noticing $M=m-m_{0}$ by definition. To see the inequality \eqref{p_m_ra} , we first invoke a well-known upper bound $p(M)\leq e^{\pi \sqrt{2M/3}}$ \cite{de2009simple}. It is easy to show that $e^{\pi \sqrt{2M/3}}\leq M^{M}2^{-M+5}/M!$ for $M>70$. Then, for $M\leq 70$, we verify Eq.\,\eqref{p_m_ra} by numerical calculation to finish the proof.

In the case of $q_0=0$, the number of the graph patterns (except for the subgraph $\mathsf{G}_{0}$) is trivially equal to $1$, and hence by noticing $(q_0^2/M)^{q_0}=1$ for $q_0=0$, the above upper bound can also be applied to the case of $q_0=0$. 

\end{proof}
{~}

\hrulefill{\bf [ End of Proof of Lemma~\ref{sublemma_clustersum_2}]}

{~}

\begin{lemma}\label{lemma_n_s_M_Q}
Let $\{n_{s}\}_{s=1}^{M}$ be a set of integers satisfying $\sum_{s=1}^M s n_s  = M$ and $\sum_{s=1}^M n_s  = q_{0}>0$, then the following inequality holds true:
\begin{equation}\label{result_lemma_n_s_M_Q}
\begin{aligned}[b]
\prod_{s=1}^{M}n_{s}^{-n_{s}}\leq \qty(\frac{q_{0}^{2}}{M})^{-q_{0}}e^{5q_{0}}.
\end{aligned}
\end{equation}
Here, we have already adopted the convention $0^{0}=1$.
\end{lemma}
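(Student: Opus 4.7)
The plan is to reformulate the inequality as an entropy bound and then invoke the maximum-entropy property of the geometric distribution on positive integers. Specifically, introduce the probability distribution $p_s \coloneq n_s/q_0$ on $\{1,2,\ldots,M\}$, which by hypothesis satisfies $\sum_s p_s = 1$ and has mean $\mu \coloneq \sum_s s\, p_s = M/q_0 \geq 1$. A direct calculation (using the convention $0\log 0 = 0$, consistent with $0^0=1$) gives
\begin{equation*}
\prod_{s=1}^M n_s^{-n_s} = \exp\!\Bigl(-\sum_s q_0 p_s \log(q_0 p_s)\Bigr) = q_0^{-q_0} \exp\bigl(q_0 H(\{p_s\})\bigr),
\end{equation*}
where $H(\{p_s\}) \coloneq -\sum_s p_s \log p_s$ is the Shannon entropy. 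The task therefore reduces to bounding $H(\{p_s\})$ from above in terms of $\mu$.

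Next I would invoke the classical fact that among all probability distributions on the positive integers with prescribed mean $\mu \geq 1$, the geometric distribution $\tilde p_s = (1/\mu)\bigl((\mu-1)/\mu\bigr)^{s-1}$ uniquely maximizes the entropy (this follows from a standard Lagrange multiplier / Gibbs inequality argument). A direct computation yields
\begin{equation*}
H(\{\tilde p_s\}) = \log \mu + (\mu-1)\log\!\Bigl(\tfrac{\mu}{\mu-1}\Bigr),
\end{equation*}
and since the map $t \mapsto t^{-1}\log(1+t)$ is bounded by $1$ on $(0,\infty)$ (setting $t=1/(\mu-1)$), one obtains the clean estimate $H(\{p_s\}) \leq \log\mu + 1$ for all $\mu \geq 1$, with the limiting case $\mu = 1$ giving $H = 0$.

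Combining these steps yields
\begin{equation*}
\prod_{s=1}^M n_s^{-n_s} \leq q_0^{-q_0} e^{q_0(\log\mu + 1)} = \Bigl(\frac{M}{q_0^2}\Bigr)^{q_0} e^{q_0} = \Bigl(\frac{q_0^2}{M}\Bigr)^{-q_0} e^{q_0},
\end{equation*}
which is actually stronger than \eqref{result_lemma_n_s_M_Q} (the constant $e^{5q_0}$ on the right-hand side leaves a factor of $e^{4q_0}$ of slack). The main conceptual step is recognizing the inequality as an entropy statement; the only nontrivial technical input is the geometric maximum-entropy bound, which is standard. No step presents a real obstacle, and the proof is essentially a few lines once the reformulation is in place.
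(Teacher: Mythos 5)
Your proposal is correct, and it takes a genuinely different route from the paper. The paper's proof proceeds by a dyadic decomposition of the index range into geometric intervals $I_j=[2^{j-1}M/q_0,\,2^{j}M/q_0)$, bounds the mass $q_j=\sum_{s\in I_j}n_s$ in each shell by $q_0/2^{j-1}$, applies Jensen's inequality to $x\ln x$ within each shell, and then controls the two resulting products $\prod_j q_j^{q_j}$ and $\prod_j s_j^{q_j}$ separately; the bookkeeping of the three error factors is what produces the constant $e^{5q_0}$. You instead observe that $\prod_s n_s^{-n_s}=q_0^{-q_0}e^{q_0 H(\{p_s\})}$ with $p_s=n_s/q_0$, and reduce everything to the single standard fact that the geometric distribution maximizes Shannon entropy among distributions on the positive integers with prescribed mean $\mu=M/q_0\ge 1$ (the Gibbs-inequality argument works because $\log\tilde p_s$ is affine in $s$, so $-\sum_s p_s\log\tilde p_s$ depends on $p$ only through its mean); together with $(\mu-1)\log\bigl(\mu/(\mu-1)\bigr)\le 1$ this gives $H\le\log\mu+1$ and hence the bound with constant $e^{q_0}$, strictly sharper than the paper's $e^{5q_0}$. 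Your argument is shorter, needs only non-negativity of the $n_s$ (not integrality), and handles the degenerate case $\mu=1$ correctly since then $H=0$. The only external input is the maximum-entropy characterization, which is standard and could be made self-contained in two lines via Gibbs' inequality, whereas the paper's argument is fully elementary but longer. Either proof suffices for the downstream use in Lemma~\ref{sublemma_clustersum_2}, where only the order $e^{\mathcal{O}(q_0)}$ of the constant matters.
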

\begin{proof}

For the proof, we first define $s_j = 2^{j} M/q_0$ for $j\geq 1$ and $I_j=[s_{j-1},s_j)$. For $j=0$, we denote $s_0=1$ to exclude the cases where $n_{s}=0$, since they do not contribute to the LHS of Eq.\,\eqref{result_lemma_n_s_M_Q}.
Based on these notations, we define
$
q_j := \sum_{s\in I_j} n_s . 
$
Note that $q_{j}$ is nonzero only for finite range of $j$ by construction (and $M\geq q_{0}$). We simply set $q_{j}=0$ if $M< \inf I_{j}=s_{j-1}$ and therefore we extend the range of $j$ to all positive integers.
Then, because of the condition $\sum_{s\in I_j} s n_s \le M$, we have for $j\ge 2$
\begin{align}
\label{upper_bound_q/j_wefe}
&M \ge \sum_{s\in I_j} s n_s \ge s_{j-1} \sum_{s\in I_j} n_s = 2^{j-1} Mq_j /q_0 \quad   \implies \quad q_j \le \frac{q_0}{2^{j-1}}  , 
\end{align}
where the above upper bound for $q_j$ is trivially applied to $q_1$ since $q_1\le q_0$.  
Also, for arbitrary $s\in I_j$, the lower bound for $\prod_{s\in I_j} n_s^{n_s}$ is given by 
\begin{align}
\label{s_in_l_j_n_s_ns_}
\prod_{s\in I_j} n_s^{n_s} \geq \qty(\frac{q_{j}}{|\mathbb{N}\cap I_{j}|})^{q_{j}} \ge  \left( \frac{q_j}{s_j-s_{j-1}+1} \right)^{q_j} \ge \left( \frac{q_j}{s_j} \right)^{q_j}.
\end{align}
The first inequality is a direct consequence of Jensen's inequality for the convex function $f(x) = x \ln x$, with $q_j = \sum_{s \in I_j} n_s$  and where $|\mathbb{N}\cap I_{j}|$ being the number of total arguments $n_{s}$ involved in the estimation. 
By using the Eq.\,\eqref{s_in_l_j_n_s_ns_}, we obtain 
\begin{align}
\label{desired_lower_bound/_preform}
\prod_{s=1}^m n_s^{n_s} \ge  \prod_{j= 1}^{\infty} \left( \frac{q_j}{s_j} \right)^{q_j},
\end{align}
where in the production on RHS, we have set $0^{0}=1$ to cover the case $q_{j}=0$.  

In estimating the lower bound of $\prod_{j=1}^{\infty} q_j^{q_j}$ under the condition $\sum_{j=1}^{\infty} q_j = q_0$,  
we first let $q_j=c_j q_0$ and obtain 
\begin{align}
\label{q_j_q_j_j_up}
\prod_{j= 1}^{\infty} q_j^{q_j} =q_0^{q_0}  \prod_{j= 1}^{\infty} e^{c_j q_0 \ln c_j} , 
\end{align}
where we use $\sum_{j=1}^{\infty} q_j=q_0$. 
From Eq.\,\eqref{upper_bound_q/j_wefe}, we have $c_j \le 2^{-j+1}$. Also, 
$x\ln (x)$ is minimized for $x=1/e$ and monotonically decreasing for $0\le x\le 1/e$.
Hence, because of $0\le c_j< 1/e$ for $j \ge 3$, we obtain (note that in such a case, $x\log x$ is negative for $x=2^{-j+1}$)  
\begin{align}
\sum_j c_j  \ln c_j \ge 
2\inf_{x\in[0,1]} (x\log x) +  \sum_{j=3}^\infty c_j  \ln c_j  \ge 
-\frac{2}{e} + \sum_{j=3}^\infty 2^{-j+1} \ln (2^{-j+1})=-\frac{2}{e} + \ln (2) \sum_{j=3}^\infty (-j+1) 2^{-j+1}   \ge -2 , 
\end{align}
which reduces Eq.~\eqref{q_j_q_j_j_up} to 
\begin{align}
\label{q_j_q_j_j_up_fin}
\prod_{j= 1}^{\infty} q_j^{q_j} \ge q_0^{q_0} e^{-2 q_0} , 
\end{align}

On the other hand, for $\prod_{j= 1}^{\infty} s_j^{q_j}$, we first obtain
\begin{align}
\label{upper_bound_s_j_q_j_1}
\prod_{j= 1}^{\infty}s_j^{q_j} =\prod_{j= 1}^{\infty}   ( 2^{j} M/ q_0)^{q_j} =
(M/q_0)^{q_0}   2^{\sum_{j=1}^{\infty} j q_j}  .
\end{align}
By using the Eq.\,\eqref{upper_bound_q/j_wefe}, we have 
$\sum_{j=1}^{\infty}  j q_j \le  \sum_{j=1}^{\infty}j q_0/2^{j-1} = 4q_0$,
where we use $\sum_{j=1}^{\infty}j2^{-j+1}=4$. 
Then from Eq.\,\eqref{upper_bound_s_j_q_j_1}, we have 
\begin{align}
\label{upper_bound_s_j_q_j_3}
\prod_{j= 1}^{\infty}s_j^{q_j}  \le
(M/q_0)^{q_0}   2^{4q_0}  .
\end{align}
By applying the equations~\eqref{q_j_q_j_j_up_fin} and~\eqref{upper_bound_s_j_q_j_3} to \eqref{desired_lower_bound/_preform}, we have 
\begin{align}
\label{desired_lower_bound/_preform_2}
\prod_{s=1}^m n_s^{n_s} \ge (q_0^2/M)^{q_0}  2^{-4q_0} e^{-2 q_0}
\ge  (q_0^2/M)^{q_0} e^{-5q_0} .
\end{align}
This completes the proof. 

\end{proof}
{~}

\hrulefill{\bf [ End of Proof of Lemma~\ref{lemma_n_s_M_Q}]}

{~}

Then by Lemma \ref{lemma_cluster_sum} we move on to
\begin{proof}[Proof of Proposition.\,\ref{pro_cluster_sum}]
The primary strategy for estimation is to decompose $G \in \mathcal{G}_{m}^{E}(L)$ into distinct layers based on their distance from $L$. 
Specifically, we write $G\oplus L = X_{0}\oplus X_{1}\oplus X_{2}\oplus \ldots\oplus X_{l}$ with $X_{0} = L$ and $X_{j} \coloneqq \{Z \in G \colon \operatorname{dist}(Z,L) = j\}$ for $j \geq 1$, where the multiset $X_j$ represents the collection of all elements in $G$ whose multiset distance from $L$ equals $j$. Here,
for any two elements $Z, Z' \in G$, their ``multiset distance" $\operatorname{dist}(Z,Z')$ is defined as their graph distance in the dual associated graph of the multiset $G$, which is, the length of the shortest path connecting the corresponding vertices of $Z$ and $Z'$. We use $l \in \{1,2,\ldots,m\}$ to denote the total number of layers.
For illustration, consider the example in Fig.\,\ref{layers_example} (a) and (b), where we have the decomposition:
\begin{equation}\label{sdecomlayer}
\begin{aligned}[b]
X_{1}=\{Z_{1},Z_{2},Z_{3},Z_{4}\}, X_{2}=\{Z_{5},Z_{6},Z_{7},Z_{10}\}, X_{3}=\{Z_{9},Z_{10}\}
\end{aligned}
\end{equation}
for $G = \{Z_{1}, \ldots, Z_{10}\}$.
We extend this notion of distance to multisets: for two multisets $G$ and $G'$, we write $\operatorname{dist}(G,G') = j_{0}$ if $\operatorname{dist}(Z,Z') = j_{0}$ for all $Z \in G$ and $Z' \in G'$. 
For later notational convenience, we also denote $q_{j} \coloneqq |X_{j}|$ as the size of the $j$-th layer and $n_{G} \equiv |\mathcal{P}(G)|$ as the shorthand for the number of multiset permutations over a given multiset $G$.
With these preparations established, we now proceed to the following estimation:
\allowdisplaybreaks[4]
\begin{align*}\label{propo_4}
&\sum_{G\in \mathcal{G}_{m}^{E}(L)}n_{G}\prod_{s=1}^{m}N(G_{s}|G\oplus L)\phi(G_{s})
\\= &\sum_{l=1}^{m}\sum_{\substack{q_{1},q_{2}...q_{l}\geq 1:\\
q_{1}+q_{2}...q_{l}=m}}\sum_{\substack{X_{1}\in \mathcal{G}_{q_{1}}^{E}(L):\\  \operatorname{dist}(X_{1},L)=1}}\sum_{\substack{X_{2}\in \mathcal{G}_{q_{2}}^{E}(L\oplus X_{1}):\\ \operatorname{dist}(X_{2},L)=2}}...\sum_{\substack{X_{l}\in \mathcal{G}_{q_{l}}^{E}(L\oplus X_{1}\oplus X_{2}\oplus...\oplus X_{l-1}):\\ \operatorname{dist}(X_{l},L)=l}}n_{G}\prod_{s=1}^{m}N(G_{s}|G\oplus L)\phi( G_{s})
\\\leq &\sum_{l=1}^{m}\sum_{\substack{q_{1},q_{2}...q_{l}\geq 1:\\
q_{1}+q_{2}...q_{l}=m}}\sum_{X_{1}\in \mathcal{I}^{E}_{q_{1}}(L)}\sum_{X_{2}\in \mathcal{I}^{E}_{q_{2}}(X_{1})}...\sum_{X_{l}\in \mathcal{I}^{E}_{q_{l}}(X_{l-1})}n_{G}\prod_{s=1}^{m}N(G_{s}|G\oplus L)\phi(G_{s})
\\= &\sum_{l=1}^{m}\sum_{\substack{q_{1},q_{2}...q_{l}\geq 1:\\
q_{1}+q_{2}...q_{l}=m}}\sum_{X_{1}\in \mathcal{I}^{E}_{q_{1}}(L)}\sum_{X_{2}\in \mathcal{I}^{E}_{q_{2}}(X_{1})}...\sum_{X_{l}\in \mathcal{I}^{E}_{q_{l}}(X_{l-1})}\frac{n_{G}}{n_{X_{1}}n_{X_{2}}...n_{X_{l}}}\cdot n_{X_{1}}n_{X_{2}}...n_{X_{l}}\prod_{s=1}^{m}N(G_{s}|G\oplus L)\phi(G_{s})
\\= &m!\sum_{l=1}^{m}\sum_{\substack{q_{1},q_{2}...q_{l}\geq 1:\\
q_{1}+q_{2}...q_{l}=m}}(q_{1}!q_{2}!...q_{l}!)^{-1}\sum_{X_{1}\in \mathcal{I}^{E}_{q_{1}}(L)}\sum_{X_{2}\in \mathcal{I}^{E}_{q_{2}}(X_{1})}...\sum_{X_{l}\in \mathcal{I}^{E}_{q_{l}}(X_{l-1})} n_{X_{1}}n_{X_{2}}...n_{X_{l}}\prod_{s=1}^{m}N(G_{s}|G\oplus L)\phi(G_{s})
\\= &m!\sum_{l=1}^{m}\sum_{\substack{q_{1},q_{2}...q_{l}\geq 1:\\
q_{1}+q_{2}...q_{l}=m}}(q_{1}!q_{2}!...q_{l}!)^{-1}
\sum_{X_{1}\in \mathcal{I}^{E}_{q_{1}}(X_{0})}\sum_{X_{2}\in \mathcal{I}^{E}_{q_{2}}(X_{1})}...\sum_{X_{l}\in \mathcal{I}^{E}_{q_{l}}(X_{l-1})}
 n_{X_{1
}}\prod_{s_{1}=1}^{q_{1}}N(G_{s_{1}}^{(1)}|X_{1}\oplus X_{0} \oplus X_{2})\phi(G^{(1)}_{s_{1}})
\\&\quad\times n_{X_{2}}\prod_{s_{2}=1}^{q_{2}}N(G_{s_{2}}^{(2)}|X_{2}\oplus X_{1}\oplus X_{3})\phi(G^{(2)}_{s_{2}})... 
 \times n_{X_{l}}\prod_{s_{l}=1}^{q_{l}}N(G_{s_{l}}^{(l)}|X_{l}\oplus X_{l-1})\phi(G^{(l)}_{s_{l}}).
\refstepcounter{equation}\tag{\theequation}
\end{align*}
Here, in the second line, we decompose all multisets into shells based on their multiset distances. For the fifth line, we used the relation $n_G/(n_{X_1}n_{X_2} \cdots n_{X_l}) = m!/(q_1!q_2! \cdots q_l!)$. This can be interpreted as counting the number of ways to interleave $l$ ordered sequences with fixed lengths $q_1, q_2, \ldots, q_l$ into a longer sequence while preserving their respective internal orders. In the last line of Eq.\,\eqref{propo_4}, we use $G_{s_{p}}^{(p)}$ to represent the $s_{p}$-th element of the multiset $X_{p}$ under some fixed ordering, where $p=1,2,...,m$ and $s_{p}=1,2,...,q_{p}$. We also noticed that for $N(G_{s_{j}}^{(j)}|X_{j}\oplus X_{j-1}\oplus X_{j+1})$ for $j\in [l-1], s_{j}\in [q_{j}]$ and $N(G_{s_{l}}^{(l)}|X_{l}\oplus X_{l-1})$ for $s_{l}\in [q_{l}]$, since only the elements in the nearest-neighbor shells of a given $G_{s}\in G$ contribute to the quantity $N(G_{s}|G\oplus L)$.

Then we denote $\mathcal{M}_{q}(E)\coloneq \{G\in \mathcal{M}(E)\colon |G|=q\}$ as the collection of multisets with $q$ elements to have the following estimation
\allowdisplaybreaks[4]
\begin{align*}\label{prop_4_2}
&\sum_{X_{1}\in \mathcal{I}^{E}_{q_{1}}(X_{0})}\sum_{X_{2}\in \mathcal{I}^{E}_{q_{2}}(X_{1})}...\sum_{X_{l}\in \mathcal{I}^{E}_{q_{l}}(X_{l-1})}
n_{X_{1
}}\prod_{s_{1}=1}^{q_{1}}N(G_{s_{1}}^{(1)}|X_{1}\oplus X_{0} \oplus X_{2})\phi(G^{(1)}_{s_{1}})
\times  n_{X_{2}}\prod_{s_{2}=1}^{q_{2}}N(G_{s_{2}}^{(2)}|X_{2}\oplus X_{1}\oplus X_{3})\phi(G^{(2)}_{s_{2}})... 
\\&\quad \times n_{X_{l}}\prod_{s_{l}=1}^{q_{l}}N(G_{s_{l}}^{(l)}|X_{l}\oplus X_{l-1})\phi(G^{(l)}_{s_{l}})
\\&\leq \sup_{Y_{2}\in \mathcal{M}_{q_{2}}(E)}\sum_{X_{1}\in I^{E}_{q_{1}}(X_{0}\oplus Y_{2})}n_{X_{1
}}\prod_{s_{1}=1}^{q_{1}}N(G_{s_{1}}^{(1)}|X_{1}\oplus X_{0} \oplus Y_{2})\phi(G^{(1)}_{s_{1}})
\sup_{\substack{Y_{1}\in \mathcal{M}_{q_{1}}(E),\\Y_{3}\in \mathcal{M}_{q_{3}}(E)}}\sum_{X_{2}\in \mathcal{I}^{E}_{q_{2}}(X_{1}\oplus X_{3})}n_{X_{2}}
\\&\quad \times\prod_{s_{2}=1}^{q_{2}}N(G_{s_{2}}^{(2)}|X_{2}\oplus Y_{1}\oplus Y_{3})\phi(G^{(2)}_{s_{2}})...
\times \sup_{Y_{l-1}\in \mathcal{M}_{q_{l-1}}(E)}\sum_{X_{l}\in \mathcal{I}^{E}_{q_{l}}(X_{l-1})}n_{X_{l}}\prod_{s_{l}=1}^{q_{l}}N(G_{s_{l}}^{(l)}|X_{l}\oplus Y_{l-1})\phi(G^{(l)}_{s_{l}})
\\&\leq 2^{5}(4\gamma k)^{q_{1}}q_{1}^{q_{1}}c^{|L|/k+1+q_{2}}\times 2^{5}(4\gamma k)^{q_{2}}q_{2}^{q_{2}}c^{q_{1}+q_{3}}...\times 2^{5}(4\gamma k)^{q_{l-1}}q_{l-1}^{q_{l-1}}c^{q_{l-2}+q_{l}}\times 2^{5}(4\gamma k)^{q_{l}}q_{l}^{q_{l}}c^{q_{l-1}}
\\&\leq 32^{l}c^{|L|/k+1}(4\gamma k)^{q_{1}+q_{2}+...+q_{l}}q_{1}^{q_{1}}q_{2}^{q_{2}}...q_{l}^{q_{l}}c^{q_{1}+2(q_{2}+q_{3}+...q_{l-1})+q_{l}}.
\refstepcounter{equation}\tag{\theequation}
\end{align*}
Here, we have introduced Lemma \ref{lemma_cluster_sum} and let $c\coloneq e^{405}$. We also used $|L|/k+1$ elements in $E$ to cover the vertex subset $L$. Then by putting Eq.\,\eqref{prop_4_2} into Eq.\,\eqref{propo_4} we obtain the following 
\allowdisplaybreaks[4]
\begin{align*}\label{}
&\sum_{G\in \mathcal{G}_{m}^{E}(L)}n_{G}\prod_{s=1}^{m}N(G_{s}|G\oplus L)\phi(G_{s})
\\\leq&
m!\sum_{l=1}^{m}\sum_{\substack{q_{1},q_{2}...q_{l}\geq 1:\\
q_{1}+q_{2}...q_{l}=m}}(q_{1}!q_{2}!...q_{l}!)^{-1} 32^{l}c^{|L|/k+1}(4\gamma k)^{q_{1}+q_{2}+...+q_{l}}q_{1}^{q_{1}}q_{2}^{q_{2}}...q_{l}^{q_{l}}c^{q_{1}+2(q_{2}+q_{3}+...q_{l-1})+q_{l}}
\\\leq&
m!\sum_{l=1}^{m}\sum_{\substack{q_{1},q_{2}...q_{l}\geq 1:\\
q_{1}+q_{2}...q_{l}=m}} 32^{l}c^{|L|/k+1}(4\gamma k)^{q_{1}+q_{2}+...+q_{l}}c^{q_{1}+2(q_{2}+q_{3}+...q_{l-1})+q_{l}}e^{q_{1}+q_{2}+...+q_{l}}
\\\leq&
m!c^{|L|/k+1}\sum_{l=1}^{m}32^{l}\sum_{\substack{q_{1},q_{2}...q_{l}\geq 1:\\
q_{1}+q_{2}...q_{l}=m}} (4\gamma k e c^{2})^{m}
\\\leq&
m!c^{|L|/k+1}\sum_{l=1}^{m}\binom{m}{l} 32^{l}(4\gamma k e c^{2})^{m}= m! c^{|L|/k}\cdot c \cdot 33^{m} \cdot (4\gamma k e c^{2})^{m}\leq m! e^{405|L|/k} (e^{1321} \gamma k)^{m}
\refstepcounter{equation}\tag{\theequation}
\end{align*}
where we used $\sum_{\substack{q_{1},q_{2}...q_{l}\geq 1:\\
q_{1}+q_{2}...q_{l}=m}}1=\binom{m-1}{l-1}\leq \binom{m}{l}$.
Thus, we finish the proof for Proposition~\ref{pro_cluster_sum}.
\begin{figure}[h]
\centering                                                     
\includegraphics[width=\linewidth]{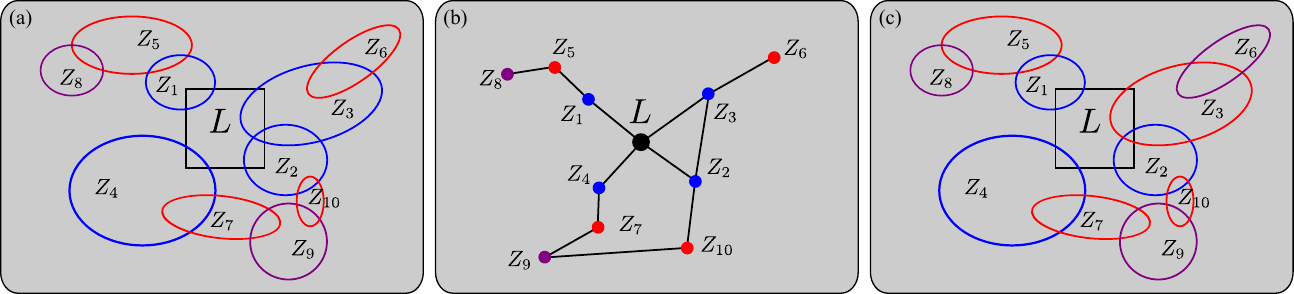}
\caption{Illustration of the decomposition of a multiset into shells.}
\label{layers_example}
\end{figure}
\end{proof}
{~}

\hrulefill{\bf [ End of Proof of Proposition~\ref{pro_cluster_sum}]}

{~}

\begin{lemma}\label{lemma_estimate_m_x}
For a multiset $G\in \mathcal{M}(E)$ and a sequence $w\in \mathcal{S}(G)$ equipped with a fixed vector $\vec{s}$, denote the number of totol bosonic operators at site $x$ offer by it as $m_{x}(w,\vec{s})\coloneq \mathsf{N}_{x}(h^{(s_{1})}_{w_{1}}h^{(s_{2})}_{w_{2}}...h^{(s_{m})}_{w_{m}})$. If for $i=1,2,...,m$ the inequalities $|w_{i}|\leq k$ and $\mathsf{N}_{x}(h^{(s_{i})}_{w_{i}})\leq \mathfrak{g}$ holds, then
\begin{equation}\label{}
\begin{aligned}[b]
\prod_{x\in V_{w}}m_{x}(w,\vec{s})^{\mu_{x}(w)}\leq \prod_{l=1}^{m} \mathfrak{g}^{k}N(w_{l}|w)^{k}.
\end{aligned}
\end{equation} 
Here, $\mu_{x}(w)\coloneq |\{Z\in w| Z\ni x\}|$ denotes the number of regions in the sequence $w$ that contains the site $x$ and $N(w_{l}|w)\coloneq |\{Z\in w| Z\cap w_{l}\neq \emptyset\}|$ is the number regions in the sequence $w$ that overlaps with a fixed region $w_{l}$.
\end{lemma}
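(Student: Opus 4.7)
The plan is to separate the estimate into two independent reductions: first trade the operator counts $m_x(w,\vec s)$ for the purely combinatorial quantity $\mu_x(w)$, and then bound $\prod_x \mu_x(w)^{\mu_x(w)}$ by $\prod_l N(w_l|w)^{k}$.

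For the first step I would note that $h_{w_{i}}^{(s_{i})}$ is supported on $w_{i}$, so only indices $i$ with $w_{i}\ni x$ contribute to $\mathsf{N}_{x}(h_{w_{1}}^{(s_{1})}\cdots h_{w_{m}}^{(s_{m})})$. Combined with the hypothesis $\mathsf{N}_{x}(h_{w_{i}}^{(s_{i})})\le g$, this gives $m_{x}(w,\vec s)\le g\,\mu_{x}(w)$. Applying the double-counting identity $\sum_{x\in V_{w}}\mu_{x}(w)=\sum_{i=1}^{m}|w_{i}|\le km$ yields the intermediate bound
\[
\prod_{x\in V_{w}}m_{x}(w,\vec s)^{\mu_{x}(w)}\le g^{km}\prod_{x\in V_{w}}\mu_{x}(w)^{\mu_{x}(w)}.
\]

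The key combinatorial observation for the second step is that, for any $w_{l}\in w$ and any site $x\in w_{l}$, every region $w_{j}\in w$ containing $x$ automatically satisfies $w_{j}\cap w_{l}\supseteq\{x\}\neq\emptyset$. Hence $\mu_{x}(w)\le N(w_{l}|w)$ whenever $x\in w_{l}$. Taking logarithms and swapping the order of summation via the same double-counting identity,
\[
\sum_{x\in V_{w}}\mu_{x}(w)\log\mu_{x}(w)=\sum_{l=1}^{m}\sum_{x\in w_{l}}\log\mu_{x}(w)\le \sum_{l=1}^{m}|w_{l}|\log N(w_{l}|w)\le k\sum_{l=1}^{m}\log N(w_{l}|w),
\]
where the last step uses $|w_{l}|\le k$. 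Exponentiating and combining with the first step gives the desired inequality.

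I do not anticipate any serious obstacle here: the entire argument is a double counting plus the trivial majorisation $\mu_{x}\le N(w_{l}|w)$ for $x\in w_{l}$. The only point worth checking is that $N(w_{l}|w)\ge 1$ always holds (since $w_{l}$ overlaps itself), so the logarithm is nonnegative and the chain of inequalities is well defined; also one must remember that the convention $0^{0}=1$ makes the factors with $\mu_{x}(w)=0$ harmless, since such $x$ do not actually belong to $V_{w}$.
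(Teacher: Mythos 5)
Your proof is correct and takes essentially the same route as the paper's: both hinge on the chain $m_x(w,\vec s)\le g\,\mu_x(w)\le g\,N(w_l|w)$ for $x\in w_l$ together with the double-counting regrouping $\prod_{l=1}^{m}\prod_{x\in w_l}(\cdot)=\prod_{x\in V_w}(\cdot)^{\mu_x(w)}$. The only cosmetic difference is that the paper passes through the inequality $(a_1+\cdots+a_k)^k\ge k^k a_1\cdots a_k$ where you bound each factor directly; note that both arguments implicitly use $g\ge 1$ (yours when writing $g^{\sum_x\mu_x(w)}\le g^{km}$, the paper's when padding that inequality with ones), which is harmless since $\mathsf{N}_x$ is an integer count.
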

\begin{proof}
For a fixed region $w_{i}\in w$, we have for a site $x\in w_{i}$ that
\begin{equation}\label{m_x_w_s}
\begin{aligned}[b]
m_{x}(w,\vec{s})=\sum_{l=1}^{m}\mathsf{N}_{x}(h_{w_{l}}^{(s_{l})})=\sum_{Z\in w: Z\in x}\mathsf{N}_{x}(h_{Z}^{(s_{Z})})\leq \sum_{Z\in w: Z\in x} \mathfrak{g}= \mathfrak{g}\mu_{x}(w)\leq  \mathfrak{g}N(w_{i}|w).
\end{aligned}
\end{equation}
Here, for a region $Z\in w$, the quantity $s_{Z}$ denotes the corresponding value of in the vector $\vec{s}$. Then we have
\begin{equation}\label{sum_x_m_x}
\begin{aligned}[b]
&\sum_{x\in w_{i}}m_{x}(w,\vec{s})\leq \sum_{x\in w_{i}} \mathfrak{g}N(w_{i}|w)\leq  \mathfrak{g}k N(w_{i}|w)
\\\implies & \qty[\sum_{x\in w_{i}}m_{x}(w,\vec{s})]^{k}\leq \qty( \mathfrak{g}k N(w_{i}|w))^{k}
\implies  k^{k}\prod_{x\in w_{i}}m_{x}(w,\vec{s})\leq \qty( \mathfrak{g}k N(w_{i}|w))^{k} \implies \prod_{x\in w_{i}}m_{x}(w,\vec{s}) \leq  \mathfrak{g}^{k}N(w_{i}|w)^{k}.
\end{aligned}
\end{equation}
In Eq.\,\eqref{sum_x_m_x}, we used $(a_{1}+a_{2}+...a_{n})^{n}\geq n^{n}a_{1}a_{2}...a_{n}$. Note that $|w_{i}|\leq k$, we set $a_{l}=m_{x_{l}}(w,\vec{s})$ for $l=1,2,...,|w_{i}|$ and $a_{l}=1$ for $l=|w_{i}|,|w_{i}|+1,...,k$ to utilize this inequality. Equation~\eqref{sum_x_m_x} directly leads us to
\begin{equation}\label{}
\begin{aligned}[b]
\prod_{l=1}^{m}   \mathfrak{g}^{k}N_{w_{l}|w}^{k} \geq \prod_{l=1}^{m}\prod_{x\in w_{l}}m_{x}(w,\vec{s})= \prod_{x\in V_{w}}m_{x}(w,\vec{s})^{\mu_{x}(w)},
\end{aligned}
\end{equation} 
which finishes the proof.
\end{proof}
{~}

\hrulefill{\bf [ End of Proof of Lemma~\ref{lemma_estimate_m_x}]}

{~}
\begin{remark}\label{remark_cluster_convergence}
Given the bound $m_{x}(w,\vec{s})\leq \mathfrak{g} \mu_{x}(w)$, one generally deduces that
\begin{equation}\label{}
\begin{aligned}[b]
\prod_{x\in V_{w}}m_{x}(w,\vec{s})^{m_{x}(w,\vec{s})}\leq \qty[\prod_{x\in V_{w}}m_{x}(w,\vec{s})^{\mu_{x}(w)}]^{ \mathfrak{g}}\leq \prod_{l=1}^{m} \mathfrak{g}^{k \mathfrak{g}}N(w_{l}|w)^{k \mathfrak{g}}.
\end{aligned}
\end{equation}
In Eq.\,\eqref{sum_t_2}, to bound the term $\prod_{x\in V_{w}}\qty[m_{x}(w,\vec{s})^{m_{x}(w,\vec{s})}]^{1/2}$ using Proposition \ref{pro_cluster_sum}, we must impose the condition
\begin{equation}\label{condition_cluster_convergence}
\begin{aligned}[b]
k \mathfrak{g}\leq 2.
\end{aligned}
\end{equation}
To see why this is necessary, consider an extreme case where $L=\{i,j\}$ with $i\neq j$, and a representative candidate multiset in $\mathcal{G}^{E}_{m}(L)$ consisting of $m$ identical copies of $\{i,j\}$. In this scenario, the typical summand 
\begin{equation}\label{}
\begin{aligned}[b]
\sum_{G \in \mathcal{G}^{E}_{m}(L)} |\mathcal{P}(G)| \prod_{s=1}^{m}\mathfrak{g}^{k\mathfrak{g}}N(G_{s}|G\oplus L)^{k\mathfrak{g}/2} \phi(G_{s})
\end{aligned}
\end{equation}
at least scales as $c^{m}m^{k\mathfrak{g}m/2}$ for large $m$, where $c=\mathcal{O}(1)$. Such scaling behavior leads to a divergence in the summation of Eq.\,\eqref{sum_t1_t2}, unless the condition \eqref{condition_cluster_convergence} is satisfied. This condition ensures that the growth factor $m^{k\mathfrak{g}m/2}$ is compensated by the combinatorial factor $m!$ (specifically, $t_{2}!$) in the denominator of Eq.\,\eqref{sum_t_2}.

Notably, for the long-range Bose-Hubbard model, the condition \eqref{condition_cluster_convergence} is naturally satisfied as $(k, \mathfrak{g})=(2,1)$.
\end{remark}

The following result ensures the absolute convergence of the interaction-picture expansion for a long-range bosonic system.
\begin{proposition}\label{pro_abs_boson}
For the long-range Bose-Hubbard model [cf.\,Eq.\,(\ref{slr_bose_hubbard})] defined on the finite lattice $V$.  Define the set $E \coloneqq \{Z \subset V \colon |Z| \leq k\}$ as the collection of all subsets of $V$ with cardinality at most $k$. 
Let $\mathcal{L}$ be a set of sequences whose elements are drawn from $E$, denote $V_{\mathcal{L}}\coloneq \bigcup_{w\in \mathcal{L}}V_{w}$ as the corresponding region.  
Define
\begin{equation}\label{}
\begin{aligned}[b]
F(w)\coloneq  \ii{0}{\beta}{\tau_{1}}\ii{0}{\tau_{1}}{\tau_{2}}...\ii{0}{\tau_{m-1}}{\tau_{m}} h_{w_{1}}(\tau_{1})h_{w_{2}}(\tau_{2})...h_{w_{m}}(\tau_{m}), \quad \bullet(\tau)\coloneq e^{\tau W_{V_{w}}}\bullet e^{-\tau W_{V_{w}}},
\end{aligned}
\end{equation} 
and then we have
\begin{equation}\label{pro_abs_boson_eq_1}
\begin{aligned}[b]
\norm{\sum_{w\in \mathcal{L}}e^{-\beta W_{V_{\mathcal{L}}}}F(w)}_{1}\leq \qty(\frac{c_{1}}{\sqrt{\beta}})^{|V_{\mathcal{L}}|}\frac{e^{c_{2}|V_{\mathcal{L}}|/k}}{1-c_{3}\sqrt{\beta}}< \infty
\end{aligned}
\end{equation}
above a threshold temperature $\beta<\beta_{c}$.
\end{proposition}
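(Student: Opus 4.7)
The plan is to apply the triangle inequality, factor the Boltzmann weight, estimate each term via Lemma~\ref{lem_trace_f_w}, and then resum using Proposition~\ref{pro_cluster_sum}. Since $W_{V_{\mathcal L}}=W_{V_w}+W_{V_{\mathcal L}\setminus V_w}$ splits over disjoint sites and $F(w)$ is supported on $V_w$, the trace norm factorizes as
\[
\|e^{-\beta W_{V_{\mathcal L}}}F(w)\|_1=\|e^{-\beta W_{V_w}}F(w)\|_1\cdot\operatorname{Tr}_{V_{\mathcal L}\setminus V_w}\!\qty(e^{-\beta W_{V_{\mathcal L}\setminus V_w}}).
\]
The complementary trace is bounded by $(C_0/\sqrt\beta)^{|V_{\mathcal L}|-|V_w|}$ via the standard high-temperature on-site estimate $\operatorname{Tr}_i(e^{-\beta W_i})\leq C_0/\sqrt\beta$ (used repeatedly throughout the paper), while Lemma~\ref{lem_trace_f_w} with $l=0$ controls the $V_w$ factor. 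Combining gives
\[
\|e^{-\beta W_{V_{\mathcal L}}}F(w)\|_1\leq\qty(\frac{C_1}{\sqrt\beta})^{|V_{\mathcal L}|}\frac{(C_2\sqrt\beta)^{|w|}}{|w|!}\sum_{\vec s}\prod_i|J_{w_i}^{(s_i)}|\prod_{x\in V_w}[m_x(w,\vec s)!]^{1/2},
\]
which already supplies the $|V_{\mathcal L}|$-dependent prefactor appearing in the target bound.

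Next I would tame the factorial product via Lemma~\ref{lemma_estimate_m_x}. The crude bound $m_x!\leq m_x^{m_x}$, followed by the lemma (which in the Bose-Hubbard setting with $k=2$ gives $\prod_x m_x^{m_x}\leq\prod_l N(w_l|w)^k$), yields $\prod_x[m_x!]^{1/2}\leq\prod_l N(w_l|w)$. Enlarging $\sum_{w\in\mathcal L}$ to all sequences supported in $V_{\mathcal L}$ and converting the sequence sum to a multiset sum via $|\mathcal S(G)|=|\mathcal P(G)|$ then reduces the problem, for each length $m$, to estimating
\[
S_m\coloneqq\sum_{G:\,V_G\subseteq V_{\mathcal L},\,|G|=m}|\mathcal P(G)|\prod_s\phi(G_s)N(G_s|G),\qquad\phi(Z)\coloneqq\sum_s|J_Z^{(s)}|.
\]

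The main technical point is to reconcile $S_m$ with Proposition~\ref{pro_cluster_sum}, whose hypothesis requires $G\oplus L$ to be a connected multiset for some anchor region $L$. My plan is to make the choice $L=V_{\mathcal L}$ directly: any $G$ with $V_G\subseteq V_{\mathcal L}$ has every element overlapping $V_{\mathcal L}$, so $G\oplus V_{\mathcal L}$ is automatically connected, giving the inclusion $\{G:V_G\subseteq V_{\mathcal L},\,|G|=m\}\subseteq\mathcal G_m^E(V_{\mathcal L})$ together with the monotonicity $N(G_s|G)\leq N(G_s|G\oplus V_{\mathcal L})$. The long-range condition~\eqref{slr_condition} supplies the hypothesis $\sum_{Z\ni i}\phi(Z)\leq g\nu$ with $\nu\coloneqq\sup_i\sum_j(1+d_{i,j})^{-\alpha}<\infty$ whenever $\alpha>D$, so Proposition~\ref{pro_cluster_sum} yields $S_m\leq C_3^{|V_{\mathcal L}|/k}(C_4 g\nu k)^m m!$. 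Substituting back, the $m!$ factors cancel and the sum over $m$ collapses to a geometric series $\sum_m(C_2 C_4 g\nu k\sqrt\beta)^m$, which converges precisely when $\beta$ lies below the threshold determined by $C_2 C_4 g\nu k\sqrt\beta<1$. Assembling the prefactors $(C_1/\sqrt\beta)^{|V_{\mathcal L}|}\cdot C_3^{|V_{\mathcal L}|/k}$ reproduces exactly the form~\eqref{pro_abs_boson_eq_1}. The delicate part is the identification of the correct anchor region, since a naive decomposition of $G$ into connected components would produce an overcount from anchor-site choices that is awkward to absorb; the observation that $V_{\mathcal L}$ itself serves as a universal anchor reduces the bookkeeping to a single application of Proposition~\ref{pro_cluster_sum}, and the remainder is routine.
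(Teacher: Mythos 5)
Your proposal is correct and follows essentially the same route as the paper's proof: triangle inequality, factorization of the trace norm into the $V_w$ part and its complement, the bound from Lemma~\ref{lem_trace_f_w} with $l=0$ combined with Lemma~\ref{lemma_estimate_m_x}, conversion of the sequence sum to a multiset sum, and a single application of Proposition~\ref{pro_cluster_sum} with anchor $L=V_{\mathcal{L}}$ followed by cancellation of the $m!$ and a geometric series. The "delicate" anchoring step you highlight is exactly what the paper does implicitly by summing over $G\in\mathcal{G}_{m}^{E}(V_{\mathcal{L}})$.
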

\begin{proof}
The proof mainly relies on Lemma~\ref{lem_trace_f_w}, 
\begin{equation}\label{}
\begin{aligned}[b]
\text{LHS of \eq{pro_abs_boson_eq_1}}&\leq \sum_{w\in \mathcal{L}}\norm{e^{-\beta W_{V_{\mathcal{L}}}}F(w)}_{1}=\sum_{w\in \mathcal{L}}\norm{e^{-\beta W_{V_{w}}}F(w)}_{1}\norm{e^{-\beta W_{V_{w}^{\cc}}}}_{1}
\\&=\sum_{m=0}^{\infty} \sum_{w\in \mathcal{L}: |w|=m}\norm{e^{-\beta W_{V_{w}}}F(w)}_{1}\norm{e^{-\beta W_{V_{w}^{\cc}}}}_{1}
\\&\leq \sum_{m=0}^{\infty} \sum_{w\in \mathcal{L}: |w|=m}(C_{0}\beta^{-1/2})^{|V_{w}^{\cc}|}(C_{1}\beta^{-1/2})^{|V_{w}|}\frac{(C_{2}\beta^{1/2})^{m}}{m!}\prod_{i=1}^{m}\qty(\sum_{s_{i}}|J_{w_{i}}^{(s_{i})}|)N(w_{i}|w)
\\&\leq (\max\{C_{0},C_{1}\}\beta^{-1/2})^{|V_{\mathcal{L}}|}\sum_{m=0}^{\infty}\frac{(C_{2}\beta^{1/2})^{m}}{m!}\sum_{G\in \mathcal{G}_{m}^{E}(V_{\mathcal{L}})}|\mathcal{P}(G)|\prod_{l=1}^{m}\qty(\sum_{s_{l}}|J_{w_{l}}^{(s_{l})}|)N(w_{l}|w)
\\&\leq (\max\{C_{0},C_{1}\}\beta^{-1/2})^{|V_{\mathcal{L}}|}\sum_{m=0}^{\infty}\frac{(C_{2}\beta^{1/2})^{m}}{m!}e^{c_{2}|V_{\mathcal{L}}|/k}(C_{3}\gamma k)^{m}m!,
\end{aligned}
\end{equation}
which clearly finishes the proof at sufficient high temperature $\beta<\beta_{c}$.
\end{proof}

{~}

\hrulefill{\bf [ End of Proof of Proposition~\ref{pro_abs_boson}]}

{~}

\begin{remark}
The finiteness for $\norm{\sum_{w\in \mathcal{L}}e^{-\beta W^{(+)}_{V_{\mathcal{L}}}}\widetilde{F}(w)}_{1}$ can be in the same manner under the help of Lemma \ref{lemma_trace_Ftil_w} if we set $O_{X}=\bigotimes_{i\in X}\id_{i}$ and $O_{Y}=\bigotimes_{i\in Y}\id_{i}$.
\end{remark}

\begin{lemma}[Lemma 1 of Ref.\,\cite{mann2024algorithmic}]\label{lemma_inclusion_exclusion}
Let $f$ be a function defined on the subsets of a finite set $E$, then
\begin{equation}\label{}
\begin{aligned}[b]
f(E) = \sum_{S \subseteq E} (-1)^{|S|} \sum_{T \subseteq S} (-1)^{|T|} f(T).
\end{aligned}
\end{equation}
\end{lemma}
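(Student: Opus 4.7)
The plan is to prove this identity by a straightforward interchange of summation order on the right-hand side, followed by an application of the elementary alternating-sum identity for subsets. Concretely, I would write
\begin{equation*}
\sum_{S \subseteq E} (-1)^{|S|} \sum_{T \subseteq S} (-1)^{|T|} f(T) = \sum_{T \subseteq E} (-1)^{|T|} f(T) \sum_{\substack{S \subseteq E \\ S \supseteq T}} (-1)^{|S|},
\end{equation*}
which is valid because the double sum is over the finite family of pairs $(T,S)$ with $T \subseteq S \subseteq E$.

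Next I would reparameterize the inner sum by writing every $S \supseteq T$ in $E$ uniquely as $S = T \sqcup U$ with $U \subseteq E \setminus T$. This gives $(-1)^{|S|} = (-1)^{|T|}(-1)^{|U|}$, so the inner sum factorizes as
\begin{equation*}
\sum_{\substack{S \subseteq E \\ S \supseteq T}} (-1)^{|S|} = (-1)^{|T|} \sum_{U \subseteq E \setminus T} (-1)^{|U|}.
\end{equation*}
The key elementary fact to invoke is that for any finite set $A$ one has $\sum_{U \subseteq A}(-1)^{|U|} = (1-1)^{|A|} = [A = \emptyset]$, via the binomial theorem. Applying this with $A = E \setminus T$, the inner sum vanishes unless $T = E$, in which case it equals $(-1)^{|E|}$.

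Putting these pieces together, only the term $T = E$ survives in the outer sum, contributing $(-1)^{|E|} f(E) \cdot (-1)^{|E|} = f(E)$, which is the desired identity. There is no real obstacle here; the argument is purely combinatorial and the only subtlety is keeping track of signs under the reparameterization $S = T \sqcup U$, which cleanly produces the Kronecker-delta behavior that collapses the double sum to a single term.
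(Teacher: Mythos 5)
Your proof is correct. The paper itself does not prove this lemma --- it imports it as Lemma~1 of Ref.\,\cite{mann2024algorithmic} --- but your argument is exactly the natural one: swap the order of summation and evaluate the inner alternating sum over supersets of $T$. That inner evaluation, $\sum_{S:\,T\subseteq S\subseteq E}(-1)^{|S|}=(-1)^{|E|}$ if $T=E$ and $0$ otherwise, is precisely the content of the paper's Lemma~\ref{lemma_inclusion_exclusion_2}, which is proved there by the same reparameterization $S=T\sqcup U$ and the binomial identity $\sum_{U\subseteq A}(-1)^{|U|}=(1-1)^{|A|}$. Your proposal is therefore a complete, self-contained proof assembled from the same ingredients the paper uses, and the sign bookkeeping is handled correctly.
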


\begin{lemma}\label{lemma_inclusion_exclusion_2}
Let $E$ be a finite set and let $S \subseteq E$, then
\begin{equation}\label{}
\begin{aligned}[b]
\sum_{T: E\supseteq T \supseteq S}(-1)^{|T|}
=
\begin{cases}
\begin{aligned}
&(-1)^{|E|}\, && S=E 
\\     
&0\, && S\subsetneq E
\end{aligned}
\end{cases}
\end{aligned}
\end{equation}  
\end{lemma}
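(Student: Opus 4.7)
The plan is to reduce the sum to the standard binomial identity $\sum_{R \subseteq F}(-1)^{|R|} = [F = \emptyset]$ by a change of summation variable. First I would parametrize the sets $T$ with $S \subseteq T \subseteq E$ via their ``new'' part relative to $S$, writing $T = S \sqcup R$ with $R \subseteq E \setminus S$. Because the union is disjoint, $|T| = |S| + |R|$, and the correspondence $T \leftrightarrow R$ is a bijection between $\{T : E \supseteq T \supseteq S\}$ and the power set of $E \setminus S$.

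After this substitution the sum factors cleanly as
\begin{equation*}
\sum_{T: E\supseteq T \supseteq S}(-1)^{|T|} = (-1)^{|S|}\sum_{R\subseteq E\setminus S}(-1)^{|R|}.
\end{equation*}
The inner sum is evaluated by grouping by cardinality, giving $\sum_{k=0}^{|E\setminus S|}\binom{|E\setminus S|}{k}(-1)^k = (1-1)^{|E\setminus S|}$, which is $0$ whenever $|E\setminus S|\geq 1$ and $1$ when $E\setminus S=\emptyset$.

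Splitting into the two cases of the statement then finishes the argument: if $S\subsetneq E$ then $|E\setminus S|\geq 1$ so the factorized expression is $0$, while if $S=E$ the inner sum equals $1$ and the prefactor $(-1)^{|S|}$ gives $(-1)^{|E|}$. There is no real obstacle here, as the entire identity is a one-line consequence of the binomial theorem applied at $x=-1$; the only thing to be careful about is the disjointness of $S$ and $R$ in the decomposition, which is what makes $|T|=|S|+|R|$ and licenses pulling $(-1)^{|S|}$ out of the sum.
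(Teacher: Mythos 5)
Your proposal is correct and follows essentially the same route as the paper's own proof: re-indexing the sum over $T$ by $R = T \setminus S \subseteq E\setminus S$, pulling out the factor $(-1)^{|S|}$, and evaluating the remaining alternating sum via $(1-1)^{|E\setminus S|}$. No gaps.
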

\begin{proof}
We consider
\begin{equation}\label{in_ex_lemma_first}
\begin{aligned}[b]
\sum_{T: E\supseteq T \supseteq S}(-1)^{|T|}=\sum_{T: E\supseteq T \supseteq S}(-1)^{|T\backslash S|}=(-1)^{|S|}\sum_{T: E\supseteq T \supseteq S}(-1)^{|T\backslash S|}=(-1)^{|S|}\sum_{T' \subseteq E\backslash S}(-1)^{|T'|}.
\end{aligned}
\end{equation}
Note that we always have, 
\begin{equation}\label{}
\begin{aligned}[b]
\sum_{A\subseteq B}a^{|A|}b^{|B|-|A|}=\sum_{m=0}^{|B|}\binom{|B|}{m}a^{m}b^{n-m}=(a+b)^{|B|},
\end{aligned}
\end{equation}
which leads us to the following if we set $(a,b)=(-1,1)$:
\begin{equation}\label{sum_A_subseteq_B}
\begin{aligned}[b]
\sum_{A\subseteq B}(-1)^{|A|}=\sum_{A\subseteq B}(-1)^{|A|}1^{|B|-|A|}=(1-1)^{|B|}=0.
\end{aligned}
\end{equation}
By using Eq.\,\eqref{sum_A_subseteq_B} we notice that
\begin{equation}\label{}
\begin{aligned}[b]
\sum_{T' \subseteq E\backslash S}(-1)^{|T'|} = \begin{cases}
\begin{aligned}
&0\, && S\subset E 
\\     
&1\, && S=E ,
\end{aligned}
\end{cases}
\end{aligned}
\end{equation}
which together with Eq.\,\eqref{in_ex_lemma_first} finishes the proof.
\end{proof}
{~}

\hrulefill{\bf [ End of Proof of Lemma~\ref{lemma_inclusion_exclusion}]}

{~}
\begin{remark}
Some related discussion can be found in Ref.\,\cite{molnar2015approximating}.
\end{remark}

\begin{lemma}\label{lem_series_sum_estimate}
Let $0<a\leq a_{0}$, $b\in \mathbb{R}$ and $q,p\in \mathbb{N}_{\geq 1}$, then the following inequalities hold
\begin{subequations}\label{}
\begin{align}
& \sum_{n=0}^{q}\exp[\max_{m\in \{n-p..n+p\}}(-am^{2}+bm)]\bigg/\sum_{n=0}^{q}e^{-an^{2}+bn}\leq c_{1}^{p}\label{lem_series_ineq_1}\\
& \sum_{n=0}^{q}(n+p)^{p}\exp[\max_{m\in \{n-p..n+p\}}(-am^{2}+bm)]\bigg/\sum_{n=0}^{q}e^{-an^{2}+bn}\leq c_{2}^{p}a^{-p/2}p! \label{lem_series_ineq_2}
\end{align}
\end{subequations}
with $\{n_{1}..n_{2}\}$ denoting the set of integers between $n_{1},n_{2}\in \mathbb{N}$. Here, the constants are given by 
\begin{subequations}\label{}
\begin{align*}
&c_{1}\coloneq \frac{4}{\sqrt{\pi}}e^{3b^{2}/4a} \qty[3\max\qty{\frac{\sqrt{2a_{0}}}{\erf(\sqrt{2a_{0}})},1}+ 2\sqrt{\pi}] \\
& c_{2}\coloneq \frac{4}{\sqrt{\pi}}e^{3b^{2}/4a}\qty[\qty(6ea_{0}^{1/2}+4e)\max\qty{\frac{\sqrt{2a_{0}}}{\erf(\sqrt{2a_{0}})},1}+(4\sqrt{2}+1)]   .
\end{align*}
\end{subequations}
\end{lemma}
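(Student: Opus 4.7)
The proof relies on completing the square: $-am^2 + bm = -a(m - \mu)^2 + b^2/(4a)$ with $\mu \coloneq b/(2a)$. This rewrites both numerator and denominator as Gaussian sums sharing the common prefactor $e^{b^2/(4a)}$, which partially cancels in the ratio. The proof then splits according to the position of $\mu$ relative to $[0, q]$ and uses Gaussian sum-to-integral comparisons on both sides.

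For the first inequality \eqref{lem_series_ineq_1}, I would first use $\exp(\max_m g(m)) \le \sum_m \exp(g(m))$ within each window $\{n-p..n+p\}$ and then interchange orders of summation to obtain
\begin{equation*}
\sum_{n=0}^{q} \exp\!\bigl[\max_{m \in \{n-p..n+p\}}(-am^2+bm)\bigr] \le (2p+1) \sum_{m \in \mathbb{Z}} e^{-am^2+bm} \le (2p+1)\,e^{b^2/(4a)}\bigl(1 + \sqrt{\pi/a}\bigr),
\end{equation*}
where the last step uses $\sum_{m\in\mathbb Z} e^{-a(m-\mu)^2} \le 1 + \sqrt{\pi/a}$, obtained by comparing the sum with $\int_0^\infty e^{-ax^2}\,dx$. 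For the denominator, I would exploit the monotonicity of $e^{-a(x-\mu)^2}$ on each side of $\mu$ to lower-bound the sum by $\tfrac{\sqrt\pi}{2\sqrt{a}}\, \operatorname{erf}(\sqrt{a}\,L)$ for an appropriate length $L$ depending on $\mu$ and $q$; using $a \le a_0$, the uniform bound $\sqrt{a}/\operatorname{erf}(\sqrt{a}\,L) \le \sqrt{2a_0}/\operatorname{erf}(\sqrt{2a_0})$ produces the corresponding factor in $c_1$.

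The crux is the case analysis on the position of $\mu$ relative to $[0, q]$. For $\mu \in [0, q]$ the two $e^{b^2/(4a)}$ prefactors cancel in the ratio and the residual $(2p+1)$-type polynomial growth is absorbed into $c_1^p$ since $c_1 \ge 8 > 3$. For $\mu \notin [0, q]$, the denominator concentrates at the nearest boundary of $[0,q]$ and incurs a penalty $e^{ad^2}$ with $d = \operatorname{dist}(\mu, [0,q]) \le |\mu|$, giving $e^{ad^2} \le e^{a\mu^2}= e^{b^2/(4a)}$; combined with the upper bound on the numerator and the unrestricted bound $\max_m e^{-am^2+bm}\le e^{b^2/(4a)}$, this yields the claimed $e^{3b^2/(4a)}$ in $c_1$. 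For the second inequality \eqref{lem_series_ineq_2}, the same strategy applies after converting the weight $(n+p)^p$ into a power of the summation variable via a shift (at the cost of an innocuous factor $2^p$); the problem then reduces to the $p$-th Gaussian moment estimate $\sum_{m\in\mathbb Z}|m|^p e^{-a(m-\mu)^2} \le C^p\,a^{-(p+1)/2}\,p!\,e^{b^2/(4a)}(1+|\mu|^p)$, which directly furnishes the factor $a^{-p/2}\,p!$.

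The main obstacle is maintaining uniform control of the constants across the full parameter range $a\in(0, a_0]$ and all positions of $\mu$ simultaneously. The delicate regime is small $a$ combined with $\mu$ outside $[0, q]$, where the numerator and denominator degenerate in different ways; only the combined prefactor $e^{3b^2/(4a)}$ together with the uniform lower bound in terms of $\operatorname{erf}(\sqrt{2a_0})$ just suffices to close the estimate. The polynomial-in-$p$ growth $(2p+1)$ is harmless, as $c_1, c_2 \ge 3$.
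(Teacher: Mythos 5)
There is a genuine gap in your argument for \eqref{lem_series_ineq_1}, and it propagates to \eqref{lem_series_ineq_2}. The step
\begin{equation*}
\sum_{n=0}^{q}\exp\Bigl[\max_{m\in\{n-p..n+p\}}(-am^{2}+bm)\Bigr]\;\le\;(2p+1)\sum_{m\in\mathbb{Z}}e^{-am^{2}+bm}\;\le\;(2p+1)\,e^{b^{2}/4a}\bigl(1+\sqrt{\pi/a}\bigr)
\end{equation*}
is valid, but it destroys the matching between the effective ranges of the numerator and denominator sums, and the resulting ratio bound acquires a factor of order $\sqrt{1/a}\big/\sum_{n=0}^{q}e^{-a(n-\mu)^{2}}$ that is \emph{not} controlled by $c_{1}^{p}$. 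Concretely, take $b=0$, $q=1$, $p=1$, $a\to0$: the true ratio is at most $2$, but your chain gives numerator bound $\approx 3\sqrt{\pi/a}\to\infty$ against denominator $\le 2$, while $c_{1}$ is independent of $a$ when $b=0$. The same failure occurs in your $\mu\notin[0,q]$ branch (e.g.\ $b=-a$, so $e^{3b^{2}/4a}\to1$ while your denominator lower bound degenerates to $O(1)$ and the numerator bound still carries $\sqrt{\pi/a}$). The point is that when $q\ll a^{-1/2}$ the sum over all of $\mathbb{Z}$ contains $\sim a^{-1/2}$ near-maximal terms that simply are not present in the actual numerator, which has only $q+1$ terms. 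Your uniform denominator bound $\sqrt{a}/\erf(\sqrt{a}L)\le\sqrt{2a_{0}}/\erf(\sqrt{2a_{0}})$ also does not follow from monotonicity of $x/\erf(x)$ unless $\sqrt{a}L\le\sqrt{2a_{0}}$, which fails for large $q$.

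The repair requires keeping both sums over ranges of the same length so that the sum-to-integral comparisons produce $\erf$ factors of comparable arguments whose ratio is at most $1$. The paper does this by bounding the max \emph{pointwise}, $\max_{m\in\{n-p..n+p\}}(-am^{2}/2)\le -a(n-p)^{2}/2$ for $n>p$ (and $\le 0$ for $n\le p$), after first decoupling the linear term via $-am^{2}+bm\le -am^{2}/2+b^{2}/2a$ in the numerator and $-an^{2}+bn\ge -2an^{2}-b^{2}/4a$ in the denominator (whence the $e^{3b^{2}/4a}$), then splitting on $q\le p$ versus $q>p$ and using $\erf(\sqrt{a/2}\,(q-p))\le\erf(\sqrt{2a}\,(q+1))$. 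If you want to keep your window-summation idea, you must at least restrict the extended sum to $m\in\{-p..q+p\}$ and then show that the $2p$ extra boundary terms cost only a factor $C^{p}$ relative to $\sum_{n=0}^{q}e^{-a(n-\mu)^{2}}$, which again needs a careful case analysis on where $\mu$ sits; as written, the estimate does not close.
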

\begin{proof}
For the first inequality, we first note that $-an^{2}+bn\geq -2an^{2}-b^{2}/4a$ and therefore
\begin{equation}\label{lem_series_ineq1_deno}
\begin{aligned}[b]
\sum_{n=0}^{q}e^{-an^{2}+bn}&\geq e^{-b^{2}/4a}\sum_{n=0}^{q}e^{-2an^{2}}\geq e^{-b^{2}/4a}\ii{0}{q+1}{x}e^{-2ax^{2}}=\frac{1}{2}e^{-b^{2}/4a}\sqrt{\frac{\pi}{2a}}\erf(\sqrt{2a}(q+1))
\\&\geq \frac{\sqrt{\pi}}{2}e^{-b^{2}/4a}(2a)^{-1/2}\erf(\sqrt{2a}),
\end{aligned}
\end{equation}
where we note that $q\geq 1$ and the error function $\erf(x)$ is increasing over $\mathbb{R}^{+}$.
Then for the numerator, we first have $-am^{2}+bm\leq -am^{2}/2+b^{2}/2a$ and obtain
\begin{equation}\label{lem_ser_ineq_1_num}
\begin{aligned}[b]
\sum_{n=0}^{q}\exp[\max_{m\in \{n-p..n+p\}}(-am^{2}+bm)]\leq e^{b^{2}/2a}\sum_{n=0}^{q}\exp[\max_{m\in \{n-p..n+p\}}(-am^{2}/2)].
\end{aligned}
\end{equation}
(\emph{i}) If $q\leq p$, then we always have $n-p\leq  0$ and therefore we we upper bound RHS of Eq.\,\eqref{lem_ser_ineq_1_num} by $e^{b^{2}/2a}\sum_{n=0}^{q}1=e^{b^{2}/2a}(q+1)\leq e^{b^{2}/2a}(p+1)$. In such a case, we have
\begin{equation}\label{lem_ser_ineq_1_num_case2}
\begin{aligned}[b]
\text{RHS of Eq.\,\eqref{lem_series_ineq_1}}\leq e^{3b^{2}/4a}\frac{2}{\sqrt{\pi}}\frac{\sqrt{2a}(p+1)}{\erf(\sqrt{2a})}\leq e^{3b^{2}/4a}\frac{2}{\sqrt{\pi}}\frac{\sqrt{2a_{0}}(p+1)}{\erf(\sqrt{2a_{0}})}\leq C_{1}^{p},\quad \quad q\leq p
\end{aligned}
\end{equation}
with the constant $C_{1}\coloneq 4\pi^{-1/2}e^{3b^{2}/4a}\max\{1,\sqrt{2a_{0}}/\erf(\sqrt{2a_{0}})\}$. (\emph{ii}) If $q>p$, we arrive at
\begin{equation}\label{}
\begin{aligned}[b]
\text{RHS of Eq.\,\eqref{lem_ser_ineq_1_num}}&\leq e^{b^{2}/2a}\qty(\sum_{n=0}^{p}1+\sum_{n=p}^{q}\exp[\max_{m\in \{n-p..n+p\}}(-am^{2}/2)])
\\&=e^{b^{2}/2a}\qty(p+1+\sum_{k=0}^{q-p}e^{-ak^{2}/2})\leq e^{b^{2}/2a}\qty[p+2+\sqrt{\frac{2\pi}{a}}\erf\qty(\sqrt{\frac{a}{2}}(q-p))],
\end{aligned}
\end{equation}
where we have used $\sum_{n=0}^{q}e^{-an^{2}}\leq 1+\ii{0}{q}{x}e^{-ax^{2}}$. In this case, by combining Eqs.\,\eqref{lem_series_ineq1_deno} and \eqref{lem_ser_ineq_1_num_case2} we obtain
\begin{equation}\label{lem_ser_ineql_proof_case2}
\begin{aligned}[b]
\text{RHS of Eq.\,\eqref{lem_series_ineq_1}}&\leq \frac{4}{\sqrt{\pi}}e^{3b^{2}/4a}\frac{p+2+\sqrt{\frac{2\pi}{a}}\erf(a(q-p)/2)}{(2a)^{-1/2}\erf(\sqrt{2a}(q+1))}\leq \frac{4}{\sqrt{\pi}}e^{3b^{2}/4a} \qty[\frac{\sqrt{2a}}{\erf(\sqrt{2a})}(p+2)+ 2\sqrt{\pi} \frac{\erf(\sqrt{a/2}(q-p))}{\erf(\sqrt{2a}(q+1))}]
\\&\leq \frac{4}{\sqrt{\pi}}e^{3b^{2}/4a} \qty[\frac{\sqrt{2a_{0}}}{\erf(\sqrt{2a_{0}})}3^{p}+ 2\sqrt{\pi}]\leq C_{2}^{p},
\end{aligned}
\end{equation}
where we note that $\sqrt{a/2}(q-p)\leq \sqrt{2a}(q+1)$ and leverage the monotonic increasing property of error function. We have also used the fact that  the function $x/\erf(x)$ is increasing [see Lemma \ref{lem_increase_function}] over $\mathbb{R}^{+}$ and defined $C_{2}\coloneq 4\pi^{-1/2}e^{3b^{2}/4a}\qty[3\max\{1,\sqrt{2a_{0}}/\erf(\sqrt{2a_{0}})\}+2\pi^{1/2}]$. By choosing $C_{3}\coloneq \max\{C_{1},C_{2}\}$ and combining Eqs.\,\eqref{lem_ser_ineq_1_num_case2} and \eqref{lem_ser_ineql_proof_case2} we finish the proof for the first inequality.

For the second inequality, the only task is the estimation of the numerator. By a similar analysis as above, we arrive at 
\begin{equation}\label{lem_ser_ineq_2_num}
\begin{aligned}[b]
\sum_{n=0}^{q}(n+p)^{p}\exp[\max_{m\in \{n-p..n+p\}}(-am^{2}+bm)]\leq e^{b^{2}/2a}\sum_{n=0}^{q}(n+p)^{p}\exp[\max_{m\in \{n-p..n+p\}}(-am^{2}/2)].
\end{aligned}
\end{equation}
(\emph{i}) If $q\leq p$, we upper bound RHS of Eq.\,\eqref{lem_ser_ineq_2_num} by $e^{b^{2}/2a}\sum_{n=0}^{q}(n+p)^{p}\cdot 1 \leq e^{b^{2}/2a}(2p)^{p}(p+1)\leq e^{b^{2}/2a}(4e)^{p}p!$, where we have used $p^{p}\leq e^{p}p!$. In this case, we simply know from Eq.\,\eqref{lem_series_ineq1_deno} to obtain 
\begin{equation}\label{}
\begin{aligned}[b]
\text{RHS of \eq{lem_series_ineq_2}}\leq   \frac{e^{b^{2}/2a}(4e)^{p}p!}{\sqrt{\pi/2}e^{-b^{2}/4a}(2a)^{-1/2}\erf(\sqrt{2a})}\leq C_{3}^{p}p!,\quad \quad q\leq p
\end{aligned}
\end{equation}
with the constant $C_{3}\coloneq 4\sqrt{2/\pi}e^{3b^{2}/4a+1}\max\{1,\sqrt{2a_{0}}/\erf(\sqrt{2a_{0}})\}$. (\emph{ii}) If $q>p$, we have
\begin{equation}\label{}
\begin{aligned}[b]
\text{RHS of \eq{lem_ser_ineq_2_num}}&\leq e^{b^{2}/2a}\qty[\sum_{n=0}^{p}(n+p)^{p}\cdot 1 +\sum_{n=p}^{q}(n+p)^{p}e^{-a(n-p)^{2}/2}]
\leq e^{b^{2}/2a}\qty[(2p)^{p}(p+1) +\sum_{n=0}^{q-p}(n+2p)^{p}e^{-an^{2}/2}].
\end{aligned}
\end{equation}
Next, we assume that $q\geq 3p$ and we will later see that the resulting upper bound is still applicable for $p<q<3p$. By assumption, we note that
\begin{equation}\label{}
\begin{aligned}[b]
\sum_{n=0}^{q-p}(n+2p)^{p}e^{-an^{2}/2}=\sum_{n=0}^{2p}(n+2p)^{p}e^{-an^{2}/2}+\sum_{n=2p}^{q-p}(n+2p)^{p}e^{-an^{2}/2} \leq (4p)^{p}\cdot 2p +2^{p}\sum_{n=2p}^{q-p}n^{p}e^{-an^{2}/2}.
\end{aligned}
\end{equation}
Note that the function $f(x)\coloneq x^{p}e^{-ax^{2}/2}$ is increasing in $[0, x_{0}]$ and decreasing
in $[x_{0}, 0]$ with $x_{0}=(p/a)^{1/2}$, we have 
\begin{equation}\label{n_sum_p_q}
\begin{aligned}[b]
\sum_{n=2p}^{q-p}n^{p}e^{-an^{2}/2}&\leq 
\begin{cases}
\begin{aligned}
& \ii{0}{q-p}{x}x^{p}e^{-ax^{2}/2}+f(x_{0})\, && q-p\leq x_{0}
\\     
&\ii{0}{q-p}{x}x^{p}e^{-ax^{2}/2}+2f(x_{0})\, && q-p\geq x_{0}
\end{aligned}
\end{cases}
\leq \ii{0}{q-p}{x}x^{p}e^{-ax^{2}/2}+2f(x_{0})
\\&\leq \frac{2^{(p-1)/2}}{a^{(p+1)/2}}\gamma\qty(\frac{p+1}{2},\frac{a}{2}(q-p)^{2})+2\qty(\frac{p}{a})^{p/2}.
\end{aligned}
\end{equation}
Here, we introduced the lower incomplete gamma function $\gamma(s,x)\coloneq \ii{0}{x}{t}t^{s-1}e^{-t}$. Since the RHS of Eq.\,\eqref{n_sum_p_q} is strictly positive, the previously established inequality bounding the sum $\sum_{n=0}^{q-p}(n+2p)^{p}e^{-an^{2}/2}$ remains valid in the regime $p<q<3p$. From Eq.\,\eqref{lem_series_ineq1_deno} we know that 
\begin{equation}\label{}
\begin{aligned}[b]
\text{RHS of \eq{lem_series_ineq_2}}
\leq& 2e^{3b^{2}/4a}\frac{(2p)^{p}(p+1)+(4p)^{p}\cdot 2p+2^{p}\qty[\frac{2^{(p-1)/2}}{a^{(p+1)/2}}\gamma\qty(\frac{p+1}{2},\frac{a}{2}(q-p)^{2})+2\qty(\frac{p}{a})^{p/2}] }{\sqrt{\frac{\pi}{2a}}\erf(\sqrt{2a}(q+1))}
\\=& 2e^{3b^{2}/4a}\qty[\frac{(2p)^{p}(p+1)+(4p)^{p}\cdot 2p+2^{p+1}(\frac{p}{a})^{p/2}}{\sqrt{\frac{\pi}{2a}}\erf(\sqrt{2a}(q+1))}+\frac{\frac{2^{(3p-1)/2}}{a^{(p+1)/2}}\gamma\qty(\frac{p+1}{2},\frac{a}{2}(q-p)^{2})}{\sqrt{\frac{\pi}{2a}}\erf(\sqrt{2a}(q+1))} ]
\\\leq & 2e^{3b^{2}/4a}\qty[\frac{\qty(12ea_{0}^{1/2}+8e)^{p}a^{-p/2}p!}{\sqrt{\frac{\pi}{2a_{0}}}\erf(\sqrt{2a_{0}})}+\frac{2^{3p/2}\Gamma\qty(\frac{p+1}{2})}{\sqrt{\pi}a^{p/2}}\frac{P\qty(\frac{p+1}{2},\frac{a}{2}(q-p)^{2})}{P\qty(\frac{1}{2},2a(q+1)^{2})} ],\quad \quad q> p.
\end{aligned}
\end{equation}
Here in the last line, for the first term in the bracket we again used $\erf(\sqrt{2a}(q+1))\geq \erf(\sqrt{2a})$ and the first claim of Lemma \ref{lem_increase_function}. For the second term, we introduce the regularized gamma function $P(s,x)\coloneq \gamma(s,x)/\Gamma(s)$. Then we know from the second claim of Lemma \ref{lem_increase_function} that $P\qty(\frac{p+1}{2},\frac{a}{2}(q-p)^{2})\leq P\qty(\frac{1}{2},\frac{a}{2}(q-p)^{2})\leq P\qty(\frac{1}{2},2a(q+1)^{2})$, which together with the inequality $\Gamma\qty(\frac{p+1}{2})\leq (4+\sqrt{2})^{p}\sqrt{p!}$ (see Lemma 3 in Ref.\,\cite{tong2024boson}) enables us to move further as follows
\begin{equation}\label{}
\begin{aligned}[b]
\text{RHS of \eq{lem_series_ineq_2}}\leq 2e^{3b^{2}/4a}\qty(C_{4}a^{-p/2}p!+C_{5}a^{-p/2}\sqrt{p!})\leq C_{6}a^{-p/2}p! 
\end{aligned}
\end{equation}
Here, we have defined $C_{4}\coloneq \pi^{-1/2}(12ea_{0}^{1/2}+8e)\max\{1,\sqrt{2a_{0}}/\erf(\sqrt{2a_{0}})\}$, $C_{5}\coloneq 2^{3/2}\pi^{-1/2}(4+\sqrt{2})$ and $C_{6}\coloneq 2e^{3b^{2}/4a}(C_{4}+C_{5})$ to finish the proof.

\end{proof}
{~}

\hrulefill{\bf [ End of Proof of Lemma~\ref{lem_series_sum_estimate}]}

{~}

\begin{lemma}\label{lem_increase_function}
(\emph{i}) The function $x/\erf(x)$ is non-decreasing over $\mathbb{R}^{+}$
(\emph{ii}) Let $\gamma(s,x)\coloneq \ii{0}{x}{t}t^{s-1}e^{-t}$ be the lower incomplete gamma function for $s, x \in \mathbb{R}^{+}$. The regularized gamma function $P(s,x)\coloneq \gamma(s,x)/\Gamma(s)$ is increasing w.r.t $x\in \mathbb{R}^{+}$ for any fixed $s\in \mathbb{R}^{+}$ and decreasing  w.r.t $s\in \mathbb{R}^{+}$ for any fixed $x\in \mathbb{R}^{+}$.
\end{lemma}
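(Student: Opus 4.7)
The plan is to prove each monotonicity statement by a direct calculus / differentiation-under-the-integral argument, with part (i) and the $x$-monotonicity of $P$ being routine, and the $s$-monotonicity of $P$ requiring a slightly more delicate measure-theoretic rearrangement.

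For (i), I would differentiate $f(x)=x/\erf(x)$ to get
\begin{equation*}
f'(x)=\frac{\erf(x)-x\erf'(x)}{\erf(x)^{2}}=\frac{\erf(x)-\tfrac{2x}{\sqrt{\pi}}e^{-x^{2}}}{\erf(x)^{2}},
\end{equation*}
so non-decreasingness reduces to the inequality $\int_{0}^{x}e^{-t^{2}}\,dt\geq x e^{-x^{2}}$ for $x\geq 0$. Setting $g(x)\coloneqq \int_{0}^{x}e^{-t^{2}}\,dt-xe^{-x^{2}}$, one has $g(0)=0$ and $g'(x)=e^{-x^{2}}-e^{-x^{2}}+2x^{2}e^{-x^{2}}=2x^{2}e^{-x^{2}}\geq 0$, which gives $g\geq 0$ and hence $f'\geq 0$.

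For the $x$-monotonicity in (ii), I would just apply the fundamental theorem of calculus to $\gamma(s,x)=\int_{0}^{x}t^{s-1}e^{-t}\,dt$, obtaining $\partial_{x}P(s,x)=x^{s-1}e^{-x}/\Gamma(s)>0$ for $s,x>0$. This is immediate and requires no further work.

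The main obstacle is the $s$-monotonicity, and my plan is to handle it by a rearrangement argument. Introduce the (unnormalized) finite measure $d\mu_{s}(t)\coloneqq t^{s-1}e^{-t}\,dt$ on $(0,\infty)$, so that $\Gamma(s)=\mu_{s}((0,\infty))$, $\gamma(s,x)=\mu_{s}((0,x))$, $\Gamma'(s)=\int_{0}^{\infty}\ln t\,d\mu_{s}$, and $\gamma_{s}(s,x)=\int_{0}^{x}\ln t\,d\mu_{s}$ (differentiation under the integral is justified on any compact $s$-interval in $(0,\infty)$ by standard domination). A short computation gives
\begin{equation*}
\Gamma(s)^{2}\,\partial_{s}P(s,x)=\mu_{s}((x,\infty))\int_{0}^{x}\ln t\,d\mu_{s}-\mu_{s}((0,x))\int_{x}^{\infty}\ln t\,d\mu_{s}.
\end{equation*}
Now bound the two integrals using $\ln t<\ln x$ on $(0,x)$ and $\ln t>\ln x$ on $(x,\infty)$: the first term is $\leq \mu_{s}((x,\infty))\,\ln x\,\mu_{s}((0,x))$ and the second is $\geq \mu_{s}((0,x))\,\ln x\,\mu_{s}((x,\infty))$, so their difference is $\leq 0$. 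Hence $\partial_{s}P(s,x)\leq 0$, completing the proof.

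The only subtle point is justifying the interchange of derivative and integral to obtain $\gamma_{s}$ and $\Gamma'$; this is routine because $|t^{s-1}e^{-t}\ln t|$ is locally uniformly integrable in $s$ on $(0,\infty)$, so I do not anticipate difficulty beyond writing the standard dominated-convergence argument.
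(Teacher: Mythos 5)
Your proposal is correct and follows essentially the same route as the paper: part (i) via showing the numerator $\erf(x)-\tfrac{2x}{\sqrt{\pi}}e^{-x^{2}}$ of the derivative vanishes at $0$ and has nonnegative derivative, the $x$-monotonicity of $P$ by direct differentiation, and the $s$-monotonicity by the quotient rule plus differentiation under the integral, cancelling the common $(0,x)\times(0,x)$ contribution and using that $\ln$ is increasing. The only cosmetic difference is that the paper writes the remaining quantity as a single double integral of $f(u)f(v)(\ln u-\ln v)$ over $(0,x)\times(x,\infty)$, whereas you compare each single integral against the threshold value $\ln x$; your explicit justification of differentiating under the integral sign is a small bonus the paper omits.
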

\begin{proof}
For the first claim, we denote $a(x)\coloneq x/\erf(x)$ and calculate $a'(x)=[\erf(x)]^{-2}\qty[\erf(x)-2\pi^{-1/2}xe^{-x^{2}}]$. We denote $a_{1}(x)\coloneq \erf(x)-2\pi^{-1/2}xe^{-x^{2}}$ and find $a_{1}'(x)=4x\pi^{-1/2}e^{-x^{2}}\geq 0$, which together with $a_{1}(0)=0$ implies $a'(x)\geq 0$. 

For the second claim, the first part is obvious since $t^{s-1}e^{-t}>0$. For the second part, we denote $f(t)\coloneq t^{s-1}e^{-t}$ and calculate
\begin{equation}\label{}
\begin{aligned}[b]
&\frac{\partial}{\partial s}P(s,x)
\\=&\frac{1}{\Gamma(s)^{2}}\qty[\frac{\partial}{\partial s}\gamma(s,x)\cdot \Gamma(s)-\gamma(s,x) \frac{\partial}{\partial s} \Gamma(s)]=\frac{1}{\Gamma(s)^{2}}\qty[\ii{0}{x}{t}\ln t \cdot f(t)\ii{0}{\infty}{t}f(t)-\ii{0}{x}{t}f(t)\ii{0}{\infty}{t}\ln t \cdot  f(t)]
\\=&\frac{1}{\Gamma(s)^{2}}\qty[\ii{0}{x}{u}\ln u \cdot f(u)\ii{0}{\infty}{v}f(v)-\ii{0}{x}{u}f(u)\ii{0}{\infty}{v}\ln v \cdot  f(v)]
\\=&\frac{1}{\Gamma(s)^{2}}\ii{0}{x}{u}\ii{x}{\infty}{v}\qty[f(u)f(v)\qty(\ln u -\ln v)]< 0,
\end{aligned}
\end{equation}
which completes the proof.

\end{proof}

{~}

\hrulefill{\bf [ End of Proof of Lemma~\ref{lem_increase_function}]}

{~}
\begin{lemma}\label{lem_trace_f_w_q}
Let $w$ be a sequence, denote $V_{w}=\bigcup_{Z\in w}Z$ and $m=|w|$ as the support and length of $w$, respectively. Define
\begin{equation}\label{}
\begin{aligned}[b]
F(w)\coloneq  \ii{0}{\beta}{\tau_{1}}\ii{0}{\tau_{1}}{\tau_{2}}...\ii{0}{\tau_{m-1}}{\tau_{m}} h_{w_{1}}(\tau_{1})h_{w_{2}}(\tau_{2})...h_{w_{m}}(\tau_{m}), \quad \bullet(\tau)\coloneq e^{\tau W_{V_{w}}}\bullet e^{-\tau W_{V_{w}}}.
\end{aligned}
\end{equation} 
If $\{h_{Z}\}_{Z\in w}$ comes from the hopping and squeezing terms in the long-range Bose-Hubbard model [cf.\,Eq.\,(\ref{slr_bose_hubbard})] and write it as $h_{Z}=\sum_{s}J_{Z}^{(s)}h^{(s)}_{Z}$, then for any positive integer $q$ the following inequality holds with $C=\mathcal{O}(1)$ for high temperatures $\beta\leq \beta_{c}$:
\begin{equation}\label{re_lemma_trace_f_w_q}
\begin{aligned}[b]
\qty|\frac{\operatorname{Tr}_{V_{w}}\qty[e^{-\beta W_{V_{w}}}F(w)\Pi_{V_{w},q}]}{\operatorname{Tr}_{V_{w}}\qty(e^{-\beta W_{V_{w}}}\Pi_{V_{w},q})}|\leq \frac{(C \sqrt{\beta})^{m}}{m!}\sum_{s_{1}s_{2}...s_{m}}\prod_{i=1}^{m}|J_{w_{i}}^{(s_{i})}|\cdot  \prod_{x\in V_{w}}[m_{x}(w,\vec{s})!]^{1/2}.
\end{aligned}
\end{equation} 
Here, we introduced the vector $\vec{s}\coloneq (s_{1},s_{2},...,s_{m})$ and denote $\mathsf{N}_{x}(h^{(s_{1})}_{w_{1}}h^{(s_{2})}_{w_{2}}...h^{(s_{m})}_{w_{m}})$ as $m_{x}(w,\vec{s})$. 
\end{lemma}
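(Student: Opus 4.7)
The plan is to closely mimic the structure of Lemma~\ref{lem_trace_f_w}, but exploit the fact that the projector $\Pi_{V_w,q}$ commutes with $W_{V_w}$ (both are diagonal in the occupation-number basis) and factorizes as $\Pi_{V_w,q}=\bigotimes_{x\in V_w}\Pi_{x,\leq q}$. In particular, the denominator factorizes immediately as $\operatorname{Tr}_{V_w}[e^{-\beta W_{V_w}}\Pi_{V_w,q}]=\prod_{x\in V_w}Z_x^{(q)}$ with $Z_x^{(q)}:=\sum_{n=0}^{q}e^{-\beta W_x(n)}$. The key observation is that in the final estimate the ``missing'' factor $(c_2/\sqrt{\beta})^{|V_w|}$ from Lemma~\ref{lem_trace_f_w} should cancel against the denominator, since at high temperature $Z_x^{(q)}\asymp \beta^{-1/2}$ once $q$ exceeds the thermal width.

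First, I would reuse the site-wise rearrangement from Lemma~\ref{lem_trace_f_w}, writing for fixed $\vec{s}$ and fixed imaginary times,
\begin{equation*}
e^{-\beta W_{V_w}}\prod_{i=1}^{m}h_{w_i}^{(s_i)}(\tau_i)=\prod_{x\in V_w}h_x(w,\vec{s})\,e^{O_x(n_x)},
\end{equation*}
with $O_x(n_x)$ the piecewise-linear combination in Eq.~\eqref{O_x_n_x}. Inserting this into the numerator and using $[\Pi_{x,\leq q},e^{O_x(n_x)}]=0$, the trace factorizes over sites and produces $\prod_{x\in V_w}\operatorname{Tr}_x[h_x(w,\vec{s})e^{O_x(n_x)}\Pi_{x,\leq q}]$. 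Dividing by the corresponding factor of the denominator yields a product of single-site ratios that I will bound independently.

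Second, I would bound each site ratio $|\operatorname{Tr}_x[h_x(w,\vec{s})e^{O_x(n_x)}\Pi_{x,\leq q}]|/Z_x^{(q)}$ by an adaptation of the Gaussian-moment estimate used to prove Lemma~\ref{lem_trace_f_w} (Lemma~8 of Ref.~\cite{tong2024boson}). Specifically, after a Cauchy--Schwarz step and the explicit evaluation of $h_x(w,\vec{s})$ as a monomial of $m_x(w,\vec{s})$ ladder operators, the numerator is controlled by a shifted thermal sum of the form $\sum_{n=0}^{q}(n{+}p)^{p}e^{O_x(n)}$ with $p\lesssim m_x$. Comparing against $Z_x^{(q)}$ by Lemma~\ref{lem_series_sum_estimate} (the truncated analogue already prepared in this supplement) gives the desired site-local bound $[m_x(w,\vec{s})!]^{1/2}$ with a temperature-independent constant. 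Integrating the time-ordered simplex contributes the $\beta^m/m!$, and assembling all factors produces the stated inequality~\eqref{re_lemma_trace_f_w_q}.

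The main obstacle I anticipate is verifying uniformity of the site-local estimate in $q$: specifically, that the ratio $\operatorname{Tr}_x[h_x e^{O_x}\Pi_{x,\leq q}]/Z_x^{(q)}$ is bounded by an $\mathcal{O}(1)^{m_x}\sqrt{m_x!}$-type constant that does \emph{not} deteriorate as $q$ grows with the system size. The untruncated case relies on Gaussian tail estimates valid on all of $\mathbb{N}$; with the cutoff one must check that both numerator and denominator are cut at a location where the summand is already exponentially suppressed, so that the comparison constant stays $\mathcal{O}(1)$. This is precisely the content of Lemma~\ref{lem_series_sum_estimate}, and the boson concentration bound (Corollary~\ref{coro_boson_concen}) ensures the hypothesis $q\gtrsim\sqrt{\beta}^{\,-1}$ under which the truncated and untruncated normalisations agree up to a constant factor. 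Once this uniform site-local bound is in place, the remainder of the argument is essentially a rerun of Lemma~\ref{lem_trace_f_w}.
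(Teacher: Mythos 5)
Your proposal follows essentially the same route as the paper's proof: the site-wise factorization of $e^{-\beta W_{V_w}}F(w)\Pi_{V_w,q}$ into $\prod_{x\in V_w}h_x(w,\vec{s})\,e^{O_x(n_x)}\Pi_{x,\leq q}$, a Cauchy--Schwarz step at each site reducing everything to ratios of truncated thermal sums of the form $\sum_{n=0}^{q}(n+p)^{p}e^{O_x(n)}\big/\sum_{n=0}^{q}e^{-\beta W_x(n)}$, control of these by Lemma~\ref{lem_series_sum_estimate}, and the time simplex supplying $\beta^{m}/m!$. The only remark is that the $q$-uniformity you flag as the main obstacle is already built into Lemma~\ref{lem_series_sum_estimate} itself, whose constants are uniform in $q\geq 1$ by a case analysis on $q\lessgtr p$, so no hypothesis of the form $q\gtrsim\beta^{-1/2}$ from the concentration bound is needed at this stage.
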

\begin{proof}
The proof largely follows the proof for Lemma \ref{lem_trace_f_w} and derivations in Appendix E\,2 of Ref.\,\cite{tong2024boson}.

Similar to Eq.\,\eqref{nbarx_l}, we have
\begin{equation}\label{}
\begin{aligned}[b]
e^{-\beta W_{V_{w}}}F(w)\Pi_{V_{w},q}=\int [\tau_{1}\tau_{2}...\tau_{m}] \sum_{s_{1},s_{2},...,s_{m}}|J_{w_{1}}^{(s_{1})}||J_{w_{2}}^{(s_{2})}|...|J_{w_{m}}^{(s_{m})}|\cdot \prod_{x\in V_{w}} \qty[e^{-\beta W_{x}}h_{w_{1},x}^{(s_{1})}(\tau_{1})h_{w_{2},x}^{(s_{2})}(\tau_{2})...h_{w_{m},x}^{(s_{m})}(\tau_{m})\Pi_{x,q}],
\end{aligned}
\end{equation}
which motivates us to write
\begin{equation}\label{re_fw_trace_q_2}
\begin{aligned}[b]
\text{RHS of Eq.\,\eqref{re_lemma_trace_f_w_q}}&\leq \int [\tau_{1}\tau_{2}...\tau_{m}] \sum_{s_{1},s_{2},...,s_{m}}\prod_{i=1}^{m}|J_{w_{i}}^{(s_{i})}|\cdot \prod_{x\in V_{w}}\frac{\|e^{-\beta W_{x}}h_{w_{1},x}^{(s_{1})}(\tau_{1})h_{w_{2},x}^{(s_{2})}(\tau_{2})...h_{w_{m},x}^{(s_{m})}(\tau_{m})\Pi_{x,q}\|_{1}}{\operatorname{Tr}_{x}(e^{-\beta W_{x}}\Pi_{x,q})}
\\&=\int [\tau_{1}\tau_{2}...\tau_{m}] \sum_{s_{1},s_{2},...,s_{m}}\prod_{i=1}^{m}|J_{w_{i}}^{(s_{i})}|\cdot \prod_{x\in V_{w}}\frac{\|h_{x}(w,\vec{s})e^{O_{x}(n_{x})}\Pi_{x,q}\|_{1}}{\operatorname{Tr}_{x}(e^{-\beta W_{x}}\Pi_{x,q})},
\end{aligned}
\end{equation}
where $h_{x}(w,\vec{s})$ and $O_{x}(n_{x})$ have already been defined in Eqs.\,\eqref{e_beta_W_x_h} and \eqref{O_x_n_x}.
Then we focus on the factor on the RHS of Eq.\,\eqref{re_fw_trace_q_2},
\begin{equation}\label{}
\begin{aligned}[b]
\frac{\|h(w,\vec{s})e^{O_{x}(n_{x})}\Pi_{x,q}\|_{1}}{\operatorname{Tr}_{x}(e^{-\beta W_{x}}\Pi_{x,q})}&\leq \frac{\sqrt{\operatorname{Tr}_{x}\qty[h_{x}(w,\vec{s})^{\dagger}h_{x}(w,\vec{s})e^{O_{x}(n_{x})}\Pi_{x,q}] \operatorname{Tr}_{x}[e^{O_{x}(n_{x})}\Pi_{x,q}]}}{\operatorname{Tr}_{x}(e^{-\beta W_{x}}\Pi_{x,q})}
\\&=\sqrt{\frac{\operatorname{Tr}_{x}\qty[h_{x}(w,\vec{s})^{\dagger}h_{x}(w,\vec{s})e^{O_{x}(n_{x})}\Pi_{x,q}]}{\operatorname{Tr}_{x}(e^{-\beta W_{x}}\Pi_{x,q})}} \cdot \sqrt{\frac{\operatorname{Tr}_{x}[e^{O_{x}(n_{x})}\Pi_{x,q}]}{\operatorname{Tr}_{x}(e^{-\beta W_{x}}\Pi_{x,q})}}
\end{aligned}
\end{equation}
We try to deal with these two factors separately. First, similar to the estimate over the contribution of $h_{x}(w,\vec{s})^{\dagger}h_{x}(w,\vec{s})$ in Appdendix E\,2 of Ref.\,\cite{tong2024boson}, we have 
\begin{equation}\label{}
\begin{aligned}[b]
\operatorname{Tr}_{x}\qty[h_{x}(w,\vec{s})^{\dagger}h_{x}(w,\vec{s})e^{O_{x}(n_{x})}\Pi_{x,q}]\leq \operatorname{Tr}_{x}\qty{[n_{x}+m_{x}(w,\vec{s})]^{m_{x}(w,\vec{s})}e^{O_{x}(n_{x})}\Pi_{x,q}}=\sum_{n=0}^{q}[n+m_{x}(w,\vec{s})]^{m_{x}(w,\vec{s})}e^{O_{x}(n)}.
\end{aligned}
\end{equation}
By denoting $\Delta\tau_{1}\coloneq \beta-\tau_{1}$, $\Delta\tau_{m+1}\coloneq \tau_{m}$ and $\Delta\tau_{2}=\tau_{i}-\tau_{i+1}$ for $i=2,3,..,m$. We will also use the following upper bound,
\allowdisplaybreaks[4]
\begin{align*}\label{}
O_{x}(n)
=&-\sum_{i=1}^{m}\Delta\tau_{i}W_{x}(n+S_{i,x})-\Delta\tau_{m+1}W_{x}(n)
\leq-\sum_{i=1}^{m}\Delta\tau_{i}\cdot\min_{i}W(n+S_{i,x})- \Delta\tau_{m+1}W_{x}(n)
\\\leq& -\sum_{i=1}^{m+1}\Delta\tau_{k}\cdot \min\{\min_{i}W_{x}(n+S_{i,x}),W(n)\}
\leq -\beta \min_{y\in \{-m_{x}(w,\vec{s})..m_{x}(w,\vec{s})\}}W_{x}(n+y) 
\\\leq& -\beta \min_{y\in \{n-m_{x}(w,\vec{s})..n+m_{x}(w,\vec{s})\}}W_{x}(y)
\refstepcounter{equation}\tag{\theequation}
\end{align*}
with $\{n_{1}..n_{2}\}$ denoting the set of integers between $n_{1},n_{2}\in y$.
Here, we have used fact that $|S_{i,x}|\leq m_{x}(w,\vec{s})$ for all $i=2,3,...,m$ from the definition of $S_{i,x}$. To proceed, we use Lemma \ref{lem_series_sum_estimate} with choosing $a_{0}=\beta_{c}U_{\max}/2$ and using $b^{2}/a\leq 2\beta_{c}\mu/U_{\min}$ to obtain 
\begin{subequations}\label{Tr_q_O_x_n_x_hdag_h}
\begin{align}
& \sqrt{\frac{\operatorname{Tr}_{x}[e^{O_{x}(n_{x})}\Pi_{x,q}]}{\operatorname{Tr}_{x}(e^{-\beta W_{x}}\Pi_{x,q})}}\leq C_{1}^{m_{x}(w,\vec{s})/2}\\
&\sqrt{\frac{\operatorname{Tr}_{x}\qty[h_{x}(w,\vec{s})^{\dagger}h_{x}(w,\vec{s})e^{O_{x}(n_{x})}\Pi_{x,q}]}{\operatorname{Tr}_{x}(e^{-\beta W_{x}}\Pi_{x,q})}} \leq \qty(\frac{C_{2}}{\sqrt{\beta}})^{m_{x}(w,\vec{s})/2} \sqrt{m_{x}(w,\vec{s})!}
\end{align}
\end{subequations}
with $C_{1},C_{2}$ being $\mathcal{O}(1)$ constants for all $x\in V$. By putting Eq.\,\eqref{Tr_q_O_x_n_x_hdag_h} into Eq.\,\eqref{re_fw_trace_q_2} we obtain
\begin{equation}\label{}
\begin{aligned}[b]
\text{RHS of Eq.\,\eqref{re_lemma_trace_f_w_q}}&\leq\int [\tau_{1}\tau_{2}...\tau_{m}] \sum_{s_{1},s_{2},...,s_{m}}\prod_{i=1}^{m}|J_{w_{i}}^{(s_{i})}|\cdot \prod_{x\in V_{w}}\qty(\frac{C}{\sqrt{\beta}})^{m_{x}(w,\vec{s})/2}\sqrt{m_{x}(w,\vec{s})!}
\\&=\frac{(C \sqrt{\beta})^{m}}{m!}\sum_{s_{1}s_{2}...s_{m}}\prod_{i=1}^{m}|J_{w_{i}}^{(s_{i})}|\cdot  \prod_{x\in V_{w}}[m_{x}(w,\vec{s})!]^{1/2}.
\end{aligned}
\end{equation}
Here, in the last line, we evaluate the integral to finish the proof.
We have also used $\sum_{x\in V_{w}}m_{x}(w,\vec{s})=2|w|=2m$ for the Bose-Hubbard model, where every off-site term is $2$-local and offers exactly one operator to its support.
\end{proof}

{~}

\hrulefill{\bf [ End of Proof of Lemma~\ref{lem_trace_f_w_q}]}

{~}

\section{Discussion on Finite-Range Bosonic Systems}
For finite-range systems, the corresponding condition to Eq.\,\eqref{slr_condition} becomes
\begin{equation}\label{ssr_condition}
\begin{aligned}[b]
\sum_{Z\in E: Z\ni \{i,j\}}|J_{Z}| \leq g\Theta(d_{c}-d_{i,j}), \quad g=\mathcal{O}(1),
\end{aligned}
\end{equation}
where $d_{c}=\mathcal{O}(1)$ is a threshold distance beyond which all interactions vanish, as in \cite{kliesch2014locality,tong2024boson}. As an example, for the finite-range Bose-Hubbard model, by condition \eqref{ssr_condition}, we actually require the hopping amplitude $J_{ij}$ to adhere to
\begin{equation}\label{ssr_condition_bose}
\begin{aligned}[b]
|J_{i,j}|\leq g\Theta(d_{c}-d_{i,j}).
\end{aligned}
\end{equation}
Evidently, for the nearest-neighbor case, we have $d_{c}=2$.

For such finite-range systems, it typically suffices to introduce the interaction hypergraph $(V,E)$ where $V$ and $E$ are the corresponding vertex and (hyper)edge sets, respectively. Here, the (hyper)edge corresponds to the support of each local term in the total Hamiltonian. In this hypergraph, the maximum degree $\mathfrak{d}\coloneqq \sup_{i\in V}|\{Z\in E \colon Z\ni i\}|$, i.e., the maximum number of (hyper)edges adjacent to a given site, remains independent of the system size $|V|$. However, this approach is not viable for general long-range systems, resulting in substantially different analytical treatments for finite-range versus long-range systems. In both finite- and long-range interacting systems, we require a uniformly upper bound for the coefficients of the on-site terms, i.e., condition \eqref{on_site_unbound_unifo}. 

The clustering theorem for finite-range bosonic systems was established in Ref.\,\cite{tong2024boson}. We note that the approach presented here differs significantly from the one in the aforementioned reference. Nevertheless, by re-establishing a bound for the function $L_{i,j}\leq g\Theta(d_{c}-d_{i,j})$ and substituting the result into Eq.\,\eqref{sum_t_1}, we can recover the exponential decay of the correlation function. The argument proceeds as follows.
We first use $\Theta(d_{c}-d_{i,j})\leq ee^{-d_{i,j}/d_{c}}$ and have
\allowdisplaybreaks[4]
\begin{align*}\label{}
\sum_{j\in L}L_{i,j}L_{j,k}&=g^{2}\sum_{j\in \Lambda}\Theta(d_{c}-d_{i,j})\Theta(d_{c}-d_{j,k})
\leq g^{2}\sum_{j\in V:d_{i,j},d_{j,k}
\leq d_{c}}\Theta(d_{c}-d_{i,j})\Theta(d_{c}-d_{j,k})
\\& \leq (eg)^{2}\sum_{j\in V:d_{i,j},d_{j,k}\leq d_{c}} e^{-(d_{i,j}+d_{j,k})/d_{c}}
\leq (eg)^{2}e^{-d_{i,k}/d_{c}}|\{j\in V|d_{i,j},d_{j,k}\leq d_{c}|
\\&\leq (eg)^{2}e^{-d_{i,k}/d_{c}}|\{j\in V|d_{i,j}\leq d_{c}\}|
\leq (eg)^{2}e^{-d_{i,k}/d_{c}}\mathcal{N}_{D}(d_{c}),
\refstepcounter{equation}\tag{\theequation}
\end{align*}
where $\mathcal{N}_{D}(d_{c}) \coloneq \sup_{i\in V}|\{j\in V \mid d_{i,j}\leq d_{c}\}|$ denotes the maximum number of sites within a distance $d_c$ of any given site $i \in V$. This quantity, which depends on the threshold distance $d_c$ and the spatial dimension $D$ of the lattice $V$, is assumed to be a finite constant.
Then similarly, we have
\allowdisplaybreaks[4]
\begin{align*}\label{}
\sum_{j\in V}L_{i,j}[\bm{L}^{2}]_{j,k}&\leq \sum_{j\in V}g\Theta(d_{c}-d_{i,j})(eg)^{2}\mathcal{N}_{D}(d_{c})e^{-d_{j,k}/d_{c}}= \sum_{j\in V: d_{i,j}\leq d}g\Theta(d_{c}-d_{i,j})(eg)^{2}\mathcal{N}_{D}(d_{c})e^{-d_{j,k}/d_{c}}
\\&\leq (eg)^{2}\mathcal{N}_{D}(d_{c})\sum_{j\in V: d_{i,j}\leq d}(eg)e^{-d_{i,j}/d}e^{-d_{j,k}/d_{c}}
\\&\leq (eg)^{2}\mathcal{N}_{D}(d_{c})\sum_{j\in V: d_{i,j}\leq d}(eg)e^{-d_{i,k}/d_{c}}
\\&= (eg)^{3}\mathcal{N}_{D}(d_{c})e^{-d_{i,k}/d_{c}}|\{j\in V| d_{i,j}\leq d_{c}\}| \leq  (eg)^{3}\mathcal{N}_{D}(d_{c})^{2}e^{-d_{i,k}/d_{c}}.
\refstepcounter{equation}\tag{\theequation}
\end{align*}
By iteratively repeating this procedure, we obtain
\begin{equation}\label{}
\begin{aligned}[b]
[\bm{L}^{\ell}]_{i,j}\leq (eg)^{\ell}\mathcal{N}_{D}(d_{c})^{\ell-1}e^{-d_{i,j}/d}.
\end{aligned}
\end{equation}
Substituting this result in Eq.\,\eqref{sum_t_1}, we can clearly see that the original power-law decay is replaced by the exponential decay now.

For the finite-range Bose-Hubbard model, the approximation scheme for the partition function presented above remains applicable. Due to the boundedness of the maximum degree, the runtime to enumerate the elements in $\{a\in \mathscr{G}_{\mathrm{c}}: |a|\leq m\}$ is at most $\exp(\mathcal{O}(m))\times N$, as established in Lemma~6 of Ref.\,\cite{mann2024algorithmic}. Consequently, the total runtime to compute $\log \mathcal{Z}_{W}^{(q)}+T_{m}$ is given by
\begin{equation}\label{eq:runtime_finite_range}
\mathcal{O}(N\log N) + e^{\mathcal{O}(m)} \times \mathcal{O}\left((\log N)^{3km}\right).
\end{equation}
By selecting $m=\mathcal{O}(\log(N/\epsilon))$, this runtime becomes almost polynomial in the system size $N$ and is given by $\exp\left(\mathcal{O}\left(\log(N/\epsilon)\log\log(N/\epsilon)\right)\right)$, since $\log \log N(N/\epsilon)$ is almost a constant due to its very slow growth in $N$. We summarize this result as follows 
\begin{theorem}
For the canonical finite-range Bose-Hubbard model defined by Eqs.~\eqref{slr_bose_hubbard} and \eqref{ssr_condition_bose} with $\widetilde{J}_{i,j}=0$ on a finite lattice $V$. Denote $N=|V|$, then there exists a classical algorithm that, for any given precision $\epsilon>0$ and index $\theta>0$, computes an approximation $f_{\beta}$ satisfying
$|\log \mathcal{Z}_{\beta}-f_{\beta}|\leq N^{-\theta}+\epsilon.$
at any fixed temperature above a threshold, $\beta\leq \beta_{c}=\mathcal{O}(1)$. The runtime of the algorithm is almost polynomial in the system size, given by $\exp(\mathcal{O}(\log((N/\epsilon)\log\log (N/\epsilon)))).$
\end{theorem}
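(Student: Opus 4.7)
The plan is to mirror the argument of Theorem~\ref{theorem_s_lr_complexity} almost verbatim, replacing the long-range input bound \eqref{slr_condition} by its finite-range counterpart \eqref{ssr_condition_bose} and exploiting the fact that the interaction graph now has $\mathcal{O}(1)$ maximum degree. First I will truncate the local bosonic Hilbert space to occupation number at most $q$, where $q$ is chosen according to Corollary~\ref{corollary_truncation_error} so that $q\propto \log N$ guarantees $|\log \mathcal{Z}_{\beta}-\log \mathcal{Z}_{\beta}^{(q)}|\leq N^{-\theta}$. The validity of this truncation rests on the low-boson-density inequality (Theorem~\ref{stheorem_low_density}), which was established for the long-range model and therefore applies a fortiori to the finite-range one since the finite-range condition \eqref{ssr_condition_bose} is a special case of \eqref{slr_condition} (for any $\alpha>D$ with a rescaled coupling constant).

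Next I will recast $\mathcal{Z}_{\beta}^{(q)}/\mathcal{Z}_{W}^{(q)}$ as an abstract polymer partition function exactly as in Eqs.\,\eqref{boson_pari_zw}--\eqref{boson_fr_weight}, and apply the truncated cluster expansion $T_m$ of Eq.\,\eqref{truncat_T_m}. The Kotecký--Preiss criterion verified in Proposition~\ref{pro_truncation} carries over: the bound on $|w_\gamma|$ obtained from Lemma~\ref{lem_trace_f_w_q} together with the cluster-summation Proposition~\ref{pro_cluster_sum} applies uniformly whenever $\sum_{Z\ni i}\phi(Z)\leq \gamma=\mathcal{O}(1)$, which is automatic under \eqref{ssr_condition_bose}. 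Consequently, above a temperature threshold $\beta\leq \beta_c=\mathcal{O}(1)$ one obtains the same absolute-convergence estimate $|\log[\mathcal{Z}_{\beta}^{(q)}/\mathcal{Z}_{W}^{(q)}]-T_m|\leq N e^{-m}$.

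The decisive improvement comes in the runtime of computing $T_m$. In the finite-range setting, the interaction hypergraph has maximum degree bounded by an $\mathcal{O}(1)$ constant depending only on $d_c$ and $D$ (via $\mathcal{N}_D(d_c)$). Therefore Lemma~6 of Ref.\,\cite{mann2024algorithmic} applies directly and enumerates all connected polymer-sequences $a\in\mathscr{G}_{\mathrm{c}}$ with $|a|\leq m$ in time $e^{\mathcal{O}(m)}\times N$, strictly better than the $N^{\mathcal{O}(km)}$ bound invoked in Lemma~\ref{lemma_cluster_list}. The Ursell-function evaluation (Lemma~\ref{lemma_ursell}) and the weight evaluation (Lemma~\ref{lemma_weight}) are unchanged, the latter giving $\mathcal{O}((4keq)^{3km})$ per polymer with $q\propto \log N$. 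Combining these factors, the overall cost of computing $\log \mathcal{Z}_W^{(q)}+T_m$ is
\begin{equation*}
\mathcal{O}(N\log N)+e^{\mathcal{O}(m)}\times \mathcal{O}\left((\log N)^{3km}\right).
\end{equation*}

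Finally I will choose $m=\mathcal{O}(\log(N/\epsilon))$, which makes the cluster-expansion error $Ne^{-m}\leq \epsilon$ while the truncation error remains $\leq N^{-\theta}$, yielding the total error $N^{-\theta}+\epsilon$. Substituting this $m$ into the runtime gives $\exp(\mathcal{O}(\log(N/\epsilon)\log\log(N/\epsilon)))$, matching the stated almost-polynomial bound. The main conceptual point—and the only place where finite- versus long-range structure genuinely enters—is the cluster-enumeration step: bounded maximum degree converts the $N^{\mathcal{O}(km)}$ enumeration factor into $Ne^{\mathcal{O}(m)}$, and the residual $(\log N)^{3km}$ factor (inherited from the boson-number truncation $q\propto \log N$) then produces the $\log\log(N/\epsilon)$ overhead over the polynomial scaling known for spin systems, reflecting the infinite-dimensional local Hilbert space.
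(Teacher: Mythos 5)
Your proposal is correct and follows essentially the same route as the paper: truncation with $q\propto\log N$ via Corollary~\ref{corollary_truncation_error}, the polymer-model reformulation and Kotecký--Preiss bound carried over from Proposition~\ref{pro_truncation} (valid since the finite-range condition is subsumed by the long-range one), and the single genuine change being the $e^{\mathcal{O}(m)}\times N$ cluster enumeration afforded by the $\mathcal{O}(1)$ maximum degree, yielding the runtime $\mathcal{O}(N\log N)+e^{\mathcal{O}(m)}\times\mathcal{O}((\log N)^{3km})$ and, with $m=\mathcal{O}(\log(N/\epsilon))$, the stated almost-polynomial bound. This matches the paper's argument step for step, including the correct attribution of the $\log\log(N/\epsilon)$ overhead to the boson-number truncation.
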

\begin{remark}
The implicit constants in the $\mathcal{O}$-notation for the runtime depend on the chosen temperature $\beta$, threshold temperature $\beta_{c}$, the flexible index $\theta$ and model parameters ($d_{c}, D, g,\mu,U_{\min}$ and $U_{\max}$).
\end{remark}

As in Ref.\,\cite{tong2024boson}, our method can be generalized to accommodate on-site potentials of higher order than quadratic. The details of this extension are straightforward yet tedious, and are thus omitted for brevity.

\hspace{1cm}

\end{widetext}

\end{document}